\newcounter{todocounter}
\newacronym{pbds}{PDBS}{provenance-based data skipping}
\newacronym{ivm}{IVM}{incremental view maintenance}
\newacronym{mp}{IM}{incremental maintenance procedure}
\newacronym{bst}{BST}{balanced search tree}
\newacronym{si}{SI}{snapshot isolation}
\newacronym{oom}{OOM}{orders of magnitude}
\protected\def\adl@@vlineL#1#2#3#4{\adl@ivline#4\@nil{#1}{#2}%
        \xdef\adl@colsL{\adl@colsL
                \@elt{#3}{\number\@tempcnta}{\number\@tempcntb}%
                        {\adl@dashcolor}{\adl@gapcolor}}}
\protected\def\adl@@vlineR#1#2#3#4{\adl@ivline#4\@nil{#1}{#2}%
        \xdef\adl@colsR{%
                \@elt{#3}{\number\@tempcnta}{\number\@tempcntb}%
                        {\adl@dashcolor}{\adl@gapcolor}%
                \adl@colsR}}
\let\adl@act@@vlineL\adl@@vlineL
\let\adl@act@@vlineR\adl@@vlineR
\Crefname{exam}{Example}{Examples}
\Crefname{lem}{Lemma}{Lemmas}
\Crefname{theo}{Theorem}{Theorems}
\newrobustcmd{\ifnottechreport}[1]{\ifbool{Techreport}{}{#1}}
\newrobustcmd{\iftechreport}[1]{\ifbool{Techreport}{#1}{}}
\newcommand{\card}[1]{\vert{#1}\vert}
\newcommand{\mathtext}[1]{\thickspace\text{#1}\thickspace}
\newcommand{\schemaOf}[1]{\textsc{SCH}({#1})}
\newcommand{\relSchema}{\schema{R}}
\newcommand{\att}{\ensuremath{a}\xspace}
\newcommand{\rel}{\ensuremath{R}\xspace}
\newcommand{\db}{\ensuremath{D}\xspace}
\newcommand{\query}{\ensuremath{Q}\xspace}
\newcommand{\tup}{\ensuremath{t}\xspace}
\newcommand{\udom}{\ensuremath{\mathbb{U}}\xspace}
\newcommand{\mult}[2]{{#1}^{#2}}
\newcommand{\concat}{\circ}
\newcommand{\lbbar}{\{\kern-0.5ex|}
\newcommand{\rbbar}{|\kern-0.5ex\}}
\newcommand{\bagopen}{\lbbar}
\newcommand{\bagclose}{\rbbar}
\newcommand{\bag}[1]{\lbbar {#1} \rbbar}
\newcommand{\ATT}{A}
\newcommand{\dbv}[1]{\ensuremath{\db_{#1}}\xspace}
\newcommand{\relv}[1]{\ensuremath{\rel_{#1}}\xspace}
\newcommand{\deltadbv}[1]{\ensuremath{\Delta \db_{#1}}\xspace}
\newcommand{\parti}{F}
\newcommand{\rparti}{F}
\newcommand{\frag}{f}
\newcommand{\domain}[1]{\mathbb{D}(#1)}
\newcommand{\range}{\ensuremath{\rho}\xspace}
\newcommand{\ranges}{\ensuremath{\phi}\xspace}
\newcommand{\dbranges}{\ensuremath{\Phi}\xspace}
\newcommand{\prov}[2]{P(#1,#2)}
\newcommand{\provSketch}{\ensuremath{\mathcal{P}}\xspace}
\newcommand{\instOf}[1]{\db_{#1}}
\newcommand{\schema}[1]{\textsc{Sch}(#1)}
\newcommand{\cp}[3]{\ensuremath{#1_{#2,#3}}\xspace}
\newcommand{\acp}{\ensuremath{\cp{\query}{\rel}{\parti}}\xspace}
\newcommand{\psv}[1]{\ensuremath{\provSketch_{#1}}\xspace}
\colorlet{LightBlue}{blue!30!}
\colorlet{LightRed}{red!20!}
\colorlet{LightGreen}{green!30!}
\colorlet{LRed}{red!60!}
\colorlet{LGreen}{green!70!}
\colorlet{LBlue}{blue!70!}
\definecolor{white}{rgb}{1,1,1}
\definecolor{black}{rgb}{0,0,0}
\definecolor{lgrey}{rgb}{0.9,0.9,0.9}
\definecolor{llgrey}{rgb}{0.93,0.93,0.93}
\definecolor{lllgrey}{rgb}{0.96,0.96,0.96}
\definecolor{grey}{rgb}{0.7,0.7,0.7}
\definecolor{dgrey}{rgb}{0.5,0.5,0.5}
\definecolor{red}{rgb}{1,0,0}
\definecolor{green}{rgb}{0,1,0}
\definecolor{yellow}{rgb}{1.0, 1.0, 0.0}
\definecolor{darkgreen}{rgb}{0,0.5,0}
\definecolor{darkblue}{rgb}{0,0,0.5}
\definecolor{darkpurple}{rgb}{0.5,0,0.5}
\definecolor{darkdarkpurple}{rgb}{0.3,0,0.3}
\definecolor{blue}{rgb}{0,0,1}
\definecolor{shadegreen}{rgb}{0.95,1,0.95}
\definecolor{shadeblue}{rgb}{0.95,0.95,1}
\definecolor{shadered}{rgb}{1,0.85,0.85}
\definecolor{oddRowGrey}{rgb}{0.95,0.95,0.95}
\definecolor{evenRowGrey}{rgb}{0.85,0.85,0.85}
\definecolor{tableHeadGray}{rgb}{0.85,0.85,0.85}
\definecolor{lightgrey}{rgb}{0.88,0.88,0.88}
\definecolor{lightyellow}{rgb}{1.0, 1.0, 0.88}
\definecolor{selectiveyellow}{rgb}{1.0, 0.73, 0.0}
\definecolor{shadepurple}{rgb}{0.95,0.8,0.95}
\definecolor{plcomments}{RGB}{198,224,180}
\newcommand{\parttitle}[1]{\smallskip\noindent\textbf{#1.}}
\newcommand{\projection}{\ensuremath{\Pi}}
\newcommand{\selection}{\ensuremath{\sigma}}
\newcommand{\aggregation}{\ensuremath{\gamma}}
\newcommand{\Aggregation}[2]{\aggregation_{#1;#2}}
\newcommand{\grpatts}{\ensuremath{G}\xspace}
\newcommand{\grp}{\ensuremath{g}\xspace}
\newcommand{\union}{\ensuremath{\cup}}
\newcommand{\difference}{\ensuremath{-}}
\newcommand{\join}{\ensuremath{\bowtie}}
\newcommand{\crossprod}{\ensuremath{\times}}
\newcommand{\duprem}{\ensuremath{\delta}}
\newcommand{\fragmergeop}{\ensuremath{\mu}\xspace}
\newcommand{\releval}[2]{#1(#2)}
\newcommand{\topk}{\ensuremath{\tau_{k, \orderbylist}}\xspace}
\newcommand{\pos}[3]{{\bf{pos}}( #1,#2,#3 )\xspace}
\newcommand{\orderbylist}{\ensuremath{O}\xspace}
\newcommand{\multiOfTupInRel}[2]{#1 (#2) \xspace}
\newcommand{\thead}[1]{\cellcolor{black}\textcolor{white}{#1}}
\numberwithin{theo}{section}
\newtheorem{lem}{Lemma}[section]
\newtheorem{exam}{Example}
\numberwithin{exam}{section}
\newtheorem{defi}{Definition}
\numberwithin{defi}{section}
\noindent \textsc{Proof Sketch.}%
\crefname{exam}{ex.}{ex.}
\Crefname{exam}{Ex.}{Ex.}
\Crefname{figure}{Fig.}{Fig.}
\Crefname{section}{Sec.}{Sec.}
\Crefname{defi}{Def.}{Def.}
\Crefname{theo}{Thm.}{Thm.}
\newcommand{\BGI}[1]{\todo[inline,color=shadegreen]{\textbf{Boris says:$\,$} #1}}
\newcommand{\provranges}[3]{\provSketch_{#2}(#1,#2,#3)}
\newcommand{\dbInst}[1]{\db_{#1}}
\newcommand{\incremaintain}{\mathcal{I}}
\newcommand{\ieval}[3]{\incremaintain(#1,#2,#3)}
\newcommand{\ievalns}[2]{\incremaintain(#1,#2)}
\newcommand{\deltaOf}[2]{\ensuremath{\Delta(#1,#2)}\xspace}
\newcommand{\deltadb}{\Delta D}
\newcommand{\uniondb}{%
  \mathrel{
  \ooalign{\hss$\cup$\hss\cr%
  \kern0.46ex\raise0.4ex\hbox{\scalebox{0.5}{$\bullet$}}}}}
\newcommand{\biguniondb}{%
  \mathrel{
  \ooalign{\hss$\bigcup$\hss\cr%
  \kern0.46ex\raise0.4ex\hbox{\scalebox{0.5}{$\bullet$}}}}}
\newcommand{\insertdelta}{
  \mathrel{\textcolor{darkgreen}{
  \ooalign{\hss$\Delta$\hss\cr%
  \kern0.4ex\raise0.2ex\hbox{\scalebox{0.45}{+}}}}}}
\newcommand{\deletedelta}{
  \mathrel{\textcolor{red}{
  \ooalign{\hss$\Delta$\hss\cr%
  \kern0.43ex\raise0.1ex\hbox{\scalebox{0.7}{-}}}}}}
\newcommand{\insertdeltadb}{\ensuremath{\insertdelta\kern-0.7ex\db}}
\newcommand{\deletedeltadb}{\ensuremath{\deletedelta\kern-0.7ex\db}}
\newcommand{\insertdeltaprovSketch}{\ensuremath{\insertdelta\kern-0.7ex\provSketch}}
\newcommand{\deletedeltaprovSketch}{\ensuremath{\deletedelta\kern-0.7ex\provSketch}}
\newcommand{\insertdeltatup}{\ensuremath{\insertdelta\kern-0.7ex\tup}}
\newcommand{\deletedeltatup}{\ensuremath{\deletedelta\kern-0.7ex\tup}}
\newcommand{\psFor}[1]{\provSketch[#1]}
\newcommand{\deltaprovSketch}{\ensuremath{\Delta\provSketch}}
\newcommand{\statedata}{\ensuremath{\mathcal{S}}\xspace}
\newcommand{\statedataGBs}{\mathcal{G}}
\newcommand{\pair}[2]{\langle #1, #2 \rangle}
\newcommand{\deltaPair}[2]{\Delta\langle #1, #2 \rangle}
\newcommand{\adeltaPair}{\deltaPair{\tup}{\provSketch}}
\newcommand{\insertDeltaPair}[2]{\insertdelta\,\, \kern-0.8ex \langle #1, #2 \rangle}
\newcommand{\deleteDeltaPair}[2]{\deletedelta\,\, \kern-0.8ex \langle #1, #2 \rangle}
\newcommand{\anno}[2]{{\bf annotate}(#1, #2)}
\newcommand{\annoF}{{\bf annotate}}
\newcommand{\annoRel}{\ensuremath{\mathscr{R}}\xspace}
\newcommand{\makeAnno}[1]{\ensuremath{\mathscr{#1}}\xspace}
\newcommand{\annoDeltaRel}{\ensuremath{\Delta \annoRel}\xspace}
\newcommand{\makeAnnoDelta}[1]{\ensuremath{\Delta \mathscr{#1}}\xspace}
\newcommand{\annoDB}{\ensuremath{\mathscr{D}}\xspace}
\newcommand{\annoDeltaDB}{\ensuremath{\Delta \annoDB}\xspace}
\newcommand{\annoIDeltaDB}{\ensuremath{\insertdelta \annoDB}\xspace}
\newcommand{\annoDDeltaDB}{\ensuremath{\deletedelta \annoDB}\xspace}
\newcommand{\annoQDelta}{\ensuremath{\Delta \query}}
\newcommand{\deltaRel}{\Delta \rel}
\newcommand{\qtopsellers}{\ensuremath{Q_{Top}}\xspace}
\newcommand{\tblSales}{\textbf{sales}}
\newcommand{\ivmAbbr}{IVM\xspace}
\newcommand{\tpchds}{TPC-H\xspace}
\newcommand{\crimeds}{Crime\xspace}
\newcommand{\impAbbr}{IMP\xspace}
\newcommand{\fullmaintenance}{FM\xspace}
\newcommand{\nonsketch}{NS\xspace}
\newcommand{\fragCnt}{\ensuremath{\mathscr{F}}}
\newcommand{\tupCnt}{\textsc{cnt}}
\newcommand{\tupSum}{\textsc{sum}}
\newcommand{\makeagg}[1]{\ensuremath{\mathbf{#1}}\xspace}
\newcommand{\fccnt}{\makeagg{count}}
\newcommand{\fcsum}{\makeagg{sum}}
\newcommand{\fcavg}{\makeagg{avg}}
\newcommand{\fcmin}{\makeagg{min}}
\newcommand{\fcmax}{\makeagg{max}}
\newcommand{\makespecificagg}[1]{\ensuremath{\Aggregation{{#1}(\att)}{\grpatts}(\query)}\xspace}
\newcommand{\sumAgg}{\makespecificagg{\fcsum}}
\newcommand{\minAgg}{\makespecificagg{\fcmin}}
\newcommand{\qhaving}{\ensuremath{Q_{having}}\xspace}
\newcommand{\qgroups}{\ensuremath{Q_{groups}}\xspace}
\newcommand{\qjoin}{\ensuremath{Q_{join}}\xspace}
\newcommand{\qjoinsel}{\ensuremath{Q_{joinsel}}\xspace}
\newcommand{\qsketch}{\ensuremath{Q_{sketch}}\xspace}
\newcommand{\qbloom}{\ensuremath{Q_{bloom}}\xspace}
\newcommand{\qselpd}{\ensuremath{Q_{selpd}}\xspace}
\newcommand{\qspace}{\ensuremath{Q_{space}}\xspace}
\newcommand{\qendtoend}{\ensuremath{Q_{endtoend}}\xspace}
\newcommand{\qtopk}{\ensuremath{Q_{top-k}}\xspace}
\newcommand{\queryclass}[1]{\ensuremath{\mathbb{Q}^{#1}}\xspace}
\newcommand{\tcorrect}{\textbf{tuple correctness}\xspace}
\newcommand{\Tcorrect}{\textbf{Tuple Correctness}\xspace}
\newcommand{\fcorrect}{\textbf{fragment correctness}\xspace}
\newcommand{\Fcorrect}{\textbf{Fragment Correctness}\xspace}
\newcommand{\tupout}[1]{\ensuremath{\tup_{o_{#1}}}\xspace}
\newcommand{\psout}[1]{\ensuremath{\provSketch_{o_{#1}}}\xspace}
\newcommand{\tupsIn}[1]{\ensuremath{\mathbb{T}{(#1)}}\xspace}
\newcommand{\fragsIn}[1]{\ensuremath{\mathbb{F}{(#1)}}\xspace}
\newcommand{\tupins}[1]{\ensuremath{\tup_{i_{#1}}}\xspace}
\newcommand{\tupdel}[1]{\ensuremath{\tup_{d_{#1}}}\xspace}
\newcommand{\psins}[1]{\ensuremath{\provSketch_{i_{#1}}}\xspace}
\newcommand{\psdel}[1]{\ensuremath{\provSketch_{d_{#1}}}\xspace}
\newcommand{\updDB}{\ensuremath{\db'}\xspace}
\newcommand{\annoupdDB}{\ensuremath{\mathscr{D}'}\xspace}
\newcommand{\updDBInst}[1]{\ensuremath{\db'_{#1}}\xspace}
\newcommand{\fragSet}[1]{\ensuremath{\mathbb{S}({#1})}\xspace}
\newenvironment{takeaway}[1]
{\medskip %
  \begin{mdframed}[ %
    linecolor=black, %
    linewidth=0.8pt, %
    roundcorner=10pt, %
    backgroundcolor=black!10, %
    innertopmargin=1.5pt, %
    innerbottommargin=1.5pt, %
    innerrightmargin=2.5pt, %
    innerleftmargin=2.5pt %
    skipbelow=0pt %
    ] %
    \emph{Insights:}\, {\small #1}}
  {\end{mdframed}\par}
\newrobustcmd{\revision}[1]{{#1}}
\begin{document}
\definecolor{lstpurple}{rgb}{0.5,0,0.5}
\definecolor{lstred}{rgb}{1,0,0}
\definecolor{lstreddark}{rgb}{0.7,0,0}
\definecolor{lstredl}{rgb}{0.64,0.08,0.08}
\definecolor{lstmildblue}{rgb}{0.66,0.72,0.78}
\definecolor{lstblue}{rgb}{0,0,1}
\definecolor{lstmildgreen}{rgb}{0.42,0.53,0.39}
\definecolor{lstgreen}{rgb}{0,0.5,0}
\definecolor{lstorangedark}{rgb}{0.6,0.3,0}
\definecolor{lstorange}{rgb}{0.75,0.52,0.005}
\definecolor{lstorangelight}{rgb}{0.89,0.81,0.67}
\definecolor{lstbeige}{rgb}{0.90,0.86,0.45}

\DeclareFontShape{OT1}{cmtt}{bx}{n}{<5><6><7><8><9><10><10.95><12><14.4><17.28><20.74><24.88>cmttb10}{}

\lstdefinestyle{psql}
{
tabsize=2,
basicstyle=\footnotesize\upshape\ttfamily,
language=SQL,
morekeywords={PROVENANCE,BASERELATION,INFLUENCE,COPY,ON,TRANSPROV,TRANSSQL,TRANSXML,CONTRIBUTION,COMPLETE,TRANSITIVE,NONTRANSITIVE,EXPLAIN,SQLTEXT,GRAPH,IS,ANNOT,THIS,XSLT,MAPPROV,cxpath,OF,TRANSACTION,SERIALIZABLE,COMMITTED,INSERT,INTO,WITH,SCN,UPDATED,WINDOW,SET,UPDATE},
extendedchars=false,
keywordstyle=\bfseries,
mathescape=true,
escapechar=@,
sensitive=true
}

\lstdefinestyle{psqlcolor}
{
tabsize=2,
basicstyle=\footnotesize\upshape\ttfamily,
language=SQL,
morekeywords={PROVENANCE,BASERELATION,INFLUENCE,COPY,ON,TRANSPROV,TRANSSQL,TRANSXML,CONTRIBUTION,COMPLETE,TRANSITIVE,NONTRANSITIVE,EXPLAIN,SQLTEXT,GRAPH,IS,ANNOT,THIS,XSLT,MAPPROV,cxpath,OF,TRANSACTION,SERIALIZABLE,COMMITTED,INSERT,INTO,WITH,SCN,UPDATED,FOLLOWING,RANGE,UNBOUNDED,PRECEDING,OVER,PARTITION,WINDOW,SUM,SET},
extendedchars=false,
keywordstyle=\bfseries\color{lstpurple},
deletekeywords={count,min,max,avg,sum,lag,first_value,last_value},
keywords=[2]{count,min,max,avg,sum,lag,first_value,last_value,lead,row_number},
keywordstyle=[2]\color{lstblue},
stringstyle=\color{lstreddark},
commentstyle=\color{lstgreen},
mathescape=true,
escapechar=@,
sensitive=true
}

\lstdefinestyle{datalog}
{
basicstyle=\footnotesize\upshape\ttfamily,
language=prolog
}

\lstdefinestyle{pseudocode}
{
  tabsize=3,
  basicstyle=\small,
  language=c,
  morekeywords={if,else,foreach,case,return,in,or},
  extendedchars=true,
  mathescape=true,
  literate={:=}{{$\gets$}}1 {<=}{{$\leq$}}1 {!=}{{$\neq$}}1 {append}{{$\listconcat$}}1 {calP}{{$\cal P$}}{2},
  keywordstyle=\color{lstpurple},
  escapechar=&,
  numbers=left,
  numberstyle=\color{lstgreen}\small\bfseries,
  stepnumber=1,
  numbersep=5pt,
}

\lstdefinestyle{xmlstyle}
{
  tabsize=3,
  basicstyle=\small,
  language=xml,
  extendedchars=true,
  mathescape=true,
  escapechar=£,
  tagstyle=\color{keywordpurple},
  usekeywordsintag=true,
  morekeywords={alias,name,id},
  keywordstyle=\color{lstred}
}

\lstset{style=psqlcolor}


\title{In-memory Incremental Maintenance of Provenance Sketches}

\settopmatter{authorsperrow=4}
\author{Pengyuan Li}
\affiliation{%
  \institution{Illinois Institute of Technology}
  \country{USA}
}
\email{pli26@hawk.iit.edu}

\author{Boris Glavic}
\affiliation{%
  \institution{University of Illinois Chicago}
  \country{USA}
}
\email{bglavic@uic.edu}

\author{Dieter Gawlick}
\affiliation{%
  \institution{Oracle Corporation}
  \country{USA}
}
\email{dieter.gawlick@oracle.com}

\author{Vasudha Krishnaswamy}
\affiliation{%
  \institution{Oracle Corporation}
  \country{USA}
}
\email{vasudha.krishnaswamy@oracle.com}

\author{Zhen Hua Liu}
\affiliation{%
  \institution{Oracle Corporation}
  \country{USA}
}
\email{zhen.liu@oracle.com}

\author{Danica Porobic}
\affiliation{%
  \institution{Oracle Corporation}
  \country{USA}
}
\email{danica.porobic@oracle.com}

\author{Xing Niu}
\affiliation{%
  \institution{Oracle Corporation}
  \country{USA}
}
\email{xing.niu@oracle.com}

\renewcommand{\shortauthors}{}

\begin{abstract}
    Provenance-based data skipping~\cite{DBLP:journals/pvldb/NiuGLLGKLP21}
compactly over-approximates the provenance of a query using so-called
provenance sketches and utilizes such sketches to speed-up the
execution of subsequent queries by skipping irrelevant data. However,
a sketch captured at some time in the past may become stale if the data
has been updated subsequently. Thus, there is a need to maintain
provenance sketches. In this work, we introduce
\textbf{I}n-Memory incremental \textbf{M}aintenance of
\textbf{P}rovenance sketches (\impAbbr), a framework for maintaining
sketches incrementally under updates. At the core of
\impAbbr is an incremental query engine for data annotated with
 sketches that exploits the coarse-grained nature of
 sketches to enable novel optimizations.
We experimentally demonstrate that \impAbbr significantly reduces the cost of
sketch maintenance, thereby enabling the use of provenance sketches for a broad
range of workloads that involve updates.




\end{abstract}

\maketitle


\section{Introduction}%
\label{sec:introduction}
Database engines take advantage of physical design such as index structures,
zone maps~\cite{DBLP:conf/vldb/Moerkotte98} and partitioning to prune irrelevant
data as early as possible during query evaluation. \revision{In order to prune
data, database systems need to determine statically (at query compile time) what
data is needed to answer a query and which physical design artifacts to use to
skip irrelevant data. For instance, to answer a query with a \lstinline!WHERE!
clause condition \lstinline!A = 3! filtering the rows of a table \lstinline!R!,
the optimizer may decide to use an index on \texttt{A} to filter out rows that
do not fulfill the condition. However, as was demonstrated
in~\cite{DBLP:journals/pvldb/NiuGLLGKLP21}, for important classes of queries
like queries involving top-k and aggregation with \lstinline!HAVING!, it is not
possible to determine \emph{statically} what data is needed, motivating the use
of \emph{dynamic relevance analysis} techniques that determine during query
execution what data is relevant to answer a query.}
In~\cite{DBLP:journals/pvldb/NiuGLLGKLP21} we introduced such a dynamic
relevance analysis technique called \gls{pbds}. In \gls{pbds}, we encode what
data is relevant for a query as a so-called \emph{provenance sketch}. Given a
range-partition of a table accessed by a query, a provenance sketch records
which fragments of the partition contain provenance. That is, provenance
sketches compactly encode an over-approximation of the provenance of a query.
\cite{DBLP:journals/pvldb/NiuGLLGKLP21} presents safety conditions that ensure a
sketch is \emph{sufficient}, i.e., evaluating the query over the data
represented by the sketch is guaranteed to produce the same result as evaluating
the query over the full database. Thus, sketches are used to speed up queries by
filtering data not in the sketch.

\begin{exam}\label{ex:provenance-sketch-capture}
  Consider the database shown in \Cref{fig:wordload-query-capture-reuse}
  and query \qtopsellers that returns  products whose total sale volume is
  greater than \$5000.  
  The provenance of the single result tuple $(Apple, 5074)$  
  are the two tuples (tuples $s_3$ and $s_4$ shown with purple
  background), as the group for Apple is the only group that fulfills the
  \lstinline!HAVING! clause. To create a provenance sketch for this query, we
   select a range-partition of the \texttt{sales} table that optionally 
  may correspond to the physical storage
  layout of this table. For instance, we may choose to partition on attribute \texttt{price} 
  based on ranges $\ranges_{price}$:\\[-4mm]
{\small
  \begin{align*}
                        \{ \range_1 = [1,600], \range_2 = [601,1000], \range_3 = [1001,1500], \range_4 = [1501,10000] \}
  \end{align*}
  }
  In \Cref{fig:wordload-query-capture-reuse}, we show the fragment $\frag_i$
  for the range $\range_i$ each tuple belongs to. Two
  fragments ($\frag_3$ and $\frag_4$ highlighted in red) contain provenance
  and, thus, the provenance sketches for \qtopsellers wrt.
  $\parti_{\tblSales,price}$ is $\provSketch = \{\range_3, \range_4\}$.
  Evaluating the query over the sketch's data is guaranteed to
  produce the same result as evaluation on  the full
  database.\footnote{In general, this is not the case for non-monotone queries.
    The safety check from \cite{DBLP:journals/pvldb/NiuGLLGKLP21} can be used
    to test whether a particular partition for a table is
    safe. The partition used here fulfills this condition.}
\end{exam}

\definecolor{fragcolor}{rgb}{0.005,0.10,0.61}
\definecolor{fragcolorhighlight}{rgb}{0.81,0.10,0.1}
\begin{figure}[t]
  \centering
  \begin{minipage}{1\linewidth}
    \begin{minipage}{0.73 \textwidth}
      \qtopsellers
\begin{lstlisting}
SELECT brand, SUM(price * numSold) AS rev
FROM sales
GROUP BY brand
HAVING SUM(price * numSold) > 5000
\end{lstlisting}
    \end{minipage}
    \begin{minipage}{0.2 \textwidth}
      \centering
      {\small
      \begin{tabular}{|c|c|}
        \thead{brand} & \thead{rev} \\ \hline
        Apple& 5074\\
        \hline
      \end{tabular}
      }
    \end{minipage}
  \end{minipage}
  \begin{minipage}{1.0\linewidth}
    \centering
    {\small
    \tblSales                                                                                                           \\[2mm]
    \begin{tabular}{c|c|c|c|c|c|c}
      & \thead{sid} & \thead{brand} & \thead{productName} & \thead{price} & \thead{numSold} \\
      \cline{2-6}
      $s_1$   & 1           & Lenovo        & ThinkPad T14s Gen 2 & 349           & 1 & \textcolor{fragcolor}{$f_1$} \\
      $s_2$   & 2           & Lenovo        & ThinkPad T14s Gen 2 & 449           & 2 & \textcolor{fragcolor}{$f_1$} \\
      \rowcolor{shadepurple}$s_3$ & 3           & Apple         & MacBook Air 13-inch & 1199          & 1 & \textcolor{fragcolorhighlight}{$f_3$} \\
      \rowcolor{shadepurple}$s_4$ & 4           & Apple         & MacBook Pro 14-inch & 3875          & 1 & \textcolor{fragcolorhighlight}{$f_4$} \\
      $s_5$                       & 5           & Dell          & Dell XPS 13 Laptop  & 1345          & 1 & \textcolor{fragcolorhighlight}{$f_3$} \\
      $s_6$   & 6           & HP            & HP ProBook 450 G9   & 999           & 4 & \textcolor{fragcolor}{$f_2$} \\
      $s_7$   & 7           & HP            & HP ProBook 550 G9   & 899           & 1 & \textcolor{fragcolor}{$f_2$} \\
      \cline{2-6}
    \end{tabular}
    }
  \end{minipage}
  \caption{Example query and relevant subsets of the database.}
  \label{fig:wordload-query-capture-reuse}
\end{figure}

\begin{figure*}[h]
  \begin{center}
    \makebox[\textwidth]{\includegraphics[width = \textwidth]{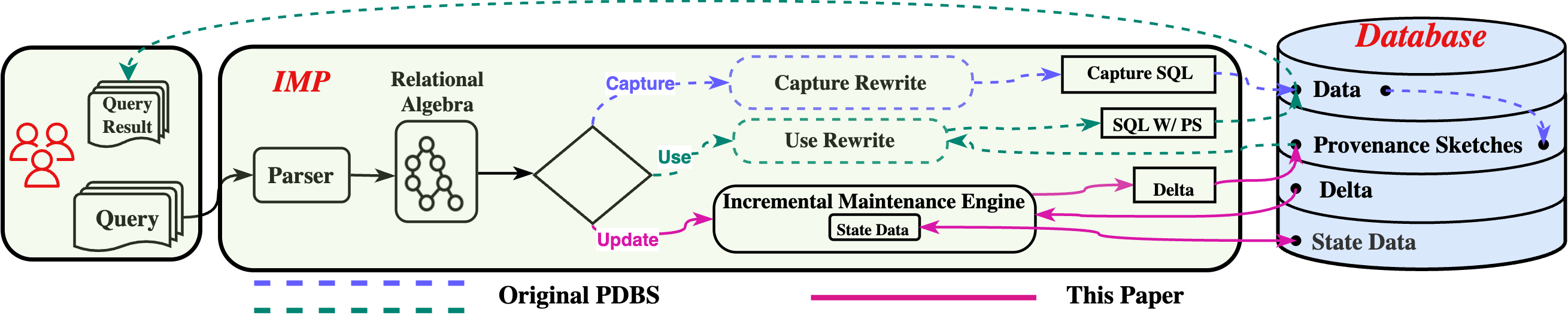}}
  \end{center}
  \vspace{-6mm}
  \caption{\impAbbr manages a set of sketches. For each incoming query, \impAbbr determines whether to (i) capture a new sketches, (ii) use an existing non-stale sketch, or (iii) incrementally maintain a stale sketch and then utilize the updated sketch to answer the query.}
\label{fig:workflow_overview}
\end{figure*}

As demonstrated in~\cite{DBLP:journals/pvldb/NiuGLLGKLP21}, provenance-based
data skipping can significantly improve query performance --- we pay upfront for
creating sketches  for some of the queries of a workload and then amortize this cost by using  sketches to answer future queries by skipping irrelevant data.
To create, or \emph{capture}, a sketch for a query $\query$ we execute an instrumented version of $\query$. Similarly, to use a sketch for a query $\query$, this query is instrumented to filter out data that does not belong to the sketch.
For instance, consider the sketch for \qtopsellers from
\Cref{ex:provenance-sketch-capture} containing two ranges $\range_3 =
[1001,1500]$ and $\range_4 = [1501,10000]$. To skip irrelevant data, we create a disjunction of conditions testing that each tuple passing
the \lstinline!WHERE! clause belongs to the sketch, i.e., has a price within $\range_3$ or
$\range_4$:\footnote{Note that the conditions for adjacent ranges in a sketch
  can be merged. Thus, the actual instrumentation would be \lstinline!WHERE
  price BETWEEN 1001 AND 10000!.}

\begin{center}
\begin{lstlisting}
WHERE (price BETWEEN 1001 AND 1500)
   OR (price BETWEEN 1501 AND 10000)
\end{lstlisting}
\end{center}$\,$\\[-8mm]

\gls{pbds} enables
databases to exploit physical design for new classes of queries, significantly improving the performance of aggregation queries with \lstinline!HAVING! and top-k queries~\cite{DBLP:journals/pvldb/NiuGLLGKLP21} \revision{and, more generally, any query where only a fraction of the database is relevant for answering the query.}
\revision{For instance, for a top-k query only tuples contributing to a result tuple in the top-k are relevant, but which tuples are in the top-k can only be determined at runtime. A counterexample are queries with selection conditions with low selectivity for which the database can effectively filter the data without \gls{pbds}.} 

However, just like materialized views, a sketch captured in the past may no longer correctly reflect what data is needed
(has become \emph{stale}) when the database is updated. 
The sketch then has to be maintained
to be valid for the current version of the database. 



\begin{exam}[Stale Sketches]\label{ex:stale-sketches}
  Continuing with our running example, consider the effect of inserting a new tuple

  \begin{center}
    $s_8$ = \verb!(8, HP, HP ProBook 650 G10, 1299, 1)!
  \end{center}

  into relation \tblSales. Running \qtopsellers over the updated table returns a
  second result tuple \texttt{(HP, 6194)} as the total revenue for HP is now
  above the threshold specified in the \lstinline!HAVING! clause. For the
  updated database, the three tuples for HP also belong to the provenance.
  Thus, the sketch has become stale as it is missing the range $\range_2$ which
  contains these tuples. Evaluating \qtopsellers over the outdated sketch leads
  to an incorrect result that misses the group for HP.
\end{exam}
\revision{Consider a
partition $\parti$ of a table $\rel$ accessed by a query $\query$. We use $\acp$ to denote
the \emph{capture query} for $\query$ and $\parti$, generated using the rewrite rules from
\cite{DBLP:journals/pvldb/NiuGLLGKLP21}. Such a query propagates coarse-grained provenance information and ultimately returns a sketch.  A straightforward approach to maintain sketches under updates is
\emph{full maintenance} which means that we rerun the sketch's \emph{capture
  query} $\acp$ to regenerate the sketch.}
Typically, $\acp$ is more expensive than $\query$. 
Thus, frequent execution of capture queries
is not feasible. 
Alternatively, we could employ \gls{ivm} techniques~\cite{DBLP:conf/sigmod/GuptaMS93, DBLP:journals/vldb/KochAKNNLS14} to maintain $\acp$. However, capture queries use specialized data types and functions to efficiently implement common operations related to sketches. For instance, we use
bitvectors to encode sketches compactly and utilize optimized (aggregate) functions and comparison operators for this encoding. To give two concrete examples, a function implementing binary search over the set of ranges for a sketch is used to determine which fragment an input tuple belongs to and an aggregation function that computes the bitwise-or of multiple bitvectors is used to implement the union of a set of partial sketches. To the best of our knowledge these operations are not supported by state-of-the-art \gls{ivm} frameworks.
Furthermore,
sketches are compact over-approximations of the provenance
of a query that are \emph{sound}: evaluating the query over the sketch yields the same
result as evaluating it over the full database. It is often possible to
further over-approximate the sketch, trading improved maintenance performance
for increased sketch size.
Existing \gls{ivm} methods do not support such trade-offs as they have to ensure that incremental maintenance yields the same result as full maintenance.



In this work, we study the problem of maintaining sketches under updates such
that a sketch created in the past can be updated to be valid for the current
state of the database. Towards this goal we develop an incremental maintenance
framework for sketches that respects the approximate nature of sketches, has specialized data structures for representing data annotated with sketches, and maintenance rules tailored for data annotated with sketches. 

We start by introducing a data model where each row is associated with a sketch
and then develop incremental maintenance rules for operators over such annotated
relations. We then present an implementation of these rules in an in-memory
incremental engine called \impAbbr (\textbf{I}ncremental \textbf{M}aintenance of
\textbf{P}rovenance Sketches). The input to this engine is a set of annotated delta
tuples (tuples that are inserted / deleted) that we extract from a backend DBMS.
To maintain a sketch created by a capture query $\acp$ at some point in the past, we extract the
delta between the current version of the database and the database instance at
the original time of capture (or the last time we maintained the sketch) and then
feed this delta as input to our incremental engine to compute a delta for the
sketch. 
\impAbbr 
outsources some of the computation to the backend database. This is in particular
useful for operations like joins where deltas from one side of the join have to
be joined with the full table on the other side similar to the delta rule
$\Delta R \join S$ used in standard incremental view maintenance.
Additionally, we present several optimizations of our approach: (i) filtering
deltas determined by the database to prune delta tuples that are guaranteed to
not affect the result of incremental maintenance and (ii) filtering deltas for
joins using bloom filters. \revision{\impAbbr is effective for any query that benefits from sketches, e.g., queries with \lstinline!HAVING!, as long as the cost of maintaining sketches is amortized by using sketches for answering queries.}


In summary, we present \impAbbr, the first incremental engine for maintaining
provenance sketches. Our main contributions are:
\begin{itemize}[topsep=6pt,leftmargin=8pt,listparindent=5pt]
\item We develop incremental versions of relational algebra operators for
  sketch-annotated data. 
\item We implement these operators in \impAbbr, an in-memory engine
  for incremental sketch maintenance. \impAbbr enables \gls{pbds} for any DBMS by acting as a middleware between the user and the database that manages and maintains sketches. 
\item We experimentally compare \impAbbr against  full maintenance and against a baseline that does not use \gls{pbds} using \tpchds, real world datasets and synthetic data. 
  \impAbbr outperforms full  maintenance, often by several orders of magnitude. Furthermore, \gls{pbds} with \impAbbr significantly improves the performance of mixed workloads including both queries and updates.
\end{itemize}

The remainder of this paper is organized as follows:
\Cref{sec:workflow_overview} presents an overview of \impAbbr. We discuss
related work in \Cref{sec:related_work}.
We formally define incremental maintenance
of sketches and introduce our annotated data model in \Cref{sec:notation_and_background}. In
\Cref{sec:incremental_operator_rules}, we introduce incremental sketch
maintenance rules for relational operators and prove their correctness in \ifnottechreport{\Cref{sec:rule_correctness_proof}}\iftechreport{\Cref{sec:rule_correctness_proof_appendix}}. We discuss
\impAbbr's implementation in \Cref{sec:implementation}, present experiments in
\Cref{sec:experiments}, and conclude in \Cref{sec:conclusion}.


\section{Overview of \impAbbr}
\label{sec:workflow_overview}

\Cref{fig:workflow_overview} shows a overview of \impAbbr that operates as a middleware between the user and a DBMS.
\revision{We highlight parts of the system that utilize techniques from \cite{DBLP:journals/pvldb/NiuGLLGKLP21}. The dashed blue pipeline is for capture rewrite and dashed green pipeline is for use rewrite.}  
Users send SQL queries and updates to \impAbbr that are parsed using \impAbbr's parser and translated into an intermediate representation (relational algebra with update operations). The system stores a set of provenance sketches in the database. For each sketch we store the sketch itself, the query it was captured for, the current state of incremental operators for this query, and the database version it was last maintained at or first captured at for sketches that have not been maintained yet. As sketches are small (100s of bytes), we treat sketches as immutable and retain some or all past versions of a sketch. This has the
advantage that it avoids write conflicts (for updating the sketch) between
concurrent transactions that need to update to different versions of the sketch.
We assume that the DBMS uses snapshot isolation and we can use snapshot identifiers used by the database internally to identify versions of sketches and of the database. For systems that use other concurrency control mechanisms, \impAbbr can maintain version identifiers. Furthermore, the system can persist the state that it maintains for its incremental operators in the database. This enables the system to continue incremental maintenance from a consistent state, e.g., when the database is restarted, or when we are running out of memory and need to evict the operator states for a query. \impAbbr enables \gls{pbds} for workloads with updates on-top of any SQL databases.

\impAbbr supports multiple incremental maintenance strategies. Under \textit{eager} maintenance, the system incrementally maintains each sketch that may be affected by the update (based on which tables are referenced by the sketch's query) by processing the update, retrieving the delta from the database, and running the incremental maintenance. Eager maintenance can be configured to batch updates. If the operator states for a sketch's query are not currently in memory, they will be fetched from the database. The updates to the sketches determined by incremental maintenance are then directly applied. Under \emph{lazy} maintenance, the system passes updates directly to the database. When a sketch is needed to answer a query, this triggers maintenance for the sketch. For that, \impAbbr fetches the delta between the version of the database at the time of the last maintenance for the sketch and the current database state and incrementally maintains the sketch. The result is a sketch that is valid as of the current state of the database. More advanced strategies can be designed on top of these two primitives, e.g., triggering eager maintenance during times of low resource usage or eagerly maintaining sketches for queries with strict response time requirements to avoid slowing down such queries when maintenance is run for a  large batch of updates.

\revision{For queries sent by the user, \impAbbr first determines whether there
  exists a sketch that can be used to answer the query $\query$. For that, it
  applies the mechanism from \cite{DBLP:journals/pvldb/NiuGLLGKLP21} to
  determine whether a sketch captured for a query $\query'$ in the past can be
  safely used to answer \query. If such a sketch \provSketch exists, we
  determine whether \provSketch is stale.}
\revision{If that is the case, then \impAbbr incrementally maintains the sketch
  (solid red pipeline). Afterwards, the query \query is instrumented to filter input
  data based on sketch \provSketch and then the instrumented query is sent to
  the database and its results are forwarded to the user (dashed green
  pipeline)\cite{DBLP:journals/pvldb/NiuGLLGKLP21}. If no existing provenance
  sketch can be used to answer \query, then \impAbbr creates a capture query for
  \query and evaluates this query to create a new sketch \provSketch (dashed blue
  pipeline) \cite{DBLP:journals/pvldb/NiuGLLGKLP21}. This sketch is then used to
  answer \query (dashed green pipeline) \cite{DBLP:journals/pvldb/NiuGLLGKLP21}.
  \impAbbr is an in-memory engine, exploiting the fact that sketches are small
  and that deltas and state required for incremental maintenance are typically
  small enough to fit into main memory or can be split into multiple batches if
  this is not that case.}

\section{Related Work}%
\label{sec:related_work}

\parttitle{Provenance}
Provenance can be captured by annotating data and propagating these annotations
using relational queries or by extending the database
system~\cite{DBLP:journals/sigmod/KarvounarakisG12}
\cite{DBLP:conf/icde/NiuKGGLR17} \cite{DBLP:journals/tkde/NiuKGGLKR19}. Systems
like GProM~\cite{DBLP:journals/debu/ArabFGLNZ17},
Perm~\cite{DBLP:conf/birthday/GlavicMA13},
Smoke~\cite{DBLP:journals/corr/abs-1801-07237}, Smoked
Duck~\cite{mohammed-23-sd}, Links~\cite{DBLP:journals/scp/FehrenbachC18},
ProvSQL\cite{DBLP:journals/pvldb/SenellartJMR18} and
DBNotes\cite{DBLP:journals/vldb/BhagwatCTV05} capture provenance for SQL
queries. 
In \cite{DBLP:journals/pvldb/NiuGLLGKLP21}, we introduced provenance-based data skipping
(PBDS). The approach captures sketches over-approximating the provenance of a
query and utilizes these sketches to speed-up subsequent queries.
We present the first approach for maintaining sketches under updates,
thus, enabling efficient PBDS for databases that are subject to updates.

\parttitle{Incremental View Maintenance (\ivmAbbr)}
View maintenance has been studied
extensively~\cite{DBLP:conf/sigmod/BlakeleyLT86, DBLP:journals/tods/BunemanC79,
DBLP:conf/edbt/GhandeharizadehHJ92, DBLP:conf/sigmod/GuptaMS93,
DBLP:conf/dood/Kuchenhoff91, DBLP:conf/sigmod/ShmueliI84}.
\cite{DBLP:journals/debu/GuptaM95,DBLP:conf/cascon/Vista94} gives an overview of many techniques and applications
of view maintenance. Early work on view maintenance, e.g.,
~\cite{DBLP:conf/sigmod/BlakeleyLT86,DBLP:journals/tods/BunemanC79}, used set
semantics. This was later expanded to bag semantics (e.g.,~ \cite{DBLP:conf/sigmod/GriffinL95, DBLP:conf/icde/ChaudhuriKPS95}).
We consider bag semantics.
Materialization has been studied for Datalog as well
~\cite{DBLP:conf/deductive/GuptaKM92,DBLP:conf/sigmod/GuptaMS93,DBLP:conf/aaai/MotikNPH15a}. 
Incremental maintenance algorithms for iterative computations have been studied
in~\cite{abadi-15-fdd, budiu-23-d, MM13c, murray-16-initdpwtd}.
\cite{DBLP:journals/vldb/KochAKNNLS14}  proposed higher-order \ivmAbbr.
\cite{DBLP:journals/vldb/YangW03} maintains aggregate views in temporal
databases. \cite{DBLP:conf/vldb/PalpanasSCP02} proposes a general mechanism for
aggregation functions.
\cite{DBLP:conf/vldb/AgrawalCN00,DBLP:conf/icac/ZilioZLMLCPCGALV04} studied
automated tuning of materialized views and indexes in databases.
As mentioned before, existing view maintenance techniques can not be directly applied for
provenance sketches maintenance, since \cite{DBLP:journals/pvldb/NiuGLLGKLP21}
uses specialized data types and functions to efficiently handle sketches during
capture, which are not supported in start-of-the-art \gls{ivm} systems. Furthermore,
classical \glspl{ivm} solutions have no notion of over-approximating query results and, thus, can not trade sketch accuracy for performance.
Several strategies have been studied for maintaining views eagerly and lazily.
For instance, \cite{DBLP:conf/sigmod/ColbyGLMT96} presented algorithms for
deferred maintenance and
\cite{DBLP:conf/sigmod/BlakeleyLT86,DBLP:conf/vldb/CeriW91,DBLP:conf/sigmod/GuptaMS93}
studied immediate view maintenance. Our approach supports both cases: immediately maintaining sketches after each update or
sketches can be updated lazily  when needed. 

\parttitle{Maintaining Provenance}
\cite{DBLP:conf/sigmod/ZhouSTLLM10} presents a system for maintenance of
provenance in a distributed Datalog engine. In contrast to our work,
\cite{DBLP:conf/sigmod/ZhouSTLLM10} is concerned with efficient distributed
computation and storage for provenance. Provenance maintenance has to deal with
large provenance annotations that are generated by complex queries involving
joins and operations like aggregation that compute a small number of result
tuples based on a large number of inputs. \cite{DBLP:conf/sigmod/ZhouSTLLM10}
addresses this problem by splitting the storage of provenance annotations across
intermediate query results requiring recursive reconstruction at query time. In
contrast, provenance sketches are small and their size is determined upfront
based on the partitioning that is used. Because of this and because of their
coarse-grained nature, sketches enable new optimizations, including trading
accuracy for performance.


\begin{figure}[t]
{\footnotesize
  \begingroup
  \centering
  \begin{tabular}{|l|l|}
    \hline
    $\db (\deltadb)$ & a database (a delta database)\\
    $\annoDB (\annoDeltaDB)$ & an annotated database (an annotated delta database) \\
    $\rel (\annoRel, \annoDeltaRel)$ & a relation (an annotated relation, an annotated delta relation)\\
    $\pair{\tup}{\provSketch}$ & an annotated tuple: a tuple $t$ associated with its provenance sketch \provSketch\\
    $\range$,    $\ranges$,  $\dbranges$ & a range, set of ranges for partitioning relation $\rel$ (database $\db$)\\
    $\statedata $ & State of an incremental relational algebra operator\\
    $\provSketch $ & a provenance sketch\\
    $\incremaintain $ & an incremental maintenance procedure\\
    $\query$ & a query\\
    \hline
  \end{tabular}
  \vspace{-3mm}
  \caption{Glossary}
  \label{fig:notations}
  \endgroup
}
\end{figure}


\section{Background and Problem Definition}
\label{sec:notation_and_background}

In this section we introduce necessary background 
and introduce notation used in the following sections. 
Let $\udom$ be a domain of
values. An instance $\rel$ of an n-ary relation schema $\schemaOf{\rel} = (\att_1, \ldots, \att_n)$ is a function
$\udom^{n} \rightarrow \mathbb{N}$ mapping tuples to their multiplicity.  We use $\bag{\cdot}$ to denote bags and $\tup ^{n} \in \rel$ to
denote tuple $\tup$ with multiplicity $n$ in relation $\rel$, i.e., $\rel(\tup) = n$. A database $\db$ is a set of relations $\rel_1$ to $\rel_m$. The schema of a database $\schema{\db}$ is the set of relation schemas $\schema{\rel_i}$ for $i \in [1,m]$.
\Cref{fig:rel-algebra-def} shows the bag semantics relational algebra used in
this work. We use $\schema{Q}$ to denote the schema of the query $Q$
and $Q(\db)$ to denote the result of evaluating query $\query$ over database
 $\db$. Selection $\selection_\theta (\rel)$ returns all tuples from
relation $\rel$ which satisfy the condition $\theta$. \revision{Projection
$\projection_\ATT (\rel)$ projects all input tuples on a list of  projection
expressions. Here, $\ATT$ denotes a list of expressions with potential renaming
(denoted by $e \rightarrow \att$) and $\tup.\ATT$ denotes applying these
expressions to a tuple $\tup$. For example, $a + b \rightarrow c$ denotes renaming the result of $a+b$ as $c$.
} $R \crossprod S$ is the cross product for bags.
For convenience we also define join $R \join_\theta S$ and natural join $R
\join S$ in the usual way. 
Aggregation $\Aggregation{f(a)}{\grpatts}(\rel)$
groups tuples according to their values in attributes $G$ and computes the
aggregation function $f$ over the bag of values of attribute $\att$ for each
group. We also allow the attribute storing $f(a)$ to be named explicitly, e.g.,
$\Aggregation{f(\att) \to x}{\grpatts}(\rel)$, renames $f(\att)$ as $x$.
Duplicate removal $\duprem (R)$ removes duplicates (definable using aggregation). Top-K $\topk(\rel)$ returns
the first $k$ tuples from the relation $\rel$ sorted on order-by attributes
$\orderbylist$.
We use $<_{\orderbylist}$ to denote the order induced by \orderbylist.
The position of a tuple in \rel ordered on \orderbylist is denoted by
$\pos{\tup}{\rel}{\orderbylist}$ and defined as:
$\pos{\tup}{\rel}{\orderbylist} = \sum_{\tup' <_{\orderbylist}  \tup
  \wedge \tup ^{m} \in \rel} m $.
\begin{figure}[t]
  \centering
  \begin{minipage}{1.0\linewidth}
    \begin{align*}
     &  \selection_{\theta} (R) = \bag{t^n|t^n \in R \wedge t \models \theta } \;\;\; \projection_{\ATT} (R) = \bag{t^n|n = \sum_{u.{\ATT} = t} R(u) }  \\
     & \duprem (R) = \bag{ \tup^{1} | \tup  \in \rel } \;\;\; R \crossprod S = \bag{ (t \concat s)^{n*m} |t^n \in R \wedge s^m \in S }\\
     &  \Aggregation{f(\att)}{\grpatts} (\rel) = \bag{ (t.\grpatts, f(\grpatts_\tup))^1 | \tup \in R } \;\; \grpatts_{\tup} = \{ (\tup_1.a)^n |{\tup_1}^n \in R \wedge \tup_1.\grpatts = \tup.\grpatts \}  \\
     & \topk (\rel) = \bagopen \tup^m \mid \pos{\tup}{\rel}{\orderbylist} < k \wedge m = min(\multiOfTupInRel{\rel}{\tup}, k - \pos{\tup}{\rel}{\orderbylist}) \bagclose
    \end{align*}\\[-10mm]
  \end{minipage}
  \caption{Bag Relational Algebra}
  \label{fig:rel-algebra-def}
\end{figure}
%
\revision{\Cref{fig:notations} shows an overview of the notations used in this work.}

\subsection{Range-based Provenance Sketches}
We use provenance sketches to concisely represent a superset of the
provenance of a query (a sufficient subset of the input) based on horizontal
partitions of the input relations of the query.
\subsubsection{Range Partitioning}
\label{tab:ps_def}
Given a set of intervals over the domains of a set of \emph{partition attributes} $A
\subset \relSchema$, \emph{range partitioning} determines membership of tuples to
fragments based on their $A$ values. For simplicity, we
define partitioning for a single attribute $\att$, \revision{but all of our techniques also apply when $\card{A} > 1$.}

\begin{defi}[Range partition] \label{def:range}
Consider a relation $\rel$ and $\att \in \relSchema$. Let $\domain{a}$ denote
the domain of $\att$ and $\ranges = \{\range_1, \ldots, \range_n\}$ be a set
of intervals $[l,u] \subseteq \domain{\att}$ such that $\bigcup_{i=0}^{n}
\range_i = \domain{a}$ and $\range_i \cap \range_j = \emptyset$ for $i \neq j$. The
\emph{range-partition} of $\rel$ on $\att$ according to $\ranges$ denoted as
$\rparti_{\ranges,\att}(R)$ is defined as:
  \begin{align*}
    \parti_{\ranges,a}(R)  = \{ \rel_{{\range}_{1}}, \ldots, \rel_{{\range}_{n}} \} \hspace{2mm}\mathtext{\textbf{where}}\hspace{2mm}
    \rel_{\range}            = \bag{t^n \mid t^n \in \rel \wedge t.a \in  \range}
  \end{align*}
\end{defi}

We will use $\parti$ instead of $\parti_{\ranges,\att}$ if
$\ranges$ and $\att$ are clear from the context and $\frag$, $\frag'$, $\frag_i$, etc. to denote fragments. 
We also extend range partitioning
to databases. For a database $\db = \{\rel_1, \ldots, \rel_n\}$, we use
$\dbranges$ to denote a set of range - attribute pairs $\{ (\ranges_1, \att_1),
\ldots, (\ranges_n, \att_n) \}$ such that $\parti_{\ranges_i,\att_i}$ is a
partition for $\rel_i$. 
Relations $\rel_i$ that do not have a sketch can be modeled by setting $\ranges_i = \{ [min(\domain{\att_i}),
max(\domain{\att_i})] \}$, a single range covering all domain values.

\subsubsection{Provenance Sketches}
\label{sec:prov-sketch-sketches}
Consider a database $\db$,  query $\query$, and a range partition $\parti_{\dbranges}$ of $\db$.
We use $\prov{\query}{\db} \subseteq \db$ to denote the provenance of $\query$ wrt. $\db$. For the purpose of \gls{pbds}, any provenance model that represents the provenance of $\query$ as a subset of $\db$ can be used as long as the model guarantees \emph{sufficiency}\footnote{\revision{Note that our notion of sufficiency aligns with the one from \cite{hu-24-fswcq} which differs slightly from the one used in \cite{G21} that is defined for a single result tuple of $\query$.}}~\cite{G21}: $\query(\prov{\query}{\db}) = \query(\db)$.
A provenance sketch $\provSketch$ for $\query$  according to $\dbranges$ is a
subset of the ranges $\ranges_i$ for each $\ranges_i \in \dbranges$ such that
the fragments corresponding to the ranges in $\provSketch$ fully cover
$\query$'s provenance within each $\rel_i$ in $\db$, i.e.,$\prov{\query}{\db}
\cap \rel_i$. We will write $\range \in \dbranges$ to denote that $\range \in \ranges_i$ for some $\ranges_i \in \dbranges$
and $\dbInst{\range}$ for $\range$
from $\ranges_i$ to denote the subsets of the database where all relations are
empty except for $\rel_i$ which is set to $\rel_{i,\range}$, the fragment for
$\range$. We use $\provranges{\db}{\dbranges}{\query} \subseteq \dbranges$ to
denote the set of ranges whose fragments overlap with the provenance
$\prov{\query}{\db}$:

\begin{align*}
  \provranges{\db}{\dbranges}{\query} &= \{ \range \mid \range \in \ranges_i \wedge \exists t \in \prov{\query}{\db}: t \in R_{i,\range} \}
\end{align*}

\begin{defi}[Provenance Sketch]\label{def:provenance-sketch}
  Let $\query$ be a query, $\db$ a database,  $\rel$ a relation accessed by $\query$, and $\dbranges$ a partition of $\db$.
  We call a subset $\provSketch$ of $\dbranges$ a \textbf{provenance sketch} iff
  $\provSketch \supseteq \provranges{\db}{\dbranges}{\query}$.
  A sketch is \textbf{accurate} if
  $\provSketch = \provranges{\db}{\dbranges}{\query}$. 
  The \textbf{instance}  $\dbInst{\provSketch}$ of $\provSketch$ is defined as $\dbInst{\provSketch} = \bigcup_{\range \in \provSketch} \db_{\range}$. A sketch is \textbf{safe} if $\query(\dbInst{\provSketch}) = \query(\db)$. 
\end{defi}

Consider the database consisting of a single relation
(\texttt{sales}) from our running example shown in
\Cref{fig:wordload-query-capture-reuse}. According to the partition $\dbranges
= \{ (\ranges_{price}, price) \}$, the accurate provenance sketch $\provSketch$
for the query $\qtopsellers$ according to $\dbranges$ consists of the set of
ranges $\{\range_3,\range_4\}$ (the two tuples in the provenance of this query
highlighted in \Cref{fig:wordload-query-capture-reuse} belong to the fragments
$\frag_3$ and $\frag_4$ corresponding to these ranges). The instance
$\instOf{\provSketch}$, i.e., the data covered by the sketch, consists of all
tuples contained in fragments $\frag_3$ and $\frag_4$ which are:
$\{s_3,s_4,s_5\}$. This sketch is safe. We use the method from \cite{DBLP:journals/pvldb/NiuGLLGKLP21} to determine for an attribute $\att$ and query $\query$ whether a sketch build on any partition of $\rel$ on $\att$ will be safe.



\subsection{Updates, Histories, and Deltas}
\label{subsec:database_changing}

For the purpose of incremental maintenance we are interested in the difference
between database states. Given two databases $\db_1$ and $\db_2$ we define the
\emph{delta} between $\db_1$ and $\db_2$ to be the symmetric difference between
$\db_1$ and $\db_2$ where tuples $t$ that have to be inserted into $\db_1$ to
generate $\db_2$ are tagged as $\insertdeltatup$ and tuples that have to be
deleted to derive $\db_2$ from $\db_1$ are tagged as $\deletedeltatup$:
\[
\deltaOf{\db_1}{\db_2} = \bag{ \deletedeltatup \mid t \in \db_1 - \db_2 } \union \bag{\insertdeltatup \mid t \in \db_2 - \db_1 }
\]

For a given delta $\deltadb$, we use $\deletedeltadb$  ($\insertdeltadb$) to
denote $\bag{ \deletedeltatup \mid \deletedeltatup \in \deltadb }$ ($\bag{
  \insertdeltatup \mid \insertdeltatup \in \deltadb }$).
We use $\db \uniondb \deltadb$ to denote \emph{applying} delta $\deltadb$ to database $\db$:
\[
  \db \uniondb \deltadb = \db \difference \bag{ \deletedeltatup | \deletedeltatup \in \deltadb } \union \bag{\insertdeltatup | \insertdeltatup \in \deltadb }
\]

\begin{exam}\label{ex:background_changes_to_db}
  Reconsider the insertion of tuple $s_8$ (also shown below) into \tblSales \xspace as shown in \Cref{ex:stale-sketches}.
  \begin{center}
    $s_8$ = \verb!(8, HP, HP ProBook 650 G10, 1299, 1)!
  \end{center}
  Let us assume that the database before (after) the insertion of this tuple is $\dbv{1}$ ($\dbv{2}$), then we get:
  $
\deltadbv{2} = \bag{ \insertdelta s_8 }
  $
\end{exam}



We use the same delta notation for sketches, e.g., 
for two sketch versions $\provSketch_1$ and $\provSketch_2$, $\deltaprovSketch$ is their delta if $\provSketch_2 = \provSketch_1 \uniondb \deltaprovSketch$, where $\uniondb$ on sketches is defined as expected by inserting $\insertdeltaprovSketch$ and deleting $\deletedeltaprovSketch$.

\subsection{Sketch-Annotated Databases And Deltas}
\label{sec:incremental_data_model}

Our incremental maintenance approach utilizes relations whose
tuples are annotated with sketches. We define an incremental
semantics for maintaining the results of operators over such
annotated relations and demonstrate that this semantics correctly maintains
sketches. 
\begin{defi}[Sketch Annotated Relation]\label{def:sketch-annotated-rels}
  A sketch annotated relation $\annoRel$ of arity $m$ for a given set of ranges $\ranges$ over the domain of some attribute $\att \in \relSchema$, is a bag of pairs $\pair{\tup}{\provSketch}$ such that $\tup$ is an $m$-ary tuple and $\provSketch \subseteq \ranges$.
\end{defi}

We next define an operator $\anno{\rel}{\dbranges}$ that annotates each
tuple with the singleton set containing the range its value in attribute
$\att$ belongs to.
This
operator will be used to generate inputs for incremental relational algebra
operators over annotated relations.

\begin{defi}[Annotating Relations]\label{def:annot-operator}
  Given a relation $\rel$, attribute $\att \in \relSchema$ and ranges $\dbranges = \{\ldots, (\ranges,\att), \ldots\}$, i.e., $(\ranges,\att)$ is the partition for $\rel$ in \dbranges, the operator $\annoF$ returns a sketch-annotated relation $\annoRel$ with the same schema as $\rel$:
\[
  \anno{\rel}{\dbranges} = \bag{ \pair{\tup}{\{\range\}} \mid \tup \in \rel \land \tup.\att \in \range \land \range \in \ranges }
\]
\end{defi}

We define annotated deltas as deltas where
each tuple is annotated using the $\annoF$ operator. Consider a delta $\deltaRel$ between two versions $\relv{1}$ and $\relv{2}$ of relation $\rel$. Given 
ranges
$\ranges$ for attribute $\att \in \relSchema$, 
we define $\annoDeltaRel$ as:
%
$
\annoDeltaRel = \anno{\deltaRel}{\dbranges}.
$
$\annoDeltaRel$ contains all tuples from $\rel$ that differ between $\relv{1}$ and $\relv{2}$ tagged with $\insertdelta$ or $\deletedelta$ depending on whether they got inserted or deleted. Each tuple $\tup$ is annotated with the range $\range \in \ranges$ that $\tup.\att$ belongs to. \revision{Analog we use $\annoDB$ to denote the annotated version of database $\db$ and use $\annoDeltaDB$ to denote the annotated version of delta database $\deltadb$.}
\begin{exam}\label{ex:data_model_example}
  Continuing with \Cref{ex:background_changes_to_db}, the annotated version of $\deltadbv{2}$ according to $\ranges_{price}$ is
  $
    \bag{ \pair{\insertdelta s_8}{\{\range_3\}} },
  $
because $s_8.price$ belongs to $\range_3 = [1001,1500] \in \ranges_{price}$.
\end{exam}


\subsection{Problem Definition}%
\label{sec:problem_definition}
We are now ready to define \emph{\glspl{mp}} that
maintain provenance sketches. An \gls{mp} takes as input a query $\query$
and an annotated delta $\annoDeltaDB$ for the ranges $\dbranges$ of a provenance
sketch $\provSketch$ and produces a delta $\deltaprovSketch$ for the sketch. \revision{Note that we assume that all attributes used in $\dbranges$ are \emph{safe}. An attribute $\att$ is safe for a query $\query$ if every sketch based on some range partition on $\att$ is safe. We use the 
safety test from~\cite{DBLP:journals/pvldb/NiuGLLGKLP21} to determine safe attributes.
}
\glspl{mp} are allowed to store some state $\statedata$, e.g., information about groups produced by an aggregation operator, 
to allow for more efficient maintenance. Given
the current state and $\annoDeltaDB$, the \gls{mp} should return a
delta $\deltaprovSketch$ for the sketch $\provSketch$ and an updated state $\statedata'$ such that
$\provSketch \uniondb \deltaprovSketch$ over-approximates an accurate sketch for the updated database.
\begin{defi}[Incremental Maintenance Procedure] \label{def:maintenance_procedure}
  Given a query $\query$, a database $\db$ and a delta $\deltadb$. Let $\provSketch$ be a provenance sketch over $\db$ for $\query$ wrt. some partition $\dbranges$. An {\bf{incremental maintenance procedure}} $\incremaintain$ takes as input a state $\statedata$, the annotated delta $\annoDeltaDB$, and returns an updated state $\statedata'$ and a provenance sketch delta $\deltaprovSketch$:
  \[
    \incremaintain(\query, \dbranges, \statedata, \annoDeltaDB) = (\deltaprovSketch, \statedata')
  \]
\end{defi}

\revision{Let $\psFor{\query, \dbranges, \db}$ denote an accurate sketch for $\query$ over $\db$ wrt. $\dbranges$.
Niu et al.~\cite{DBLP:journals/pvldb/NiuGLLGKLP21} demonstrated  that any
over-approximation of a safe sketch is also safe, i.e., evaluating the
query over the over-approximated sketch yields the same result as evaluating the
query over the full database. 
Thus, for a \gls{mp} $\incremaintain$ to be correct, the following condition has to hold:  
for every sketch $\provSketch$ that is valid for $\db$ and delta $\deltadb$,  $\incremaintain$, if provided with the state $\statedata$ for $\db$ and the annotated version $\annoDeltaDB$ of $\deltadb$,
returns an over-approximation of the accurate sketch $\psFor{\query, \dbranges, \db \uniondb \deltadb}$: 
  \[
    \psFor{\query, \dbranges, \db \uniondb \deltadb} \subseteq \provSketch \uniondb     \incremaintain(\query, \dbranges, \statedata, \annoDeltaDB)
  \]
}

\section{Incremental Annotated Semantics}
\label{sec:incremental_operator_rules}
We now introduce an \gls{mp} that
maintains sketches using annotated and incremental semantics for
 relational algebra operators. Each operator takes as input an annotated delta produced by its inputs (or passed to the \gls{mp} in case of the table access operator), updates its internal state, and
outputs an annotated delta. Together, the states of all such incremental
operators in a query make up the state of our \gls{mp}. For an
operator $O$ (or query $\query$) we use
$\ieval{O}{\dbranges,\annoDeltaDB}{\statedata}$
($\ieval{\query}{\dbranges,\annoDeltaDB}{\statedata}$) to denote the result of
evaluating $O$ (\query) over the annotated delta $\annoDeltaDB$ using the state
$\statedata$. We will often drop $\statedata$ and $\dbranges$. 
Our
\gls{mp} evaluates a query \query expressed in relational algebra
producing an updated state and outputting a delta where each
row is annotated with a partial sketch delta. These partial sketch deltas are then combined into a final result $\deltaprovSketch$. 

\begin{figure*}[ht!]
 \begin{minipage}{0.3\linewidth}
   \begin{minipage}{0.99\linewidth}
     \captionof*{table}{Table, Ranges and Delta}
     \begin{tabular}{c|c|c|c} \cline{2-3}
       &\thead{$a$} & \thead{$b$} &$\provSketch$                                                                                                        \\ \cline{2-3}
     \multirow{2}{*}{$\mathscr{R}$}  &1           &     7       & \{$f_1$\}                                                                             \\
       &9          &     9       & \{$f_2$\}                                                                                                            \\ \cline{2-3}
     \end{tabular}
   \end{minipage}\\[1mm]
   \begin{minipage}{0.99\linewidth}
     \begin{tabular}{c|c|c|c} \cline{2-3}
       &\thead{$c$} & \thead{$d$} &$\provSketch$                                                                                                        \\ \cline{2-3}
     \multirow{2}{*}{$\mathscr{S}$} &6           &    9       & \{$g_1$\}                                                                               \\
       &7          &     8       & \{$g_2$\}                                                                                                            \\ \cline{2-3}
     \end{tabular}
   \end{minipage}\\[1mm]
   \begin{minipage}{0.99\linewidth}
     \begin{tabular}{c}
       $\ranges_{a} = \{ f_1 = [1,5], f_2 = [6,10] \}$                                                                                                  \\
       $\ranges_{c} = \{ g_1 = [1,6], g_2 = [7,15] \}$                                                                                                  \\[2mm]
       $\psv{\rel} = \{f_2\} \;\; \psv{S} = \{g_1\}$                                                                                                    \\[2mm]
       $\Delta \rel = \bag{\insertdelta(5, 8)}$                                                                                                         \\
     \end{tabular}
   \end{minipage}
 \end{minipage}
 \begin{minipage}{0.6\linewidth}
    \captionof*{table}{Output for each incremental operator}
    \vspace{-3.8mm}

    \begin{tabular}{|rc|}                                                                     \hline
      Table access $R$          & $\bag{\insertDeltaPair{(5, 8)}{\{f_1\}}}$                                                                                \\ \hline
      Selection $\selection_{a >3}$              & $\bag{\insertDeltaPair{(5, 8)}{\{f_1\}}}$                                                                                \\ \hline
      Join $\join_{b=d}$                   & $\bag{\insertDeltaPair{(5, 8, 7, 8)}{\{f_1, g_2\}}}$                                                                     \\ \hline
\multirow{4}{*}{Aggregation $\Aggregation{\fcsum(c)}{a}$} & $\statedata[9] = (\tupSum = 6, \tupCnt = 1, \provSketch = \{f_2, g_1\}, \fragCnt_{9} = \{f_2: 1, g_1: 1\})$ \\  \cline{2-2}
                             & $\statedata'[9] = (\tupSum = 6, \tupCnt = 1, \provSketch = \{f_2, g_1\}, \fragCnt_{9} = \{f_2: 1, g_1: 1\})$             \\
                             & $\statedata'[5] = (\tupSum = 7, \tupCnt = 1, \provSketch = \{f_1, g_2\}, \fragCnt_{5} = \{f_1: 1, g_2: 1\})$             \\ \cline{2-2}
                             & $\bag{\insertDeltaPair{(5, 7)}{\{f_1, g_2\}}}$                                                                           \\ \hline
      Having $\selection_{\fcsum(c) > 5}$                 & $\bag{\insertDeltaPair{(5, 7)}{\{f_1, g_2\}}}$                                                                           \\ \hline
\multirow{2}{*}{Merging $\fragmergeop$}     & $\statedata: \{f_2: 1, g_1: 1\}$                                                                                         \\
                             & $\statedata': \{f_1:1, f_2: 1, g_1: 1, g_2: 1\}$                                                                         \\ \hline
       Sketch delta          & $\insertdelta \{f_1, g_2\}$                                                                                              \\ \hline
    \end{tabular}
   \end{minipage}
   \vspace{-4mm}
   \caption{Using our \gls{mp} to evaluate a query under incremental annotated semantics.}
   \label{fig:all-rules-example}
\end{figure*}


\begin{exam}\label{ex:op-all-rules-exam}
  \Cref{fig:all-rules-example} shows annotated tables $\mathscr{R}$
  and $\mathscr{S}$, ranges $\ranges_a$ and $\ranges_c$ for attribute $a$ (table $\rel$) and $c$ (table $S$), the delta $\Delta \rel$ and the sketches before the delta has been applied: $\psv{\rel}$ and $\psv{S}$. Consider the following query over $R$ and $S$:
\begin{lstlisting}
 SELECT a, sum(c) as sc
 FROM (SELECT a, b FROM R WHERE a > 3) JOIN S on (b = d)
 GROUP BY a HAVING SUM(c) > 5
\end{lstlisting}
  \Cref{fig:all-rules-example} (right table) shows each operator's output. \revision{We will further discuss these in the following when introducing the incremental semantics for individual operators. }
  In this example, a new tuple is inserted into $R$. Leading to a sketch deltas 
$\Delta \psv{R} = \insertdelta \{f_1\}$ and $\Delta \psv{S} = \insertdelta \{g_2\}$. The tuple inserted into $R$ results in the generation of a new group for the aggregation subquery which passes the \lstinline!HAVING! condition and in turn causes the two fragments from the tuple belonging to this group to be added to the sketches.
\end{exam}

\subsection{Merging Sketch Deltas}\label{sec:incremental-maintain-top-merge}
\revision{Each incremental algebra operator returns an annotated relation where each tuple is associated with a sketch that is sufficient to produce it. To generate the sketch for a query $\query$ we evaluate the query under our incremental annotated semantics to produce the tuples of $\query(\db)$ each annotated with a partial sketch. We then combine these partial sketches into a single sketch for $\query$.} We now discuss the operator $\fragmergeop$
that implements this final merging step. \revision{To determine whether a change
  to the annotated query result will result in a change to the current sketch,
  this operator maintains as state a map $\statedata: \dbranges \to \mathbb{N}$
  that records for each range $\range \in \dbranges$ the number of result tuples
  for which $\range$ is in their sketch.}
If the counter for a fragment $\range$
reaches $0$ (due to the deletion of tuples), then the fragment needs to be
removed from the sketch. If the counter for a fragment $\range$ changes from $0$
to a non-zero value, then the fragment now belongs to the sketch for the query
(we have to add a
delta inserting this fragment to the sketch).

\[
\ieval{\fragmergeop(\query)}{\annoDeltaDB}{\statedata} = (\deltaprovSketch, \statedata')
\]

We first explain how $\statedata'$, the updated state for the operator, is computed and then explain how to compute
$\deltaprovSketch$ using $\statedata$. We define $\statedata'$
pointwise for a fragment $\range$.
Any newly inserted (deleted)  tuple whose sketch includes $\range$ increases
(decreases) the count for $\range$. That is the total cardinality of such
inserted tuples (of bag $\annoIDeltaDB$ and $\annoDDeltaDB$, respectively) has to
be added (subtracted) from the current count for $\range$.
Depending on the change of the count for $\range$ between $\statedata$ and
$\statedata'$, the operator $\fragmergeop$ has to output a delta for
$\provSketch$. Specifically, if $\statedata[\range] = 0 \neq
\statedata'[\range]$ then the fragment has to be inserted into the sketch and if
$\statedata[\range] \neq 0 = \statedata'[\range]$ then the fragment was part of
the sketch, but no longer contributes and needs to be removed.

\begin{align*}
  \statedata'[\range] & = \statedata[\range] + \card{\annoIDeltaDB_{\range}} - \card{\annoDDeltaDB_{\range}} \\
  \annoIDeltaDB_{\range}      & = \bag{ \mult{\insertDeltaPair{\tup}{\provSketch}}{n} \mid \mult{\insertDeltaPair{\tup}{\provSketch}}{n} \in \ievalns{\query}{\annoDeltaDB}\land \range \in \provSketch }\\
  \annoDDeltaDB_{\range}      & = \bag{ \mult{\deleteDeltaPair{\tup}{\provSketch}}{n} \mid \mult{\deleteDeltaPair{\tup}{\provSketch}}{n} \in \ievalns{\query}{\annoDeltaDB}\land \range \in \provSketch }\\
  \deltaprovSketch &= \bigcup_{\range:\, \statedata[\range] = 0 \land \statedata'[\range] \neq 0} \{ \insertdelta \range \}
                     \union
                     \bigcup_{\range:\, \statedata[\range] \neq 0 \land \statedata'[\range] = 0 } \{ \deletedelta \range \}
\end{align*}



\begin{exam}\label{ex:merge-final-frag}
  Reconsider our running example from \Cref{ex:provenance-sketch-capture} that partitions based on
  $\ranges_{price}$. Assume that there
are two result tuples $\tup_1$ and $\tup_2$ of a query $\query$ that have
 $\range_2 = [601,1000]$ in their sketch and one result tuple $\tup_3$ that has
$\range_1$ and $\range_2$ in its sketch. Then the current sketch for the
query is $\provSketch = \{ \range_1, \range_2 \}$ and the state of
$\fragmergeop$ is as shown below.   If we are processing a delta
  $\deleteDeltaPair{\tup_3}{\{\range_1,\range_2\}}$ deleting tuple $\tup_3$,
the updated counts $\statedata'$ are:\\[-4mm]
  \begin{align*}
    \statedata[\range_1] &= 1 &\hspace{-5mm}\statedata[\range_2] &=3
    &\statedata'[\range_1] &= 0 &\hspace{-5mm}\statedata'[\range_2] &=2
  \end{align*}\\[-6mm]
  As there is no longer any justification for $\range_1$ to belong to the sketch
(its count changed to $0$), $\fragmergeop$ returns a delta:
  \(
     \{ \deletedelta \range_1 \}
  \)
\end{exam}

\revision{Consider the merge operator $\fragmergeop$ in \Cref{ex:op-all-rules-exam}. The state data before maintenance contains ranges $f_{2}$ and $g_{1}$. A single tuple annotated with $f_1$ and $g_2$ is added to the input of this operator. Both ranges were not present in \statedata and, thus, in addition adding them to $\statedata'$ the merge operator returns a sketch delta $\insertdelta \{f_1, g_2\}$.}

\subsection{Incremental Relational Algebra}\label{sec:inc-rel-algebra}

\subsubsection{Table Access Operator}\label{sec:op-rule-table-access}
The incremental version of the table access operator $\rel$ returns the
annotated delta $\annoDeltaRel$ for $\rel$ passed as part of $\annoDeltaDB$ to the
\gls{mp} unmodified. This operator has no
state.
\[
\ievalns{\rel}{\annoDeltaDB} = \annoDeltaRel
\]

\revision{\Cref{fig:all-rules-example} (top)
  shows the result of annotating the relation $\deltaRel$ from \Cref{ex:op-all-rules-exam}.}


\subsubsection{Projection}\label{sec:op-rules-projection}
The projection operator does not maintain any state as each output
tuple is produced independently from an input tuple if we consider multiple
duplicates of the same tuple as separate tuples. 
For each annotated delta tuple $\adeltaPair$, we project $\tup$ on the project expressions $\ATT$ and
 propagate $\provSketch$ unmodified as $\tup.\ATT$ in the result depends on
the same input tuples as $\tup$. 
%
\[
   \ievalns{\projection_{\ATT}(\query)}{\annoDeltaDB} =
   \bag{ \mult{\deltaPair{\tup.\ATT}{\provSketch}}{n} \mid \mult{\deltaPair{\tup}{\provSketch}}{n} \in \ievalns{\query}{\annoDeltaDB}}
\]


\subsubsection{Selection}\label{sec:op-rules-selection}

The incremental selection operator is stateless and the sketch of an input tuple is sufficient for producing the same tuple in the output of selection.
Thus, selection returns all input delta tuples that fulfill the selection
condition unmodified and filters out all other delta tuples. \revision{In our running example (\Cref{fig:all-rules-example}), the single input delta tuple fulfills the condition of
selection $\selection_{a>3}$.}
\begin{align*}
\ievalns{\selection_\theta(\query)}{\annoDeltaDB} &= \bag{ \mult{\adeltaPair}{n} \mid \mult{\adeltaPair}{n} \in \ievalns{\query}{\annoDeltaDB} \land \tup \models \theta}
\end{align*}
\subsubsection{Cross Product}\label{sec:op-rules-cross-product}

The incremental version of a cross product (and join)
$\query_1 \crossprod \query_2$ combines three sets of deltas: (i) joining the
delta of $\query_1$ with the current annotated state of $\query_2$
($\releval{\query_2}{\annoDB}$), (ii) joining the delta of the $\query_2$ with $\releval{\query_1}{\annoDB}$, (iii)
joining the deltas of $\query_1$ and $\query_2$. For (iii) there are four
possible cases depending on which of the two delta tuples being joined is an
insertion or a deletion. For two inserted tuples that join, the joined tuple
$s \concat t$ is inserted into the result of the cross product. For two deleted
tuples, we also have to insert the joined tuple $s \concat t$ into the result.
For a deleted tuple joining an inserted tuple, we should delete the tuple $s
\concat t$. The non-annotated version of these rules have been discussed in
\cite{DBLP:journals/vldb/KochAKNNLS14, DBLP:conf/sigmod/MistryRSR01,
  DBLP:conf/sigmod/GriffinL95, DBLP:conf/sigmod/ColbyGLMT96}. We use $\annoQDelta_{i}$ to denote $\ievalns{\query_i}{\annoDeltaDB}$ for $i \in \{1,2\}$ below.
\noindent\begin{minipage}{\linewidth}
 \begin{align*}
\ievalns{\query_1 \crossprod \query_2}{\annoDeltaDB} =   \hspace{9mm} &\\
  \bagopen \mult{\insertDeltaPair{s \concat t}{\provSketch_1 \uniondb \provSketch_2}}{n \cdot m} \mid\,
  &(\mult{\insertDeltaPair{s}{\provSketch_1}}{n} \in \annoQDelta_1 \land \mult{\insertDeltaPair{t}{\provSketch_2}}{m} \in \annoQDelta_2)\\
  &\hspace{-6mm}\lor (\mult{\deleteDeltaPair{s}{\provSketch_1}}{n} \in \annoQDelta_1 \land \mult{\deleteDeltaPair{t}{\provSketch_2}}{m} \in \annoQDelta_2) \\
  &\hspace{-6mm}\lor (\mult{\insertDeltaPair{s}{\provSketch_1}}{n} \in \annoQDelta_1 \land  \mult{\pair{t}{\provSketch_2}}{m} \in \releval{\query_2}{\annoDB})\\
  &\hspace{-6mm}\lor (\mult{\pair{s}{\provSketch_1}}{n} \in \releval{\query_1}{\annoDB} \land  \mult{\insertDeltaPair{t}{\provSketch_2}}{m} \in \annoQDelta_2)
   \bagclose\\
  \union\hspace{28mm} &\\
  \bagopen \mult{\deleteDeltaPair{s \concat t}{\provSketch_1 \uniondb \provSketch_2}}{n \cdot m} \mid\,
  &(\mult{\deleteDeltaPair{s}{\provSketch_1}}{n} \in \annoQDelta_1 \land \mult{\insertDeltaPair{t}{\provSketch_2}}{m} \in \annoQDelta_2)\\
  &\hspace{-6mm}\lor (\mult{\insertDeltaPair{s}{\provSketch_1}}{n} \in \annoQDelta_1 \land \mult{\deleteDeltaPair{t}{\provSketch_2}}{m} \in \annoQDelta_2)\\
  &\hspace{-6mm}\lor (\mult{\deleteDeltaPair{s}{\provSketch_1}}{n} \in \annoQDelta_1 \land  \mult{\pair{t}{\provSketch_2}}{m} \in \releval{\query_2}{\annoDB})\\
  &\hspace{-6mm}\lor (\mult{\pair{s}{\provSketch_1}}{n} \in \releval{\query_1}{\annoDB} \land \mult{\deleteDeltaPair{t}{\provSketch_2}}{m} \in \annoQDelta_2)
   \bagclose\\
\end{align*}
\end{minipage}

\revision{Continuing with \Cref{ex:op-all-rules-exam}, as $\makeAnnoDelta{S} = \emptyset$ and $\makeAnnoDelta{R} = \bag{\insertDeltaPair{(5, 8)}{\{f_1\}}}$ only contains insertions, only $\annoDeltaRel \join_{b=d} \mathscr{S}$ returns a non-empty result (the third case above). As $(5, 8)$ only joins with tuple $(7,8)$, a single delta tuple $\insertDeltaPair{(5, 8, 7,
    8)}{\{f_1, g_2\}}$ is returned.}

\iftechreport{
}

\subsubsection{Aggregation: Sum, Count, and Average}\label{sec:op-rules-agg}

For the aggregation operator, we need to maintain the current aggregation result
for each individual group and record the contribution of fragments from a provenance sketch
towards the aggregation result to be able to efficiently maintain the operator's
result. Consider an aggregation operator
$\Aggregation{\makeagg{f}(\att)}{\grpatts}(\rel)$ where $\makeagg{f}$ is an
aggregation function and \grpatts are the group by attributes ($\grpatts =
\emptyset$ for aggregation without group-by). Given an version $\rel$ of the
input of the aggregation operator, we use $\statedataGBs = \{\tup.\grpatts |
\tup \in \rel\}$  to denote the set of distinct group-by values. 

The state data needed for aggregation depends on what aggregation function we
have to maintain. However, for all  aggregation functions the state maintained
for aggregation is a map $\statedata$ from  groups to a per-group state
storing aggregation function results for this group, the sketch for the group,
and a map $\fragCnt_\grp$ recording for each range $\range$ of $\dbranges$ the
number of input tuples belonging to the group with $\range$ in their provenance
sketch. Intuitively, $\fragCnt_{\grp}$ is used in a similar fashion as for
operator $\fragmergeop$ to determine when a range has to be added to or removed
from a sketch for the group. We will discuss aggregation functions \fcsum, \fccnt, and \fcavg that share
 the same state. 


\parttitle{Sum}\label{sec:op-rules-agg-sum}
Consider an aggregation $\sumAgg$. To be able to incrementally maintain the
aggregation result and provenance sketch for a group \grp, we store the
following state:
\[
  \statedata[\grp] = (\tupSum, \tupCnt, \provSketch, \fragCnt_{g})
\]

 $\tupSum$ and $\tupCnt$ store the sum and count for the group, $\provSketch$ stores
the group's sketch, and $\fragCnt_{\grp}: \dbranges \to
\mathbb{N}$ introduced above  tracks for each range $\range \in \dbranges$
how many input tuples from $\query(\db)$ belonging to the group have $\range$
in their sketch. State $\statedata$ is initialized to $\emptyset$.

\iftechreport{
}

\parttitle{Incremental Maintenance}
The operator processes an annotated delta as explained in the following. Consider an
annotated delta $\annoDeltaDB$.  Let $\Delta \query$
denote $\ievalns{\query}{\annoDeltaDB}$, i.e., the delta produced by incremental
evaluation for $\query$ using $\annoDeltaDB$. We use $\statedataGBs_{\Delta
\query}$ to denote the set of groups present in $\Delta \query$ and
$\Delta \query_{\grp}$ to denote the subset of $\Delta \query$ including all
annotated delta tuples $\adeltaPair$ where $t.\grpatts = \grp$. We now explain
how to produce the output for one such group. The result of the incremental
aggregation operators is then just the union of these results. We first discuss
the case where the group already exists and still exists after applying the
input delta. 

\parttitle{Updating an existing group}
Assume the current and updated state for $\grp$ as shown below:
%
\begin{align*}
   \statedata[\grp] &= (\tupSum, \tupCnt, \provSketch, \fragCnt_\grp)
  &\statedata'[\grp] &= (\tupSum', \tupCnt', \provSketch', \fragCnt_\grp')
\end{align*}
%
The updated sum is produced by adding $\tup.a \cdot n$  for each
inserted input tuple with multiplicity $n$:
$\mult{\insertDeltaPair{\tup}{\provSketch}}{n} \in \Delta \query_{\grp}$ and
subtracting this amount for each deleted tuple:
$\mult{\deleteDeltaPair{\tup}{\provSketch}}{n} \in \Delta \query_{\grp}$. For instance, if 
the delta contains the insertion of 3 duplicates of a tuple with $\att$ value $5$, then the $\tupSum$ will be increased by $3 \cdot 5$.
%
\begin{align*}
  &\tupSum' =\tupSum \hspace{2mm}+ \sum_{\mult{\insertDeltaPair{\tup}{\provSketch}}{n} \in \Delta \query_{\grp}} \hspace{-6mm}\tup.a \cdot n \hspace{2mm}- \sum_{\mult{\deleteDeltaPair{\tup}{\provSketch}}{n} \in \Delta \query_{\grp}} \hspace{-6mm}\tup.a \cdot n
\end{align*}
%
The update for $\tupCnt$ is computed in the same fashion using $n$ instead of $t.\att \cdot n$.
The updated count in $\fragCnt_\grp'$ is computed for each $\range \in \dbranges$ as:
\[
\fragCnt_{\grp}'[\range] = \fragCnt_{\grp}[\range] + \sum_{\mult{\insertDeltaPair{\tup}{\provSketch}}{n} \in \Delta \query_{\grp} \land \range \in \provSketch} \hspace{-5mm}n \hspace{3mm}- \sum_{\mult{\deleteDeltaPair{\tup}{\provSketch}}{n} \in \Delta \query_{\grp} \land \range \in \provSketch} \hspace{-5mm}n
\]
Based on $\fragCnt_\grp'$ we then determine the updated sketch for the group:
\[
\provSketch' = \{ \range \mid \fragCnt_{\grp}'[\range] > 0 \}
\]
 We then output a pair of annotated delta tuples that deletes the previous result for the group and inserts the updated result:
%
\begin{align*}
  &\deleteDeltaPair{\grp \concat (\tupSum)}{\provSketch} &
  &\insertDeltaPair{\grp \concat (\tupSum')}{\provSketch'}
\end{align*}

\parttitle{Creating and Deleting Groups}
For groups $\grp$ that are not in $\statedata$, we initialize the state for $\grp$ as shown below:
\(
\statedata'[\grp] = ( 0, 0, \emptyset, \emptyset)
\)
and only output $\insertDeltaPair{\grp \concat (\tupSum')}{\provSketch'}$.
%
An existing group gets deleted if $\tupCnt \neq 0$ and $\tupCnt' = 0$. In this case we only output $\deleteDeltaPair{\grp \concat (\tupSum)}{\provSketch}$.

\parttitle{Average and Count}
For average we maintain the same state as for sum. The only difference is that the updated average is computed as $\frac{\tupSum'}{\tupCnt'}$. For count we only maintain the count and output $\tupCnt'$.
\ifnottechreport{ %
    In \cite{techreport}, we also present the incremental annotated semantics for additional aggregation functions ($\fcmin$ and $\fcmax$) which require maintaining a sort order over the input tuples in each group to deal with deletion. 
}

\revision{Continuing with \Cref{ex:op-all-rules-exam}, the output of the join (single delta tuple with group 5) is fed  into the aggregation operator using \fcsum. As no such group is in $\statedata$ we create new entry $\statedata[5]$. After maintaining the state, the output delta produced for this group is $\bag{\insertDeltaPair{(5, 7)}{\{f_1, g_2\}}}$.
This result satisfies \lstinline!HAVING! condition (selection $\selection_{\fcsum(c) > 5}$) and is passed on to the merge operator.}


\iftechreport{
\subsubsection{Aggregation: minimum and maximum}\label{sec:op-rule-agg-min-max}
The aggregation functions \fcmin and \fcmax share the same state. To be able to
efficiently determine the minimum (maximum) value of the aggregation function
\fcmin (\fcmax), we use a data structure like balanced search trees that can provide
efficiently access to the value in sort order.


\parttitle{Min}\label{sec:oo-rules-agg-min}
Consider an aggregation $\minAgg$. To be able to maintain the aggregation result
and provenance sketch incrementally for a group \grp, we store the following
state:
\[
  \statedata[\grp] = (\tupCnt, \provSketch, \fragCnt_{g})
\]
$\provSketch$ and $\fragCnt_g$ are the same as described in aggregation function
\fcsum.  $\provSketch$ stores the groups' sketch and $\fragCnt_g$ stores for each
range how many tuples in this group have this range in their sketch. $\tupCnt$
is an balanced search tree that record all values of aggregate attribute in sort
order, and for each node in $\tupCnt$, we store the multiplicity of this aggregate value. 


\parttitle{Incremental Maintenance}
Consider an aggregation $\minAgg$ and an annotated delta $\annoDeltaDB$. Recall
that $\Delta \query$ denotes the $\ievalns{\query}{\annoDeltaDB}$ and $\Delta
\query_{\grp}$ to denote the subset of $\Delta \query$ including all annotated
delta tuples $\adeltaPair$ where $t.\grpatts = \grp$. We now discuss how to produce the
output for one such group and how the incremental maintenance works for the case
where the group already exists and still exits after maintenance.

\parttitle{Updating an existing group}
Assume the current and updated state for $\grp$ as shown below:
\begin{align*}
   \statedata[\grp] &= (\tupCnt, \provSketch, \fragCnt_\grp)
  &\statedata'[\grp] &= (\tupCnt', \provSketch', \fragCnt_\grp')
\end{align*}
The $\tupCnt$ is updated as follow: for each annotated tuple in $\Delta
\query_{\grp}$, the multiplicity of aggregate attribute value ($\att$) 
 will be increased by 1 if it is an inserted annotated tuple
($\insertDeltaPair{\tup}{\provSketch}$) and $\tup.\att = \att$. If this value is
new to the tree, we just initialize for this value with a multiplicity 1.
Otherwise, the multiplicity will be decreased by one if the annotated tuple is a
deletion and $\tup.\att = \att$. We remove a node from the tree if the multiplicity 
becomes $0$.
\begin{align*}
  \tupCnt[\att]' =\tupCnt[\att] \hspace{2mm}+ \sum_{\mult{\insertDeltaPair{\tup}{\provSketch}}{n} \in \Delta \query_{\grp}\land \tup.\att = \att} \hspace{-6mm}n \hspace{8mm}- \sum_{\mult{\deleteDeltaPair{\tup}{\provSketch}}{n} \in \Delta \query_{\grp}\land \tup.\att = \att} \hspace{-6mm}n
\end{align*}
Here, the use of $\tupCnt[\att]$ and $\tupCnt[\att]'$ do not imply $\tupCnt$ is a map but they indicate the multiplicity of $\att$ in $\tupCnt$.

The updated count in $\fragCnt_\grp'$ is computed for each $\range \in \dbranges$ as:
\[
\fragCnt_{\grp}'[\range] = \fragCnt_{\grp}[\range] + \sum_{\mult{\insertDeltaPair{\tup}{\provSketch}}{n} \in \Delta \query_{\grp} \land \range \in \provSketch} \hspace{-5mm}n \hspace{3mm}- \sum_{\mult{\deleteDeltaPair{\tup}{\provSketch}}{n} \in \Delta \query_{\grp} \land \range \in \provSketch} \hspace{-5mm}n
\]
Based on $\fragCnt_\grp'$ we then determine the updated sketch for the group:
\[
\provSketch' = \{ \range \mid \fragCnt_{\grp}'[\range] > 0 \}
\]
 We then output a pair of annotated delta tuples that deletes the previous result for the group and inserts the updated result:
%
\begin{align*}
  &\deleteDeltaPair{\grp \concat (min(\tupCnt))}{\provSketch}
  & \insertDeltaPair{\grp \concat (min(\tupCnt'))}{\provSketch'}
\end{align*}
For the group $\grp$, we need to output the minimum value in the balanced
search tree.

\parttitle{Creating and Deleting Groups}
For groups $\grp$ that are not in $\statedata$, we initialize the state for $\grp$ as shown below:
\(
\statedata'[\grp] = (\emptyset, \emptyset, \emptyset)
\)
and only output $\insertDeltaPair{\grp \concat
  (min(\tupCnt'))}{\provSketch'}$. An existing group gets deleted if size of the
tree becomes zero from a non-zero value such that:
$|\tupCnt| \neq 0 = |\tupCnt'|$. In this case we only output
$\deleteDeltaPair{\grp \concat (\min{\tupCnt})}{\provSketch}$.

\parttitle{Max}
For max, we maintain the same state as for min. The only difference is that we
output the maximum value from tree $\tupCnt$ and $\tupCnt'$.
}
\subsubsection{Top-k}\label{sec:op-rule-topk}
The top-k operator $\topk$ returns the first $k$ tuples sorted on ${\orderbylist}$.
As we are dealing with bag semantics, the top-k tuples may contain a tuple with multiplicity larger than $1$. As before, we use $\annoQDelta$ to denote $\ievalns{\query}{\annoDeltaDB}$.

\parttitle{State Data}
To be able to efficiently determine updates to the top-k tuples with sketch annotations we maintain a nested map. The outer map $\statedata$ is order and  should be implemented using a data structure like \glspl{bst} that provide efficient access to entries in sort order on \orderbylist, maps order-by values $o$ to another map $\tupCnt$ which stores multiplicities for each annotated tuple $\pair{\tup}{\provSketch}$ for which $\tup.\orderbylist = o$.
\[
  \statedata[o] = (\tupCnt)
\]
and for any $\pair{\tup}{\provSketch}$ with $\tup.\orderbylist = o$ with $\mult{\pair{\tup}{\provSketch}}{n} \in \annoQDelta$ we store
\[
\tupCnt[\pair{\tup}{\provSketch}] = n
\]
This data structure allows efficient updates to the multiplicity of any annotated tuple  based on the input delta as shown below. Consider such a tuple $\pair{\tup}{\provSketch}$ with $\tup.\orderbylist = o$ with $\mult{\insertDeltaPair{\tup}{\provSketch}}{n} \in \annoQDelta$ and $\mult{\deleteDeltaPair{\tup}{\provSketch}}{m} \in \annoQDelta$.
\[
  \statedata'[o][\pair{\tup}{\provSketch}] = \statedata[o][\pair{\tup}{\provSketch}]
  + n - m
\]

\parttitle{Computing Deltas}
As $k$ is typically relatively small, we select a simple approach for computing deltas by deleting the previous top-k and then inserting the updated top-k. Should the need arise to handle large $k$, we can use a balanced search tree and mark nodes in the tree as modified when updating the multiplicity of annotated tuples based on the input delta and use data structures which enable efficient positional access under updates, e.g., order-statistic trees~\cite{cormen-09-inaled}.
Our simpler technique just fetches the first tuples in sort order from $\statedata$ and $\statedata'$ by accessing the keys stored in the outer map $\statedata$ in sort order. For each $o$ we then iterate through the tuples in $\statedata[o]$ (in an arbitrary, but deterministic order since they are incomparable) keeping track of the total multiplicity $m$ of tuples we have processed so far. As long as $m \leq k$ we output the current tuple and proceed to the next tuple (or order-by key once we have processed all tuples in $\statedata[o]$). Once $m \geq k$, we terminate.
 If the last tuple's multiplicity  exceeds the threshold we output this tuple with the remaining multiplicity. Applied to $\statedata$ this approach produces the tuples to delete and applied to $\statedata'$ it produces the tuples to insert:
\begin{align*}
& \deletedelta \topk(\statedata) & \insertdelta \topk(\statedata')
\end{align*}

\subsection{\revision{Complexity Analysis}}
\label{sec:op_rule_complexity}
\revision{We now analyze the runtime complexity of operators. Let $n$ denote the input delta tuple size  and $p$ denote the number of ranges of the partition on which the sketch is build on. For table access, selection, and projection, we need to iterate
  over these $n$ annotated tuples to generate the output. As for these operations we do not modify the sketches of tuples, the complexity
  is $O(n)$. For aggregation, for each aggregation function we maintain a hashmap that tracks the current aggregation result for each group and a count for each fragment that occurs in a sketch for each tuple in the group. For each input delta tuple, we can update this information in $O(1)$ if we assume that the number of aggregation functions used in an aggregation operator is constant. Thus, the overall runtime for aggregation is $O(n \cdot p)$.
  For join operator, we store the bloom filter to pre-filter
  the input data. Suppose the right input table has $m$ tuples. Building such a
  filter incurs a one-time cost of $O(m)$ for scanning the table once.
Consider the part where we join a delta of size $n$ for the left input with the right table producing $o$ output tuples.
  The cost this join depends on what join algorithm is used ranging from $O(n + m + o)$ for a hash join to $O(n \cdot m + o)$ for a nested loop join (in both cases assuming the worst case where no tuples are filtered using the bloom filter).
  For the top-k operator,
  we assume there are $l$ nodes stored in the balanced search tree. Building this tree
  will cost $O(l \cdot \log l)$ (only built once). An insertion, deletion, or lookup will take $O(\log l)$ time. Thus,
  the runtime complexity of the top-k operator is $O(n \cdot \log l)$ to complete the top-k operator. Regarding space complexity,
  the selection and projection only require constant space. For
  aggregation, the space is linear in the number of groups and in $p$. For
  join, the bloom filter's size is linear in $m$, but for a small constant factor. For top-k operators, we store $l \geq k$ entries in the search tree, each requiring $O(p)$  space. Thus, the overall space complexity for this operator is $O(l \cdot p)$.}

\section{Correctness Proof}\label{sec:rule_correctness_proof_appendix}

We are now ready to state the main result of this paper, i.e., the incremental operator semantics we have defined is an incremental maintenance procedure. That is, it outputs valid sketch deltas that applied to the safe sketch for the database $\db$ before the update yield an over-approximation of an accurate sketch for the database $\db \uniondb \deltadb$.
\revision{
\begin{theorem}[Correctness]\label{theo:correctness}
  $\incremaintain$ as defined in \Cref{sec:problem_definition} is an incremental maintenance procedure such that it takes as input a state $\statedata$, the annotated delta $\annoDeltaDB$, the ranges $\dbranges$, a query $\query$ and returns an updated state $\statedata'$ and a provenance sketch delta $\deltaprovSketch$: $\incremaintain(\query, \dbranges, \statedata, \annoDeltaDB) = (\deltaprovSketch, \statedata')$. For any query $\query$, sketch $\provSketch$ that is valid for $\db$, and state $\statedata$ corresponding to $\db$ we have:
  \[
    \psFor{\query, \dbranges, \db \uniondb \deltadb} \subseteq \provSketch \uniondb \incremaintain(\query, \dbranges, \statedata, \annoDeltaDB)
  \]
\end{theorem}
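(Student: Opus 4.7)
The plan is to prove \Cref{theo:correctness} by structural induction over the query expression \query, establishing a stronger per-operator invariant that implies the theorem once the final $\fragmergeop$ step is applied. Concretely, the invariant I would maintain for each intermediate operator $O$ is: if the state $\statedata$ is consistent with $O(\annoDB)$ (where \annoDB is the annotated version of \db), and $\annoDeltaDB$ is the annotated delta from \db to $\db \uniondb \deltadb$, then $\ieval{O}{\dbranges,\annoDeltaDB}{\statedata}$ is the delta between $O(\annoDB)$ and $O(\annoDB \uniondb \annoDeltaDB)$, and moreover for each annotated tuple $\pair{\tup}{\provSketch'}$ appearing in $O(\annoDB \uniondb \annoDeltaDB)$, the sketch $\provSketch'$ is \emph{sufficient} in the sense that \tup is produced by $O$ evaluated on $\dbInst{\provSketch'}$. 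The state update rules are also part of this invariant: after processing $\annoDeltaDB$, the new state $\statedata'$ must be consistent with $O(\annoDB \uniondb \annoDeltaDB)$.

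First, I would handle the base case (table access), which is immediate because the delta is passed through unchanged and each tuple is annotated with the singleton range covering it by \Cref{def:annot-operator}. Next, the stateless cases of selection and projection follow directly since the operator only restricts or rewrites the tuple while leaving its sketch alone, and a sketch sufficient for \tup is also sufficient for $\tup.\ATT$ or for the selected tuple. For the cross product case I would verify the four delta combinations against the standard bag-semantics delta rule of \cite{DBLP:journals/vldb/KochAKNNLS14}, additionally checking that $\provSketch_1 \uniondb \provSketch_2$ is sufficient for $s \concat t$: this is where both arguments of the join must be evaluable, so their sketches must be unioned.

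The substantive part of the argument is the aggregation and top-k cases. For aggregation, I would prove by induction on the sequence of processed deltas that the per-group state $\statedata[\grp] = (\tupSum, \tupCnt, \provSketch, \fragCnt_\grp)$ faithfully tracks the group's aggregate, count, the union of sketches of contributing inputs, and the per-fragment reference count. The key lemma is that $\fragCnt_\grp[\range] > 0$ holds exactly when some tuple currently contributing to \grp has $\range$ in its sketch; this makes $\provSketch' = \{\range \mid \fragCnt_\grp'[\range] > 0\}$ both sufficient and accurate for the group, and it guarantees correct shrinkage of the sketch under deletions. For \fcmin and \fcmax the argument additionally uses the balanced search tree invariant (the smallest/largest key in $\tupCnt$ equals the current min/max), and for top-k an analogous invariant on the ordered map of $\statedata$.

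The main obstacle will be the aggregation case under deletions, specifically showing that the per-fragment counter $\fragCnt_\grp$ decreases correctly when a delta tuple's sketch changes due to upstream operations. This is delicate because the same physical source tuple can appear in different annotated forms across captures and deltas; the proof must argue that this is avoided by the induction hypothesis on the input operator, which guarantees that deltas emitted are exactly the symmetric difference of the annotated output between the two database versions. Once this is established, the outermost $\fragmergeop$ step composes per-tuple sufficient sketches into a sketch that covers $\prov{\query}{\db \uniondb \deltadb}$, giving $\psFor{\query,\dbranges,\db \uniondb \deltadb} \subseteq \provSketch \uniondb \deltaprovSketch$ as required; the result is an over-approximation rather than an equality because intermediate operators propagate unions of sketches (e.g., in joins), which is precisely the trade-off the framework allows.
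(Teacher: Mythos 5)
Your proposal is correct and follows essentially the same route as the paper: induction over the query structure with a two-part invariant — that the emitted delta is the exact symmetric difference of the annotated operator outputs (the paper's \emph{tuple correctness}) and that the propagated sketches remain sufficient (the paper's \emph{fragment correctness}) — with the aggregation case carried by the per-fragment counter $\fragCnt_\grp$ and the final $\fragmergeop$ step yielding the over-approximation. The only cosmetic difference is that you phrase sketch sufficiency per output tuple while the paper states it for the whole query result over the union of all per-tuple sketches, but the paper's operator cases reason per tuple (and per group) in exactly the way you describe.
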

}

\revision{
In this section, we will demonstrate the correctness of \Cref{theo:correctness}. Specifically, we introduce two auxiliary notions that
two aspects: 1. Tuple correctness (\tcorrect): the
incremental maintenance procedure can always generate the correct bag of tuples
for the operators it maintains. 2. Fragment correctness
(\fcorrect): the maintenance procedure can output the correct
delta sketches as well for the operators it maintains. Before we present the
proof of the theorem, we will establish several lemmas that used in the proof,
and propose the criteria of \tcorrect and \fcorrect.
}

\subsection{Tuple Correctness, Fragment Correctness, and Auxiliary Results }
\label{sec:auxil-defin-lemm}

\BGI{need to explain for the purpose we have for introducing these definitions
  and lemmas, i.e., give the reader an intuition of how they will be used.
  Basically make clear here first that you will introduce tuple correctness and
  fragment correctness as tools for the proof of \Cref{theo:correctness} and how
  we will use them and how the functions will be used to define these notions.}  

\revision{In this section, we will introduce two functions: tuple extract $\tupsIn{\cdot}$
and fragment extract $\fragsIn{\cdot}$ where the function $\tupsIn{\cdot}$
function specifies the procedure for handing tuples from annotated relations
(database) and the function $\fragsIn{\cdot}$ specifies that for handing
fragments from annotated relations (database). To
demonstrate the \tcorrect and \fcorrect, we will introduce a series of auxiliary
lemmas which present properties of the two function $\tupsIn{\cdot}$  and
$\fragsIn{\cdot}$ and are used during the proof for operators when present the
correctness of tuple and fragment.
Then, we will define the
\tcorrect and \fcorrect as tools for the proof of ~\Cref{theo:correctness}. Next,
we introduce a lemma that for each operator, the ~\Cref{theo:correctness} holds
if we can demonstrate both tuple correctness and fragment correctness hold.} 

We denote $\query^{n}$ to be a query having at most $n$ operators and
$\queryclass{i}$ to be the class of all queries with $i$ operators such
$\query^{i} \in \queryclass{i}$.

Given a database $\db =\bag{\tup_1, \ldots, \tup_n}$ and the annotated database
$\annoDB = \bag{\pair{\tup_{1}}{\provSketch_{1}}, \ldots, \pair{\tup_{n}}{\provSketch_{n}}}$,
the results of running query $\query$ over the database $\db$  and running query
over the annotated database are:
\begin{align*}
\query(\db) &= \bag{\tupout{1}, \ldots, \tupout{m}}
& \query(\annoDB) &= \bag{\pair{\tupout{1}}{\psout{1}}, \ldots, \pair{\tupout{m}}{\psout{m}}}
\end{align*}

Given the delta database $\deltadb = \insertdeltadb \union
\deletedeltadb$ where $\insertdeltadb = \bag{\tupins{1}, \ldots, \tupins{k}}$ and
$\deletedeltadb = \bag{\tupdel{1}, \ldots, \tupdel{x}}$. Let $\updDB$ to
be the updated database such that:
  $\updDB = \db \uniondb \deltadb$
. Suppose the results of running query $\query$ over the database and annotated database after updating are:
\begin{align*}
  \query(\updDB) &= \bag{\tupout{1}, \ldots, \tupout{l}}
  & \query(\annoupdDB) &= \bag{\pair{\tupout{1}}{\psout{1}}, \ldots, \pair{\tupout{l}}{\psout{l}}}
\end{align*}

We define $\tupsIn{\cdot}$ (\textbf{extract tuples}), a function that takes as input a bag of annotated
tuples and returns all the tuples from the bag such that:
\[
  \tupsIn{ \bag{ \pair{\tup_{1}}{\provSketch_{1}}, \ldots, \pair{\tup_{y}}{\provSketch_{y}} } }  = \bag{\tup_{1}, \ldots, \tup_{y}}
\]

\begin{lem}\label{lemma:tupsin-union-db}
  Let $\tupsIn{\cdot}$ be the \textbf{extract tuples} function, $\annoDB_{1}$ and $\annoDB_{2}$ be two annotated databases with the same schema. The following holds:\\
  \[
    \tupsIn{ \annoDB_{1} \union \annoDB_{2} } = \tupsIn{\annoDB_{1}} \union \tupsIn{\annoDB_{2}}
  \]
\end{lem}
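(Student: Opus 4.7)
The plan is to prove the lemma directly from the definitions of bag union and the extract tuples function. Since both sides are bags, it suffices to verify that every tuple $\tup$ occurs with the same multiplicity on the left and the right. I would structure the argument as a pointwise multiplicity comparison, which avoids any structural induction and makes the homomorphism property transparent.

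First, I would expand $\tupsIn{\cdot}$ using its definition: for any annotated bag $\annoDB$ and any tuple $\tup$, we have $\tupsIn{\annoDB}(\tup) = \sum_{\provSketch} \annoDB(\pair{\tup}{\provSketch})$, where the sum ranges over all sketches $\provSketch$ (only finitely many contribute, since $\annoDB$ is a finite bag). Next, I would apply the pointwise definition of bag union introduced in \Cref{sec:notation_and_background}, namely $(\annoDB_1 \union \annoDB_2)(\pair{\tup}{\provSketch}) = \annoDB_1(\pair{\tup}{\provSketch}) + \annoDB_2(\pair{\tup}{\provSketch})$.

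Combining the two, the multiplicity of $\tup$ in the left-hand side is
\[
\tupsIn{\annoDB_1 \union \annoDB_2}(\tup) \;=\; \sum_{\provSketch}\bigl[\annoDB_1(\pair{\tup}{\provSketch}) + \annoDB_2(\pair{\tup}{\provSketch})\bigr],
\]
which by linearity of finite sums equals $\tupsIn{\annoDB_1}(\tup) + \tupsIn{\annoDB_2}(\tup)$, i.e., the multiplicity of $\tup$ on the right-hand side. Since $\tup$ was arbitrary, the two bags coincide.

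There is essentially no obstacle here: the claim is a direct consequence of the fact that $\tupsIn{\cdot}$ forgets the sketch component while preserving multiplicities, so it behaves as a bag homomorphism with respect to $\union$. The only subtlety worth noting is that the summation over sketches is well defined because annotated bags are finite, so only finitely many annotated tuples of the form $\pair{\tup}{\provSketch}$ have nonzero multiplicity. This lemma will then serve as a basic algebraic fact used repeatedly when reasoning about \tcorrect and \fcorrect for operators whose output naturally splits into unions of sub-bags (e.g., the three cases of the incremental cross product in \Cref{sec:op-rules-cross-product}).
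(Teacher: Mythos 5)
Your proof is correct and follows essentially the same route as the paper's: a direct verification from the definitions of $\tupsIn{\cdot}$ and bag union, with no induction or auxiliary results needed. The only difference is presentational — the paper enumerates the elements of the two bags and their concatenation, whereas you compare multiplicity functions pointwise, which is a slightly more careful treatment of bag semantics but the same underlying argument.
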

\begin{proof}
  Suppose $\annoDB_{1} = \bag{\pair{\tup_{1}}{\provSketch_{\tup_{1}}}, \ldots, \pair{\tup_{ m }}{\provSketch_{\tup_{m}}}}$\\
  and $\annoDB_{2} = \bag{\pair{s_{1}}{\provSketch_{s_{n}}}, \ldots, \pair{s_{n}}{\provSketch_{s_{n}}}}$.
  Then $\annoDB_{1} \union \annoDB_{2}$, $\tupsIn{\annoDB_{1}}$ and
  $\tupsIn{\annoDB_{2}}$ are:
  \begin{align*}
    \annoDB_{1} \union \annoDB_{2} & = \bag{\pair{\tup_{1}}{\provSketch_{\tup_{1}}}, \ldots, \pair{\tup_{n}}{\provSketch_{\tup_{n}}}, \pair{s_{1}}{\provSketch_{s_{1}}}, \ldots, \pair{s_{n}}{\provSketch_{s_{n}}} }\\
    \tupsIn{\annoDB_{1}} & = \bag{\tup_{1}, \ldots, \tup_{m}} \;\;\;\;\;\; \tupsIn{\annoDB_{2}} = \bag{s_{2}, \ldots, s_{n}}
  \end{align*}
  We can get that $\tupsIn{\annoDB_{1} \union \annoDB_{2}}$ is:
  \[
    \tupsIn{\annoDB_{1} \union \annoDB_{2}} = \bag{\tup_{1}, \ldots, \tup_{m},s_{1}, \ldots, s_{n} }
  \]
  and
  $\tupsIn{\annoDB_{1}} \union \tupsIn{\annoDB_{2}}$ is:
  \[
    \tupsIn{\annoDB_{1}} \union \tupsIn{\annoDB_{2}}=  \bag{\tup_{1}, \ldots, \tup_{m}} \union \bag{\tup_{m},s_{1}, \ldots, s_{n} } = \bag{\tup_{1}, \ldots, \tup_{m},s_{1}, \ldots, s_{n} }
  \]

  Therefore,$\tupsIn{ \annoDB_{1} \union \annoDB_{2} } = \tupsIn{\annoDB_{1}} \union \tupsIn{\annoDB_{2}}$
\end{proof}

Analog we can know that $\tupsIn{\annoDeltaDB} =
\tupsIn{\insertdeltadb} \union \tupsIn{\deletedeltadb}$, since $\annoDeltaDB =
\insertdelta \annoDB \union \deletedelta \annoDB$

\begin{lem}\label{lemma:tupsin-difference-db}
Let  $\annoDB_{1}$ and $\annoDB_{2}$ be two annotated databases over the same schema.
We have:
\[
  \tupsIn{\annoDB_{1} \difference \annoDB_{2}} = \tupsIn{\annoDB_{1}} \difference \tupsIn{\annoDB_{2}}
  \]
\end{lem}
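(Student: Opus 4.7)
My plan is to prove the equality element-wise on multiplicities under bag semantics, mirroring the style of \Cref{lemma:tupsin-union-db} but being more careful since bag difference is not as forgiving as union. The key observation I would lean on is that in the annotated databases used throughout the paper, each tuple $\tup$ is paired with a unique sketch determined by its value in the partition attribute (see \Cref{def:annot-operator}). I would state this as a standing assumption (or, more precisely, that whenever a tuple $\tup$ appears in $\annoDB_1$ and $\annoDB_2$ it carries the same annotation $\provSketch_\tup$), since without it the lemma can fail: e.g.\ $\annoDB_1=\bag{\pair{\tup}{P}}$ and $\annoDB_2=\bag{\pair{\tup}{P'}}$ with $P\neq P'$ give $\tupsIn{\annoDB_1\difference\annoDB_2}=\bag{\tup}$ but $\tupsIn{\annoDB_1}\difference\tupsIn{\annoDB_2}=\bag{\,}$. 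Making this assumption explicit is the main subtlety I foresee.

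With that assumption in hand, the proof is a short computation. For any tuple $\tup$, let $\provSketch_\tup$ be its unique annotation and let $n_i = \annoDB_i(\pair{\tup}{\provSketch_\tup})$ for $i\in\{1,2\}$. Since extracting tuples preserves multiplicity (it only forgets the annotation, and no two distinct annotated entries for $\tup$ exist), we have $\tupsIn{\annoDB_i}(\tup) = n_i$. Bag difference is defined pointwise as $(A\difference B)(x) = \max(A(x)-B(x),0)$, so
\[
\tupsIn{\annoDB_1 \difference \annoDB_2}(\tup)
= (\annoDB_1\difference\annoDB_2)(\pair{\tup}{\provSketch_\tup})
= \max(n_1 - n_2,\,0),
\]
while
\[
(\tupsIn{\annoDB_1} \difference \tupsIn{\annoDB_2})(\tup)
= \max(n_1 - n_2,\,0).
\]
Since this holds for every $\tup$, the two bags are equal.

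The hard part, as noted above, is really just formulating the hypothesis correctly. In the ambient use in this paper the lemma is only ever invoked on annotated databases produced by \annoF or on deltas $\annoDeltaDB$, where the annotation is a function of the tuple, so the hypothesis is automatically satisfied. I would therefore include a short remark after the proof reminding the reader that the only way for the equality to break is if the same tuple could appear with two distinct annotations in $\annoDB_1\cup\annoDB_2$, which never happens in our setting by construction of $\annoF$. As a sanity check I would verify the edge cases $n_1 = 0$, $n_2 = 0$, and $n_1 \le n_2$, in each of which both sides collapse to the expected value, before closing the argument.
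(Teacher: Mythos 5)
Your proof is correct, and it is in fact more careful than the paper's, which simply declares the argument ``analog to the proof for \Cref{lemma:tupsin-union-db}'' and writes nothing further. The two cases are not strictly analogous: union of bags commutes with $\tupsIn{\cdot}$ unconditionally (you just concatenate and drop annotations), whereas bag difference truncates at zero and cancels only entries that are \emph{identical as annotated pairs}. Your counterexample --- $\annoDB_1=\bag{\pair{\tup}{P}}$, $\annoDB_2=\bag{\pair{\tup}{P'}}$ with $P\neq P'$, giving $\tupsIn{\annoDB_1\difference\annoDB_2}=\bag{\tup}$ versus $\tupsIn{\annoDB_1}\difference\tupsIn{\annoDB_2}=\bag{\,}$ --- is a genuine counterexample to the lemma as literally stated, so the extra hypothesis you introduce (each tuple carries a unique annotation across both operands) is not pedantry but a necessary repair. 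It is also satisfied everywhere the lemma is actually invoked, since the annotated inputs are produced by \annoF, where the sketch is a function of the tuple's value in the partition attribute. Your pointwise computation on multiplicities with $\max(n_1-n_2,0)$ then closes the argument cleanly. The only thing I would add is to double-check the downstream uses (e.g.\ in the proof of \Cref{lemma:tupsin-udot-annodb-annodeltadb} and the base case, where $\annoDB$ is differenced with $\deletedelta\annoDB$) to confirm the deleted tuples there are annotated consistently with their counterparts in $\annoDB$ --- they are, by construction of the delta, so your restricted lemma suffices for the paper's purposes.
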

\begin{proof}
  The proof is analog to the proof for \Cref{lemma:tupsin-union-db}.
\end{proof}

\begin{lem}\label{lemma:tupsin-udot-annodb-annodeltadb}
  Let $\tupsIn{\cdot}$ be the \textbf{extract tuples} function, $\annoDB$ and $\annoDeltaDB$ be an annotated database and an annotated database delta. The following property holds:\\
  \[
    \tupsIn{ \annoDB \uniondb \annoDeltaDB } = \tupsIn{\annoDB} \uniondb \tupsIn{\annoDeltaDB}
  \]
\end{lem}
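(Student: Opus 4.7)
\begin{proofsketch}
The plan is to reduce the claim to the two preceding lemmas (\Cref{lemma:tupsin-union-db} and \Cref{lemma:tupsin-difference-db}) by unfolding the definition of $\uniondb$ on both sides. Recall that applying a delta is defined as
\[
\annoDB \uniondb \annoDeltaDB \;=\; \annoDB \difference \annoDDeltaDB \,\union\, \annoIDeltaDB,
\]
where $\annoIDeltaDB$ and $\annoDDeltaDB$ are the insertion and deletion parts of $\annoDeltaDB$, and analogously for the un-annotated counterpart on $\tupsIn{\annoDB}$ and $\tupsIn{\annoDeltaDB}$.

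The first step is to apply $\tupsIn{\cdot}$ to the left-hand side and push it inward, once across $\union$ and once across $\difference$, using \Cref{lemma:tupsin-union-db} and \Cref{lemma:tupsin-difference-db}:
\[
\tupsIn{\annoDB \uniondb \annoDeltaDB} = \tupsIn{\annoDB} \difference \tupsIn{\annoDDeltaDB} \,\union\, \tupsIn{\annoIDeltaDB}.
\]
The second step is to rewrite the right-hand side of the lemma by unfolding its outer $\uniondb$:
\[
\tupsIn{\annoDB} \uniondb \tupsIn{\annoDeltaDB} = \tupsIn{\annoDB} \difference (\tupsIn{\annoDeltaDB})^{-} \,\union\, (\tupsIn{\annoDeltaDB})^{+},
\]
where $(\cdot)^{+}$ and $(\cdot)^{-}$ denote the insertion and deletion parts. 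Equality of the two sides then reduces to the observation that $\tupsIn{\cdot}$ commutes with the split of a delta into its insertion and deletion parts, i.e., $\tupsIn{\annoIDeltaDB} = (\tupsIn{\annoDeltaDB})^{+}$ and $\tupsIn{\annoDDeltaDB} = (\tupsIn{\annoDeltaDB})^{-}$, which follows directly from the definition of $\tupsIn{\cdot}$ since it strips only the sketch annotation and leaves the $\insertdelta$/$\deletedelta$ tag and multiplicity untouched.

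The main obstacle, if any, is purely definitional: one must be careful that $\tupsIn{\cdot}$ is well-defined on delta bags (not just on annotated databases) and that the split $\annoDeltaDB = \annoIDeltaDB \union \annoDDeltaDB$ is preserved by it. Once this observation is in place, the proof is a short chain of equalities stitched together from the two prior lemmas, and no new machinery is required.
\end{proofsketch}
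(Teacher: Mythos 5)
Your proposal is correct and follows essentially the same route as the paper's proof: both unfold $\uniondb$ into $\difference$ of the deletion part followed by $\union$ of the insertion part, push $\tupsIn{\cdot}$ through using \Cref{lemma:tupsin-union-db} and \Cref{lemma:tupsin-difference-db}, and rely on the fact that $\tupsIn{\cdot}$ preserves the $\insertdelta$/$\deletedelta$ tags. The paper merely writes the bags out element-wise where you cite the distributivity lemmas abstractly; the underlying argument is identical.
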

\begin{proof}
  Suppose $\annoDB$ and $\annoDeltaDB$ are:
  \begin{align*}
    &\annoDB = \bag{\pair{\tup_{1}}{\provSketch_{\tup_{1}}}, \ldots, \pair{\tup_{m}}{\provSketch_{\tup_{m}}} }\\
    &\annoDeltaDB = \insertdelta \annoDB \union \deletedelta \annoDB = \bag{\deleteDeltaPair{\tupdel{1}}{\psdel{1}}, \ldots, \deleteDeltaPair{\tupdel{j}}{\psdel{j}}} \\
    & \hspace{3mm} \union \bag{\insertDeltaPair{\tupins{1}}{\psins{1}}, \ldots, \insertDeltaPair{\tupins{i}}{\psins{i}} }
  \end{align*}

  Then $\annoDB \uniondb \annoDeltaDB$ is:
  \begin{align*}
      & \annoDB \uniondb \annoDeltaDB\\
    = & \annoDB \difference \deletedelta \annoDB \union \insertdelta \annoDB\\
    = & \bag{\pair{\tup}{\provSketch} \mid \pair{\tup}{\provSketch} \in \annoDB} - \bag{\pair{\tup}{\provSketch} \mid \pair{\tup}{\provSketch} \in \deletedelta \annoDB} \\
      & \hspace{3mm} \union \bag{\pair{\tup}{\provSketch} \mid \pair{\tup}{\provSketch} \in \insertdelta \annoDB}\\
    = & \bag{\pair{\tup_{1}}{\provSketch_{\tup_{1}}}, \ldots, \pair{\tup_{m}}{\provSketch_{\tup_{m}}} } \\
      & \hspace{3mm}- \bag{\deleteDeltaPair{\tupdel{1}}{\psdel{1}}, \ldots, \deleteDeltaPair{\tupdel{j}}{\psdel{j}}} \\
      & \hspace{3mm} \union \bag{\insertDeltaPair{\tupins{1}}{\psins{1}}, \ldots, \insertDeltaPair{\tupins{i}}{\psins{i}} }
  \end{align*}
  Then $\tupsIn{\annoDB \uniondb \annoDeltaDB}$:
  \begin{align*}
      & \tupsIn{\annoDB \uniondb \annoDeltaDB} \\
    = & \bag{\tup_{1}, \ldots, \tup_{m} } \\
      & \hspace{3mm}- \bag{\deletedelta \tupdel{1}, \ldots, \deletedelta\tupdel{j}} \\
      & \hspace{3mm} \union \bag{\insertdelta\tupins{1}, \ldots, \insertdelta\tupins{i}}
  \end{align*}
  $\tupsIn{\annoDB} \uniondb \tupsIn{\annoDeltaDB}$ are:
  \begin{align*}
      & \tupsIn{\annoDB} \uniondb \tupsIn{\annoDeltaDB}\\
    = & \tupsIn{\annoDB} \uniondb \tupsIn{\deletedelta \annoDB \union \insertdelta \annoDB}\\
    = & \tupsIn{\annoDB} \difference \tupsIn{\deletedelta \annoDB} \union \tupsIn{\insertdelta \annoDB}\\
    = & \bag{\tup_{1}, \ldots, \tup_{m} } \\
      & \hspace{3mm} \difference \bag{\deletedelta \tupdel{1}, \ldots, \deletedelta\tupdel{j}} \\
      & \hspace{3mm} \union \bag{\insertdelta\tupins{1}, \ldots, \insertdelta\tupins{i}}
  \end{align*}

  Therefore, $ \tupsIn{ \annoDB \uniondb \annoDeltaDB } = \tupsIn{\annoDB} \uniondb \tupsIn{\annoDeltaDB}$
\end{proof}

\revision{
\begin{defi}[\Tcorrect] \label{def:tuple_correctness}
Consider a query $\query$, a database $\db$ and a delta database $\deltadb$, Let $\updDB$
be the updated database such that $\updDB = \db \uniondb \deltadb$. Let
$\incremaintain$ be an incremental maintenance procedure takes as input a state
$\statedata$, the query $\query$,  the annotated delta $\annoDeltaDB$ and the
range partition $\dbranges$. The tuples in the result of the query running over
the updated database are equivalent to tuples in the result of applying  
query running over the database to incremental maintenance procedure such that: 
\[
  \tupsIn{\query(\annoDB) \uniondb  \ieval{\query}{\dbranges,\annoDeltaDB}{\statedata}} = \query(\updDB) 
\]
\end{defi}
}

Recall $\psFor{\query, \dbranges, \db}$ defines an accurate provenance
sketch $\provSketch$ for $\query$ wrt. to $\db$, and ranges $\dbranges$,
and $\dbInst{\provSketch}$ is an instance of $\provSketch$ which is the
data covered by the sketch.

Now we define a function $\fragsIn{\cdot}$ (\textbf{fragments extracting}) that
takes as input a bag of annotated tuples and return all the sketches from this
bag such that:
\[
  \fragsIn{ \bag{ \pair{\tup_{1}}{\provSketch_{1}}, \ldots, \pair{\tup_{y}}{\provSketch_{y}} } }  = \bag{\provSketch_{1}, \ldots, \provSketch_{y}}
\]
And the $\fragsIn{\cdot}$ function has the following property:

\begin{lem}\label{lemma:frags_annodbU_Uannodb}
  \fragsIn{\annoDB_{1} \union \annoDB_{2}} = \fragsIn{\annoDB_{1}} \union \fragsIn{\annoDB_{2}}
\end{lem}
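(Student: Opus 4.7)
The plan is to mirror the proof of \Cref{lemma:tupsin-union-db} for $\tupsIn{\cdot}$, since $\fragsIn{\cdot}$ is structurally analogous: it is defined pointwise on a bag of annotated tuples by projecting each pair $\pair{\tup}{\provSketch}$ to its second component $\provSketch$. Because $\union$ on bags is multiplicity-additive and a pointwise projection preserves multiplicities element by element, the two operations should commute.

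The concrete steps I would carry out are as follows. First, I would write the two inputs explicitly as $\annoDB_1 = \bag{\pair{\tup_1}{\provSketch_{\tup_1}}, \ldots, \pair{\tup_m}{\provSketch_{\tup_m}}}$ and $\annoDB_2 = \bag{\pair{s_1}{\provSketch_{s_1}}, \ldots, \pair{s_n}{\provSketch_{s_n}}}$, so that bag union is handled via the usual concatenation-with-multiplicities. Second, I would compute the left-hand side by forming $\annoDB_1 \union \annoDB_2$ and then applying the definition of $\fragsIn{\cdot}$, obtaining $\bag{\provSketch_{\tup_1}, \ldots, \provSketch_{\tup_m}, \provSketch_{s_1}, \ldots, \provSketch_{s_n}}$. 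Third, I would compute the right-hand side by applying $\fragsIn{\cdot}$ to $\annoDB_1$ and $\annoDB_2$ separately and then taking their bag union, producing the same bag. Equality then follows directly from the definitions of $\fragsIn{\cdot}$ and $\union$.

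There is no real obstacle beyond being careful with bag semantics: duplicates among the $\provSketch_i$ must be retained, which holds because $\fragsIn{\cdot}$ forgets only the tuple component and preserves the multiplicity of each annotated pair. A cleaner, equivalent alternative is to argue pointwise on multiplicities: for every sketch $\provSketch$, show $(\fragsIn{\annoDB_1 \union \annoDB_2})(\provSketch) = (\fragsIn{\annoDB_1})(\provSketch) + (\fragsIn{\annoDB_2})(\provSketch)$ by splitting the sum over pairs contributing $\provSketch$ according to their origin in $\annoDB_1$ or $\annoDB_2$. Either route is a direct unfolding of definitions, essentially identical in shape to the proof already given for \Cref{lemma:tupsin-union-db}, and together with analogous versions for $\difference$ and $\uniondb$ (mirroring \Cref{lemma:tupsin-difference-db} and \Cref{lemma:tupsin-udot-annodb-annodeltadb}) it gives the full set of compositional properties of $\fragsIn{\cdot}$ that the later correctness argument will require.
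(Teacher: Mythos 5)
Your proposal is correct and matches the paper's approach: the paper proves this lemma simply by noting it is analogous to the proof of \Cref{lemma:tupsin-union-db}, i.e., writing out both bags explicitly and observing that the projection onto the sketch component commutes with bag union. Your additional pointwise-multiplicity formulation is a valid equivalent phrasing of the same argument.
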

The proof of this property is similar to \cref{lemma:tupsin-union-db} where for
$\fragsIn{\cdot}$, we focus on fragments instead of tuples. Therefore, $\fragsIn{\annoDeltaDB} = \fragsIn{\insertdelta \annoDB} \union \fragsIn{\deletedelta \annoDB}$, since $\deltadb = \insertdelta \db \union \deletedelta \db$.

Like \Cref{lemma:tupsin-udot-annodb-annodeltadb}, the \textbf{extract fragments}
function has the following properties as well.
\begin{lem}\label{lemma:frags-udot-annodb-annodeldb}
  $\fragsIn{\annoDB \uniondb \annoDeltaDB} = \fragsIn{\annoDB} \uniondb \fragsIn{\annoDeltaDB}$
\end{lem}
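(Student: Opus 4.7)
The plan is to mimic the structure of the proof of Lemma~\ref{lemma:tupsin-udot-annodb-annodeltadb} almost verbatim, replacing tuple extraction with fragment extraction throughout. The key observation is that both $\tupsIn{\cdot}$ and $\fragsIn{\cdot}$ act pointwise on a bag of annotated tuples $\pair{\tup}{\provSketch}$: the first projects out $\tup$, the second projects out $\provSketch$. Since bag operations on annotated tuples are themselves pointwise with respect to multiplicities, any such pointwise projection commutes with $\union$ and $\difference$.

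First, I would establish the analog of Lemma~\ref{lemma:tupsin-difference-db} for $\fragsIn{\cdot}$, namely
\[
\fragsIn{\annoDB_1 \difference \annoDB_2} = \fragsIn{\annoDB_1} \difference \fragsIn{\annoDB_2},
\]
which follows by the same elementwise argument as given for $\tupsIn{\cdot}$ (expanding each side as a bag and matching multiplicities). Combined with the already stated Lemma~\ref{lemma:frags_annodbU_Uannodb}, this gives me distributivity of $\fragsIn{\cdot}$ over both $\union$ and $\difference$.

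Next, I would unfold $\uniondb$ using its definition on annotated databases:
\[
\annoDB \uniondb \annoDeltaDB \;=\; (\annoDB \difference \deletedelta\annoDB) \union \insertdelta\annoDB.
\]
Applying $\fragsIn{\cdot}$ and pushing it through the outer $\union$ and the inner $\difference$ using the two distributivity results yields
\[
\fragsIn{\annoDB \uniondb \annoDeltaDB} \;=\; \bigl(\fragsIn{\annoDB} \difference \fragsIn{\deletedelta\annoDB}\bigr) \union \fragsIn{\insertdelta\annoDB}.
\]
Finally, recognizing that by Lemma~\ref{lemma:frags_annodbU_Uannodb} $\fragsIn{\annoDeltaDB} = \fragsIn{\insertdelta\annoDB} \union \fragsIn{\deletedelta\annoDB}$, and that the $\uniondb$ operator on extracted bags of sketches is defined exactly as ``apply the deletions component, then apply the insertions component,'' the right-hand side above is just $\fragsIn{\annoDB} \uniondb \fragsIn{\annoDeltaDB}$, closing the argument.

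The main obstacle is purely notational rather than mathematical: the operator $\uniondb$ was introduced for databases and sketches, and here I need to apply it to bags of sketches produced by $\fragsIn{\cdot}$. I would need to make explicit that $\fragsIn{\insertdelta\annoDB}$ inherits the $\insertdelta$ tag (and similarly for deletions) so that the outer $\uniondb$ on the right-hand side knows which submultiset to add and which to subtract. Once that bookkeeping is spelled out, everything reduces to the pointwise projection argument and the proof is essentially a copy of Lemma~\ref{lemma:tupsin-udot-annodb-annodeltadb}.
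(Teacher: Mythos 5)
Your proposal is correct and follows essentially the same route as the paper's proof: both unfold $\annoDB \uniondb \annoDeltaDB$ into $\annoDB \difference \deletedelta\annoDB \union \insertdelta\annoDB$ and commute the pointwise projection $\fragsIn{\cdot}$ with $\union$ and $\difference$, the paper doing so by writing out the bags explicitly and you by invoking the named distributivity lemmas. The notational caveat you raise about $\uniondb$ on bags of tagged sketches is glossed over in the paper in exactly the same way, so there is no substantive difference.
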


\begin{proof}
  Suppose $\annoDB$ and $\annoDeltaDB$ are:
  \begin{align*}
    &\annoDB = \bag{\pair{\tup_{1}}{\provSketch_{\tup_{1}}}, \ldots, \pair{\tup_{m}}{\provSketch_{\tup_{m}}} }\\
    &\annoDeltaDB = \insertdelta \annoDB \union \deletedelta \annoDB = \bag{\deleteDeltaPair{\tupdel{1}}{\psdel{1}}, \ldots, \deleteDeltaPair{\tupdel{j}}{\psdel{j}}} \\
    & \hspace{3mm} \union \bag{\insertDeltaPair{\tupins{1}}{\psins{1}}, \ldots, \insertDeltaPair{\tupins{i}}{\psins{i}} }
  \end{align*}

  Then $\annoDB \uniondb \annoDeltaDB$ is:
  \begin{align*}
      & \annoDB \uniondb \annoDeltaDB\\
    = & \annoDB \difference \deletedelta \annoDB \union \insertdelta \annoDB\\
    = & \bag{\pair{\tup}{\provSketch} \mid \pair{\tup}{\provSketch} \in \annoDB} - \bag{\pair{\tup}{\provSketch} \mid \pair{\tup}{\provSketch} \in \deletedelta \annoDB} \\
      & \hspace{3mm} \union \bag{\pair{\tup}{\provSketch} \mid \pair{\tup}{\provSketch} \in \insertdelta \annoDB}\\
    = & \bag{\pair{\tup_{1}}{\provSketch_{\tup_{1}}}, \ldots, \pair{\tup_{m}}{\provSketch_{\tup_{m}}} } \\
      & \hspace{3mm}- \bag{\deleteDeltaPair{\tupdel{1}}{\psdel{1}}, \ldots, \deleteDeltaPair{\tupdel{j}}{\psdel{j}}} \\
      & \hspace{3mm} \union \bag{\insertDeltaPair{\tupins{1}}{\psins{1}}, \ldots, \insertDeltaPair{\tupins{i}}{\psins{i}} }
  \end{align*}
  Then $\fragsIn{\annoDB \uniondb \annoDeltaDB}$:
  \begin{align*}
      & \fragsIn{\annoDB \uniondb \annoDeltaDB} \\
    = & \bag{\provSketch_{1}, \ldots, \provSketch_{m} } \\
      & \hspace{3mm}- \bag{\deletedelta \psdel{1}, \ldots, \deletedelta \psdel{j}} \\
      & \hspace{3mm} \union \bag{\insertdelta \psins{1}, \ldots, \insertdelta \psins{i}}
  \end{align*}
  $\fragsIn{\annoDB} \uniondb \fragsIn{\annoDeltaDB}$ are:
  \begin{align*}
      & \fragsIn{\annoDB} \uniondb   \fragsIn{\annoDeltaDB}\\
    = & \fragsIn{\annoDB} \uniondb   \fragsIn{\deletedelta \annoDB \union \insertdelta \annoDB}\\
    = & \fragsIn{\annoDB} \difference \fragsIn{\deletedelta \annoDB} \union \fragsIn{\insertdelta \annoDB}\\
    = & \bag{\provSketch_{1}, \ldots, \provSketch_{m} } \\
      & \hspace{3mm} \difference \bag{\deletedelta \psdel{1}, \ldots, \deletedelta\psdel{j}} \\
      & \hspace{3mm} \union \bag{\insertdelta\psins{1}, \ldots, \insertdelta\psins{i}}
  \end{align*}

  Therefore, $ \fragsIn{ \annoDB \uniondb \annoDeltaDB } = \fragsIn{\annoDB} \uniondb \fragsIn{\annoDeltaDB}$
\end{proof}

We now define $\fragSet{\cdot}$ as a function that takes as input \underline{a bag} of
fragments where each fragment has a multiplicity at least 1, and returns \underline{a set}
of fragments such that the multiplicity of each fragment in input bag is 1 in the output set.
For a query $\query$ running over
a annotated database $\annoDB$, the annotated result is $\query(\annoDB)$. The
fragments in the annotated result are $\fragsIn{ \query(\annoDB) }$. Suppose the
provenance sketch captured for this query given the ranges $\dbranges$ is
$\psFor{\query, \dbranges, \db}$, then we can get that:
\[
  \psFor{\query, \dbranges, \db} = \fragSet{\fragsIn{ \query(\annoDB) }}
\]

A provenance sketch covers relevant data of the database to answer a query such
that:
\[
  \query(\db) = \query(\dbInst{\psFor{\query, \dbranges, \db}}) = \query(\dbInst{\fragSet{\fragsIn{\query(\annoDB)}}})
\]


For a query running over a set of fragments, it is the same as running a bag of
fragments where the ranges are exactly the same as in the set but each one
having different multiplicity. The reason is for a fragment appearing
multiple times in the bag, when using this fragment to answer, they will be
translate into the same expression in the \lstinline!WHERE! clause multiple
times concatenating with \lstinline!OR!. And this expression appearing multiple
times will be treated as a single one when the database engine evaluates the
\lstinline!WHERE!. For example, \lstinline!WHERE (a BETWEEN 10 AND 20)!
\lstinline! OR (a BETWEEN 10 AND 20)! has
the same effect as \lstinline!WHERE (a BETWEEN 10 AND 20)! for a query. Thus,
the following holds:
\[
  \query(\db)=  \query(\dbInst{\fragSet{\fragsIn{\query(\annoDB)}}} = \query(\dbInst{\fragsIn{\query(\annoDB)}})
\]

\revision{
\begin{defi}[\Fcorrect] \label{def:frag_correctness}
Consider a query $\query$, a database $\db$ and a delta database $\deltadb$, Let $\updDB$
be the updated database such that $\updDB = \db \uniondb \deltadb$. Let
$\incremaintain$ be an incremental maintenance procedure takes as input a state
$\statedata$, the query $\query$,  the annotated delta $\annoDeltaDB$ and the
range partition $\dbranges$. The result of the running query over
the updated database is equivalent to the result of   
running query over the data of updated database covered by applying the fragments in current provenance
sketch to fragments generated from the incremental maintenance procedure such that:
\[
  \query(\updDB) = \query(\updDBInst{\fragSet{\fragsIn{\query(\annoDB)} \uniondb \fragsIn{\ieval{\query}{\dbranges,\annoDeltaDB}{\statedata}}}}) 
\]
\end{defi}
}

\revision{
  \begin{lem}\label{lemma:tc_fc_hold_theorem_hold}
    For an operator that the sketch is maintained by the incremental procedure, if
    both \tcorrect and \fcorrect hold, then the ~\Cref{theo:correctness} holds
    for this operator.
  \end{lem}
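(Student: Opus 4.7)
The plan is to reduce the theorem to the two granted hypotheses (tuple and fragment correctness) by unwinding the definition of $\incremaintain$ in terms of the per-operator annotated semantics and the final merge operator $\fragmergeop$. I would proceed in three stages.

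First, I will rewrite the right-hand side $\provSketch \uniondb \incremaintain(\query,\dbranges,\statedata,\annoDeltaDB)$ in terms of the annotated query results. By construction, the merge operator $\fragmergeop$ maintains a per-fragment counter so that, after applying its sketch delta, the resulting set of ranges is exactly the set of fragments that currently annotate some tuple of the updated annotated result. Starting from a valid sketch $\provSketch$ for $\db$, which by \Cref{def:provenance-sketch} is a superset of $\fragSet{\fragsIn{\query(\annoDB)}}$, and applying the delta produced by $\fragmergeop$, I expect to obtain
\[
\provSketch \uniondb \incremaintain(\query,\dbranges,\statedata,\annoDeltaDB) \supseteq \fragSet{\fragsIn{\query(\annoDB)} \uniondb \fragsIn{\ieval{\query}{\dbranges,\annoDeltaDB}{\statedata}}},
\]
where I use \Cref{lemma:frags-udot-annodb-annodeldb} to distribute $\fragsIn{\cdot}$ over $\uniondb$.

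Second, I will invoke fragment correctness to conclude that this set of fragments is safe for the updated database, i.e., evaluating $\query$ over $\updDBInst{\cdot}$ of it yields $\query(\updDB)$. Combined with tuple correctness, which guarantees that the annotated updated result contains exactly the bag of tuples in $\query(\updDB)$, and the identity $\psFor{\query,\dbranges,\db} = \fragSet{\fragsIn{\query(\annoDB)}}$ established in Section~5.1, we obtain that the right-hand side of the above inclusion is itself a provenance sketch for $\query$ over $\updDB$ in the sense of \Cref{def:provenance-sketch}, and therefore contains the accurate sketch $\psFor{\query,\dbranges,\updDB}$.

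Third, chaining the two containments yields the theorem's conclusion. The main obstacle is stage one: carefully arguing that $\fragmergeop$ combined with a possibly over-approximating initial $\provSketch$ never drops a fragment needed at $\updDB$. The counter $\statedata[\range]$ maintained by $\fragmergeop$ tracks occurrences of $\range$ in the cached annotated output; a deletion from the delta can only drive it to zero when $\range$ no longer appears in any tuple of $\query(\annoDB) \uniondb \ieval{\query}{\dbranges,\annoDeltaDB}{\statedata}$, which, via tuple correctness, ensures $\range$ is not in the accurate updated sketch. Conversely, fragments present in $\provSketch$ beyond the accurate current sketch are simply preserved by $\uniondb$, which is harmless for a superset claim. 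This bookkeeping is where I expect to spend most effort, since it requires tying the semantics of the counter update in $\fragmergeop$ to the abstract operation $\fragSet{\fragsIn{\cdot}}$ on bags of annotated tuples.
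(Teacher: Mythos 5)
The paper never actually writes out a proof of this lemma --- it is stated and then implicitly discharged by the per-operator induction --- so you are filling a real void. Your stage one is the right bookkeeping to make explicit: the counter $\statedata[\range]$ maintained by $\fragmergeop$ makes the post-delta sketch equal to $\{\range \mid \statedata'[\range] > 0\} = \fragSet{\fragsIn{\query(\annoDB)} \uniondb \fragsIn{\ieval{\query}{\dbranges,\annoDeltaDB}{\statedata}}}$, and starting from an over-approximating $\provSketch$ and applying the merge delta can only retain extra ranges, never drop needed ones. That part is sound, as is the chaining in stage three.

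The gap is in stage two. Fragment correctness (\Cref{def:frag_correctness}) is a \emph{safety} statement: evaluating $\query$ over the instance of the maintained fragment set reproduces $\query(\updDB)$. The theorem instead demands containment of the \emph{accurate} sketch $\psFor{\query,\dbranges,\db\uniondb\deltadb} = \provranges{\updDB}{\dbranges}{\query}$, i.e., every range whose fragment overlaps the provenance. Under the paper's definitions these are distinct properties, and safety does not imply the containment: a fragment set can be sufficient while missing part of the provenance, e.g., when several fragments redundantly justify the same output tuple and one of them already suffices. Your step ``it is safe, hence it is a provenance sketch in the sense of \Cref{def:provenance-sketch}, hence it contains the accurate sketch'' is circular, because \Cref{def:provenance-sketch} \emph{defines} a provenance sketch as a superset of the accurate one; being safe is not an independent route to that superset relation. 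What actually closes the gap is the identity $\psFor{\query,\dbranges,\updDB} = \fragSet{\fragsIn{\query(\annoupdDB)}}$ combined with the annotation-level equality $\query(\annoupdDB) = \query(\annoDB) \uniondb \ieval{\query}{\dbranges,\annoDeltaDB}{\statedata}$ --- a statement strictly stronger than the conjunction of tuple and fragment correctness as literally defined, but the one the per-operator inductive arguments in the paper in fact establish. To make your proof go through you should either add this annotation-level correctness as a third hypothesis of the lemma or show that it follows from the way tuple and fragment correctness are proved for each operator; as written, tuple correctness and fragment correctness alone do not entail \Cref{theo:correctness}.
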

}

\revision{
  In the following, we will prove \Cref{theo:correctness} by induction over the
  structure of queries starting with the base case which is the correctness of
  query consisting of a single table access operator followed by inductive steps
  for other operators distinguishing the cases of tuple and fragment
  correctness to show the ~\Cref{lemma:tc_fc_hold_theorem_hold}. 
}

\BGI{Should we add a lemma that states that 6.1 holds as long as both tuple correctness and fragment correctness holds?}

\begin{figure}[h!]
  \centering
  \begin{tikzpicture}[
    node distance=1cm and 2cm,
    every node/.style={ minimum size=1cm},
    every path/.style={-, thick}
  ]
    \node[draw=none] (top) at (-2, 2) {$\query^{i + 1}: \;$};
    \node (t1) at (-0.5, 2) {$\pair{\tup_{1}}{\provSketch_{\tup_{1}}}$};
    \node (t2) at (1, 2) {$\pair{\tup_{2}}{\provSketch_{\tup_{2}}}$};
    \node (t3) at (2.5, 2) {$\ldots$};
    \node (t4) at (4, 2) {$\pair{\tup_{m}}{\provSketch_{\tup_{m}}}$};
    \coordinate (pt1) at (-0.5, 1.8); 
    \coordinate (pt2) at (1, 1.8); 
    \coordinate (pt4) at (4, 1.8); 

    \node[draw=none] (bottom) at (-2, 0) {$\query^{i}: \;$};
    \node (s1) at (-1, 0) {$\pair{s_{1}}{\provSketch_{s_{1}}}$};
    \node (s2) at (0.5, 0) {$\pair{s_{2}}{\provSketch_{s_{2}}}$};
    \node (s3) at (2, 0) {$\ldots$};
    \node (s4) at (3.5, 0) {$\pair{s_{n -1}}{\provSketch_{s_{n -1}}}$};
    \node (s5) at (5.3, 0) {$\pair{s_{n}}{\provSketch_{s_{n}}}$};
    \coordinate (ps1) at (-1, 0.2); 
    \coordinate (ps2) at (0.5, 0.2); 
    \coordinate (ps3) at (2, 0.2);
    \coordinate (ps4) at (3.5, 0.2); 
    \coordinate (ps5) at (5.3, 0.2); 

    \draw (pt1) -- (ps1);
    \draw (pt2) -- (ps4);
    \draw (pt4) -- (ps2);
    \draw (pt4) -- (ps3);
    \draw (pt4) -- (ps4);
    \draw (pt4) -- (ps5);
  \end{tikzpicture}
  \caption{Annotated tuples' relation between $\query^{i}$ and $\query^{i + 1}$}
  \label{fig:anno_tuples_rel_between_two_levels}
\end{figure}
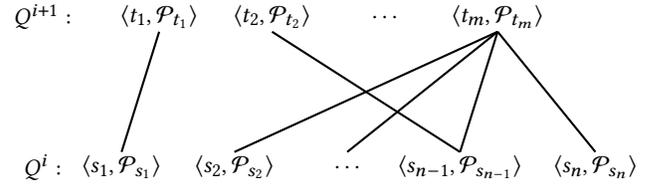

\subsection{Proof of \Cref{theo:correctness}}

\revision{Having the properties \tcorrect and \fcorrect defined, we are ready to prove \Cref{theo:correctness} by induction over the structure of a query showing that the tuple correctness and fragment correctness properties hold for supported query which implies the theorem.}

\subsubsection{Base Case}
We start with $\query^{1}$, which is a single table $\rel$. We will show that
for any $\query^{1} \in \queryclass{1}$, the \tcorrect and \fcorrect
properties hold for table access operator.

Suppose the relation and its annotated relation are:
\[
  \rel = \bag{\tup_1, \ldots, \tup_n} \;\;\;\;
  \annoRel = \bag{\pair{\tup_1}{\provSketch_{1}}, \ldots, \pair{\tup_n}{\provSketch_{n}}}
\]
and delta relation and annotated delta relation are:
\[
  \insertdelta \rel = \bag{\tupins{1}, \ldots, \tupins{i}} \;\;\;\; \deletedelta \rel = \bag{\tupdel{1}, \ldots, \tupdel{j}}
 \]
 \[
  \insertdelta \annoRel = \bag{\pair{\tupins{1}}{\psins{1}}, \ldots, \pair{\tupins{i}}{\psins{i}}} \;\; \deletedelta \annoRel = \bag{\pair{\tupdel{1}}{\psdel{1}}, \ldots, \pair{\tupdel{j}}{\psdel{j}}}
\]
\parttitle{\Tcorrect}
\begin{align*}
  & \tupsIn{\query^{1}(\annoDB) \uniondb \ieval{\query^{1}}{\dbranges}{\annoDeltaDB}} \\
= & \tupsIn{\query^{1}(\annoRel) \uniondb \ieval{\query^{1}}{\dbranges}{\annoDeltaRel}} \\
= & \tupsIn{\annoRel \uniondb \Delta \annoRel} \tag{~\Cref{lemma:tupsin-udot-annodb-annodeltadb}}\\
= & \tupsIn{\annoRel \difference \deletedelta \annoRel \union \insertdelta \annoRel} \tag{Applying delta}\\
= & \tupsIn{\annoRel} \difference \tupsIn{\deletedelta \annoRel} \union \tupsIn{\insertdelta \annoRel} \tag{~\Cref{lemma:tupsin-union-db} and ~\Cref{lemma:tupsin-difference-db}}\\
= & \rel \difference \deletedelta \rel \union \insertdelta \rel \tag{\tupsIn{\cdot}}\\
= & \query^{1} (\rel \difference \deletedelta \rel \union \insertdelta \rel) \\
= & \query^{1} (\rel \uniondb \deltaRel) \\
= & \query^{1} (\rel')
\end{align*}

\parttitle{\Fcorrect}
First, determine the fragments after the incremental maintenance:
\begin{align*}
  & \fragsIn{\query^{1}(\annoDB)} \uniondb \fragsIn{ \ieval{\query^{1}}{\dbranges}{\annoDeltaDB} }\\
= & \fragsIn{\annoRel} \uniondb \fragsIn{  \annoDeltaRel } \\
= & \fragsIn{\annoRel} \uniondb \fragsIn{ \insertdelta \annoRel \union \deletedelta \annoRel} \\
= & \fragsIn{\annoRel} \difference \fragsIn{ \deletedelta \annoRel} \union {\insertdelta \annoRel} \\
= & \bag{\provSketch_{1}, \ldots, \provSketch_{n}} \difference \deletedelta \bag{\psdel{1}, \ldots, \psdel{j}} \union \insertdelta \bag{\psins{1}, \ldots, \psins{i}}
\end{align*}
Since for every sketch in
$\bag{\provSketch_{1}, \ldots, \provSketch_{n}} \uniondb \deletedelta
\bag{\psdel{1}, \ldots, \psdel{j}} \uniondb \insertdelta \bag{\psins{1}, \ldots,
  \psins{i}} $, it associates a tuple, and the associated tuple will be inserted or deleted as
the sketch does. Thus:
\begin{align*}
  \updDBInst{\fragsIn{\query^{1}(\annoDB)} \uniondb \fragsIn{\ieval{\query^{1}}{\dbranges}{\annoDeltaDB}}} & = \bag{\tup_{1}, \ldots, \tup_{n}} \uniondb \insertdelta \bag{\tupins{1}, \ldots, \tupins{i}}\\
  & \hspace{10mm} \uniondb \deletedelta \bag{\tupdel{1}, \ldots, {\tupdel{j}}}
\end{align*}
Thus,
\begin{align*}
    & \query^{1}(\updDB_{\fragsIn{\query^{1}(\annoDB)} \uniondb \fragsIn{\ieval{\query}{\dbranges,\annoDeltaDB}{\statedata}}}) \\
  = & \bag{\tup_{1}, \ldots, \tup_{n}} \uniondb \insertdelta \bag{\tupins{1}, \ldots, \tupins{i}} \uniondb \deletedelta \bag{\tupdel{1}, \ldots, {\tupdel{j}}} \\
  = & \rel \union \insertdelta \rel \difference \deletedelta \rel
\end{align*}

Since $\query^{1} (\rel') = \query^{1}(\rel \uniondb \deltaRel) = \rel \difference \deletedelta \rel \union \insertdelta \rel$
, then we get for $\query^{1} \in \queryclass{1}$ that:
$\query^{1} (\updDB)  = \query^{1}(\updDB_{\fragsIn{\query^{1}(\annoDB)} \uniondb \fragsIn{\ieval{\query^{1}}{\dbranges,\annoDeltaDB}{\statedata}}}) = \query^{1}(\updDBInst{\fragSet{\fragsIn{\query^{1}(\annoDB)} \uniondb \fragsIn{\ieval{\query^{1}}{\dbranges,\annoDeltaDB}{\statedata}}}})$.

\subsubsection{Inductive Step}
\revision{
Assume for ~\Cref{theo:correctness}, the two correctness properties hold for $\query^{i} \in
\queryclass{i}$ such that the incremental maintenance procedure can correctly
produce the tuples and provenance sketches:
}
\revision{
\begin{align*}
  & \query^{i}(\updDB) = \tupsIn{\query^{i}(\annoDB) \uniondb  \ieval{\query^{i}}{\dbranges,\annoDeltaDB}{\statedata} } \tag{\tcorrect}\\
  & \query^{i}(\updDB) = \query^{i}(\updDBInst{\fragSet{\fragsIn{\query^{1}(\annoDB)} \uniondb \fragsIn{\ieval{\query^{i}}{\dbranges,\annoDeltaDB}{\statedata}}}}) \tag{\fcorrect}
\end{align*}
}

\revision{
Next we will show that for $\query^{i + 1} \in \queryclass{i + 1}$, where the
$(i + 1)$'s operator is on top of $\query^{i}$, both
properties hold such that:
\begin{align*}
  & \query^{i + 1}(\updDB) = \tupsIn{\query^{i + 1}(\annoDB) \uniondb \ieval{\query^{i + 1}}{\dbranges,\annoDeltaDB}{\statedata} }\tag{\tcorrect}\\
  & \query^{i + 1}(\updDB) = \query^{i + 1}(\updDBInst{\fragSet{\fragsIn{\query^{i + 1}(\annoDB)} \uniondb \fragsIn{\ieval{\query^{i + 1}}{\dbranges,\annoDeltaDB}{\statedata}}}})\tag{\fcorrect}
\end{align*}
In the following parts, we will demonstrate the proof for operators distinguishing cases for correctness of tuples and fragments.
}

\subsubsection{Selection}
Suppose the operator at level $i + 1$ is an selection operator. Then we know
that
\[
  \query^{i + 1}(\db)  = \selection_{\theta} (\query^{i} (\db))
\]

For a selection operator, the annotated tuples' relation between $\query^{i+ 1}$
and $\query^{i}$ is like $\pair{\tup_{1}}{\provSketch_{1}}$ and
$\pair{s_{1}}{\provSketch_{s_{1}}}$ in
\Cref{fig:anno_tuples_rel_between_two_levels} if the annotated tuple
$\pair{s_{1}}{\provSketch_{s_{1}}}$ can satisfy the selection condition of
$\query^{i + 1}$. Otherwise, there is no output for an input annotated tuple.

\parttitle{\Tcorrect}
Before we demonstrate the tuple correctness, we first show two properties that
will be used for selection operator:


\begin{lem}\label{lemma:sel-Tsel-selT}
  The following properties hold:
  \begin{align*}
    &\selection_{\theta}(\tupsIn{\annoDB}) = \tupsIn{\selection_{\theta}(\annoDB)}
  \end{align*}
\end{lem}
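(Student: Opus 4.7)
The plan is to prove the identity by direct unfolding of the definitions on both sides. The crucial observation is that the selection operator, both over plain bags (\Cref{fig:rel-algebra-def}) and over sketch-annotated bags (\Cref{sec:op-rules-selection}), filters each (annotated) tuple independently and evaluates the predicate $\theta$ only on the tuple's attribute values --- the attached provenance sketch plays no role in the test. Since $\tupsIn{\cdot}$ is defined pointwise, dropping the sketch component of each annotated tuple while preserving multiplicities, it commutes with any per-tuple filter whose behavior is determined solely by attribute values.

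First I would fix an enumeration $\annoDB = \bag{\pair{\tup_i}{\provSketch_i}^{m_i} \mid i \in [1,n]}$ of the annotated bag. For the LHS, I would unfold $\tupsIn{\annoDB}$ to $\bag{\tup_i^{m_i} \mid i \in [1,n]}$ and then apply the bag-semantics definition of $\selection_{\theta}$ to obtain $\bag{\tup_i^{m_i} \mid i \in [1,n] \land \tup_i \models \theta}$. For the RHS, I would first apply the annotated selection rule, which retains exactly those pairs whose tuple component satisfies $\theta$, yielding $\bag{\pair{\tup_i}{\provSketch_i}^{m_i} \mid i \in [1,n] \land \tup_i \models \theta}$; then applying $\tupsIn{\cdot}$ drops the sketches, producing the same bag $\bag{\tup_i^{m_i} \mid i \in [1,n] \land \tup_i \models \theta}$. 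Comparing the two expressions concludes the equality.

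I do not anticipate any real obstacle: the lemma is essentially a commutativity between a projection-like operation ($\tupsIn{\cdot}$) and a filter whose predicate ignores the coordinate being projected away. The only point worth double-checking is that multiplicities are preserved identically on both sides, which follows because both $\tupsIn{\cdot}$ and the two flavors of $\selection_{\theta}$ act pointwise on each (annotated) tuple without merging or duplicating entries. If additional properties of this shape are needed later in the proof (e.g., for projection, cross product, or $\anno{\cdot}{\dbranges}$), the same unfolding template will apply, so it is worth stating the argument in a way that generalizes.
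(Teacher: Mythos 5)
Your proposal is correct and follows essentially the same route as the paper's proof: both sides are unfolded via the definitions of $\selection_{\theta}$ and $\tupsIn{\cdot}$ and observed to yield the same bag, using the fact that the predicate $\theta$ depends only on the tuple component. Your version is slightly more careful about tracking multiplicities explicitly, but the argument is the same.
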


\begin{proof}
  For $\selection_{\theta}(\tupsIn{\annoDB})$:
  \begin{align*}
      & \selection_{\theta}(\tupsIn{\annoDB})\\
    = & \selection_{\theta}(\db) \tag{$\tupsIn{\annoDB} = \db$}\\
    = & \bag{\tup \mid \tup \in \db \wedge \tup \models \theta} \tag{$\selection_{\theta}$}
  \end{align*}

  For $\tupsIn{\selection_{\theta}(\annoDB)}$:
  \begin{align*}
      & \tupsIn{\selection_{\theta}(\annoDB})\\
    = & \tupsIn{\bag{\pair{\tup}{\provSketch} \mid \pair{\tup}{\provSketch} \in \annoDB \wedge \tup \models \theta}} \tag{$\selection_{\theta}$}\\
    = & \bag{\tup \mid \tup \in \db \wedge \tup \models \theta}  \tag{$\tupsIn{\annoDB} = \db$}
  \end{align*}

  Therefore, $\selection_{\theta}(\tupsIn{\annoDB}) = \tupsIn{\selection_{\theta}(\annoDB)}$:
\end{proof}

\begin{lem}\label{lemma:sel-annodb-udot-annodel}
  The following properties hold:
  \begin{align*}
    \selection_{\theta}(\tupsIn{\annoDB \uniondb \annoDeltaDB}) = \selection_{\theta}(\tupsIn{\annoDB}) \uniondb \selection_{\theta}(\tupsIn{\annoDeltaDB})
  \end{align*}
\end{lem}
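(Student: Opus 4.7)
The plan is to reduce the claim to the distributivity of bag selection over the bag union and bag difference used in the definition of $\uniondb$, by first pushing the tuple-extraction inside using the immediately preceding \Cref{lemma:tupsin-udot-annodb-annodeltadb}. Concretely, the chain I would follow begins on the left-hand side as
\[
 \selection_{\theta}(\tupsIn{\annoDB \uniondb \annoDeltaDB}) \;=\; \selection_{\theta}\bigl(\tupsIn{\annoDB} \uniondb \tupsIn{\annoDeltaDB}\bigr),
\]
using \Cref{lemma:tupsin-udot-annodb-annodeltadb}. At this point the goal is purely a statement about plain (unannotated, but $\insertdelta/\deletedelta$-marked) bags: that $\selection_{\theta}$ commutes with $\uniondb$.

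Next, I would expand $\uniondb$ according to its definition in \Cref{subsec:database_changing} so that both sides of the desired equation become expressions built only from $\union$ and $\difference$:
\[
 \selection_{\theta}\bigl((\tupsIn{\annoDB} \difference \tupsIn{\annoDDeltaDB}) \union \tupsIn{\annoIDeltaDB}\bigr)
\]
on the left, and
\[
 (\selection_{\theta}(\tupsIn{\annoDB}) \difference \selection_{\theta}(\tupsIn{\annoDDeltaDB})) \union \selection_{\theta}(\tupsIn{\annoIDeltaDB})
\]
on the right after distributing the $\insertdelta/\deletedelta$ projections through selection (which is trivial, since selection just filters tuples and leaves their delta markers untouched). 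Equality of these two expressions then follows from the two algebraic identities $\selection_{\theta}(A \union B) = \selection_{\theta}(A) \union \selection_{\theta}(B)$ and $\selection_{\theta}(A \difference B) = \selection_{\theta}(A) \difference \selection_{\theta}(B)$, each of which is immediate from the pointwise bag definition of $\selection_{\theta}$ given in \Cref{fig:rel-algebra-def}: a tuple $\tup$ is either kept with its full multiplicity in both operands (if $\tup \models \theta$) or eliminated from both (otherwise), so the multiplicity arithmetic on the right-hand side matches that on the left.

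The only step that deserves more than a one-line justification is the distributivity of $\selection_{\theta}$ over bag \emph{difference}, since bag difference uses truncated subtraction of multiplicities; I would verify it by fixing an arbitrary tuple $\tup$ and checking that its multiplicity agrees on both sides in the two cases $\tup \models \theta$ and $\tup \not\models \theta$. This is the only genuine piece of computation; everything else is rewriting by the previous lemmas and by the definition of $\uniondb$. After recombining the expanded right-hand side back into $\selection_{\theta}(\tupsIn{\annoDB}) \uniondb \selection_{\theta}(\tupsIn{\annoDeltaDB})$ via the same $\uniondb$ definition, the lemma follows.
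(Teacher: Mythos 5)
Your proposal is correct and follows essentially the same route as the paper's proof: push $\tupsIn{\cdot}$ inside via \Cref{lemma:tupsin-udot-annodb-annodeltadb}, expand $\uniondb$ into $\union$ and $\difference$, and distribute $\selection_{\theta}$ over both. Your explicit check that selection distributes over bag difference despite truncated subtraction is a point the paper glosses over with a bare ``$\selection_{\theta}$ and $\union,\difference$'' annotation, but it is the same argument.
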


\begin{proof}
  For $ \selection_{\theta}(\tupsIn{\annoDB \uniondb \annoDeltaDB})$:
  \begin{align*}
      & \selection_{\theta}(\tupsIn{\annoDB \uniondb \annoDeltaDB}) \\
    = & \selection_{\theta}(\tupsIn{\annoDB} \uniondb \tupsIn{\annoDeltaDB}) \tag{\cref{lemma:tupsin-udot-annodb-annodeltadb}}\\
    = & \selection_{\theta}(\db \uniondb \deltadb) \tag{$\tupsIn{\annoDB} = \db$}\\
    = & \selection_{\theta}(\db \difference \deletedelta \db \union \insertdelta \db) \tag{$\db \uniondb \deltadb = \db \difference \deletedelta \db \union \insertdelta \db$}\\
    = & \selection_{\theta}(\db) \difference \selection_{\theta}(\deletedelta \db) \union \selection_{\theta}(\insertdelta \db) \tag{$\selection_{\theta}$ and $\union, \difference$}
  \end{align*}
  For $\selection_{\theta}(\tupsIn{\annoDB}) \uniondb \selection_{\theta}(\tupsIn{\annoDeltaDB})$
  \begin{align*}
      & \selection_{\theta}(\tupsIn{\annoDB}) \uniondb \selection_{\theta}(\tupsIn{\annoDeltaDB})\\
    = & \tupsIn{\selection_{\theta}(\annoDB)} \uniondb \tupsIn{\selection_{\theta}(\annoDeltaDB)} \tag{\cref{lemma:sel-Tsel-selT}}\\
    = & \selection_{\theta}(\annoDB) \uniondb \selection_{\theta}(\deltadb) \tag{\tupsIn{\annoDB} = \db}\\
    = & \selection_{\theta}(\annoDB) \uniondb \selection_{\theta}(\deletedelta \db \union \insertdelta \db) \\
    = & \selection_{\theta}(\annoDB) \uniondb \left(\selection_{\theta}(\deletedelta \db) \union \selection_{\theta}(\insertdelta \db)\right) \\
    = & \selection_{\theta}(\annoDB) \difference \selection_{\theta}(\deletedelta \db) \union \selection_{\theta}(\insertdelta \db) \tag{$\uniondb$}\\
  \end{align*}
  Therefore, $\selection_{\theta}(\tupsIn{\annoDB \uniondb \annoDeltaDB}) = \selection_{\theta}(\tupsIn{\annoDB}) \uniondb \selection_{\theta}(\tupsIn{\annoDeltaDB})$
\end{proof}

Therefore, the \tcorrect is as following:
\begin{proof}
\begin{align*}
    & \query^{i + 1}(\updDB) \\
  = & \selection_{\theta}(\query^{i}(\updDB)) \tag{$\query^{i + 1} = \selection_{\theta}$}\\
  = & \selection_{\theta}\left(\tupsIn{\query^{i}(\annoDB) \uniondb \ieval{\query^{i}}{\dbranges,\annoDeltaDB}{\statedata} }\right) \tag{$Q^{i}$ holds the property}\\
  = & \selection_{\theta}\left(\tupsIn{\query^{i}(\annoDB)} \uniondb \tupsIn{\ieval{\query^{i}}{\dbranges,\annoDeltaDB}{\statedata}}\right)\tag{\cref{lemma:tupsin-udot-annodb-annodeltadb}}\\
  = & \selection_{\theta}\left(\tupsIn{\query^{i}(\annoDB)}\right) \uniondb \selection_{\theta}\left(\tupsIn{\ieval{\query^{i}}{\dbranges,\annoDeltaDB}{\statedata}}\right) \tag{\cref{lemma:sel-annodb-udot-annodel}}\\
  = & \tupsIn{\selection_{\theta}(\query^{i}(\annoDB))} \uniondb \tupsIn{\selection_{\theta}(\ieval{\query^{i}}{\dbranges,\annoDeltaDB}{\statedata})} \tag{\cref{lemma:sel-Tsel-selT}}\\
  = & \tupsIn{\query^{i + 1}(\annoDB)} \uniondb \tupsIn{\selection_{\theta}(\ieval{\query^{i}}{\dbranges,\annoDeltaDB}{\statedata})} \tag{$\selection_{\theta}(\query^{i})=\query^{i + 1}$}\\
  = & \tupsIn{\query^{i + 1}(\annoDB)} \uniondb \bag{\tup \mid \tup \in \tupsIn{\ieval{\query^{i}}{\dbranges,\annoDeltaDB}{\statedata}} \wedge \tup \models \theta}\tag{ $\selection_{\theta}$ }\\
  = & \tupsIn{ \query^{i +1}(\annoDB) } \uniondb \tupsIn{\ieval{\selection_{\theta}(\query^{i})}{\dbranges,\annoDeltaDB}{\statedata}} \tag{$\incremaintain(\selection_{\theta})$ rule}\\
  = & \tupsIn{ \query^{i +1}(\annoDB) } \uniondb \tupsIn{ \ieval{\query^{i +
      1}}{\dbranges,\annoDeltaDB}{\statedata} } \tag{$\selection_{\theta}(\query^{i})=\query^{i + 1}$}\\
  = & \tupsIn{ \query^{i +1}(\annoDB) \uniondb \ieval{\query^{i + 1}}{\dbranges,\annoDeltaDB}{\statedata}}\tag{\cref{lemma:tupsin-udot-annodb-annodeltadb}}
\end{align*}
\end{proof}
%
\parttitle{\Fcorrect}
We have the assumption that:
\[
\query^{i}(\updDB) = \query^{i}(\updDBInst{\fragSet{\fragsIn{\query^{1}(\annoDB)} \uniondb \fragsIn{\ieval{\query^{i}}{\dbranges,\annoDeltaDB}{\statedata}}}})
\]

Then:
\begin{align*}
    & \query^{i + 1}(\updDB)\\
  = & \selection_{\theta}(\query^{i}(\updDB)) \tag{$\query^{i + 1} = \selection_{\theta}(\query^{i})$}\\
  = & \selection_{\theta}(\query^{i}(\updDBInst{\fragSet{\fragsIn{\query^{i}(\annoDB)} \uniondb \fragsIn{\ieval{\query^{i}}{\dbranges,\annoDeltaDB}{\statedata}}}})) \tag{$\query^{i}$ holds fragments correctness}\\
  = & \selection_{\theta}(\query^{i}(\updDBInst{\fragsIn{\query^{i}(\annoDB)} \uniondb \fragsIn{\ieval{\query^{i}}{\dbranges,\annoDeltaDB}{\statedata}}})) \tag{$\dbInst{\fragSet{\fragsIn{\cdot}}} = \dbInst{\fragsIn{\cdot}}$}\\
  = & \selection_{\theta}(\query^{i}(\updDBInst{\fragsIn{\query^{i}(\annoDB) \uniondb \ieval{\query^{i}}{\dbranges,\annoDeltaDB}{\statedata}}})) \tag{\Cref{lemma:frags-udot-annodb-annodeldb}}\\
  = & \selection_{\theta}(\tupsIn{\query^{i}(\annoupdDB_{\fragsIn{\query^{i}(\annoDB) \uniondb \ieval{\query^{i}}{\dbranges,\annoDeltaDB}{\statedata}}}})) \tag{$\tupsIn{\annoDB} = \db$}\\
  = & \tupsIn{\selection_{\theta}(\query^{i}(\annoupdDB_{\fragsIn{\query^{i}(\annoDB) \uniondb \ieval{\query^{i}}{\dbranges,\annoDeltaDB}{\statedata}}}))} \tag{\Cref{lemma:sel-Tsel-selT}}\\
\end{align*}

From above, we can get that
\begin{align*}
  & \query^{i + 1}(\updDB) = \tupsIn{\selection_{\theta}(\query^{i}(\annoupdDB_{\fragsIn{\query^{i}(\annoDB) \uniondb \ieval{\query^{i}}{\dbranges,\annoDeltaDB}{\statedata}}}))}
\end{align*}

We now focus on an annotated tuples. For an annotated tuple
$\pair{\tup}{\provSketch} \in$ $\query^{i}(\annoDB{'}_{\fragsIn{\query^{i}(\annoDB) \uniondb \ieval{\query^{i}}{\dbranges,\annoDeltaDB}{\statedata}}})$
, we can get that:
1. $\tup \in \query^{i}(\updDBInst{\fragsIn{\query^{i}(\annoDB) \uniondb \ieval{\query^{i}}{\dbranges,\annoDeltaDB}{\statedata}}})$
, 2. $\provSketch \in \fragsIn{\query^{i}(\annoDB) \uniondb \ieval{\query^{i}}{\dbranges,\annoDeltaDB}{\statedata}}$
and 3. tuple $\tup$ can be obtain by $\query^{i}(\updDBInst{\provSketch})$.
For this annotated tuple, if tuple $\tup$ satisfies the selection
condition,$\tup \models \theta$, of
an selection operator on top of $\query^{i}$, then
$\tup \in \selection_{\theta}(\query^{i}(\updDBInst{\fragsIn{\query^{i}(\annoDB) \uniondb \ieval{\query^{i}}{\dbranges,\annoDeltaDB}{\statedata}}}))$
, $\provSketch \in \fragsIn{\query^{i}(\annoDB) \uniondb \ieval{\query^{i}}{\dbranges,\annoDeltaDB}{\statedata}}$
, and tuple $\tup$ can be obtain by $\selection_{\theta}(\query^{i}(\updDBInst{\provSketch}))$
. Now the tuple $\tup$ is in the result of selection operator, and it can be obtained by
$\selection_{\theta}(\query^{i}(\updDBInst{\provSketch}))$. Every tuple $\tup$
associates with its sketch $\provSketch$, and according to the selection semantics rule, $\provSketch$ is in
$\fragsIn{\query^{i + 1}(\annoDB) \uniondb \ieval{\query^{i+1}}{\dbranges,\annoDeltaDB}{\statedata}}$
, which is
$\pair{\tup}{\provSketch} \in$ $\selection_{\theta}(\query^{i}(\annoDB{'}_{\fragsIn{\query^{i + 1}(\annoDB) \uniondb \ieval{\query^{i+1}}{\dbranges,\annoDeltaDB}{\statedata}}}))$.
Then:
\begin{align*}
                  & \pair{\tup}{\provSketch} \in \selection_{\theta}(\query^{i}(\annoupdDB_{\fragsIn{\query^{i}(\annoDB) \uniondb \ieval{\query^{i}}{\dbranges,\annoDeltaDB}{\statedata}}}))\\
  \Leftrightarrow & \pair{\tup}{\provSketch} \in \query^{i}(\annoupdDB_{\fragsIn{\query^{i}(\annoDB) \uniondb \ieval{\query^{i}}{\dbranges,\annoDeltaDB}{\statedata}}}) \wedge \tup \models \theta \tag{$\selection_{\theta}$ definition}\\
  \Leftrightarrow & \pair{\tup}{\provSketch} \in \selection_{\theta}(\query^{i}(\annoupdDB_{\fragsIn{\query^{i + 1}(\annoDB) \uniondb \ieval{\query^{i + 1}}{\dbranges,\annoDeltaDB}{\statedata}}})) \tag{$\incremaintain(\selection_{\theta})$}\\
  \Leftrightarrow & \pair{\tup}{\provSketch} \in \query^{i + 1}(\annoupdDB_{\fragsIn{\query^{i + 1}(\annoDB) \uniondb \ieval{\query^{i + 1}}{\dbranges,\annoDeltaDB}{\statedata}}}) \tag{$\query^{i+1} = \selection_{\theta}$}
\end{align*}
Thus, for all annotated tuples in
$\selection_{\theta}(\query^{i}(\annoupdDB_{\fragsIn{\query^{i}(\annoDB) \uniondb \ieval{\query^{i}}{\dbranges,\annoDeltaDB}{\statedata}}}))$,
they are in $\query^{i + 1}(\annoupdDB_{\fragsIn{\query^{i + 1}(\annoDB) \uniondb \ieval{\query^{i + 1}}{\dbranges,\annoDeltaDB}{\statedata}}})$.
Therefore,
\begin{align*}
    & \query^{i + 1}(\annoupdDB_{\fragsIn{\query^{i + 1}(\annoDB) \uniondb \ieval{\query^{i + 1}}{\dbranges,\annoDeltaDB}{\statedata}}})\\
  = & \selection_{\theta}(\query^{i}(\annoupdDB_{\fragsIn{\query^{i}(\annoDB) \uniondb \ieval{\query^{i}}{\dbranges,\annoDeltaDB}{\statedata}}}))\\
  = & \tupsIn{\selection_{\theta}(\query^{i}(\annoupdDB_{\fragsIn{\query^{i}(\annoDB) \uniondb \ieval{\query^{i}}{\dbranges,\annoDeltaDB}{\statedata}}}))} \\
  = & \query^{i + 1}(\updDB)
\end{align*}

\subsubsection{Projection}

Suppose the operator at level $i + 1$ is an projection operator. Then we have
the following:
\[
  \query^{i + 1}(\db)  = \projection_{\ATT} (\query^{i} (\db))
\]

For projection operator, the annotated tuples' relation between $\query^{i+ 1}$
and $\query^{i}$ is like $\pair{\tup_{1}}{\provSketch_{1}}$ and
$\pair{s_{1}}{\provSketch_{s_{1}}}$ in
\Cref{fig:anno_tuples_rel_between_two_levels} such that:
\[
 \pair{\tup_{1}}{\provSketch_{\tup_{1}}} = \projection_{\ATT}(\pair{s_{1}}{\provSketch_{s_{1}}})
\]

Before we demonstrate the tuple correctness, we first show two properties hold
for projection operator

\begin{lem}\label{lemma:proj-Tproj-projT}
  The following properties hold:
  \begin{align*}
    &\projection_{\ATT}(\tupsIn{\annoDB}) = \tupsIn{\projection_{\ATT}(\annoDB)}
  \end{align*}
\end{lem}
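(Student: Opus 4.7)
The plan is to mirror the argument already carried out for \cref{lemma:sel-Tsel-selT}: unfold both sides by the definitions of $\projection$, $\tupsIn{\cdot}$, and the annotated projection from \Cref{sec:op-rules-projection}, and then verify that the two sides agree as bags by a multiplicity argument. Because projection, unlike selection, can collapse distinct tuples into one, the bookkeeping of multiplicities is the only nontrivial point; everything else is a straightforward symbol push.

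Concretely, I would first rewrite the left-hand side. Starting from $\tupsIn{\annoDB} = \db$ and applying the bag-semantics definition of projection from \Cref{fig:rel-algebra-def}, for any tuple $t$ with $|\schema{t}| = |\ATT|$,
\[
  \projection_{\ATT}(\tupsIn{\annoDB})(t) \;=\; \sum_{u.\ATT = t} \db(u) \;=\; \sum_{u.\ATT = t}\; \sum_{\provSketch} \annoDB(\pair{u}{\provSketch}),
\]
where the inner sum uses that the multiplicity of $u$ in $\tupsIn{\annoDB}$ is the sum of multiplicities of annotated pairs $\pair{u}{\provSketch}$ in $\annoDB$ over all sketches $\provSketch$.

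For the right-hand side, I would use the annotated projection rule to rewrite
\[
  \projection_{\ATT}(\annoDB) \;=\; \bag{\, \mult{\pair{u.\ATT}{\provSketch}}{n} \mid \mult{\pair{u}{\provSketch}}{n} \in \annoDB \,}
\]
and then apply $\tupsIn{\cdot}$, which strips annotations while preserving multiplicities. The resulting multiplicity of $t$ is obtained by aggregating over every annotated pair $\pair{u}{\provSketch}$ whose projected tuple equals $t$:
\[
  \tupsIn{\projection_{\ATT}(\annoDB)}(t) \;=\; \sum_{u.\ATT = t}\; \sum_{\provSketch} \annoDB(\pair{u}{\provSketch}).
\]
Comparing the two expressions pointwise in $t$ yields equality of bags and hence the lemma.

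The main (and only) subtlety will be to keep the multiplicity accounting crisp: one must be careful that $\tupsIn{\cdot}$ sums over sketches \emph{before} projection on the left, while on the right it sums over sketches \emph{after} the underlying tuple has been projected, and argue that swapping the order of these two summations is valid because $\ATT$ does not depend on $\provSketch$. Once this observation is in place, the lemma follows immediately, and the subsequent proof of \Tcorrect and \Fcorrect for projection can follow the exact template of the selection case, substituting \cref{lemma:sel-Tsel-selT} with this lemma and using an analogous projection analogue of \cref{lemma:sel-annodb-udot-annodel} (which is proved the same way, since $\projection_{\ATT}$ distributes over $\union$ and $\difference$ on bags).
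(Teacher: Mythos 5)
Your proposal is correct and follows essentially the same route as the paper's own proof: unfold both sides via the definitions of $\projection_{\ATT}$, $\tupsIn{\cdot}$, and the annotated projection rule, then compare. Your explicit double-summation over $u$ and $\provSketch$ is in fact more careful about bag multiplicities than the paper's version, which simply writes both sides as comprehensions $\bag{\tup \mid \tup' \in \db \wedge \tup'.\ATT = \tup}$ and declares them equal, so the extra bookkeeping you flag as the "only nontrivial point" is a welcome tightening rather than a deviation.
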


\begin{proof}
  For $\projection_{\ATT}(\tupsIn{\annoDB})$:
  \begin{align*}
      & \projection_{\ATT}(\tupsIn{\annoDB})\\
    = & \projection_{\ATT}(\db) \tag{$\tupsIn{\annoDB} = \db$}\\
    = & \bag{\tup \mid \tup' \in \db \wedge \tup'.\ATT = \tup} \tag{$\projection_{\ATT}$}
  \end{align*}

  For $\tupsIn{\projection_{\ATT}(\annoDB)}$:
  \begin{align*}
      & \tupsIn{\projection_{\ATT}(\annoDB})\\
    = & \tupsIn{\bag{\pair{\tup}{\provSketch} \mid \pair{\tup'}{\provSketch} \in \annoDB \wedge \tup'.\ATT = \tup}} \tag{$\projection_{\ATT}$}\\
    = & \bag{\tup \mid \tup' \in \db \wedge \tup'.\ATT =\tup}  \tag{$\tupsIn{\annoDB} = \db$}
  \end{align*}

  Therefore, $\projection_{\ATT}(\tupsIn{\annoDB}) = \tupsIn{\projection_{\ATT}(\annoDB)}$:
\end{proof}

\begin{lem}\label{lemma:proj-annodb-udot-annodel}
  The following properties hold:
  \begin{align*}
    \projection_{\ATT}(\tupsIn{\annoDB \uniondb \annoDeltaDB}) = \projection_{\ATT}(\tupsIn{\annoDB}) \uniondb \projection_{\ATT}(\tupsIn{\annoDeltaDB})
  \end{align*}
\end{lem}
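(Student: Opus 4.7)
My proof plan is to mirror the structure of the selection counterpart (Lemma~\ref{lemma:sel-annodb-udot-annodel}) almost verbatim, replacing $\selection_{\theta}$ with $\projection_{\ATT}$ and substituting Lemma~\ref{lemma:proj-Tproj-projT} for Lemma~\ref{lemma:sel-Tsel-selT}. Concretely, I would rewrite the left-hand side in two hops: first push $\tupsIn{\cdot}$ across the $\uniondb$ by invoking Lemma~\ref{lemma:tupsin-udot-annodb-annodeltadb} to obtain $\projection_{\ATT}(\tupsIn{\annoDB} \uniondb \tupsIn{\annoDeltaDB})$, then use the identity $\tupsIn{\annoDB} = \db$ to rewrite this as $\projection_{\ATT}(\db \uniondb \deltadb)$, and finally unfold $\uniondb$ by its definition into $\projection_{\ATT}(\db \difference \deletedeltadb \union \insertdeltadb)$.

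Next I would distribute projection over the bag union and bag difference to obtain $\projection_{\ATT}(\db) \difference \projection_{\ATT}(\deletedeltadb) \union \projection_{\ATT}(\insertdeltadb)$. To close the argument, I would start from the right-hand side, apply Lemma~\ref{lemma:proj-Tproj-projT} to each summand to move $\tupsIn{\cdot}$ inside $\projection_{\ATT}$, unfold $\annoDeltaDB = \insertdelta\annoDB \union \deletedelta\annoDB$, and repeat the same union/difference distribution to reach the same canonical form, completing the chain of equalities.

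The main obstacle is the distribution of projection over bag difference, since in general $\projection_{\ATT}(R \difference S) \neq \projection_{\ATT}(R) \difference \projection_{\ATT}(S)$ under bag semantics: if two distinct tuples of $R$ project to the same value but only one is in $S$, the multiplicities on each side can diverge. To handle this cleanly I would first observe that each $\deletedeltatup$ carries its original identity (and thus its full schema $\relSchema$), so $\deletedeltadb$ is a sub-bag of $\db$, and then show that when $S \sqsubseteq R$ as bags (componentwise $S(u) \le R(u)$), the equality $\projection_{\ATT}(R \difference S) = \projection_{\ATT}(R) \difference \projection_{\ATT}(S)$ does hold by a pointwise multiplicity argument: for each projected value $t$, both sides yield $\sum_{u.\ATT = t} (R(u) - S(u))$. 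I would factor this observation out as a small auxiliary claim and then plug it into the chain of rewrites above, which makes the rest of the derivation line-for-line analogous to the selection case.
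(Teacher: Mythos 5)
Your proof follows essentially the same route as the paper's: both sides are rewritten to the common canonical form $\projection_{\ATT}(\db) \difference \projection_{\ATT}(\deletedeltadb) \union \projection_{\ATT}(\insertdeltadb)$ via \Cref{lemma:tupsin-udot-annodb-annodeltadb}, the identity $\tupsIn{\annoDB} = \db$, unfolding of $\uniondb$, and distribution of projection over bag union and difference. Your extra auxiliary claim---that $\projection_{\ATT}(R \difference S) = \projection_{\ATT}(R) \difference \projection_{\ATT}(S)$ fails for arbitrary bags and must be justified by the fact that $\deletedeltadb$ is a sub-bag of $\db$, via the pointwise multiplicity computation $\sum_{u.\ATT = t}(R(u) - S(u))$---is a genuine strengthening, since the paper asserts this distribution step with only the terse justification ``$\projection_{\ATT}$ and $\union, \difference$'' and never addresses the multiplicity subtlety.
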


\begin{proof}
  For $ \projection_{\ATT}(\tupsIn{\annoDB \uniondb \annoDeltaDB})$:
  \begin{align*}
      & \projection_{\ATT}(\tupsIn{\annoDB \uniondb \annoDeltaDB}) \\
    = & \projection_{\ATT}(\tupsIn{\annoDB} \uniondb \tupsIn{\annoDeltaDB}) \tag{\cref{lemma:tupsin-udot-annodb-annodeltadb}}\\
    = & \projection_{\ATT}(\db \uniondb \deltadb) \tag{$\tupsIn{\annoDB} = \db$}\\
    = & \projection_{\ATT}(\db \difference \deletedelta \db \insertdelta \db) \tag{$\db \uniondb \deltadb = \db \difference \deletedelta \db \union \insertdelta \db$}\\
    = & \projection_{\ATT}(\db) \difference \projection_{\ATT}(\deletedelta \db) \union \projection_{\ATT}(\insertdelta \db) \tag{$\projection_{\ATT}$ and $\union, \difference$}
  \end{align*}
  For $\projection_{\ATT}(\tupsIn{\annoDB}) \uniondb \projection_{\ATT}(\tupsIn{\annoDeltaDB})$
  \begin{align*}
      & \projection_{\ATT}(\tupsIn{\annoDB}) \uniondb \projection_{\ATT}(\tupsIn{\annoDeltaDB})\\
    = & \tupsIn{\projection_{\ATT}(\annoDB)} \uniondb \tupsIn{\projection_{\ATT}(\annoDeltaDB)} \tag{\cref{lemma:sel-Tsel-selT}}\\
    = & \projection_{\ATT}(\annoDB) \uniondb \projection_{\ATT}(\deltadb) \tag{\tupsIn{\annoDB} = \db}\\
    = & \projection_{\ATT}(\annoDB) \uniondb \projection_{\ATT}(\deletedelta \db \union \insertdelta \db) \\
    = & \projection_{\ATT}(\annoDB) \uniondb \left(\projection_{\ATT}(\deletedelta \db) \union \projection_{\ATT}(\insertdelta \db)\right) \\
    = & \projection_{\ATT}(\annoDB) \difference \projection_{\ATT}(\deletedelta \db) \union \projection_{\ATT}(\insertdelta \db) \tag{$\uniondb$}\\
  \end{align*}
  Therefore, $\projection_{\ATT}(\tupsIn{\annoDB \uniondb \annoDeltaDB}) = \projection_{\ATT}(\tupsIn{\annoDB}) \uniondb \projection_{\ATT}(\tupsIn{\annoDeltaDB})$
\end{proof}

\parttitle{\Tcorrect}
\begin{proof}
\begin{align*}
    & \query^{i + 1}(\updDB) \\
  = & \projection_{\ATT}(\query^{i}(\updDB))\tag{$\query^{i + 1} = \projection_{\ATT}$}\\
  = & \projection_{\ATT}\left(\tupsIn{\query^{i}(\annoDB) \uniondb \ieval{\query^{i}}{\dbranges,\annoDeltaDB}{\statedata} }\right) \tag{$Q^{i}$ holds the property}\\
  = & \projection_{\ATT}\left(\tupsIn{\query^{i}(\annoDB)} \uniondb \tupsIn{\ieval{\query^{i}}{\dbranges,\annoDeltaDB}{\statedata}}\right)\tag{\cref{lemma:tupsin-udot-annodb-annodeltadb}}\\
  = & \projection_{\ATT}\left(\tupsIn{\query^{i}(\annoDB)}\right) \uniondb \projection_{\ATT}\left(\tupsIn{\ieval{\query^{i}}{\dbranges,\annoDeltaDB}{\statedata}}\right) \tag{\cref{lemma:proj-annodb-udot-annodel}}\\
  = & \tupsIn{\projection_{\ATT}(\query^{i}(\annoDB)} \uniondb \tupsIn{\projection_{\ATT}(\ieval{\query^{i}}{\dbranges,\annoDeltaDB}{\statedata}}\tag{\cref{lemma:proj-Tproj-projT}} \\
  = & \tupsIn{\query^{i + 1}( \annoDB )} \uniondb \tupsIn{\projection_{\ATT}(\ieval{\query^{i}}{\dbranges,\annoDeltaDB}{\statedata}}\tag{$\projection_{\ATT}(\query^{i})=\query^{i + 1}$}\\
  = & \tupsIn{ \query^{i +1}(\annoDB) } \uniondb \bag{\tup \mid \tup' \in \tupsIn{\ieval{\query^{i}}{\dbranges,\annoDeltaDB}{\statedata}} \wedge \tup'{.}{\ATT} = \tup}\tag{$\projection_{\ATT}$}\\
  = & \tupsIn{ \query^{i +1}(\annoDB) } \uniondb \tupsIn{\ieval{\projection_{\ATT}(\query^{i})}{\dbranges,\annoDeltaDB}{\statedata}}\tag{$\incremaintain(\projection_{\ATT})$ rule}\\
  = & \tupsIn{ \query^{i +1}(\annoDB) } \uniondb \tupsIn{ \ieval{\query^{i + 1}}{\dbranges,\annoDeltaDB}{\statedata} }\tag{$\projection_{\ATT}(\query^{i})=\query^{i + 1}(\query^{i})$}\\
  = & \tupsIn{ \query^{i +1}(\annoDB) \uniondb \ieval{\query^{i + 1}}{\dbranges,\annoDeltaDB}{\statedata}} \tag{\cref{lemma:tupsin-udot-annodb-annodeltadb}}
\end{align*}
\end{proof}

\parttitle{\Fcorrect}
We have the assumption that:
\[
\query^{i}(\updDB) = \query^{i}(\updDBInst{\fragSet{\fragsIn{\query^{1}(\annoDB)} \uniondb \fragsIn{\ieval{\query^{i}}{\dbranges,\annoDeltaDB}{\statedata}}}})
\]

Then for a projection operator above $\query^{i}$, we have the following proof:
\begin{proof}
\begin{align*}
    & \query^{i + 1}(\updDB)\\
  = & \projection_{\ATT}(\query^{i}(\updDB)) \tag{$\query^{i + 1} = \projection_{\ATT}(\query^{i})$}\\
  = & \projection_{\ATT}(\query^{i}(\updDBInst{\fragSet{\fragsIn{\query^{i}(\annoDB)} \uniondb \fragsIn{\ieval{\query^{i}}{\dbranges,\annoDeltaDB}{\statedata}}}})) \tag{$\query^{i}$ holds fragments correctness}\\
  = & \projection_{\ATT}(\query^{i}(\updDBInst{\fragsIn{\query^{i}(\annoDB)} \uniondb \fragsIn{\ieval{\query^{i}}{\dbranges,\annoDeltaDB}{\statedata}}})) \tag{$\dbInst{\fragSet{\fragsIn{\cdot}}} = \dbInst{\fragsIn{\cdot}}$}\\
  = & \projection_{\ATT}(\query^{i}(\updDBInst{\fragsIn{\query^{i}(\annoDB) \uniondb \ieval{\query^{i}}{\dbranges,\annoDeltaDB}{\statedata}}})) \tag{\Cref{lemma:frags-udot-annodb-annodeldb}}\\
  = & \projection_{\ATT}(\tupsIn{\query^{i}(\annoupdDB_{\fragsIn{\query^{i}(\annoDB) \uniondb \ieval{\query^{i}}{\dbranges,\annoDeltaDB}{\statedata}}}})) \tag{$\tupsIn{\annoDB} = \db$}\\
  = & \tupsIn{\projection_{\ATT}(\query^{i}(\annoupdDB_{\fragsIn{\query^{i}(\annoDB) \uniondb \ieval{\query^{i}}{\dbranges,\annoDeltaDB}{\statedata}}}))} \tag{\Cref{lemma:sel-Tsel-selT}}\\
\end{align*}

From above, we can get that
\begin{align*}
  & \query^{i + 1}(\updDB) = \tupsIn{\projection_{\ATT}(\query^{i}(\annoupdDB_{\fragsIn{\query^{i}(\annoDB) \uniondb \ieval{\query^{i}}{\dbranges,\annoDeltaDB}{\statedata}}}))}
\end{align*}

For an annotated tuple
$\pair{\tup}{\provSketch} \in$ $\query^{i}(\annoDB{'}_{\fragsIn{\query^{i}(\annoDB) \uniondb \ieval{\query^{i}}{\dbranges,\annoDeltaDB}{\statedata}}})$
, the following holds:
1. $\tup \in \query^{i}(\updDBInst{\fragsIn{\query^{i}(\annoDB) \uniondb \ieval{\query^{i}}{\dbranges,\annoDeltaDB}{\statedata}}})$
, 2. $\provSketch \in \fragsIn{\query^{i}(\annoDB) \uniondb \ieval{\query^{i}}{\dbranges,\annoDeltaDB}{\statedata}}$
and 3. tuple $\tup$ can be obtain by $\query^{i}(\updDBInst{\provSketch})$.
For this annotated tuple, if expressions in $\ATT$ are projected from tuple
$\tup$ of an projection operator on top of $\query^{i}$, then
\[
\tup.\ATT \in \projection_{\ATT}(\query^{i}(\updDBInst{\fragsIn{\query^{i}(\annoDB) \uniondb \ieval{\query^{i}}{\dbranges,\annoDeltaDB}{\statedata}}}))
\]
, $\provSketch \in \fragsIn{\query^{i}(\annoDB) \uniondb \ieval{\query^{i}}{\dbranges,\annoDeltaDB}{\statedata}}$
, and  $\tup.\ATT$ can be obtain by $\projection_{\ATT}(\query^{i}(\updDBInst{\provSketch}))$
. Now the  $\tup.\ATT$ is in the result of projection operator, and it can be obtained by
$\projection_{\ATT}(\query^{i}(\updDBInst{\provSketch}))$. Every $\tup.\ATT$
associates with its sketch $\provSketch$, and according to the projection semantics rule, $\provSketch$ is in
$\fragsIn{\query^{i + 1}(\annoDB) \uniondb \ieval{\query^{i+1}}{\dbranges,\annoDeltaDB}{\statedata}}$
, which is
\[
\pair{\tup}{\provSketch} \in\projection_{\ATT}(\query^{i}(\annoDB{'}_{\fragsIn{\query^{i + 1}(\annoDB) \uniondb \ieval{\query^{i+1}}{\dbranges,\annoDeltaDB}{\statedata}}}))
\]
Then:
\begin{align*}
                  & \pair{\tup}{\provSketch} \in \projection_{\ATT}(\query^{i}(\annoupdDB_{\fragsIn{\query^{i}(\annoDB) \uniondb \ieval{\query^{i}}{\dbranges,\annoDeltaDB}{\statedata}}}))\\
  \Leftrightarrow & \pair{\tup}{\provSketch} \in \tup' \in \query^{i}(\annoupdDB_{\fragsIn{\query^{i}(\annoDB) \uniondb \ieval{\query^{i}}{\dbranges,\annoDeltaDB}{\statedata}}}) \wedge \tup'.\ATT=\tup \tag{$\projection_{\ATT}$ definition}\\
  \Leftrightarrow & \pair{\tup}{\provSketch} \in \projection_{\ATT}(\query^{i}(\annoupdDB_{\fragsIn{\query^{i + 1}(\annoDB) \uniondb \ieval{\query^{i + 1}}{\dbranges,\annoDeltaDB}{\statedata}}})) \tag{$\incremaintain(\projection_{\ATT})$}\\
  \Leftrightarrow & \pair{\tup}{\provSketch} \in \query^{i + 1}(\annoupdDB_{\fragsIn{\query^{i + 1}(\annoDB) \uniondb \ieval{\query^{i + 1}}{\dbranges,\annoDeltaDB}{\statedata}}}) \tag{$\query^{i+1} = \projection_{\ATT}(\query^{i})$}
\end{align*}
Thus, for all annotated tuples in
$\projection_{\ATT}(\query^{i}(\annoupdDB_{\fragsIn{\query^{i}(\annoDB) \uniondb \ieval{\query^{i}}{\dbranges,\annoDeltaDB}{\statedata}}}))$,
they are in $\query^{i + 1}(\annoupdDB_{\fragsIn{\query^{i + 1}(\annoDB) \uniondb \ieval{\query^{i + 1}}{\dbranges,\annoDeltaDB}{\statedata}}})$.

Therefore,
\begin{align*}
    & \query^{i + 1}(\annoupdDB_{\fragsIn{\query^{i + 1}(\annoDB) \uniondb \ieval{\query^{i + 1}}{\dbranges,\annoDeltaDB}{\statedata}}})\\
  = & \projection_{\ATT}(\query^{i}(\annoupdDB_{\fragsIn{\query^{i}(\annoDB) \uniondb \ieval{\query^{i}}{\dbranges,\annoDeltaDB}{\statedata}}}))\\
  = & \tupsIn{\projection_{\ATT}(\query^{i}(\annoupdDB_{\fragsIn{\query^{i}(\annoDB) \uniondb \ieval{\query^{i}}{\dbranges,\annoDeltaDB}{\statedata}}}))} \\
  = & \query^{i + 1}(\updDB)
\end{align*}
\end{proof}
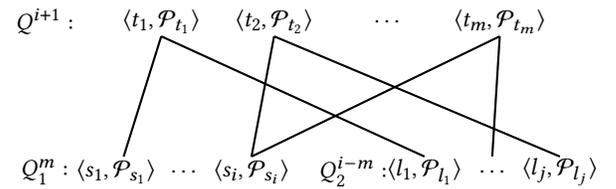
\begin{figure}[h!]
  \centering
  \begin{tikzpicture}[
    node distance=1cm and 2cm,
    every node/.style={ minimum size=1cm},
    every path/.style={-, thick}
  ]
    \node[draw=none] (top) at (-2, 2) {$\query^{i + 1}: \;$};
    \node (t1) at (-0.5, 2) {$\pair{\tup_{1}}{\provSketch_{\tup_{1}}}$};
    \node (t2) at (1, 2) {$\pair{\tup_{2}}{\provSketch_{\tup_{2}}}$};
    \node (t3) at (2.5, 2) {$\ldots$};
    \node (t4) at (4, 2) {$\pair{\tup_{m}}{\provSketch_{\tup_{m}}}$};
    \coordinate (pt1) at (-0.5, 1.8); 
    \coordinate (pt2) at (1, 1.8); 
    \coordinate (pt4) at (4, 1.8); 

    \node[draw=none] (bottom) at (-2, 0) {$\query_{1}^{m}: \;$};
    \node (s1) at (-1.1, 0) {$\pair{s_{1}}{\provSketch_{s_{1}}}$};
    \node (s2) at (-0.2, 0) {$\ldots$};
    \node (s3) at (0.7, 0) {$\pair{s_{i}}{\provSketch_{s_{i}}}$};
    \node (l0) at (2.1, 0) {$\query_{2}^{i-m}: \;$};
    \node (l1) at (3, 0) {$\pair{l_{1}}{\provSketch_{l_{1}}}$};
    \node (l2) at (3.9, 0) {$\ldots$};
    \node (l3) at (4.8, 0) {$\pair{l_{j}}{\provSketch_{l_{j}}}$};
   \coordinate (ps1) at (-1, 0.2); 
   \coordinate (psi) at (0.7, 0.2); 
   \coordinate (pl1) at (3, 0.2); 
   \coordinate (pl2) at (3.9, 0.2); 
   \coordinate (plj) at (4.8, 0.2); 
   \draw (pt1) -- (ps1);
   \draw (pt1) -- (pl1);
   \draw (pt2) -- (psi);
   \draw (pt2) -- (plj);
   \draw (pt4) -- (psi);
   \draw (pt4) -- (pl2);
  \end{tikzpicture}
  \caption{Annotated tuples' relation between $\query^{i}$, $\query_{1}^{m}$ and $\query_{2}^{i-m}$}
  \label{fig:anno_tuples_rel_crossprod}
\end{figure}

\subsubsection{Cross Product}
Suppose the operator at level $i + 1$ is a cross product (join) operator, then
we have the following:
\[
  \query^{i + 1}(\db) = \query^{m}_{1}(\db) \crossprod \query^{i - m}_{2}(\db)
\]

For cross product, the annotated tuples' relation between $\query^{i+ 1}$
and $\query_{1}^{i}$ and $\query_{2}^{i - m}$ is like $\pair{\tup_{2}}{\provSketch_{2}}$ and
$\pair{s_{i}}{\provSketch_{s_{i}}}$ and $\pair{l_{j}}{\provSketch_{l_{j}}}$ in
\Cref{fig:anno_tuples_rel_crossprod}, such that:
\[
  \pair{\tup_{2}}{\provSketch_{2}} = \pair{(s_{i} \concat l_{j})}{\bag{\provSketch_{s_{i}}, \provSketch_{l_{j}}}}
\]
Each annotated tuple in the result of $\query^{i + 1}$ is the product of two
annotated tuples, each one from one side.

\parttitle{\Tcorrect}
Before we demonstrate the tuple correctness, we first show the property hold
for cross product operator

\begin{lem}\label{lemma:cross-Tcross-crossT}
  The following properties hold:
  \begin{align*}
    &\tupsIn{ \annoDB_{1} \crossprod \annoDB_{2} } = \tupsIn{\annoDB_{1}} \crossprod \tupsIn{\annoDB_{2}}
  \end{align*}
\end{lem}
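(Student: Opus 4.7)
The plan is to prove this lemma by unfolding the definitions of $\crossprod$ and $\tupsIn{\cdot}$ on both sides, analogous to how \Cref{lemma:tupsin-union-db} and \Cref{lemma:tupsin-difference-db} were proved earlier. I would start by writing $\annoDB_1 = \bag{\pair{\tup_1}{\provSketch_{\tup_1}}, \ldots, \pair{\tup_m}{\provSketch_{\tup_m}}}$ and $\annoDB_2 = \bag{\pair{s_1}{\provSketch_{s_1}}, \ldots, \pair{s_n}{\provSketch_{s_n}}}$, and then observe that by the cross product rule defined in \Cref{sec:op-rules-cross-product} for annotated relations, $\annoDB_1 \crossprod \annoDB_2$ produces the bag of annotated tuples $\bag{\pair{\tup_i \concat s_j}{\provSketch_{\tup_i} \uniondb \provSketch_{s_j}} \mid i \in [1,m], j \in [1,n]}$.

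Next, I would apply $\tupsIn{\cdot}$ to this bag. By the definition of the extract-tuples function, which simply drops the sketch annotation from each pair, this yields $\bag{\tup_i \concat s_j \mid i \in [1,m], j \in [1,n]}$. For the right-hand side, $\tupsIn{\annoDB_1} = \bag{\tup_1, \ldots, \tup_m}$ and $\tupsIn{\annoDB_2} = \bag{s_1, \ldots, s_n}$, and by the standard bag-semantics definition of cross product given in \Cref{fig:rel-algebra-def}, their cross product is exactly $\bag{\tup_i \concat s_j \mid i \in [1,m], j \in [1,n]}$. Equality then follows.

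The one subtle point -- and the only real obstacle worth flagging -- is the multiplicity bookkeeping under bag semantics. If $\tup_i$ appears with multiplicity $p$ in $\annoDB_1$ and $s_j$ with multiplicity $q$ in $\annoDB_2$, then $\tup_i \concat s_j$ appears with multiplicity $p \cdot q$ on both sides: on the left, because the annotated cross product multiplies multiplicities of its paired inputs before $\tupsIn{\cdot}$ drops the annotations, and on the right, because bag-semantics $\crossprod$ does the same. To make this precise I would restate both sides using the multiplicity-exponent notation $\mult{\cdot}{n}$ used elsewhere in the paper, so that the equality of bags reduces to the equality $p \cdot q = p \cdot q$ pointwise in $(\tup_i, s_j)$. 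This gives $\tupsIn{\annoDB_1 \crossprod \annoDB_2} = \tupsIn{\annoDB_1} \crossprod \tupsIn{\annoDB_2}$, completing the proof.
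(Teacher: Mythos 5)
Your proposal is correct and follows essentially the same route as the paper's proof: both sides are unfolded via the definitions of the annotated cross product and $\tupsIn{\cdot}$, reducing each to $\db_{1} \crossprod \db_{2}$ with multiplicities multiplying as $m \cdot n$. Your explicit remark on the multiplicity bookkeeping is the same point the paper handles implicitly through the exponent notation $\mult{(\tup \concat s)}{m * n}$, so nothing is missing.
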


\begin{proof}
  For $\tupsIn{ \annoDB_{1} \crossprod \annoDB_{2} } $:
  \begin{align*}
      & \tupsIn{ \annoDB_{1} \crossprod \annoDB_{2} }\\
    = & \tupsIn{ \bag{(\pair{\tup}{\provSketch_{\tup}} \concat \pair{s}{\provSketch_{s}})^{m * n} \mid \pair{\tup}{\provSketch_{\tup}}^{m} \in \annoDB_{1} \wedge \pair{s}{\provSketch_{s}}^{n} \in \annoDB_{2}}}\\
    = & \bag{(\tup \concat s)^{m * n} \mid \tup^{m} \in \db_{1} \wedge s^{n} \in \db_{2}}\\
    = & \db_{1} \crossprod \db_{2}
  \end{align*}

  For $\tupsIn{\annoDB_{1}} \crossprod \tupsIn{\annoDB_{2}}$:
  \begin{align*}
      & \tupsIn{\annoDB_{1}} \crossprod \tupsIn{\annoDB_{2}}\\
    = & \db_{1} \crossprod \db_{2}\\
  \end{align*}

  Therefore, $\tupsIn{ \annoDB_{1} \crossprod \annoDB_{2} } = \tupsIn{\annoDB_{1}} \crossprod \tupsIn{\annoDB_{2}}$
\end{proof}

Thus the following is the \tcorrect:

\begin{proof}
\begin{align*}
    & \query^{i + 1}(\updDB) \\
  = & \query^{m}_{1}(\updDB) \crossprod \query^{i - m}_{2}(\updDB) \tag{$\query^{i + 1}(\db) = \query^{m}_{1}(\db) \crossprod \query^{i - m}_{2}(\db)$}\\
  = & \tupsIn{\query^{m}_{1}(\annoDB) \uniondb \ieval{\query^{m}_{1}}{\dbranges,\annoDeltaDB}{\statedata} } \\
    & \hspace{3mm}\crossprod \tupsIn{\query^{i - m}_{2}(\annoDB) \uniondb \ieval{\query^{i - m}_{2}}{\dbranges,\annoDeltaDB}{\statedata}} \tag{property hold for $\query^{m}_{1}(\updDB)$ and $\query^{i - m}_{2}(\updDB)$}\\
  = & (\tupsIn{\query^{m}_{1}(\annoDB)} \uniondb \tupsIn{\ieval{\query^{m}_{1}}{\dbranges,\annoDeltaDB}{\statedata} } )\\
    & \hspace{3mm}\crossprod (\tupsIn{\query^{i - m}_{2}(\annoDB)} \uniondb \tupsIn{\ieval{\query^{i - m}_{2}}{\dbranges,\annoDeltaDB}{\statedata}}) \tag{\cref{lemma:tupsin-udot-annodb-annodeltadb}}\\
  = & (\tupsIn{\query^{m}_{1}(\annoDB)} \crossprod \tupsIn{\query^{i - m}_{2}(\annoDB)} )\\
    & \hspace{3mm}\uniondb (\tupsIn{\query^{m}_{1}(\annoDB)} \crossprod \tupsIn{\ieval{\query^{i - m}_{2}}{\dbranges,\annoDeltaDB}{\statedata}}) \\
    & \hspace{3mm}\uniondb (\tupsIn{\ieval{\query^{m}_{1}}{\dbranges,\annoDeltaDB}{\statedata} \crossprod \tupsIn{\query^{i - m}_{2}(\annoDB)}}) \\
    & \hspace{3mm}\uniondb(\tupsIn{\ieval{\query^{m}_{1}}{\dbranges,\annoDeltaDB}{\statedata}} \crossprod \tupsIn{\ieval{\query^{i-m}_{2}}{\dbranges,\annoDeltaDB}{\statedata}}) \tag{\cref{lemma:cross-Tcross-crossT}}\\
  = & \tupsIn{\query^{m}_{1}(\annoDB) \crossprod \query^{i - m}_{2}(\annoDB)}\\
    & \hspace{3mm}\uniondb \tupsIn{\query^{m}_{1}(\annoDB) \crossprod \ieval{\query^{i - m}_{2}}{\dbranges,\annoDeltaDB}{\statedata}} \\
    & \hspace{3mm}\uniondb \tupsIn{\ieval{\query^{m}_{1}}{\dbranges,\annoDeltaDB}{\statedata} \crossprod \query^{i - m}_{2}(\annoDB)} \\
    & \hspace{3mm}\uniondb \tupsIn{\ieval{\query^{m}_{1}}{\dbranges,\annoDeltaDB}{\statedata} \crossprod \ieval{\query^{i-m}_{2}}{\dbranges,\annoDeltaDB}{\statedata}}\tag{\cref{lemma:tupsin-udot-annodb-annodeltadb}} \\
  = & \tupsIn{\query^{m}_{1}(\annoDB) \crossprod \query^{i - m}_{2}(\annoDB)}\\
    & \hspace{3mm}\uniondb \mathbb{T}\left( \query^{m}_{1}(\annoDB) \crossprod \ieval{\query^{i - m}_{2}}{\dbranges,\annoDeltaDB}{\statedata} \right. \\
    & \hspace{9mm}\union \ieval{\query^{m}_{1}}{\dbranges,\annoDeltaDB}{\statedata} \crossprod \query^{i - m}_{2}(\annoDB) \\
    & \hspace{9mm}\union \left. \ieval{\query^{m}_{1}}{\dbranges,\annoDeltaDB}{\statedata} \crossprod \ieval{\query^{i-m}_{2}}{\dbranges,\annoDeltaDB}{\statedata} \right)  \\
  = & \tupsIn{\query^{i + 1}(\annoDB)} \uniondb \tupsIn{\ieval{\query^{i + 1}}{\dbranges,\annoDeltaDB}{\statedata}} \tag{$\incremaintain(\crossprod)$ rule}\\
  = & \tupsIn{\query^{i + 1}(\annoDB) \uniondb \ieval{\query^{i + 1}}{\dbranges,\annoDeltaDB}{\statedata}} \tag{\cref{lemma:tupsin-udot-annodb-annodeltadb}}\\
\end{align*}
\end{proof}
\parttitle{\Fcorrect}
From the \tcorrect, we get that
\begin{align*}
    & \query^{i + 1}(\updDB) \\
  = & \query^{m}_{1}(\updDB) \crossprod \query^{i - m}_{2}(\updDB) \tag{$\query^{i + 1}= \query^{m}_{1} \crossprod \query^{i - m}_{2}$}\\
  = & \query^{m}_{1}(\updDBInst{\fragSet{\fragsIn{ \query_{1}^{m}(\annoupdDB) } \uniondb \fragsIn{\ieval{\query^{m}_{1}}{\dbranges, \annoDeltaDB}{\statedata}}}}) \\
    & \hspace{3mm}\crossprod \query^{i - m}_{2}(\updDBInst{\fragSet{\fragsIn{\query_{2}^{i-m}(\annoupdDB) } \uniondb \fragsIn{\ieval{\query^{i-m}_{2}}{\dbranges, \annoDeltaDB}{\statedata}}}})\tag{$\query^{m}_{1}$ and $\query^{i - m}_{2}$ hold the fragments correctness}\\
  = & \query^{m}_{1}(\updDBInst{\fragsIn{ \query_{1}^{m}(\annoupdDB) } \uniondb \fragsIn{\ieval{\query^{m}_{1}}{\dbranges, \annoDeltaDB}{\statedata}}}) \\
    & \hspace{3mm}\crossprod \query^{i - m}_{2}(\updDBInst{\fragsIn{\query_{2}^{i-m}(\annoupdDB)}  \uniondb \fragsIn{\ieval{\query^{i-m}_{2}}{\dbranges, \annoDeltaDB}{\statedata}}})\tag{$\dbInst{\fragSet{\fragsIn{\cdot}} = \dbInst{\fragsIn{\cdot}}}$ }\\
  = & \query^{m}_{1}(\updDBInst{\fragsIn{ \query_{1}^{m}(\annoupdDB)  \uniondb \ieval{\query^{m}_{1}}{\dbranges, \annoDeltaDB}{\statedata}}}) \\
    & \hspace{3mm}\crossprod \query^{i - m}_{2}(\updDBInst{\fragsIn{\query_{2}^{i-m}(\annoupdDB)  \uniondb \ieval{\query^{i-m}_{2}}{\dbranges, \annoDeltaDB}{\statedata}}})\tag{\Cref{lemma:frags-udot-annodb-annodeldb}} \\
  = & \tupsIn{\query^{m}_{1}(\annoupdDB_{\fragsIn{ \query_{1}^{m}(\annoupdDB)  \uniondb \ieval{\query^{m}_{1}}{\dbranges, \annoDeltaDB}{\statedata}}})} \\
    & \hspace{3mm}\crossprod \tupsIn{\query^{i - m}_{2}(\annoupdDB_{\fragsIn{\query_{2}^{i-m}(\annoupdDB)  \uniondb \ieval{\query^{i-m}_{2}}{\dbranges, \annoDeltaDB}{\statedata}}})}\tag{\tupsIn{\annoDB = \db}} \\
\end{align*}

From above, we know that:
\[
    \begin{cases}
      \query^{m}_{1}(\updDB) =\tupsIn{\query^{m}_{1}(\annoupdDB_{\fragsIn{ \query_{1}^{m}(\annoupdDB) \uniondb \ieval{\query^{m}_{1}}{\dbranges, \annoDeltaDB}{\statedata}}})} \\
      \query^{i-m}_{2}(\updDB) = \tupsIn{\query^{i-m}_{2}(\annoupdDB_{\fragsIn{\query_{2}^{i-m}(\annoupdDB) \uniondb \ieval{\query^{i-m}_{2}}{\dbranges, \annoDeltaDB}{\statedata}}})}
    \end{cases}
\]

We now focus two annotated  tuples $\pair{\tup}{\provSketch_{\tup}}$ and
$\pair{s}{\provSketch_{s}}$ such that:
\[
  \begin{cases}
    & \pair{\tup}{\provSketch_{\tup}} \in \tupsIn{\query^{m}_{1}(\annoupdDB_{\fragsIn{ \query_{1}^{m}(\annoupdDB) \uniondb \ieval{\query^{m}_{1}}{\dbranges, \annoDeltaDB}{\statedata}}})} \\
    & \pair{s}{\provSketch_{s}} \in \tupsIn{\query^{i-m}_{2}(\annoupdDB_{\fragsIn{ \query_{2}^{i-m}(\annoupdDB) \uniondb \ieval{\query^{i-m}_{2}}{\dbranges, \annoDeltaDB}{\statedata}}})}
  \end{cases}
\]

If $\pair{\tup}{\provSketch_{\tup}}$ is a non-delta annotated tuple
and $\pair{s}{\provSketch_{s}}$ is a non-delta annotated from ,
$\pair{(\tup \concat s)}{\bag{\provSketch_{\tup}, \provSketch_{s}}}$ is a
non-delta annotated tuple in
${\query^{i + 1}(\annoupdDB)}$, and $(\tup \concat s) $ can be obtain by $\query^{m}_{1}(\updDB_{
  \provSketch_{\tup} }) \crossprod \query^{i -
  m}_{2}(\updDB_{\provSketch_{s}})$. Thus $\provSketch_{\tup}$and
$\provSketch_{s}$ are in $\fragsIn{\query^{i + 1}(\annoupdDB)}$.
If one of $\pair{\tup}{\provSketch_{\tup}}$ and
$\pair{s}{\provSketch_{s}}$ is a delta annotated tuple or both are annotated
tuples, then $(\tup \concat s)$ in $\Delta \query^{i + 1}(\db)$, and the
fragments $\provSketch_{\tup}$and $\provSketch_{s}$  are in $\Delta \fragsIn{\query^{i
    + 1}(\annoupdDB)}$. Therefore, for any $\pair{\tup}{\provSketch_{\tup}}$ and
$\pair{s}{\provSketch_{s}}$,  $(\tup \concat s)$ in $\Delta \query^{i +
  1}(\updDB)$, and $\provSketch_{\tup}$ and $\provSketch_{s}$ are in
$\fragsIn{\query^{i + 1}(\annoupdDB)} \uniondb \Delta \fragsIn{\query^{i + 1}(\annoupdDB)}$.
Then for all annotated tuples from $\query^{m}_{1}(\annoupdDB)$ and
$\query^{i-m}_{2}(\annoupdDB)$, the cross product result is a bag of annotated tuples that is
the same as $\query^{i + 1}(\annoupdDB)$ and all the tuples of $\query^{i +
  1}(\updDB)$ are the same as $\query^{i+1}(\updDBInst{\fragsIn{\query^{i + 1}(\annoupdDB)} \uniondb \Delta \fragsIn{\query^{i + 1}(\annoupdDB)}})$
. From the incremental semantics $\Delta \fragsIn{ \query^{i + 1}(\annoupdDB) } =
\ieval{\query^{i + 1}}{\dbranges, \annoDeltaDB}{\statedata}$. Then,  $\query^{i +
  1}(\updDB) = \query^{i+1}(\updDBInst{\fragsIn{\query^{i + 1}(\annoupdDB)} \uniondb \Delta \fragsIn{\query^{i + 1}(\annoupdDB)}})$.



%
\subsubsection{Aggregation}
Suppose the operator at level $i + 1$ is an aggregation function (any one of
\fcsum, \fccnt, \fcavg, \fcmin and \fcmax), Then we have the following:
\[
  \query^{i + 1}(\db)  = \Aggregation{f(\att)}{\grpatts} (\query^{i} (\db))
\]
We have the assumption that for $\query^{i} \in \queryclass{i}$, the \tcorrect
and \fcorrect hold. We will show that these properties still hold for
$\query^{i + 1} \in \queryclass{i + 1}$ when $\query^{i + 1}$ is an aggregation
function.

To show the correctness of tuples and fragments, we focus on one group $\grp$.
Let $\query^{i + 1}_{\grp}(\query(\db))$ to be an aggregation function works on
$\query^{i}(\db)$ and only focus on groups $\grp$,
$ \forall \tup \in \query^{i}(\db) \colon \tup.\grpatts = \grp$. Thus, the two
properties for $\grp$ will be:
\begin{align*}
  & \query_{\grp}^{i + 1}(\updDB) = \tupsIn{\query_{\grp}^{i + 1}(\annoDB) \uniondb \ieval{\query_{\grp}^{i + 1}}{\dbranges,\annoDeltaDB}{\statedata} } \tag{\tcorrect} \\
  & \query_{\grp}^{i + 1}(\updDB) = \query_{\grp}^{i+1}(\updDBInst{\fragSet{\fragsIn{\query^{i + 1}(\annoDB)} \uniondb \fragsIn{\ieval{\query_{\grp}^{i+1}}{\dbranges,\annoDeltaDB}{\statedata}}}}) \tag{\fcorrect}
\end{align*}


\parttitle{\Tcorrect}
For one group $\grp$, based on our rule, the annotated tuple before and after applying
delta annotated tuples are:

\begin{align*}
  &\query_{g}^{i + 1}(\annoDB) = \bag{\deletedelta \pair{\grp \concat (f(\att))}{\provSketch}} \\
  &\query_{g}^{i + 1}(\annoupdDB) = \bag{\insertdelta \pair{\grp \concat (\widehat{f(\att)})}{\widehat{\provSketch}}}
\end{align*}

Since $\query_{g}^{i + 1}(\annoDB) = \bag{ \pair{\deletedelta \grp \concat (f(\att))}{\provSketch}}$, then, we can get that:
\begin{align*}
    & \tupsIn{\query_{g}^{i + 1}(\annoDB)} \uniondb \tupsIn{\bag{\deletedelta \pair{\grp \concat (f(\att))}{\provSketch}}} = \emptyset \tag{$\emptyset_{\grp}$}
\end{align*}
Therefore, the tuple correctness for $\query^{i + 1}_{g}(\updDB)$ is shows:
\begin{align*}
    & \query_{\grp}^{i + 1}(\updDB)\\
  = & \tupsIn{\query_{\grp}^{i + 1 }(\annoupdDB) } \tag{$\tupsIn{\annoDB} = \db$}\\
  = & \tupsIn{\bag{\insertdelta \pair{\grp \concat (\widehat{f(\att)})}{\widehat{\provSketch}}}}\\
  = & \emptyset \uniondb \tupsIn{\bag{\insertdelta \pair{\grp \concat (\widehat{f(\att)})}{\widehat{\provSketch}}}}\\
  = & \tupsIn{\query_{g}^{i + 1}(\annoDB)} \uniondb \tupsIn{ \bag{\deletedelta \pair{\grp \concat (f(\att))}{\provSketch}}}\uniondb \tupsIn{\bag{\insertdelta \pair{\grp \concat (\widehat{f(\att)})}{\widehat{\provSketch}}}}\tag{$\emptyset_{\grp}$}\\
  = & \tupsIn{\query_{g}^{i + 1}(\annoDB)} \uniondb \left(\tupsIn{\bag{\deletedelta \pair{\grp \concat (f(\att))}{\provSketch}}\uniondb \bag{\insertdelta \pair{\grp \concat (\widehat{f(\att)})}{\widehat{\provSketch}}}}\right)\tag{\cref{lemma:tupsin-udot-annodb-annodeltadb}}\\
\end{align*}

Based on the incremental rule of aggregation, it will delete the current group's
$(\grp \concat (f(\att)))$ and insert an tuple $(\grp \concat (\widehat{f(\att)}))$. If we do not apply the $\uniondb$ but keep them as
two independent tuples, which is
$\tupsIn{\bag{\deletedelta \pair{\grp \concat (f(\att))}{{\provSketch}} } \union \bag{\insertdelta \pair{\grp \concat (f(\att))}{\widehat{\provSketch}}}}$.
And it is the output of incremental procedure which is
$\tupsIn{\ieval{\query^{i + 1}_{g}}{\dbranges, \annoDeltaDB}{\statedata}}$

Therefore, for one group $\grp$, the tuple correctness hold such that:
\begin{align*}
  & \query_{\grp}^{i + 1}(\updDB)\\
= & \tupsIn{\query_{\grp}^{i + 1}(\annoDB)} \uniondb \tupsIn{\ieval{\query_{\grp}^{i + 1}}{\dbranges,\annoDeltaDB}{\statedata} }\\
= & \tupsIn{\query_{\grp}^{i + 1}(\annoDB) \uniondb \ieval{\query_{\grp}^{i + 1}}{\dbranges,\annoDeltaDB}{\statedata} }
\end{align*}

\parttitle{Creating or deleting a group}
If we create a new group, then $\query_{g}^{i + 1}(\annoDB) = \bag{\deletedelta \pair{\grp
   \concat (f(\att))}{\provSketch}} = \emptyset$. Then $\bag{\insertdelta \pair{\grp \concat
   (\widehat{f(\att)})}{\widehat{\provSketch}}}$ is the only output of
$\ieval{\query^{i}}{\dbranges, \annoDeltaDB}{\statedata}$.

Therefore,
$\query_{\grp}^{i + 1}(\updDB) = \tupsIn{\query_{\grp}^{i + 1}(\annoDB) \uniondb
 \ieval{\query_{\grp}^{i + 1}}{\dbranges,\annoDeltaDB}{\statedata}}$. If we
delete a group, then $\query_{g}^{i + 1}(\updDB) = \emptyset$. From the incremental
semantics, we just output $\bag{\deletedelta \pair{\grp \concat
   (\widehat{f(\att)})}{\provSketch}}$, then from $\emptyset_{\grp}$, then the result is
   empty as well. Thus, for deleting a group, the results are empty. Therefore,
   for a group $\grp$, the property holds such that:
\[
  \query_{\grp}^{i + 1}(\updDB) = \tupsIn{\query_{\grp}^{i + 1}(\annoDB) \uniondb \ieval{\query_{\grp}^{i + 1}}{\dbranges,\annoDeltaDB}{\statedata} }
\]


\parttitle{\Fcorrect}
For one group $\grp$, based on the rule, the annotated tuple before and after applying
delta annotated tuples are:

\begin{align*}
  &\query_{g}^{i + 1}(\annoDB) = \bag{\deletedelta \pair{\grp \concat (f(\att))}{\provSketch}} \\
  &\query_{g}^{i + 1}(\annoupdDB) = \bag{\insertdelta \pair{\grp \concat (\widehat{f(\att)})}{\widehat{\provSketch}}}
\end{align*}

Since $\query_{g}^{i + 1}(\annoDB) = \bag{\deletedelta \pair{\grp \concat (f(\att))}{\provSketch}}$, then, we can get that:
\begin{align*}
    & \fragsIn{\query_{g}^{i + 1}(\annoDB)} \uniondb \fragsIn{\bag{\deletedelta \pair{\grp \concat (f(\att))}{\provSketch}}} = \emptyset \tag{$\emptyset_{\grp}$}
\end{align*}
Since $\query_{g}^{i + 1}(\annoupdDB) = \bag{\insertdelta \pair{\grp \concat (\widehat{f(\att)})}{\widehat{\provSketch}}}$, then $\query_{g}^{i + 1}(\updDB) = \query^{i+1}_{g}(\updDBInst{\widehat{\provSketch}}) = \query^{i+1}_{g}(\updDBInst{\fragSet{\fragsIn{\bag{\insertdelta \pair{\grp \concat (\widehat{f(\att)})}{\widehat{\provSketch}}}}}})$.
Therefore, the fragment correctness for $\query^{i + 1}_{g}(\db)$ is shows:
\begin{align*}
    & \query_{\grp}^{i + 1}(\updDB)\\
  = & \query^{i+1}_{g}(\updDBInst{\fragSet{\fragsIn{\bag{\insertdelta \pair{\grp \concat (\widehat{f(\att)})}{\widehat{\provSketch}}}}}})\\
  = & \query^{i+1}_{g}(\updDBInst{\fragsIn{\bag{\insertdelta \pair{\grp \concat (\widehat{f(\att)})}{\widehat{\provSketch}}}}})\\
  = & \query^{i+1}_{g}(\updDBInst{\emptyset \uniondb \fragsIn{\bag{\insertdelta \pair{\grp \concat (\widehat{f(\att)})}{\widehat{\provSketch}}}}}) \tag{$\emptyset \uniondb \fragsIn{\cdot} = \fragsIn{\cdot}$}\\
  = & \query^{i+1}_{g}(\updDBInst{\fragsIn{\query_{g}^{i + 1}(\annoDB)} \uniondb \fragsIn{\bag{\deletedelta \pair{\grp \concat (f(\att))}{\provSketch}}} \uniondb \fragsIn{\bag{\insertdelta \pair{\grp \concat (\widehat{f(\att)})}{\widehat{\provSketch}}}}}) \tag{$\emptyset_{\grp}$}\\
  = & \query^{i+1}_{g}(\updDBInst{\fragsIn{\query_{g}^{i + 1}(\annoDB)} \uniondb \fragsIn{\bag{\deletedelta \pair{\grp \concat (f(\att))}{\provSketch}} \uniondb \bag{\insertdelta \pair{\grp \concat (\widehat{f(\att)})}{\widehat{\provSketch}}}}}) \tag{\Cref{lemma:frags-udot-annodb-annodeldb}}
\end{align*}

Based on the incremental semantics of aggregation, it will delete the current
group's fragments $\fragsIn{\bag{\deletedelta \pair{\grp \concat (f(\att))}{\provSketch}}}$
and insert fragments $\fragsIn{\bag{\insertdelta \pair{\grp \concat
      (\widehat{f(\att)})}{\widehat{\provSketch}}}}$. As tuple correctness, we do not apply the $\uniondb$ but keep then as two fragments bags which is
\[
  \fragsIn{\bag{\deletedelta \pair{\grp \concat (f(\att))}{\provSketch}} \union \bag{\insertdelta \pair{\grp \concat \widehat{(f(\att))}}{\widehat{\provSketch}}} }
\]

And
$\fragsIn{\bag{\deletedelta \pair{\grp \concat (f(\att))}{\provSketch}} \union \bag{\insertdelta \pair{\grp \concat \widehat{(f(\att))}}{\widehat{\provSketch}}} }$
is the fragments output from incremental maintenance of aggregation for group $\grp$ which is $\fragsIn{\ieval{\query^{i + 1}_{g}}{\dbranges, \annoDeltaDB}{\statedata}}$
\begin{align*}
    & \query^{i+1}_{g}(\updDB) \\
  = & \query^{i+1}_{g}(\updDBInst{\fragsIn{\query_{g}^{i + 1}(\annoDB)} \uniondb \fragsIn{\ieval{\query^{i + 1}_{g}}{\dbranges, \annoDeltaDB}{\statedata}}})\\
  = & \query^{i+1}_{g}(\updDBInst{\fragSet{\fragsIn{\query_{g}^{i + 1}(\annoDB)} \uniondb \fragsIn{\ieval{\query^{i + 1}_{g}}{\dbranges, \annoDeltaDB}{\statedata}}}})\\
\end{align*}

\parttitle{Create or deleting a group}
If the group $\grp$ is newly created, then there is no previous sketch for this
group, and the
$\fragsIn{\query_{\grp}^{i + 1}(\annoupdDB)} = \fragsIn{ \bag{\insertdelta \pair{\grp \concat (\widehat{f(\att)})}{\widehat{\provSketch}}} }$
. Thus
$\query_{\grp}^{i+1} (\updDB) = \query_{\grp}^{i + 1}(\updDBInst{\fragsIn{\bag{\insertdelta \pair{\grp \concat (\widehat{f(\att)})}{\widehat{\provSketch}} } }})$. If current group $\grp$ is
deleted. Then after maintenance, this group does not exist anymore, and from the
incremental semantics, there is no fragments related to this group.

So for a group $\grp$, the property holds such that:
\[
  \query_{\grp}^{i + 1}(\updDB) = \query_{\grp}^{i+1}(\updDBInst{\fragSet{\fragsIn{\query^{i + 1}(\annoDB)} \uniondb \fragsIn{\ieval{\query_{\grp}^{i+1}}{\dbranges,\annoDeltaDB}{\statedata}}}}) \tag{\fcorrect}
\]

\parttitle{All groups}
We have shown that for one group $\grp$, the correctness of
~\Cref{theo:correctness} holds, therefore, for all groups the theorem holds for
group-by aggregation query $\query^{i + 1} \in \queryclass{i + 1}$ such that:
\begin{align*}
  & \query^{i + 1}(\updDB) = \tupsIn{\query^{i + 1}(\annoDB) \uniondb \ieval{\query^{i + 1}}{\dbranges,\annoDeltaDB}{\statedata} } \tag{\tcorrect} \\
  & \query^{i + 1}(\updDB) = \query^{i+1}(\updDBInst{\fragSet{\fragsIn{\query^{i + 1}(\annoDB)} \uniondb \fragsIn{\ieval{\query^{i+1}}{\dbranges,\annoDeltaDB}{\statedata}}}}) \tag{\fcorrect}
\end{align*}

\subsubsection{Top-K}
Suppose the operator at level $i + 1$ is a top-k operator, we have

\parttitle{\Tcorrect}
Based on the rule, the annotated tuple before and after applying
delta annotated tuples are:
\begin{align*}
  &\query^{i + 1}(\annoDB) =  \deletedelta \topk(\statedata)\\
  &\query^{i + 1}(\annoupdDB) = \insertdelta \topk(\statedata')\\
\end{align*}

Since $\query^{i + 1}(\annoDB) = \topk(\statedata)$, then, we can get that:
\begin{align*}
  \tupsIn{\query^{i + 1}(\annoDB)} \uniondb \tupsIn{\deletedelta \topk(\statedata)} =
  \emptyset \tag{$\emptyset_{\topk}$}
\end{align*}

Therefore, the tuple correctness for $\query^{i + 1}(\updDB)$ is shown:
\begin{align*}
    & \query^{i}(\updDB)\\
  = & \tupsIn{\query^{i + 1 }(\annoupdDB) } \tag{$\tupsIn{\annoDB} = \db$}\\
  = & \tupsIn{\topk(\statedata')}\\
  = & \emptyset \uniondb \tupsIn{\topk(\statedata')}\\
  = & \tupsIn{\query^{i + 1}(\annoDB)} \uniondb \tupsIn{\deletedelta \topk(\statedata)} \uniondb  \tupsIn{\insertdelta \topk(\statedata')}\tag{$\emptyset_{\topk}$}\\
  = & \tupsIn{\query^{i + 1}(\annoDB)} \uniondb \tupsIn{\deletedelta \topk(\statedata) \uniondb  \insertdelta \topk(\statedata')}\tag{\cref{lemma:tupsin-udot-annodb-annodeltadb}}\\
\end{align*}
From the $\topk$ incremental rule, it will delete the a bag of $k$ annotated
tuples which is $\deletedelta \topk(\statedata)$, and insert a bag of $k$ updated annotated
tuples which is  $\insertdelta \topk(\statedata')$. Like tuple correctness of
aggregation, we can keep they as two independent bags of annotated tuples. Then,
they are output of incremental procedure which is:
$\tupsIn{\ieval{\query^{i + 1}}{\dbranges, \annoDeltaDB}{\statedata} }$
Therefore:
\begin{align*}
    & \query^{i}(\updDB)\\
  = & \tupsIn{\query^{i + 1}(\annoDB)} \uniondb \tupsIn{\ieval{\query^{i + 1}}{\dbranges, \annoDeltaDB}{\statedata}}\\
  = & \tupsIn{\query^{i + 1}(\annoDB) \uniondb \ieval{\query^{i + 1}}{\dbranges, \annoDeltaDB}{\statedata}}\\
\end{align*}
\parttitle{\Fcorrect}
Based on the rule, the annotated tuple before and after applying
delta annotated tuples are:
\begin{align*}
  &\query^{i + 1}(\annoDB) =  \deletedelta \topk(\statedata)\\
  &\query^{i + 1}(\annoupdDB) = \insertdelta \topk(\statedata')\\
\end{align*}

For $\fragsIn{\deletedelta \topk(\statedata)}$, the fragments are the
same as from $\fragsIn{\query^{i + 1}(\annoDB)}$. Then
\begin{align*}
  & \fragsIn{ \query^{i + 1}(\annoDB) } \uniondb \fragsIn{\deletedelta \topk(\statedata)} = \emptyset
\end{align*}
  For $\fragsIn{\insertdelta \topk(\statedata')}$, since the state
$\statedata'$ contains all annotated tuples corresponding to $\annoupdDB$, then,
the $\fragSet{\fragsIn{\topk(\statedata')}}$ contains all fragments of $\updDB$
to get $\query^{i+1}(\updDB)$ due to the association of tuple and its provenance
sketch. Therefore:
\begin{align*}
    & \query^{i + 1}(\updDB) \\
  = & \query^{i + 1}(\updDBInst{\fragSet{\fragsIn{\insertdelta \topk(\statedata')}}})\\
  = & \query^{i + 1}(\updDBInst{\fragsIn{\insertdelta \topk(\statedata')}})\\
  = & \query^{i + 1}(\updDBInst{\emptyset \uniondb \fragsIn{\insertdelta \topk(\statedata')}})\\
  = & \query^{i + 1}(\updDBInst{\fragsIn{\topk( \query^{i}(\annoDB) )} \uniondb \fragsIn{\deletedelta \topk(\statedata)} \uniondb \fragsIn{\insertdelta \topk(\statedata')}})\\
  = & \query^{i + 1}(\updDBInst{\fragsIn{\topk( \query^{i}(\annoDB) )} \uniondb \fragsIn{\deletedelta \topk(\statedata) \uniondb \insertdelta \topk(\statedata')}})\\
\end{align*}

For $\fragsIn{\deletedelta \topk(\statedata) \uniondb \insertdelta \topk(\statedata')}$, we do not apply $\uniondb$ but keep them two independent bags of annotated tuples. Then they are the output of incremental procedure which is: $\fragsIn{\ieval{\query^{i + 1}}{\dbranges, \annoDeltaDB}{\statedata}}$. Therefore:

\begin{align*}
    & \query^{i + 1}(\updDB) \\
  = & \query^{i + 1}(\updDBInst{\fragsIn{\query^{i + 1}(\annoDB)} \uniondb \fragsIn{\ieval{\query^{i + 1}}{\dbranges, \annoDeltaDB}{\statedata}}})\\
  = & \query^{i + 1}(\updDBInst{\fragSet{\fragsIn{\query^{i + 1}(\annoDB)} \uniondb \fragsIn{\ieval{\query^{i + 1}}{\dbranges, \annoDeltaDB}{\statedata}}}})\\
\end{align*}

\subsection{Conclusion}
\revision{
  We have shown that both \tcorrect and \fcorrect hold for every operator the
  \impAbbr supports such that the the following holds:
\begin{align*}
  & \query(\updDB) = \tupsIn{\query(\annoDB) \uniondb \ieval{\query}{\dbranges,\annoDeltaDB}{\statedata} } \tag{\tcorrect}\\
  & \query(\updDB) = \query(\updDBInst{\fragSet{\fragsIn{\query(\annoDB)} \uniondb \fragsIn{\ieval{\query}{\dbranges,\annoDeltaDB}{\statedata}}}})\tag{\fcorrect}
\end{align*}
Then for every operator the \impAbbr supports to maintain its sketches, the
~\Cref{theo:correctness} holds.
}


\section{The IMP System}
\label{sec:implementation}
%
%
While the semantics from \Cref{sec:incremental_operator_rules} can be
implemented in SQL, an
in-memory implementation can be significantly more efficient as we can utilize
data structures not available in a SQL-based implementation (see  \ifnottechreport{\cite{techreport}}\iftechreport{the discussion in \Cref{sec:sql_implementation_appendix}}).
\iftechreport{As deltas are typically small, having a purely in-memory engine is
 sufficient.}
\impAbbr is implemented as a stand-alone in-memory engine that uses a backend
database for fetching deltas and for evaluating operations (joins) that require
access to large amounts of data.
In \Cref{fig:workflow_overview},
\impAbbr's incremental engine is the pipeline shown in red. 
\impAbbr executes  $\incremaintain(\query, \statedata, \annoDeltaDB)$ to
generate delta sketches. For joins (and cross products), $\annoDeltaRel \join
\mathscr{S}$ and $\annoRel \join \Delta \mathscr{S}$ are executed by sending
$\annoDeltaRel$ ($\Delta \mathscr{S}$) to the database and evaluating the join
in the database.


\subsection{Storage Layout \& State Data}
\label{sec:data_chunk_and_column_chunk}
We store data in a columnar
representation for horizontal chunks of a table (\textit{data chunks}). 
Annotated inserted / deleted tuples are stored in separate chunks.
The annotations (provenance sketches) of the rows in a data chunk are stored in a separate column as bit sets.

\parttitle{Sketch \& State Data}
\label{subsec:sate_data}
\impAbbr stores sketches in a hash-table where the key is a query template for which the sketch was created and the value is the sketch and the state of the incremental operators for this query. Here a query template refers to a version of a query $\query$ where constants in selection conditions are replaced with placeholders such that two queries that only differ in these constants have the same key. This is done to be able to efficiently prefilter candidate sketches to be used for a query as the techniques from \cite{DBLP:journals/pvldb/NiuGLLGKLP21} can determine whether a sketch for query $\query_1$ can be used to answer a query $\query_2$ if these queries share the same template. Furthermore, for each sketch we store a version identifier 
to record which database version the sketch corresponds to. \impAbbr can persist its state in the database.


\iftechreport{
\parttitle{Aggregation and Top-k}
\label{sec:aggregation}
For aggregation, we use hashmaps to implement the state for \fcsum, \fccnt, and
\fcavg. \iftechreport{For \fcmin and \fcmax we use red-black trees.} The
  aggregation result for a single group may be updated multiple times when
  processing a delta with multiple tuples. To avoid producing multiple delta
  tuples per group we maintain copies of the previous states of groups before
  an incremental maintenance operations that are created lazily when a group is
  updated for the first time when processing a delta $\annoDeltaDB$.
Once a delta has been processed, we use
the per batch data structure to determine which groups have been updated and
output deltas for the group as described in \ifnottechreport{\Cref{sec:op-rules-agg}.}
\iftechreport{\Cref{sec:op-rules-agg,sec:op-rule-agg-min-max}.}
  We follow the same approach for the top-k operator. We use red-black trees for maintaining state for top-k operators (and for \fcmin and \fcmax).
}



\subsection{Optimizations}
\label{sec:optimization}
Data transfer between \impAbbr and the DBMS can become a bottleneck. We now introduce several optimizations that reduce communication.

\parttitle{Bloom Filters For Join}
\label{sec:bloom-filters-join}
For join operators, \impAbbr uses the DBMS to compute the
result of $\annoRel \join \makeAnnoDelta{S}$ and $\makeAnnoDelta{R} \join \makeAnno{S}$ which requires sending $\makeAnnoDelta{R}$ (or $\makeAnnoDelta{S}$) to the database.
\impAbbr maintains bloom filters on the join attributes for both sides of equi-joins that are used to filter out rows from $\makeAnnoDelta{R}$ (and $\makeAnnoDelta{S}$) that do not have any join partners in the other table. If according to bloom filter no rows from the delta have join partners then we can avoid the round trip to the database completely.


\parttitle{Filtering Deltas Based On Selections}
\label{sec:filt-delta-inputs}
If a query involves a selection and all operators in the subtree rooted at a
selection are stateless, then we can avoid fetching delta tuples from the
database that do not fulfill the selection's condition as such tuples will
neither affect the state of operators downstream from the selection nor will they impact the final maintenance result as their decedents will be filtered by the incremental
 selection. That is, we can push the selection
conditions into the query that retrieves the delta. 

\iftechreport{
\begin{figure*}[h]
  \begin{minipage}{1.0\linewidth}
    \begin{subfigure}{0.33\linewidth}
      \includegraphics[width = 1.0\textwidth]{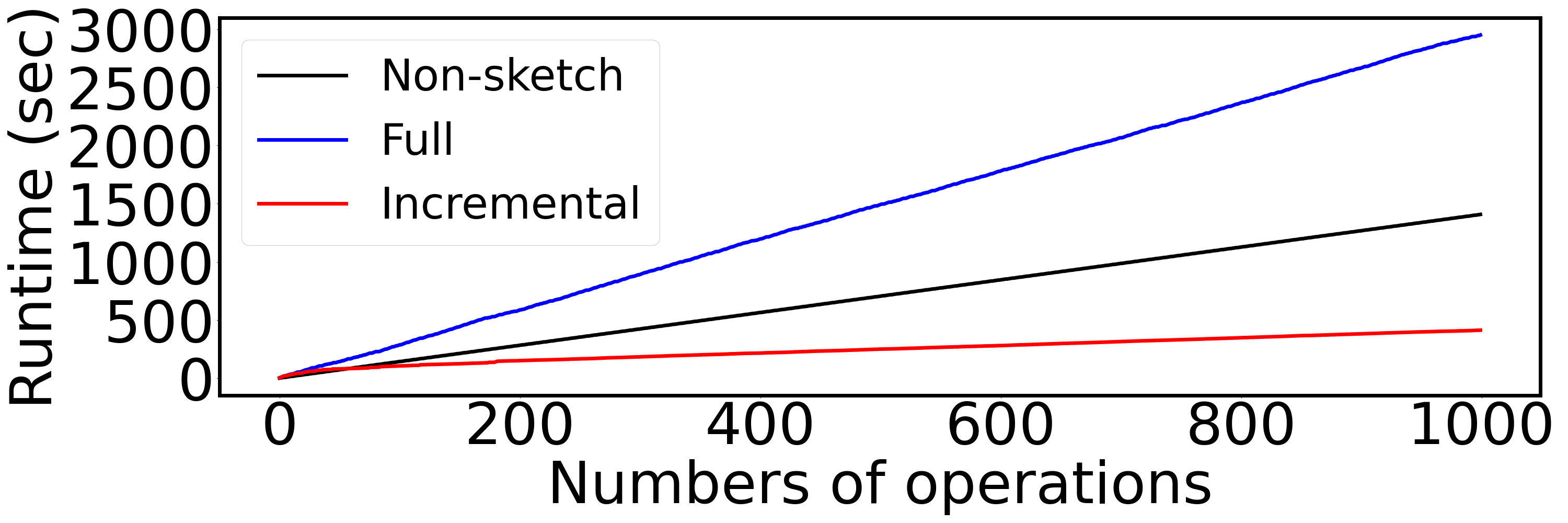}
      \vspace{-6mm}
      \caption{1U5Q delta: 1 tuple}
      \label{fig:endtoend_t1_1u5q}
    \end{subfigure}
    \begin{subfigure}{0.33\linewidth}
      \includegraphics[width = 1.0\textwidth]{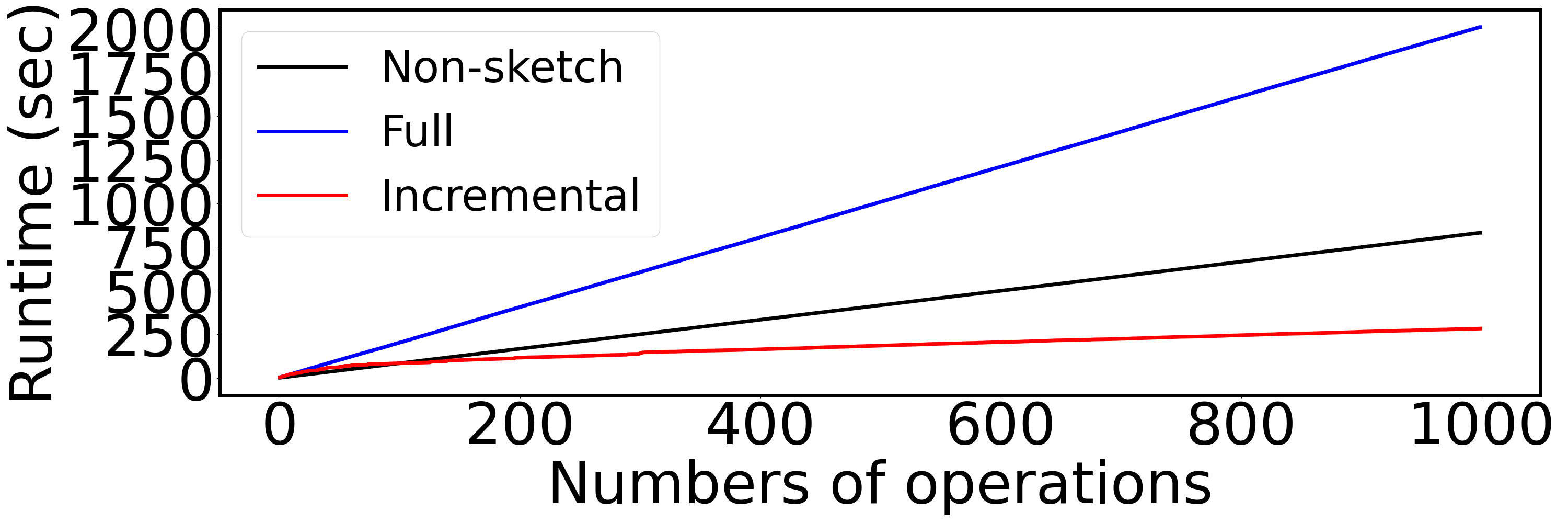}
      \vspace{-6mm}
      \caption{1U1Q delta: 1 tuple}
      \label{fig:endtoend_t1_1u1q}
    \end{subfigure}
    \begin{subfigure}{0.33\linewidth}
      \includegraphics[width = 1.0\textwidth]{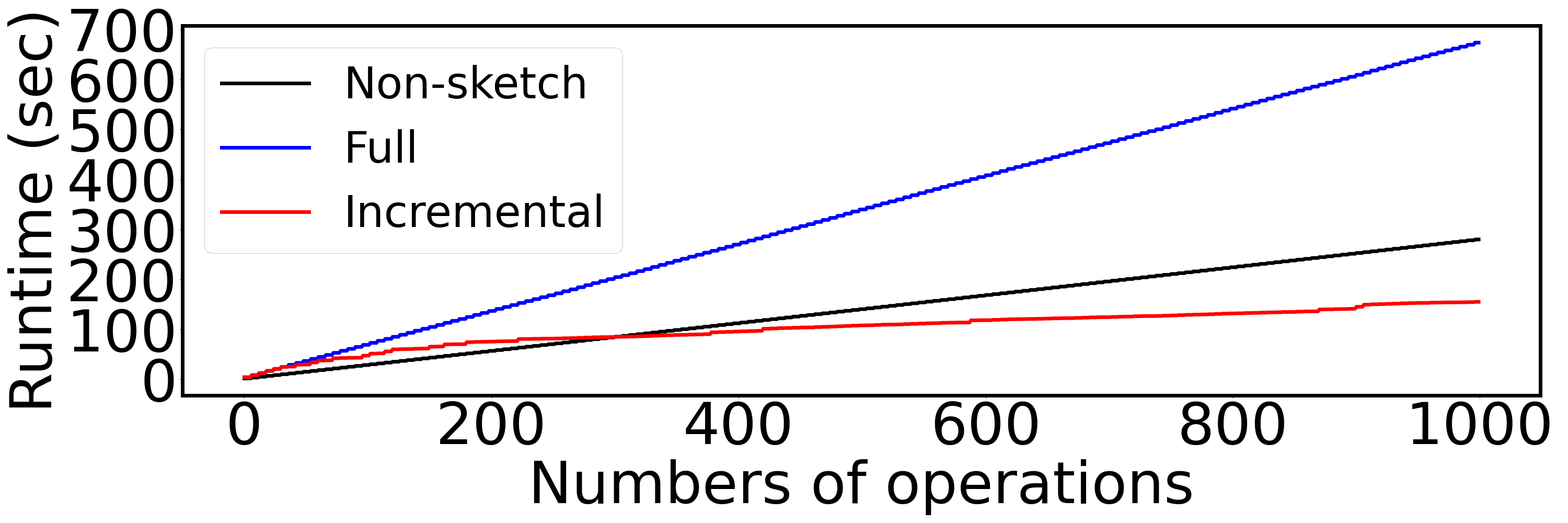}
      \vspace{-6mm}
      \caption{5U1Q delta: 1 tuple}
      \label{fig:endtoend_t1_5u1q}
    \end{subfigure}
  \end{minipage}
  \begin{minipage}{1.0\linewidth}
    \begin{subfigure}{0.33\linewidth}
      \includegraphics[width = 1.0\textwidth]{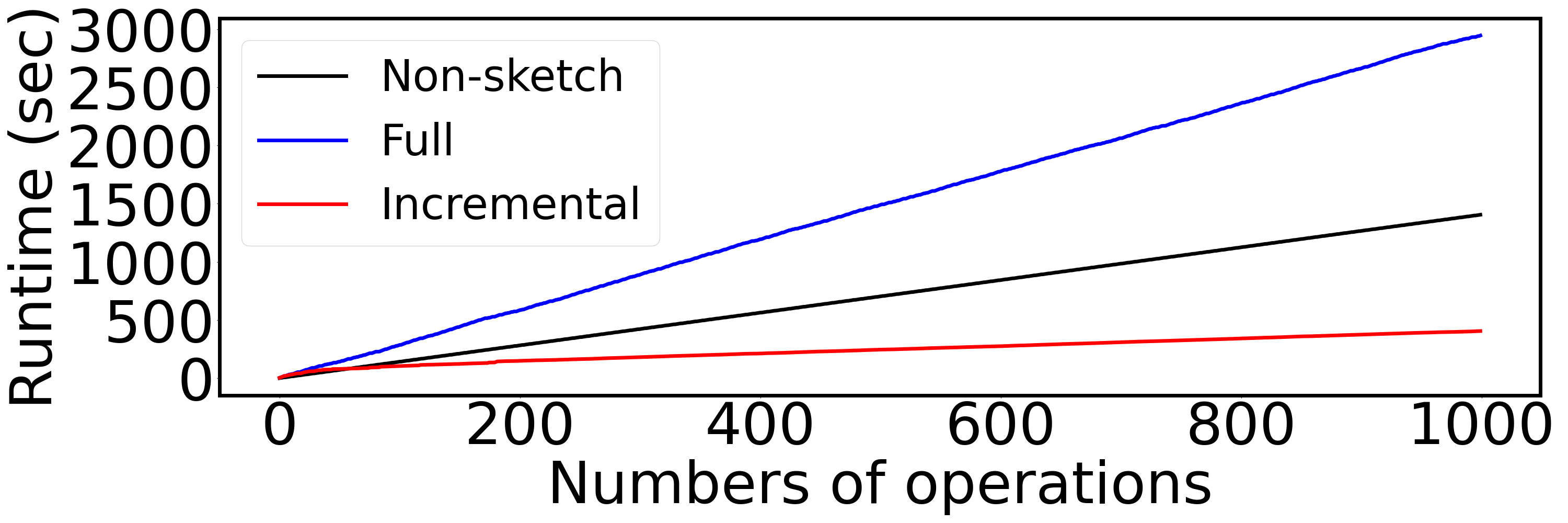}
      \vspace{-6mm}
      \caption{1U5Q delta: 20 tuples}
      \label{fig:endtoend_t20_1u5q}
    \end{subfigure}
    \begin{subfigure}{0.33\linewidth}
      \includegraphics[width = 1.0\textwidth]{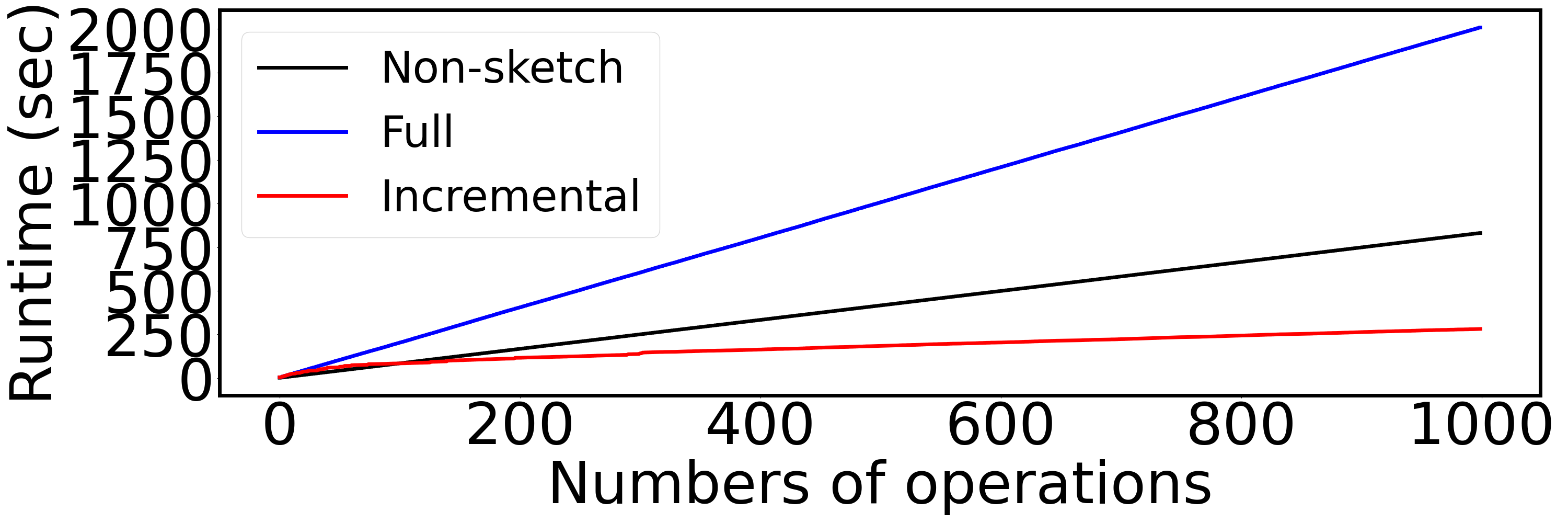}
      \vspace{-6mm}
      \caption{1U1Q delta: 20 tuples}
      \label{fig:endtoend_t20_1u1q}
    \end{subfigure}
    \begin{subfigure}{0.33\linewidth}
      \includegraphics[width = 1.0\textwidth]{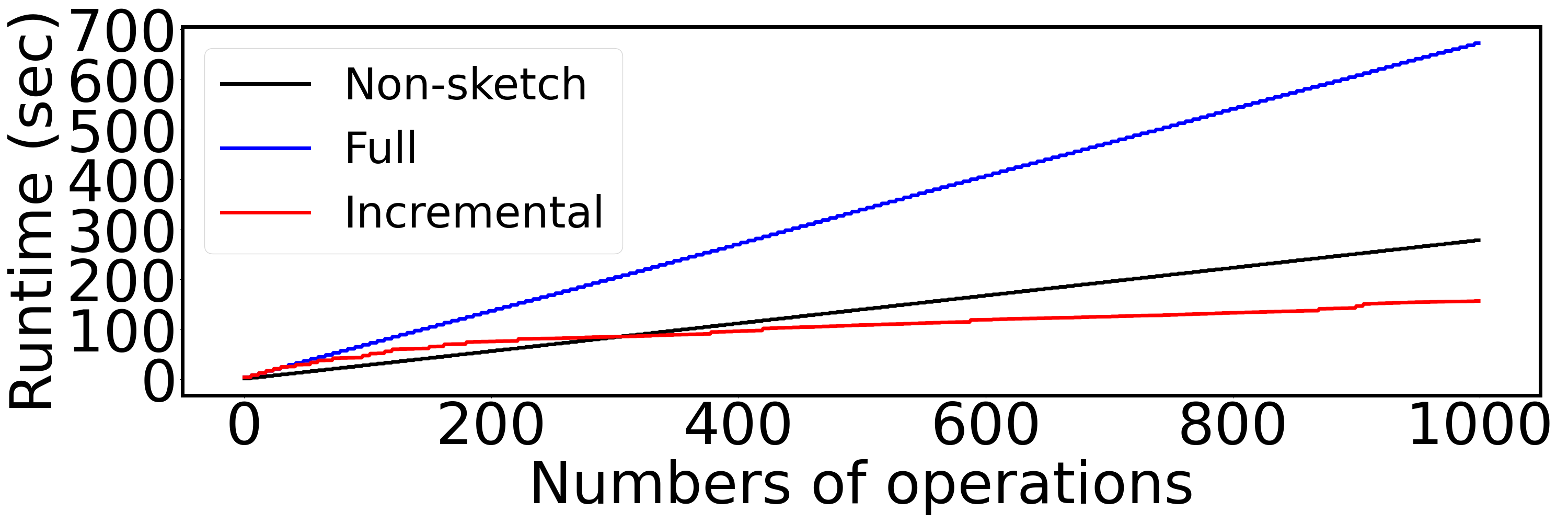}
      \vspace{-6mm}
      \caption{5U1Q delta: 20 tuples}
      \label{fig:endtoend_t20_5u1q}
    \end{subfigure}
  \end{minipage}
  \begin{minipage}{1.0\linewidth}
    \begin{subfigure}{0.33\linewidth}
      \includegraphics[width = 1.0\textwidth]{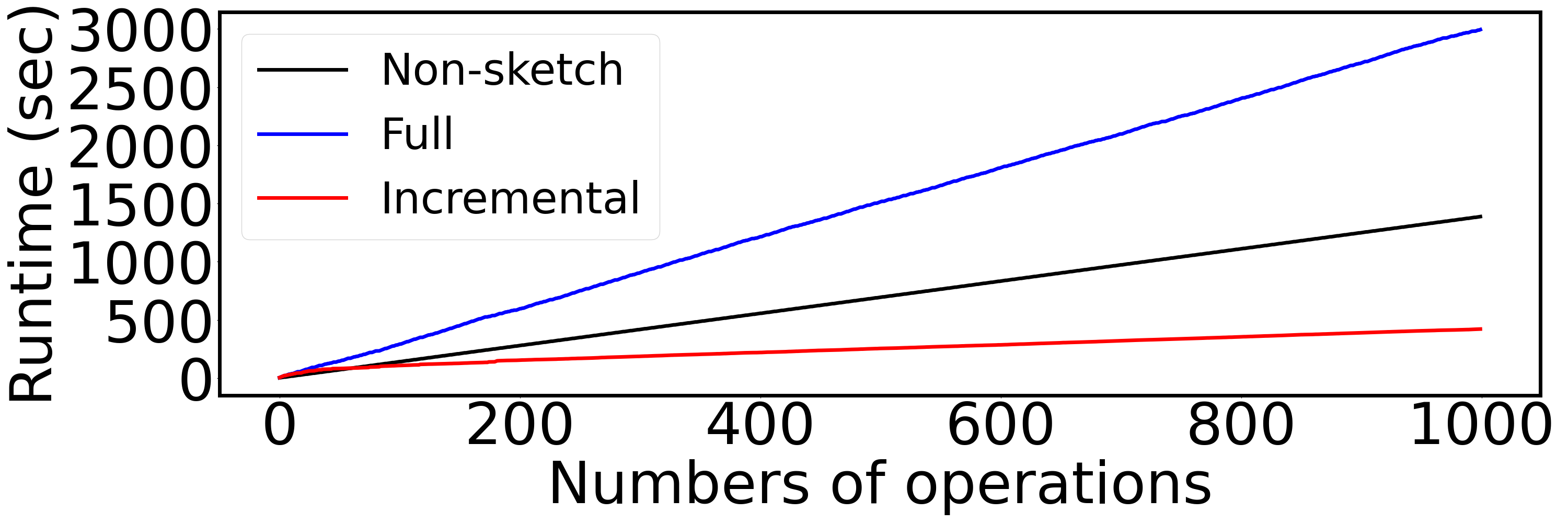}
      \vspace{-6mm}
      \caption{1U5Q delta: 200 tuples}
      \label{fig:endtoend_t200_1u5q}
    \end{subfigure}
    \begin{subfigure}{0.33\linewidth}
      \includegraphics[width = 1.0\textwidth]{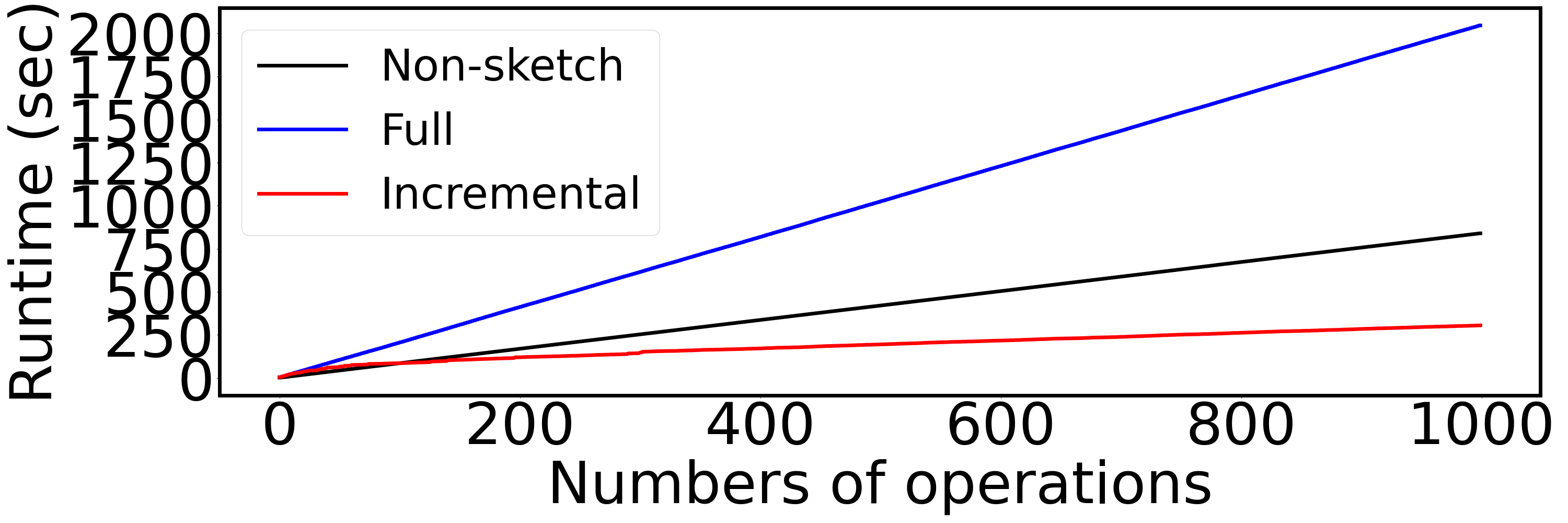}
      \vspace{-6mm}
      \caption{1U1Q delta: 200 tuples}
      \label{fig:endtoend_t200_1u1q}
    \end{subfigure}
    \begin{subfigure}{0.33\linewidth}
      \includegraphics[width = 1.0\textwidth]{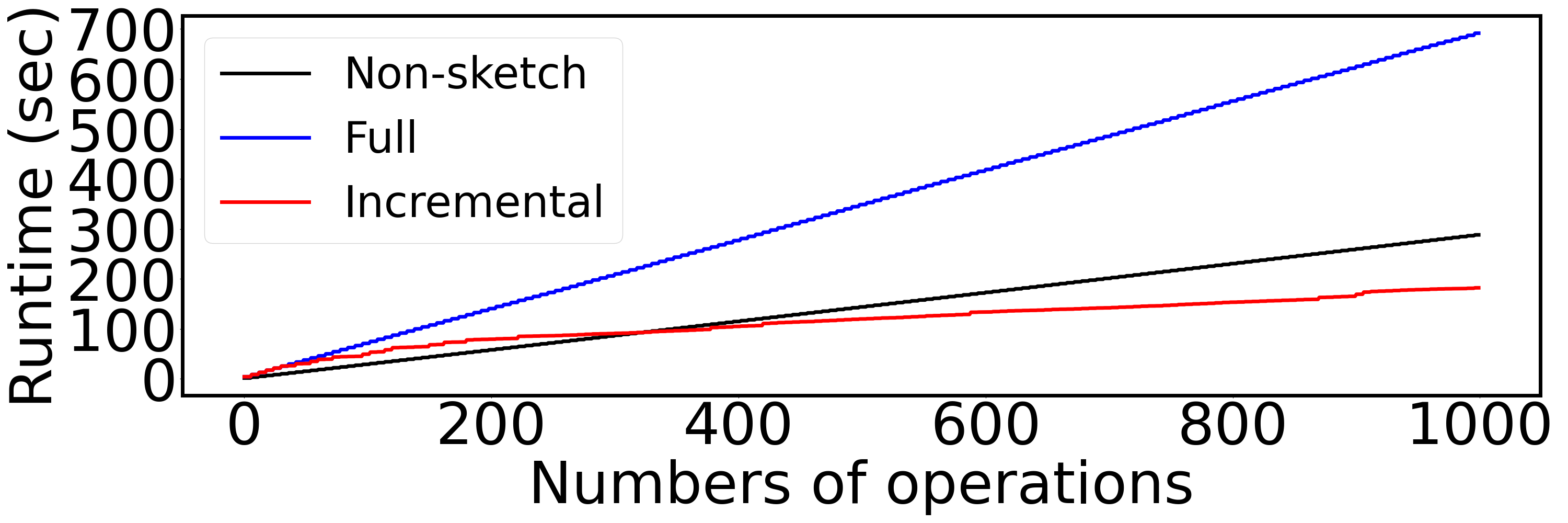}
      \vspace{-6mm}
      \caption{5U1Q delta: 200 tuples}
      \label{fig:endtoend_t200_5u1q}
    \end{subfigure}
  \end{minipage}
  \begin{minipage}{1.0\linewidth}
    \begin{subfigure}{0.33\linewidth}
      \includegraphics[width = 1.0\textwidth]{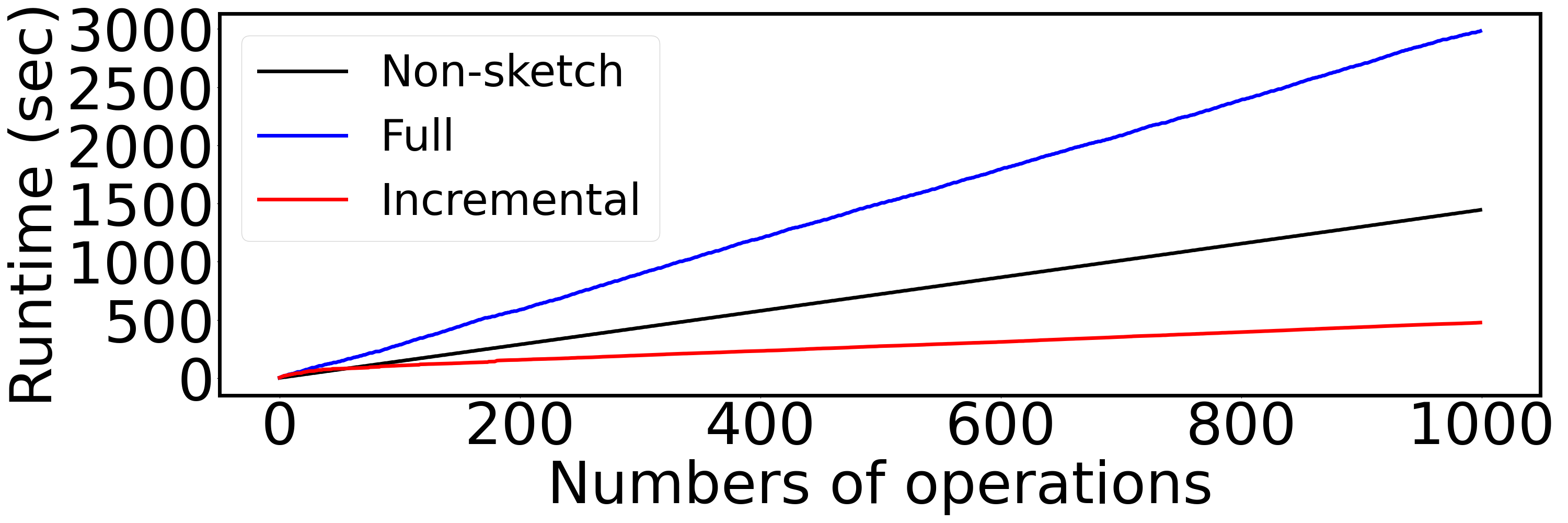}
      \vspace{-6mm}
      \caption{1U5Q delta: 2000 tuples}
      \label{fig:endtoend_t2000_1u5q}
    \end{subfigure}
    \begin{subfigure}{0.33\linewidth}
      \includegraphics[width = 1.0\textwidth]{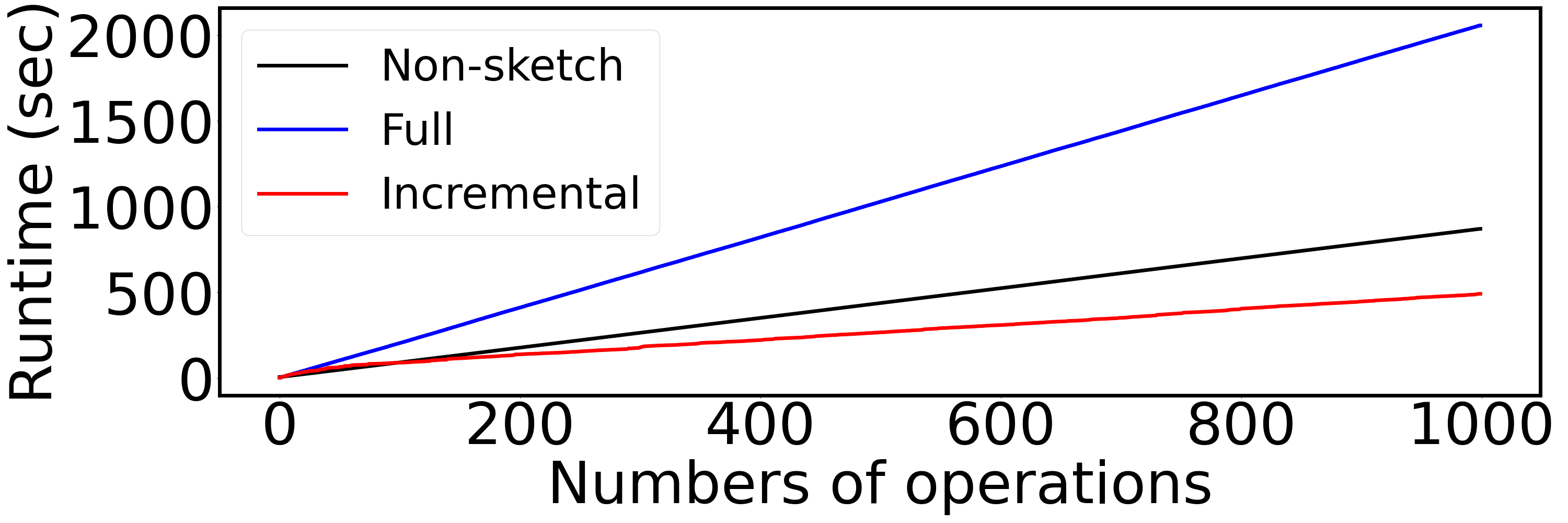}
      \vspace{-6mm}
      \caption{1U1Q delta: 2000 tuples}
      \label{fig:endtoend_t2000_1u1q}
    \end{subfigure}
    \begin{subfigure}{0.33\linewidth}
      \includegraphics[width = 1.0\textwidth]{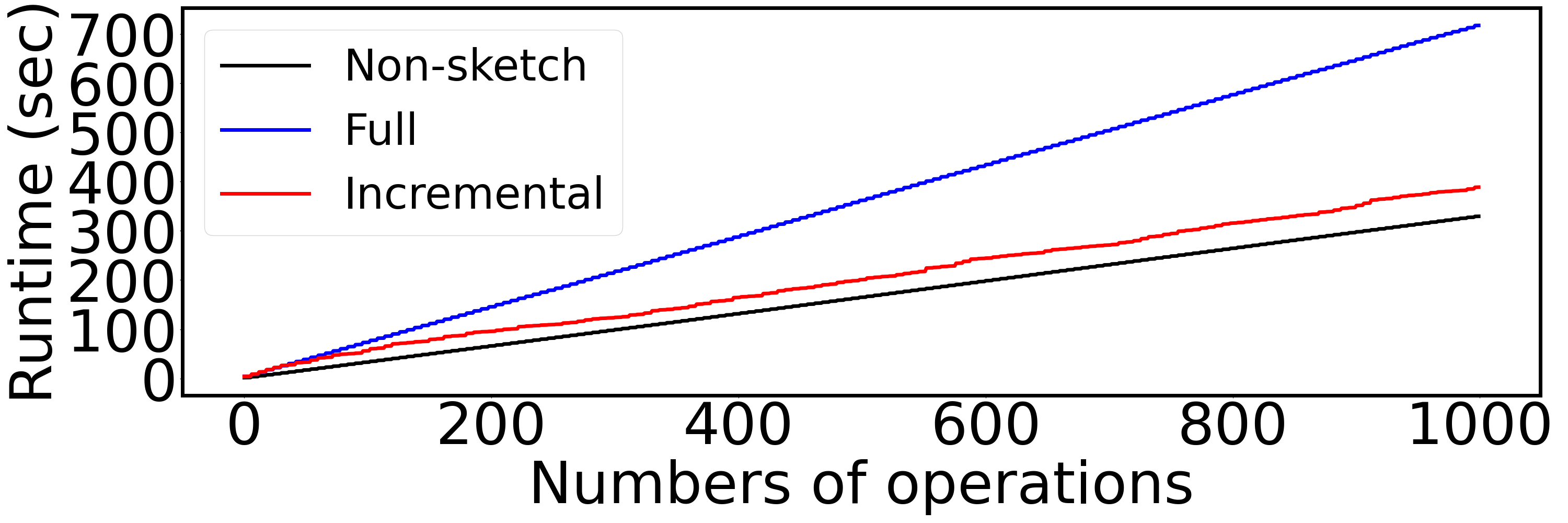}
      \vspace{-6mm}
      \caption{5U1Q delta: 2000 tuples}
      \label{fig:endtoend_t2000_5u1q}
    \end{subfigure}
  \end{minipage}
  \vspace{-4mm}
  \caption{Varying delta size to check the performance of non-sketch, full and incremental approaches}
  \label{fig:endtoends}
\end{figure*}
}
\ifnottechreport{
 \begin{figure*}[h]
  \begin{minipage}{1.0\linewidth}
    \begin{subfigure}{0.33\linewidth}
      \includegraphics[width = 1.0\textwidth]{figs/endtoend_T20_1u5q.png}
      \vspace{-6mm}
      \caption{1U5Q, delta size: 20 tuples}
      \label{fig:endtoend_t20_1u5q}
    \end{subfigure}
    \begin{subfigure}{0.33\linewidth}
      \includegraphics[width = 1.0\textwidth]{figs/endtoend_T20_1u1q.png}
      \vspace{-6mm}
      \caption{1U1Q, delta size: 20 tuples}
      \label{fig:endtoend_t20_1u1q}
    \end{subfigure}
    \begin{subfigure}{0.33\linewidth}
      \includegraphics[width = 1.0\textwidth]{figs/endtoend_T20_5u1q.png}
      \vspace{-6mm}
      \caption{5U1Q, delta size: 20 tuples}
      \label{fig:endtoend_t20_5u1q}
    \end{subfigure}
  \end{minipage}
  \begin{minipage}{1.0\linewidth}
    \begin{subfigure}{0.33\linewidth}
      \includegraphics[width = 1.0\textwidth]{figs/endtoend_T1_5u1q.png}
      \vspace{-6mm}
      \caption{1U5Q, delta size: 1 tuple}
      \label{fig:endtoend_t1_5u1q}
    \end{subfigure}
    \begin{subfigure}{0.33\linewidth}
      \includegraphics[width = 1.0\textwidth]{figs/endtoend_T200_5u1q.png}
      \vspace{-6mm}
      \caption{1U1Q, delta size: 200 tuples}
      \label{fig:endtoend_t200_5u1q}
    \end{subfigure}
    \begin{subfigure}{0.33\linewidth}
      \includegraphics[width = 1.0\textwidth]{figs/endtoend_T2000_5u1q.png}
      \vspace{-6mm}
      \caption{5U1Q, delta size: 2000 tuples}
      \label{fig:endtoend_t2000_5u1q}
    \end{subfigure}
  \end{minipage}
  \vspace{-4mm}
  \caption{Varying the delta size, we measure end-to-end workload runtime for  \nonsketch, \fullmaintenance, and \impAbbr.}
  \label{fig:endtoends}
\end{figure*}
}


\iftechreport{
\begin{figure*}[h]
  \begin{minipage}{1.0\linewidth}
    \begin{subfigure}{1.0\linewidth}
      \includegraphics[width = 1.0\textwidth]{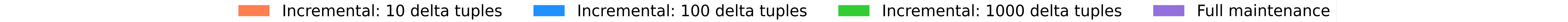}
    \end{subfigure}
  \end{minipage}
  \begin{minipage}{1.0\linewidth}
    \begin{subfigure}{0.33\textwidth}
      \includegraphics[width = 1.0\textwidth]{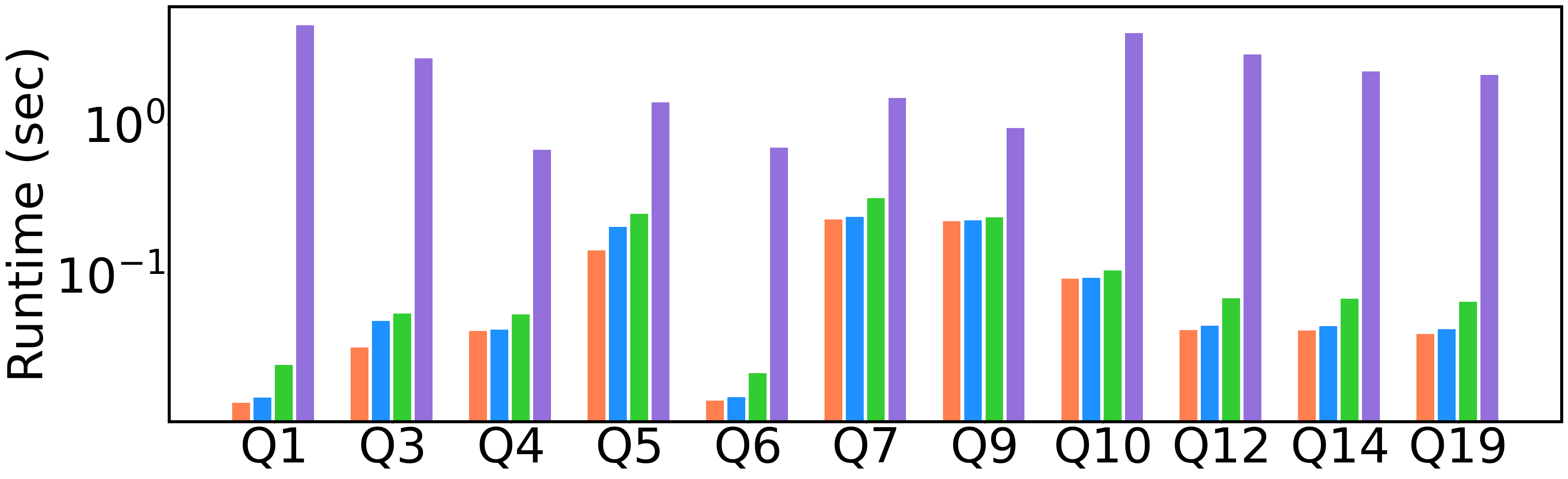}
      \vspace{-6mm}
      \caption{TPC-H 1GB}
      \label{fig:tpch-1g}
    \end{subfigure}
    \begin{subfigure}{0.33\textwidth}
      \includegraphics[width = 1.0\textwidth]{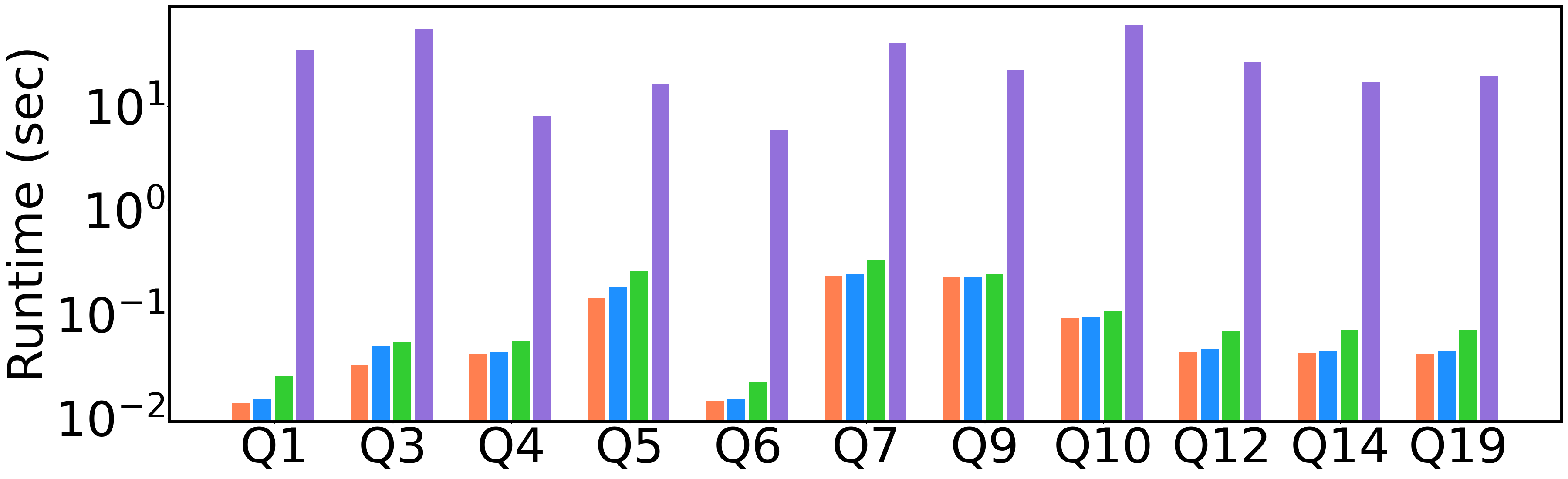}
      \vspace{-6mm}
      \caption{TPC-H 10GB}
      \label{fig:tpch-10g}
    \end{subfigure}
    \begin{subfigure}{0.33\textwidth}
      \includegraphics[width = 1.0\textwidth]{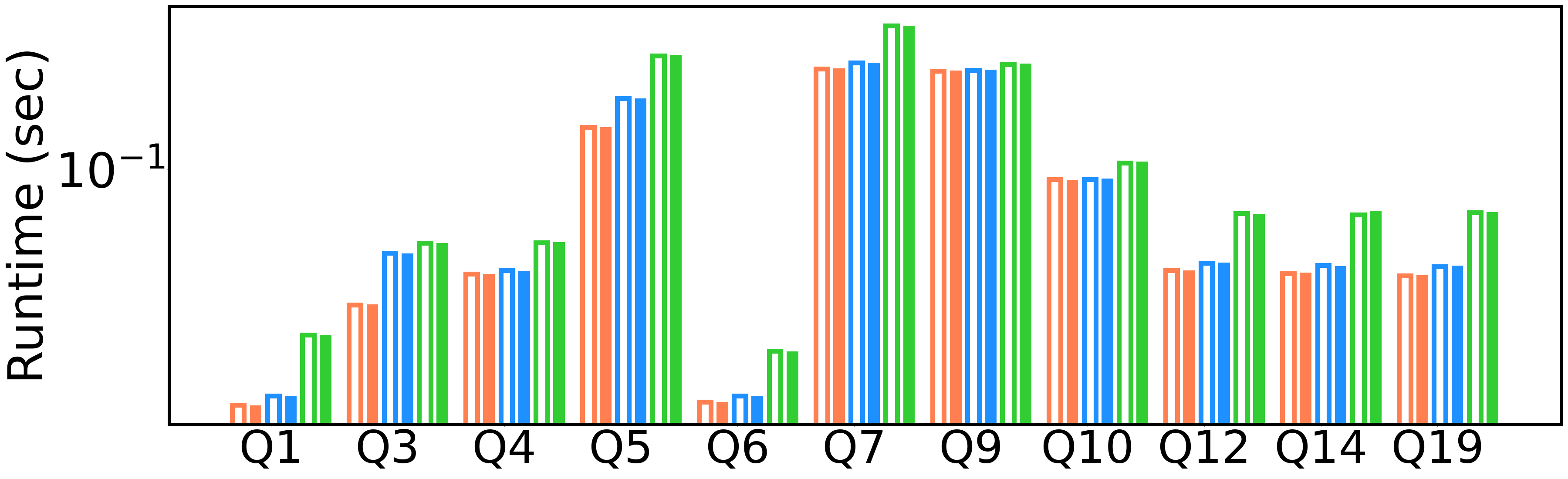}
      \vspace{-6mm}
      \caption{TPC-H 10GB Insert and Delete}
      \label{fig:tpch-10g-ins-del}
    \end{subfigure}
  \end{minipage}
  \vspace{-4mm}
  \caption{Maintaining provenance sketches: incremental versus full maintenance on the \tpchds.}
  \label{fig:tpch}
\end{figure*}
}

\ifnottechreport{
    \begin{figure*}[h]
      \vspace{-4mm}
  \begin{minipage}{1.0\linewidth}
    \begin{subfigure}{1.0\linewidth}
      \includegraphics[width = 1.0\textwidth]{figs/tpchcrime_bar_legends.png}
    \end{subfigure}
  \end{minipage}
  \begin{minipage}{1.0\linewidth}
    \begin{subfigure}{0.395\textwidth}
      \includegraphics[width = 1.0\textwidth]{figs/tpch1g.png}
      \vspace{-6mm}
      \caption{TPC-H 1GB}
      \label{fig:tpch-1g}
    \end{subfigure}
    \begin{subfigure}{0.395\textwidth}
      \includegraphics[width = 1.0\textwidth]{figs/tpch10g.png}
      \vspace{-6mm}
      \caption{TPC-H 10GB}
      \label{fig:tpch-10g}
    \end{subfigure}
    \begin{subfigure}{0.20\textwidth}
      \includegraphics[width = 1.0\textwidth]{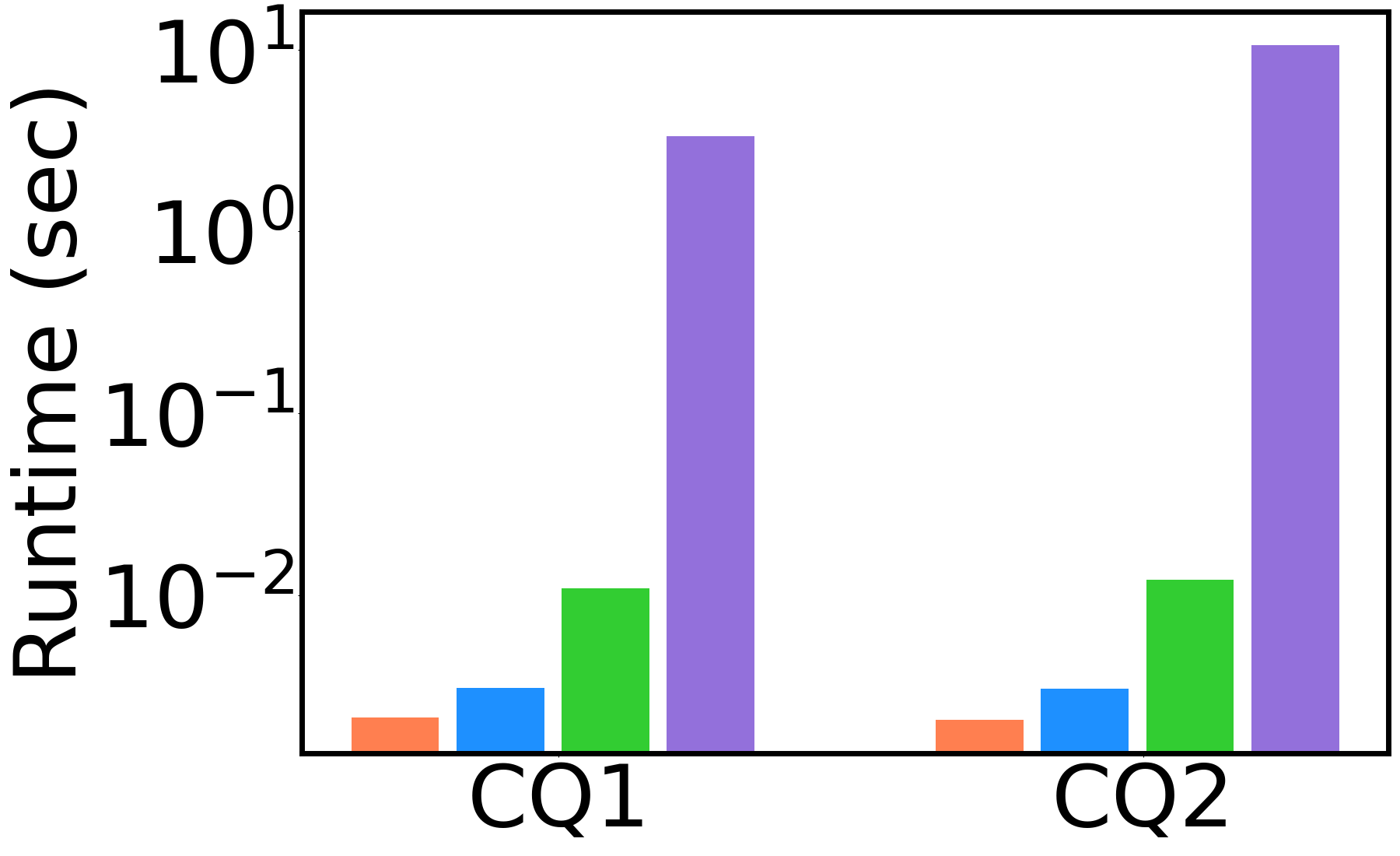}
      \vspace{-6mm}
      \caption{Crimes}
      \label{fig:crimes-overview}
    \end{subfigure}
  \end{minipage}
  \vspace{-4mm}
  \caption{Maintaining provenance sketches: incremental versus full maintenance on the \tpchds and \crimeds dataset.}
  \label{fig:tpch}
\end{figure*}
}
\iftechreport{
\begin{figure}[h]
  \centering
  \begin{minipage}{1\linewidth}
    \begin{subfigure}{0.49\textwidth}
      \includegraphics[width = 1.0\textwidth]{figs/crimes.png}
      \vspace{-6mm}
      \caption{Crimes}
      \label{fig:crimes-overview}
    \end{subfigure}
    \begin{subfigure}{0.49\textwidth}
      \includegraphics[width = 1.0\textwidth]{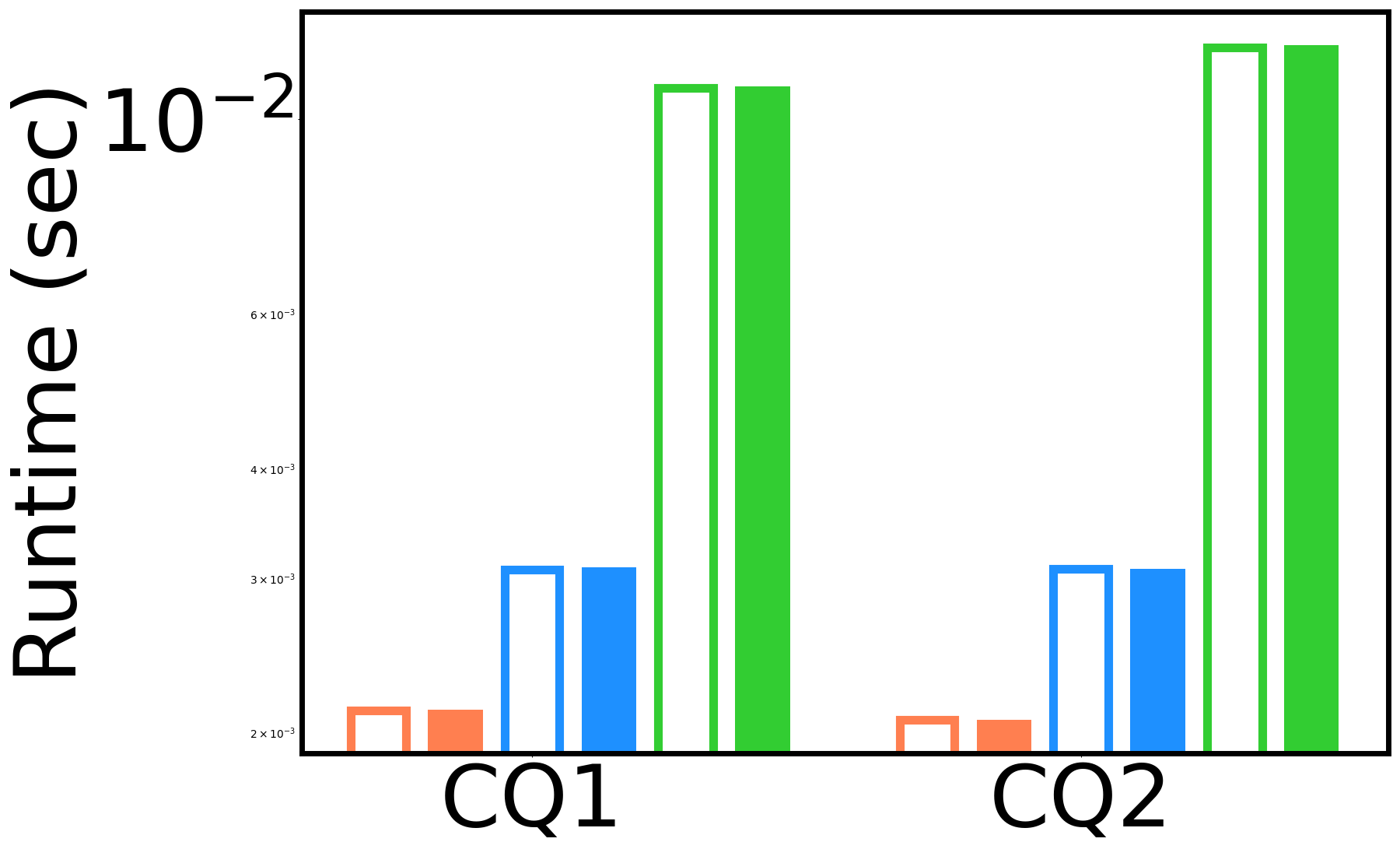}
      \vspace{-6mm}
      \caption{Insertion and deletion}
      \label{fig:crimes-ins-del}
    \end{subfigure}
  \end{minipage}
  \vspace{-4mm}
  \caption{Maintaining provenance sketches: incremental versus full maintenance on the \crimeds dataset.}
  \label{fig:crimes}
\end{figure}
}


\iftechreport{
\parttitle{Optimizing Minimum, Maximum, and Top-k}
\label{sec:optim-minim-maxim}
If the input to an aggregation with \fcmin or \fcmax or the top-k is large,
then maintaining the sorted map can become a bottleneck. Instead of storing the
full input, we can only store the top / bottom $m$
tuples. By keeping a record of the first $m$ tuples, it is safe-guaranteed to
delete $m$ tuples from its input. This is useful when dealing with deletion.
For example, if we keep first 20 minimum values for a group when we build the state for this group.
The state data can support deletions that removes at
least than 20 tuples. To achieve this, we pass an parameter to \impAbbr
engine when building the state data for these operators, the engine will
only get a certain number of tuples to build the state. This optimization will
help for deletion, because it is necessary to know the next minimum/maximum,
while for insertion, only one tuple per group stored can make incremental
maintenance work, because the only tuple is always the current minimum/maximum
and the new minimum/maximum will always be in the state data if it comes from
inputs.
}

\ifnottechreport{In \cite{techreport} we also discuss how non-serializable isolation levels for concurrency control protocols based on \gls{si} affect  maintenance of sketch versions.}

\iftechreport{
\subsection{Concurrency Control \& Sketch Versions}%
\label{sec:transaction_and_delta_data}
So far we have assumed that the database backend uses snapshot isolation and
that sketch versions are identified by snapshot identifiers. However, in
snapshot isolation, each transaction sees data committed before it started
(identified by a snapshot identifier) and its own changes. Thus, for a
transaction that wants to use a sketch after updating a table accessed by the
query for the sketch, we have to include the transaction's updates when
maintaining the sketch. We can track these updates using standard audit logging
mechanisms supported nativly in databases like Oracle or implemented through
extensibility mechanisms like triggers to keep a history of row versions. For
statement-level snapshot isolation (isolation level \lstinline!READ COMMITTED!
in systems like Postgres or Oracle), we face the challenge that even if we run
the queries for incremental maintenance in the same transaction as the query
that uses the updated sketch, these queries may see different versions of the
database. Thus, supporting statement-level snapshot isolation requires either
deeper integration into the database to run the maintenance query as of the same
snapshot as the query that uses the sketch or use techniques like
reenactment~\cite{AG17c, AG17} to reconstruct such database states.
}

\subsection{\revision{Selecting Sketch Attributes and Ranges}}\label{sec:how-to-range}
\revision{
Both the choice of attribute for a sketch and the choice of ranges can affect the amount of data covered by a sketch. Regarding the choice of attribute, we first identify which attributes are safe using techniques from \cite{DBLP:journals/pvldb/NiuGLLGKLP21}. In \cite{costarxiv} we studied cost-based selection of attributes for sketches. For \impAbbr we employ a heuristic to select attributes based on the insights from \cite{costarxiv}. We select attributes that are important for a query such as group-by attributes or attributes for which efficient access methods, e.g., an index, are available. Regarding the choice of ranges, as long as ranges are fine-granular enough and data is roughly evenly distributed across ranges, the exact choice of ranges has typically neglectable effect on sketch size. We use the bounds of equi-depth histograms maintained by many DBMS as statistics as ranges. Note that we generate ranges to cover the whole domain of an attribute instead of only its active domain. If a significant fraction of the data in a relation is updated, then this can lead to an imbalance in the amount of data per range and in turn to a degradation of the performance of sketches over time. As a significant change in distribution is unlikely to occur frequently, we can simply update the ranges and recapture sketches.
\ifnottechreport{In \cite{techreport} we discuss potential strategies to avoid this.}
\iftechreport{If more frequent changes to distribution are expected, then we could track estimates of the number of tuples per range and split or merge ranges that under- or overflow. If a range $\range$ is split into two ranges $\range_1$ and $\range_2$ then any sketch containing $\range$ would then be updated to contain $\range_1$ and $\range_2$. If two ranges $\range_1$ and $\range_2$ are merged into a single range $\range$, then any sketch containing either $\range_1$ and $\range_2$ is updated to contain $\range$ instead.}
%
}

\newcommand{\numfrag}{\#frag\xspace}

\section{Experiments}%
\label{sec:experiments}
We evaluate \impAbbr against two baselines: \emph{full maintenance} and an approach that does not use \gls{pbds} \ifnottechreport{to demonstrate the effectiveness of \impAbbr to improve the performance of workloads that mix queries and updates.}
\iftechreport{to answer the following questions: Q1: Does \gls{pbds} with \impAbbr improve the performance of workloads that mix queries and updates? Q2: How does incremental maintenance with \impAbbr compare against full maintenance? Q3: Which parameters affect \impAbbr's performance? Q4: How effective are our optimizations? Q5: How do eager and lazy maintenance compare?}
%
All experiments use a machine with 2 x 3.3Ghz AMD Opteron 4238 CPUs (12
cores) and 128GB RAM running Ubuntu 20.04 (linux kernel 5.4.0-96-generic) and Postgres 16.2.
\revision{\impAbbr's source code and the experimental setup are available at \cite{availability}.}
Experiments are repeated at least 10 times. We report median runtimes. The maximum variance
was $<5\%$ and $<1\%$ in most experiments. 

\parttitle{Datasets and Workloads}
\label{subsec:exp_datasets_workloads}
We use the \emph{\tpchds} benchmark, a real world \emph{\crimeds} dataset and
 synthetic data (tables with 10M rows with at least 11
attributes). Each synthetic table has a key attribute
\texttt{id}. For the other attributes, the values of one attribute ($\att$) are chosen uniform at random.
The remaining attributes are linearly correlated with $a$ subject to Gaussian noise to
create partially correlated values.

\subsection{Mixed Workload Performance}
\label{sec:workload-performance}

In this experiment, we measure the end-to-end runtime of \emph{\impAbbr}, full maintenance (\emph{\fullmaintenance}), and non-sketch (\emph{\nonsketch})  on mixed workloads consisting of queries and updates. Both \impAbbr and \fullmaintenance start without any sketches. The cost of maintaining and creating sketches is included in the runtime.
Each workload consists of
 1000 operations \revision{(each operation is either a query or an update)}. We refer to the ratio between queries and updates (the \emph{query-update ratio}). 
\revision{We use the
technique from \cite{DBLP:journals/pvldb/NiuGLLGKLP21} to determine whether an
existing sketch for a query $\query'$ can be used to answer the current query
$\query$.}
If an existing sketch can be reused, we maintain the sketch if necessary.
Otherwise, we create a new sketch.
When an update on relation $R$ is executed, we determine the delta and append it to the delta
table for $R$, 
 associating each tuple with a snapshot identifier. This enables us to
fetch only delta tuples of updates that were executed after the sketch was last
maintained.

\ifnottechreport{
We use a query template \qendtoend (see
\cite{techreport}) which is a
group-by-aggregation-having query over the synthetic data and delta
sizes 1, 20, 200 and 2000. We consider three query-update ratios: 1U1Q (one update
per one query), 1U5Q (one update per five queries) and 5U1Q (five updates per
one query).
For full maintenance, we always recapture sketches to answer the queries when
encountering queries. 
\Cref{fig:endtoends}  shows the runtime for several
combinations of
query-update ratio and delta size (\revision{The x-axis indicates the
    total number of operations executed so far}). We present additional combinations
in~\cite{techreport}. \fullmaintenance
has the highest cost: the cost of recapturing sketches
frequently outweighs the benefit of using sketches. 
\impAbbr outperforms both baselines, except for the extreme case 5U1Q with delta
size 2000 (per update) where 5 updates affecting at least 10k tuples in total are executed between two adjacent queries.
In the first part of each workload, the cost of creating provenance sketches outweighs the benefits of sketch use. However, once a sufficient set of sketches is available, \gls{pbds} outperforms the \nonsketch baseline.
}
\iftechreport{
For this experiment, we utilize a query template \qendtoend (see \Cref{sec:appendix_qlist}) which is a
group-by-aggregation-having query evaluated over the synthetic dataset and delta
sizes 1, 20, 200 and 2000. We consider three query-update ratios: \emph{1U1Q} (one update
per one query), \emph{1U5Q} (one update per five queries) and \emph{5U1Q} (five updates per
one query).
\Cref{fig:endtoends} shows the runtime for several combinations of
query-update ratio and delta size (for this experiment the number of tuples affected by each update in the workload). \fullmaintenance
has the highest cost, as the cost incurred by recapturing sketches
 frequently outweighs the benefit of using these sketches.
\impAbbr outperforms both baselines, except for the extreme case 5U1Q with delta
size 2000 (per update) where 5 updates, in total affecting at least 10000 tuples are
executed between two adjacent queries, since the adjacent queries may not use
the same sketch.
}
\begin{takeaway}
 \impAbbr\, significantly improves the performance of mixed workloads using \gls{pbds}.
\end{takeaway}

\subsection{Incremental Versus Full maintenance}

We now 
compare \impAbbr against \fullmaintenance.
%
We vary the \emph{delta size} 
focusing on realistic delta sizes: 10, 50,
100, 500 and 1000.

\subsubsection{\tpchds}\label{sec:experiments-tpch}
%
The results for \tpchds~(\url{www.https://www.tpc.org/tpch/}) at SF1 ($\sim
1$ GB) and SF10 ($\sim 10$GB) are shown in
\Cref{fig:tpch-1g,fig:tpch-10g}. \revision{We selected queries that benefit from
 sketches~\cite{DBLP:journals/pvldb/NiuGLLGKLP21} and are sufficiently
complex (multiple joins, aggregation with \lstinline!HAVING! or top-k).}
We turn on the selection push down and join bloom filter
optimizations (see \Cref{sec:optimization}). 
The runtime of \fullmaintenance only depends on the current size of the database. Thus,
we do not include results for different delta sizes for this method. \impAbbr outperforms \fullmaintenance by at least a factor of 3.9 and up to a factor of $\sim$2497,
demonstrating the effectiveness of incremental maintenance. Importantly, the
runtime of \impAbbr, while depending on delta size, is mostly unaffected database size as join is the only incremental operator accessing the database.

\iftechreport{
\Cref{fig:tpch-10g-ins-del} shows the incremental maintenance runtime for both
insertion and deletion for certain amount of delta for 10 GB database size.
}

\subsubsection{\crimeds Dataset}\label{sec:experiments-crime}
\ifnottechreport {
The
Crime\footnote{\url{https://data.cityofchicago.org/Public-Safety/Crimes-2001-to-Present/ijzp-q8t2}}
dataset records consists of a single 1.87GB
table with $~7.3M$ incidents in Chicago. We use two queries (see
 \cite{techreport}): \texttt{CQ1}: The numbers crime of each year in
each beat (geographical location). \texttt{CQ2}: Areas with more than 1000
crimes.  As shown in \Cref{fig:crimes-overview}, \impAbbr
outperforms \fullmaintenance by at least 2 \gls{oom}.
}
\iftechreport{
The
Crime\footnote{\url{https://data.cityofchicago.org/Public-Safety/Crimes-2001-to-Present/ijzp-q8t2}}
dataset records incidents in Chicago and consists of a single 1.87GB
table with $~7.3M$ rows. We use two queries (SQL code for all queries is shown
in \Cref{sec:appendix_qlist}): \texttt{CQ1}: The numbers crime of each year in
each beat (geographical location). \texttt{CQ2}: Areas with more than 1000
crimes. We use realistic delta sizes (10 to 1000). As shown in \Cref{fig:crimes-overview} incremental maintenance
outperform full maintenance by at least 2 \gls{oom}.
\Cref{fig:crimes-ins-del} show the incremental maintenance runtime for
both insertion and deletion under given delta size.
}


\begin{takeaway}
For deltas up to 1000 tuples, \impAbbr outperforms \fullmaintenance by several \gls{oom}.
\end{takeaway}

\subsection{Microbenchmarks}\label{sec:exp-microbenchmarks}
Next, we evaluate in detail how \impAbbr's performance is affected by various workload parameters using the synthetic dataset.
%
We compare \impAbbr against \fullmaintenance.
The database size is kept constant, i.e.,
for \fullmaintenance, the runtime is not affected by varying the delta size.

%





\begin{figure*}[h]
  \begin{minipage} {1.0\linewidth}
  \includegraphics[width = 1.0\linewidth]{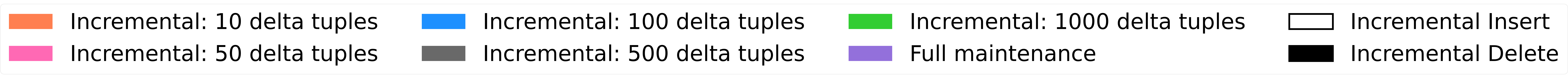}
  \end{minipage}
  \begin{minipage}{1.0\linewidth}
    \begin{subfigure}{0.33\textwidth}
      \includegraphics[width=1.0\textwidth, trim=0 0 0 0, clip]{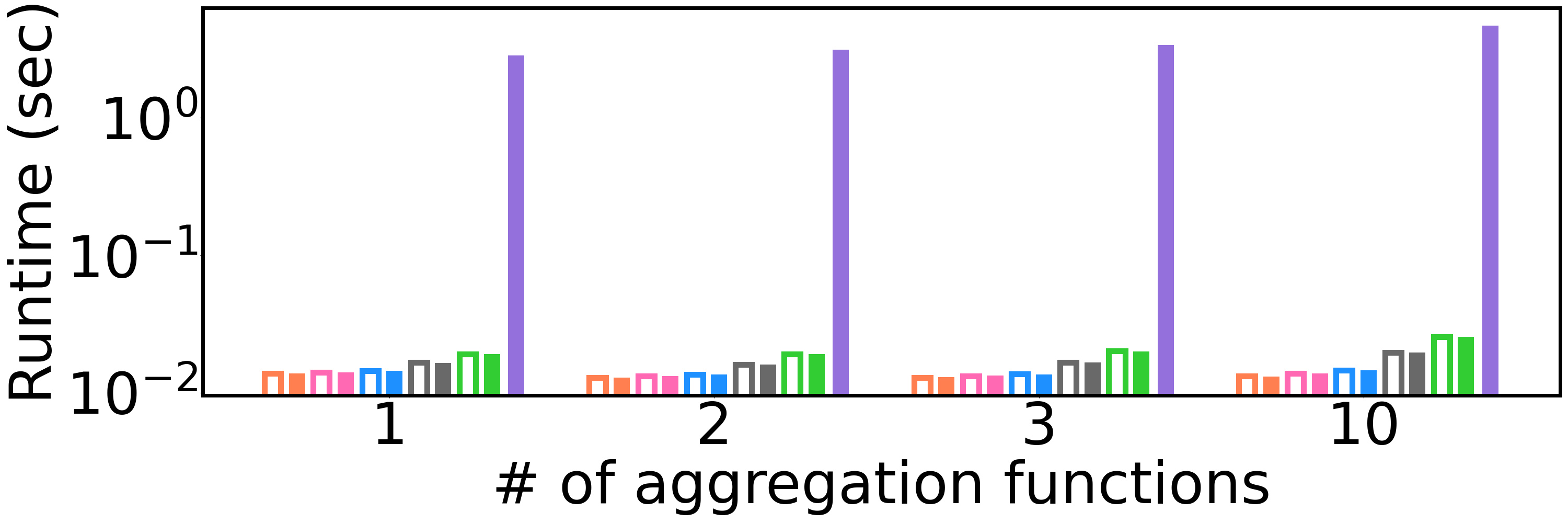}
      \vspace{-5mm} 
      \caption{\qhaving number of aggregation functions}
      \label{fig:syn-small-aggnum}
    \end{subfigure}
    \begin{subfigure}{0.33\textwidth}
      \includegraphics[width=1.0\textwidth, trim=0 0 0 0, clip]{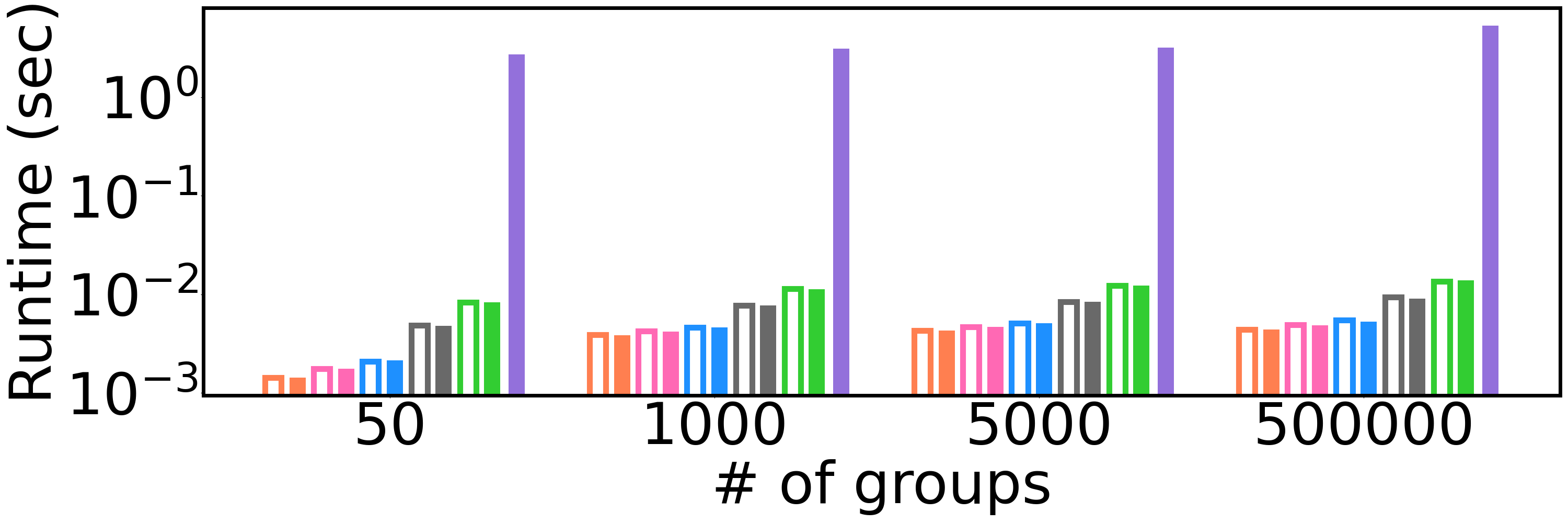}
      \vspace{-5mm}
      \caption{\qgroups number of groups }
      \label{fig:syn-small-fixagg-diffgroups}
    \end{subfigure}
    \begin{subfigure}{0.33\textwidth}
      \includegraphics[width=1.0\textwidth, trim = 0 0 0 0, clip]{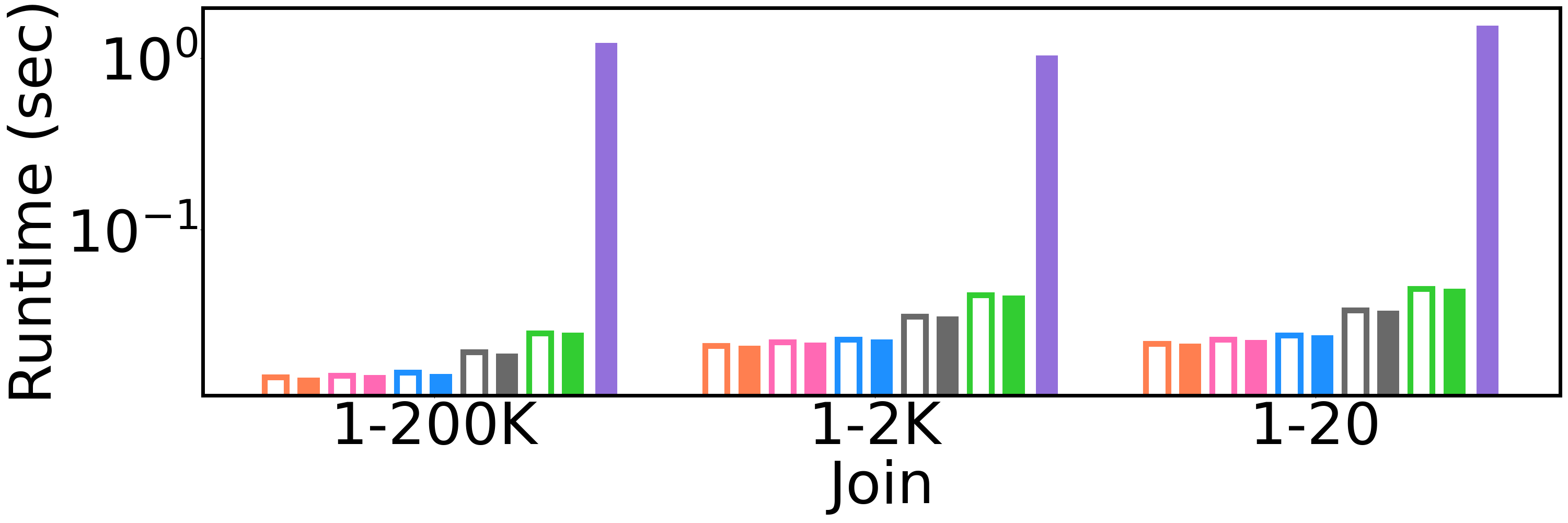}
      \vspace{-5mm}
      \caption{\qjoin 1-n join}
      \label{fig:syn-small-join-1-n}
    \end{subfigure}
  \end{minipage}
  \begin{minipage}{1.0\linewidth}
    \begin{subfigure}{0.33\textwidth}
      \includegraphics[width=1.0\textwidth, trim=0mm 0mm 0 0mm, clip]{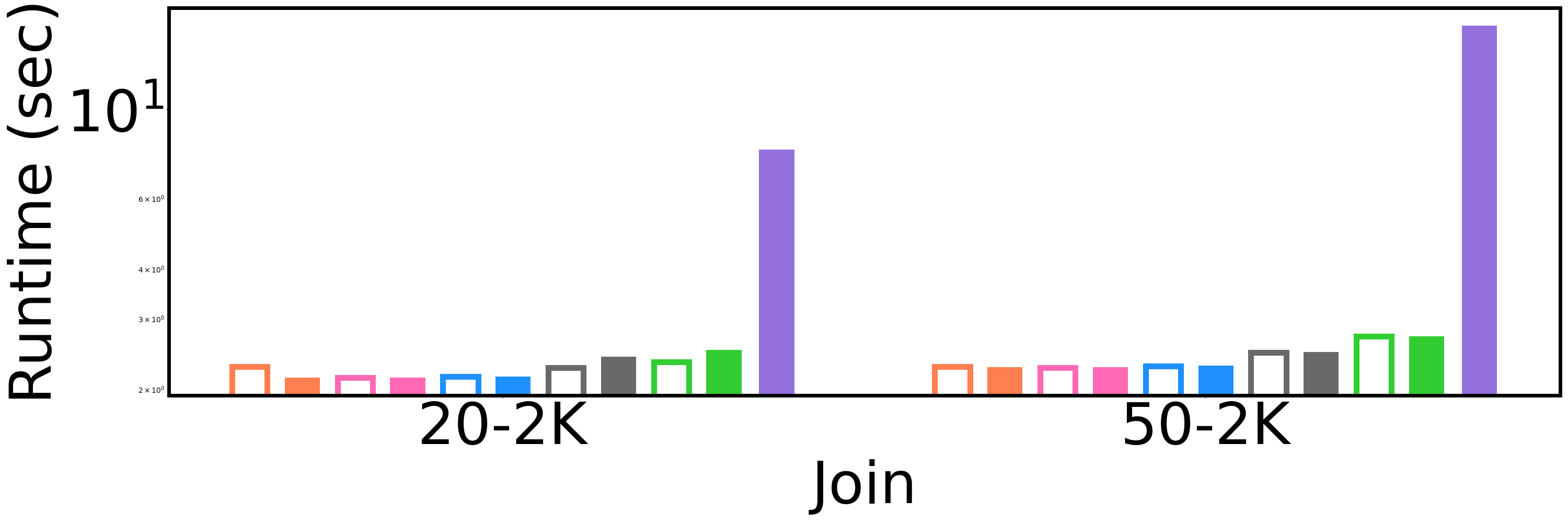}
      \vspace{-5mm}
      \caption{\qjoin m-n join}
      \label{fig:syn-small-join-m-n}
    \end{subfigure}
    \begin{subfigure}{0.33\textwidth}
      \includegraphics[width=1.0\textwidth, trim=0mm 0mm 0 0mm, clip]{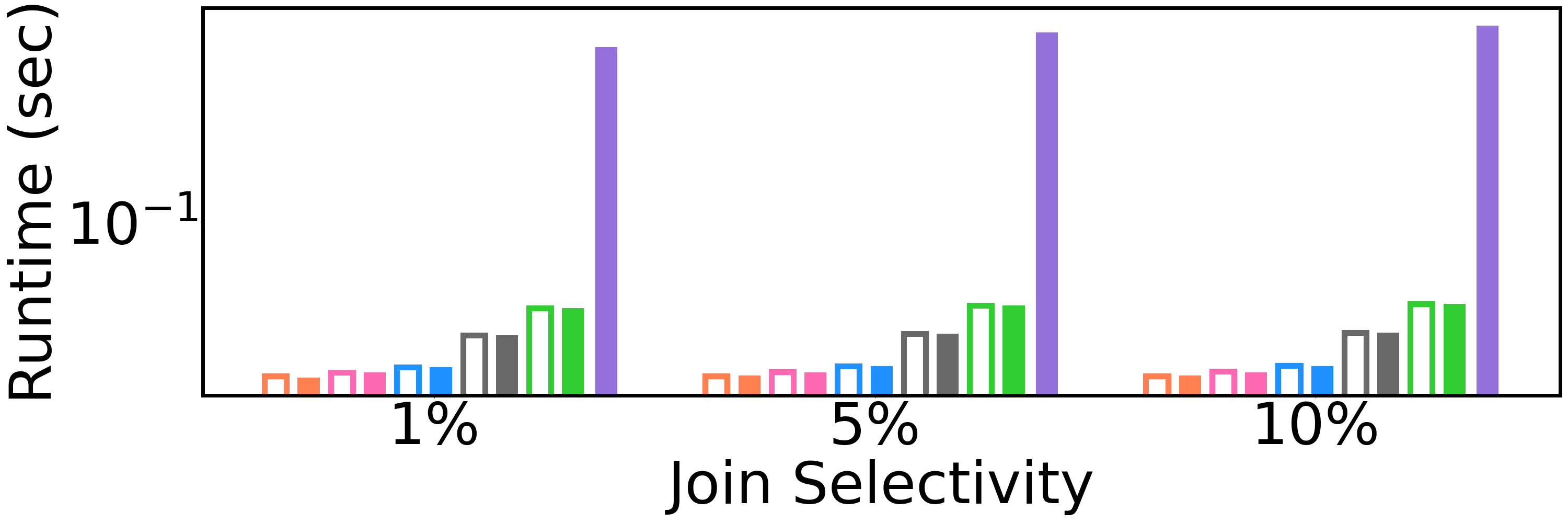}
      \vspace{-5mm}
      \caption{\qjoinsel varying join selectivity}
      \label{fig:syn-small-join-selectivity}
    \end{subfigure}
    \begin{subfigure}{0.33\textwidth}
      \includegraphics[width=1.0\textwidth, trim=0mm 0mm 0 0mm, clip]{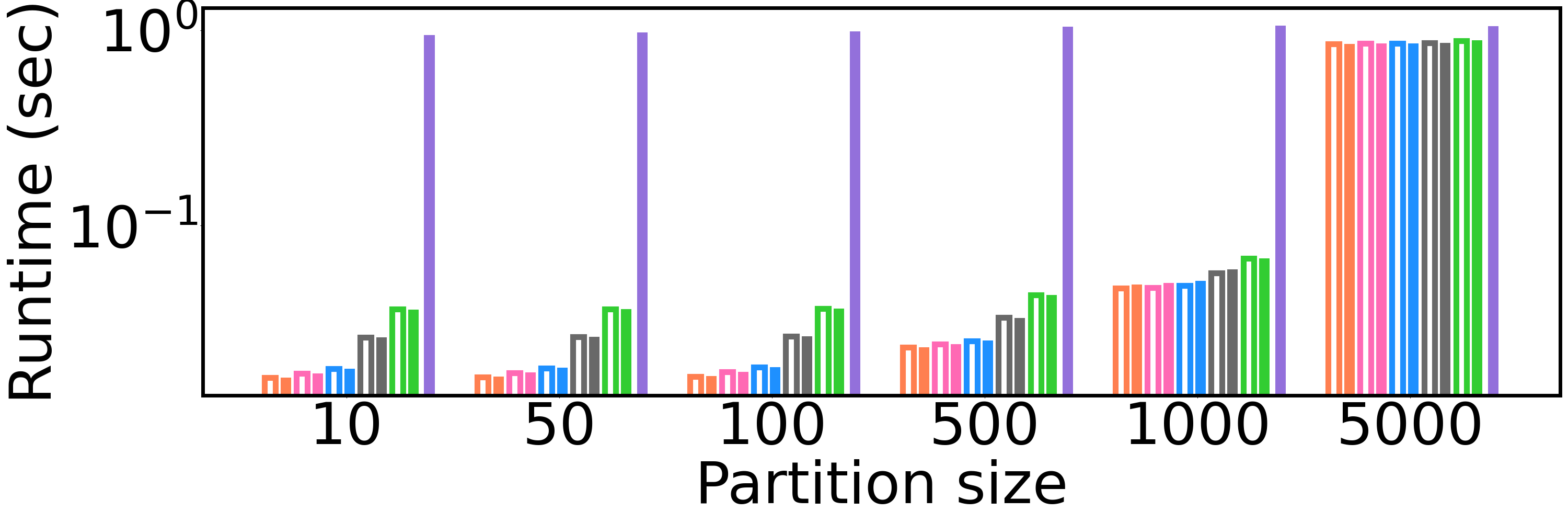}
      \vspace{-5mm}
      \caption{\qsketch, number of fragments of partition}
      \label{fig:syn-small-sketch-size}
    \end{subfigure}
    \vspace{-6mm}
    \caption{Microbenchmarks (``realistic'' delta size): varying delta size from 10 tuples to 1000 tuples.}\label{fig:micro-small-deltas}
  \end{minipage}
  \begin{minipage}{1.0\linewidth}
    \centering
    \includegraphics[width = 0.5\linewidth,trim=0mm 0mm 0mm 0mm,clip]{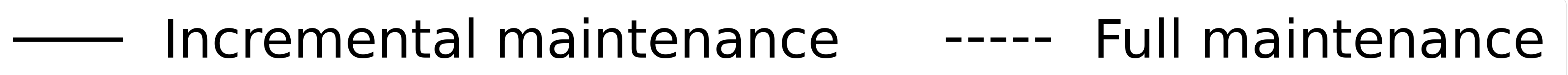}
  \end{minipage}\\[0mm]
  \begin{minipage}{1.0\linewidth}
    \begin{minipage}{0.33 \textwidth}
      \centering
      \begin{subfigure}{1\textwidth}
        \includegraphics[width=1.0\textwidth, trim=0mm 0mm 0mm 0mm, clip]{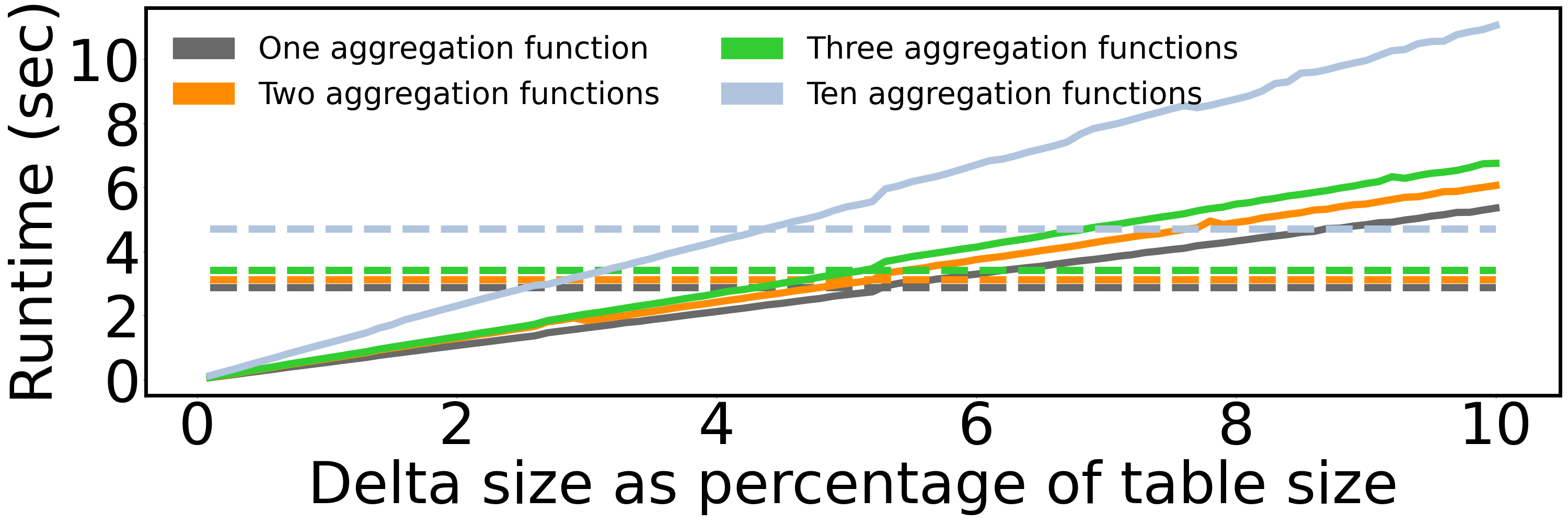}
        \vspace{-5mm}
        \caption{\qhaving}
        \label{fig:syn-large-aggnum}
      \end{subfigure}
    \end{minipage}
    \begin{minipage}{0.33 \textwidth}
      \centering
      \begin{subfigure}{1\textwidth}
        \includegraphics[width=1.0\textwidth, trim=0mm 0mm 0mm 0mm, clip]{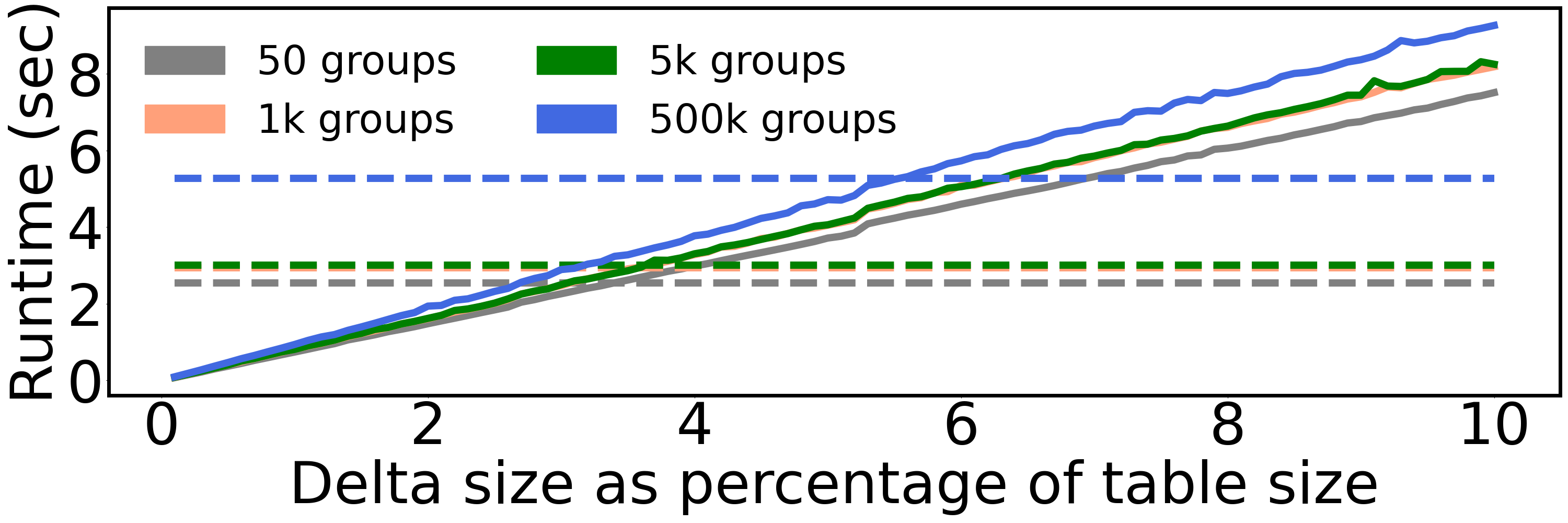}
        \vspace{-5mm}
        \caption{\qgroups}
        \label{fig:syn-large-fixagg-diffgroups}
      \end{subfigure}
    \end{minipage}
    \begin{minipage}{0.33 \textwidth}
      \centering
      \begin{subfigure}{1\textwidth}
        \includegraphics[width=1.0\textwidth, trim=0mm 0mm 0 0mm, clip]{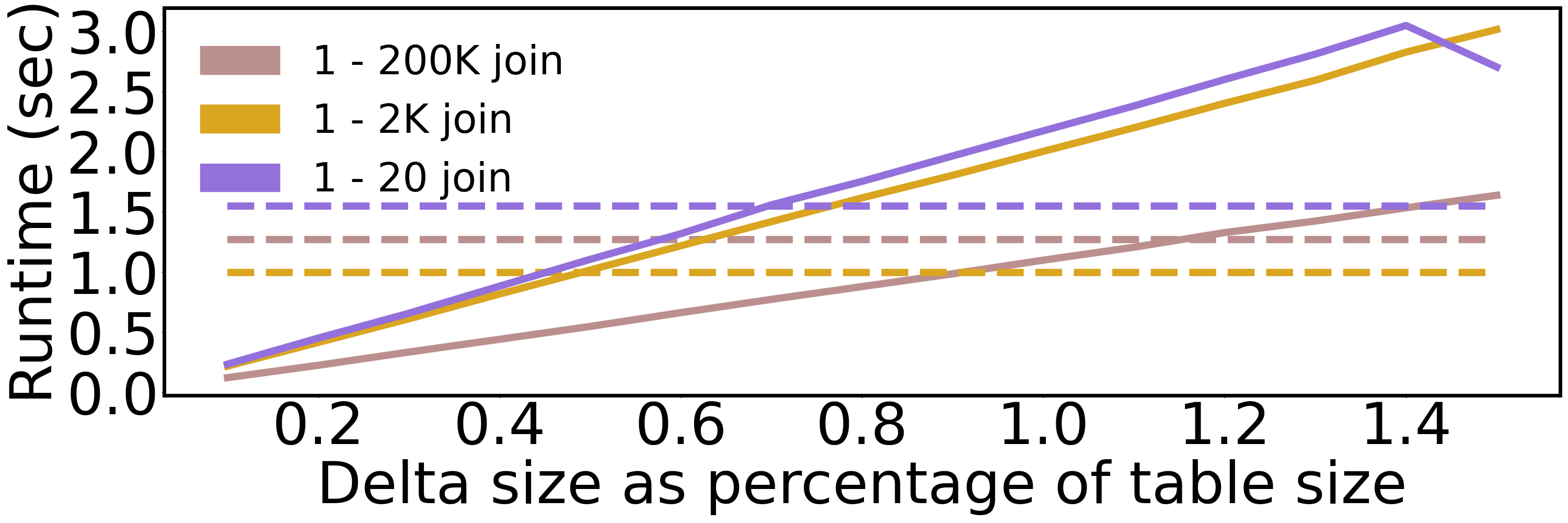}
        \vspace{-5mm}
        \caption{\qjoin (one join partner per tuple in $S$)}
        \label{fig:syn-large-join-1-n}
      \end{subfigure}
    \end{minipage}
  \end{minipage}
  \begin{minipage}{1.0\linewidth}
    \begin{minipage}{0.33 \textwidth}
      \centering
      \begin{subfigure}{1\textwidth}
        \includegraphics[width=1.0\textwidth, trim=0mm 0mm 0mm 0mm, clip]{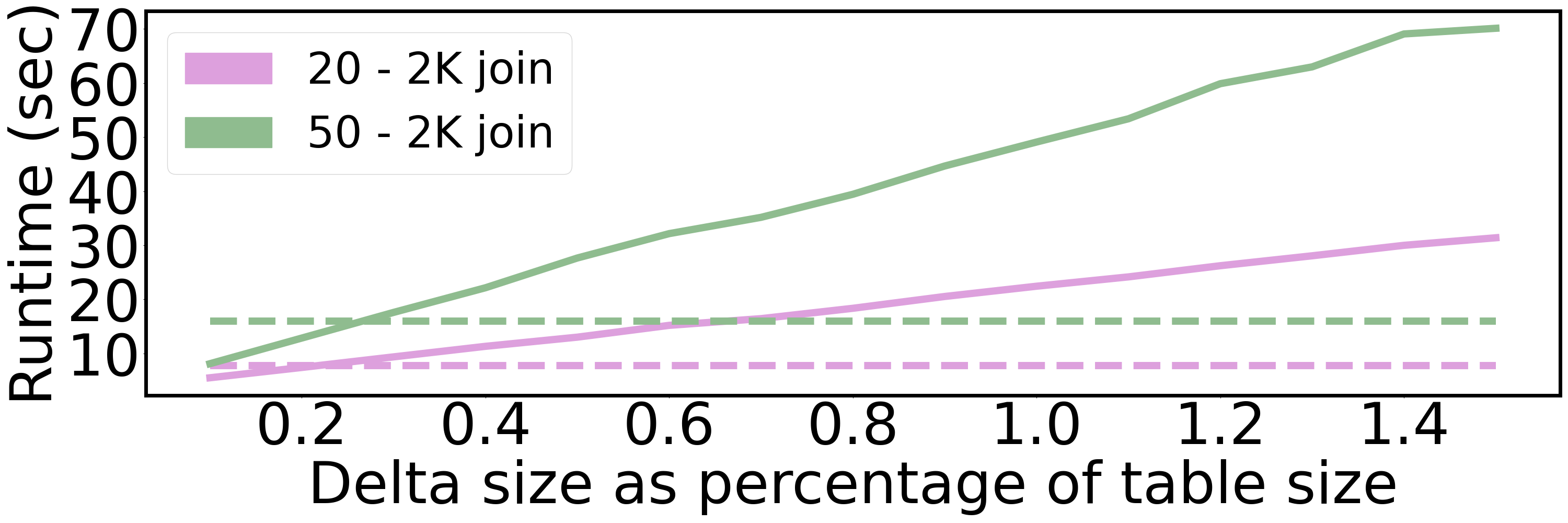}
        \vspace{-5mm}
        \caption{\qjoin (varying \#join partners per tuple in $S$)}
        \label{fig:syn-large-join-m-n}
      \end{subfigure}

    \end{minipage}
    \begin{minipage}{0.33 \textwidth}
      \centering
      \begin{subfigure}{1\textwidth}
        \includegraphics[width=1.0\textwidth, trim=0mm 0mm 0mm 0mm, clip]{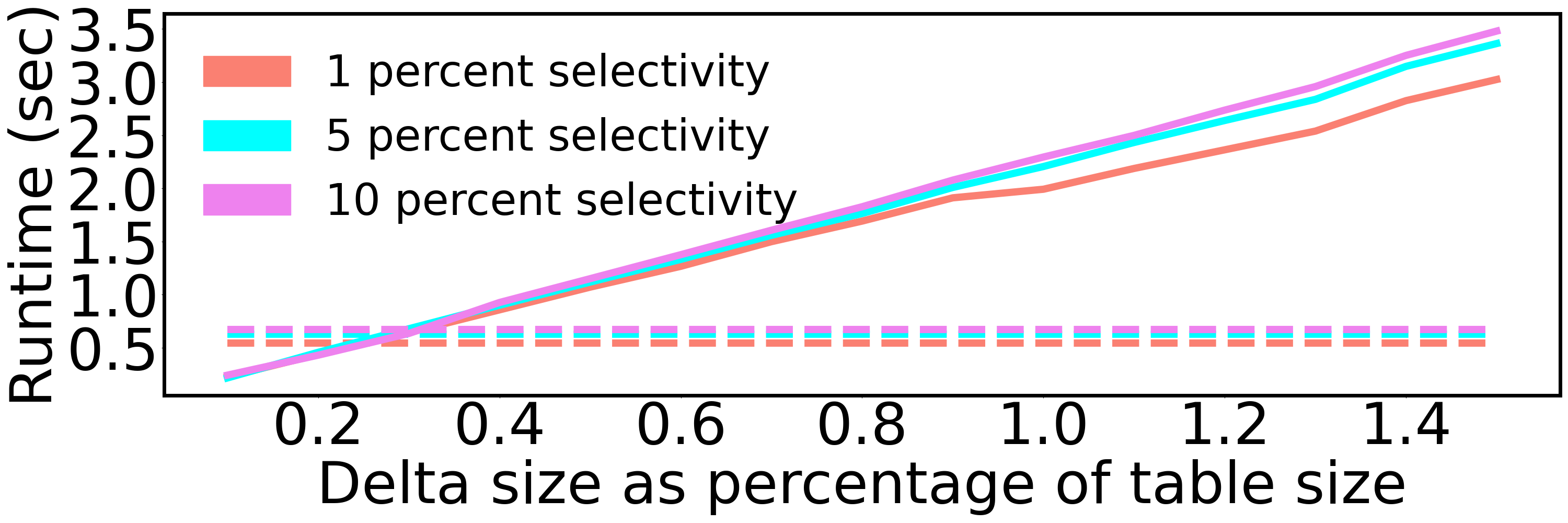}
        \vspace{-5mm}
        \caption{\qjoinsel}
        \label{fig:syn-large-join-selectivity}
      \end{subfigure}
    \end{minipage}
    \begin{minipage}{0.33 \textwidth}
      \centering
      \begin{subfigure}{1\textwidth}
        \includegraphics[width=1.0\textwidth, trim=0mm 0mm 0mm 0mm, clip]{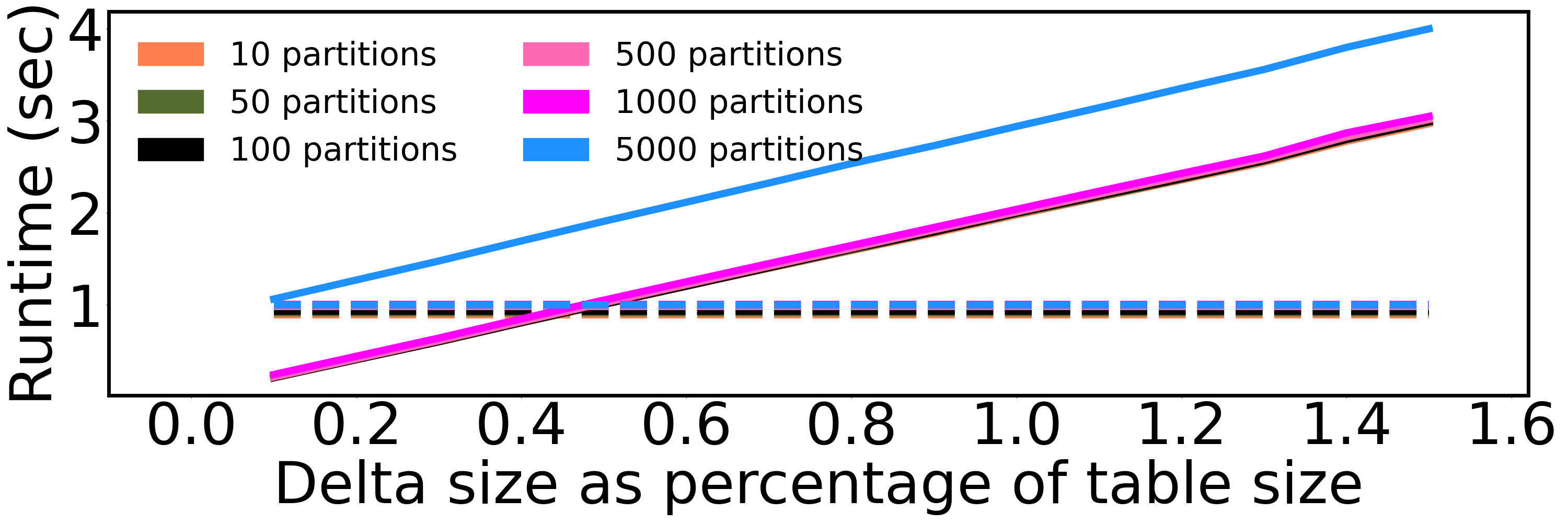}
        \vspace{-5mm}
        \caption{\qsketch}
        \label{fig:syn-large-sketch-size}
      \end{subfigure}
    \end{minipage}
    \vspace{-4mm}
    \caption{Microbenchmarks: varying delta size to determine the ``break even point'' where \fullmaintenance outperforms \impAbbr.}
    \label{fig:microbench-break-even}
  \end{minipage}
\end{figure*}


\ifnottechreport{
\subsubsection{Aggregation Functions and Groups}
\label{sec:experiment_micro_numofgroups}
}
\iftechreport{
\subsubsection{Number of groups}
\label{sec:experiment_micro_numofgroups}
}
\ifnottechreport{
We use query template \qgroups (see \cite{techreport}) that is a group-by aggregation query with \lstinline!HAVING! over a single table and
 vary the
number of groups: $50$, $1K$, $5K$ and $500K$.
As the state $\statedata$ for
aggregation contains an entry for each group, 
we expect that runtime will increase when increasing the number of groups. As shown in
\Cref{fig:syn-small-fixagg-diffgroups}, for delta sizes up to 1000 tuples,
\impAbbr outperforms \fullmaintenance by 2 (500k groups) to 3 (50 groups) \gls{oom}. \Cref{fig:syn-large-fixagg-diffgroups} shows that the break even
point 
lies at delta sizes between $\sim$3.5\% (for 50 group) and $\sim 5.5\%$ for
(500k groups).
While the runtime of \impAbbr increases when increasing the number of groups,
the effect is more pronounced for \fullmaintenance that calculates results for all groups.}
\iftechreport{
 We use a query \qgroups (SQL code shown in \Cref{sec:appendix_qlist}) that is group-by aggregation with \lstinline!HAVING! over a single table and
 vary the
number of groups: $50$, $1K$, $5K$ and $500K$.
As the data structures maintained by our approach for aggregation stores an
entry for each group and the number of groups affected by a delta also
depends on the number of groups, we expect that runtime will increase when
increasing the number of groups. As shown in
\Cref{fig:syn-small-fixagg-diffgroups}, for delta sizes up to 1000 tuples,
incremental maintenance outperforms \fullmaintenance by 2 (500k groups) to 3 (50 groups) \gls{oom}. \Cref{fig:syn-large-fixagg-diffgroups} shows that the break even
point (where \fullmaintenance starts to outperform incremental maintenance)
lies at delta sizes between $\sim$3.5\% (for 50 group) and $\sim 5.5\%$ for
(500k groups). While the runtime of \impAbbr increases when increasing the number of groups,
the effect is more pronounced for \fullmaintenance that calculates results for all groups.
}
\ifnottechreport{
In \Cref{fig:syn-small-aggnum} and \Cref{fig:syn-large-aggnum}
show the runtime of \impAbbr vs. \fullmaintenance for a group-by having query with 5k groups, varying the number of aggregation functions.
 \impAbbr outperforms \fullmaintenance by up to $\sim 2$ \gls{oom} for delta sizes  up to $\sim 5\%$ of the database.
}

\iftechreport{
\subsubsection{Number of aggregation functions}
\label{sec:experiment_micro_numofaggs}
In this experiment, we use a query \qhaving (see \Cref{sec:appendix_qlist}) which is an
group-by aggregation on a single table with filtering in the \lstinline!HAVING!
clause on the aggregation function result. The total number of groups is set to
5000 in this experiment. We vary the number of aggregation functions used in the
\lstinline!HAVING! condition as our approach has to maintain the results for
these aggregation. \Cref{fig:syn-small-aggnum} and \Cref{fig:syn-large-aggnum}
show the runtime of incremental vs. \fullmaintenance using both realistic delta
sizes and large deltas. The runtime of \impAbbr is linear in
the size of the delta for this query with a coefficient that
depends on the number of aggregation functions. For realistic delta sizes,
\impAbbr outperforms \fullmaintenance by $\sim 2$ \gls{oom}. As shown in \Cref{fig:syn-large-aggnum}, \impAbbr is
faster than \fullmaintenance for deltas of up to $\sim 5\%$ of the database.
}

\subsubsection{Joins}
\label{sec:experiment_micro_join}
\ifnottechreport{
We evaluate  group-by aggregation queries with
\lstinline!HAVING! over the result of an equi-join using query template \qjoin
(see \cite{techreport}). Both input tables have $10M$ rows. The synthetic tables
are designed as the follows: for an $m-n$ join $R \join S$, the selectivity is
$100\%$ for table $S$, and there are $10^8 / n$ distinct join attribute values
with a multiplicity of $n$; for the other table $R$, there are $m$ tuples that
join with each distinct join attribute value in $S$. For instance, the result
size for $2-2k$ as well as for $2-200k$ is $2 \cdot 10M = 20M$ tuples.

\Cref{fig:syn-small-join-1-n} and \Cref{fig:syn-large-join-1-n} show the
runtime of incremental vs. \fullmaintenance for $1-n$ joins, and
\Cref{fig:syn-small-join-m-n} and \Cref{fig:syn-large-join-m-n} show the
performance of both full and incremental approaches for $m-2K$ joins.
In the 1-n join experiment, the 1-20 join is more expensive than the 1-20k and 1-200k joins
because even through the 1-20 join has less join result tuples, there are more
groups for the aggregation functions above the join operator as the join attribute of
table $S$ is also the group-by attribute for the query. There are $10 ^8 / 20$
distinct groups and each group has a multiplicity of 20. For the m-n join, the
queries all have the same number of groups. 50-2k
has more join results to process and, thus, is slower than
20-2k join.

Recall from \Cref{sec:experiments-tpch}, that \impAbbr computes
$\Delta R \join S$ by running a SQL query. Thus, incrementally maintaining joins
requires sending all delta tuples for the join inputs to the DBMS. That is, the
break even point is lower for \qjoin than for \qhaving. Our bloom-filter
optimization for joins can sometimes avoid an additional round trip to the
database for those tuples that do not have the join partner. We further evaluate
this optimization in \Cref{sec:exp-optimizations}.
}

\iftechreport{
We evaluate the performance of group-by aggregation queries with
\lstinline!HAVING! over the result of an equi-join using query template \qjoin
(see \Cref{sec:appendix_qlist}). Both input tables have $10M$ rows. The synthetic tables
are designed as the follows: for an $m-n$ join $R \join S$, the selectivity is
$100\%$ for table $S$, and there are $10^8 / n$ distinct join attribute values
with a multiplicity of $n$; for the other table $R$, there are $m$ tuples that
join with each distinct join attribute value in $S$. For instance, the result
size for $2-2k$ as well as for $2-200k$ is $2 \cdot 10M = 20M$ tuples.
\Cref{fig:syn-small-join-1-n} and \Cref{fig:syn-large-join-1-n} show the
runtime of \impAbbr vs. \fullmaintenance for $1-n$ joins, and
\Cref{fig:syn-small-join-m-n} and \Cref{fig:syn-large-join-m-n} show the
results for $m-2K$ joins.
In the 1-n join experiment, the 1-20 join is more expensive than the 1-20k and 1-200k joins because even through the 1-20 join produces less results, there are more
groups for the aggregation functions above the join operator as the join attribute of
table $S$ the group-by attribute. There are $10 ^8 / 20$
distinct groups and each group has a multiplicity of 20. For the m-n join, the
queries all have the same number of groups. For 50-2k
 more join results have to be produced and, thus, 50-2k is slower than
20-2k.
Recall from \Cref{sec:experiments-tpch}, that \impAbbr computes
$\Delta R \join S$ by running a SQL query. Thus, incrementally maintaining joins
requires sending delta tuples for the join inputs to the DBMS, resulting in a lower break even point for \qjoin than for \qhaving. 
}
%
\ifnottechreport{
To evaluate performance of queries with more selective joins, we use a
 group-by-aggregation query over a join: \qjoinsel ($R$ join $S$, see
 \cite{techreport}) and vary join selectivity: $1\%$, $5\%$, and
$10\%$. 
\Cref{fig:syn-small-join-selectivity} and
\Cref{fig:syn-large-join-selectivity} show the runtime of \impAbbr and \fullmaintenance. For small deltas,
the join selectivity has a smaller impact on \impAbbr
than for larger deltas as for small deltas we are joining a small
 table ($\annoDeltaRel$) with a large table ($\mathscr{S}$), i.e., the bottleneck is
scanning the large table.
}

\iftechreport{
\subsubsection{Join selectivity}
\label{sec:experiment_micro_joinselectivity}
    To evaluate performance of queries with more selective joins, we use query
template group-by-aggregation over join: \qjoinsel ($R$ join $S$) (see
 \Cref{sec:appendix_qlist}), to evaluate different join selectivity: $1\%$, $5\%$, and
$10\%$. We control the join attribute values in table $S$ (this attribute joins
with another attribute in table $R$) such that only certain percentage of tuples
join. \Cref{fig:syn-small-join-selectivity} and
\Cref{fig:syn-large-join-selectivity} show the runtime of incremental and full
maintenance using both realistic delta size and large deltas. For small deltas,
the selectivity of the join has less of an impact on \impAbbr
than for large delta sizes. This is due to the fact that for small deltas we are joining a small
 table ($\annoDeltaRel$) with a large table ($\mathscr{S}$), i.e., the bottleneck is
scanning the large table.
}

\iftechreport{
\begin{figure*}[h]
  \begin{minipage}{1.0\linewidth}
    \begin{subfigure}{0.33\textwidth}
       \includegraphics[width=1.0\textwidth]{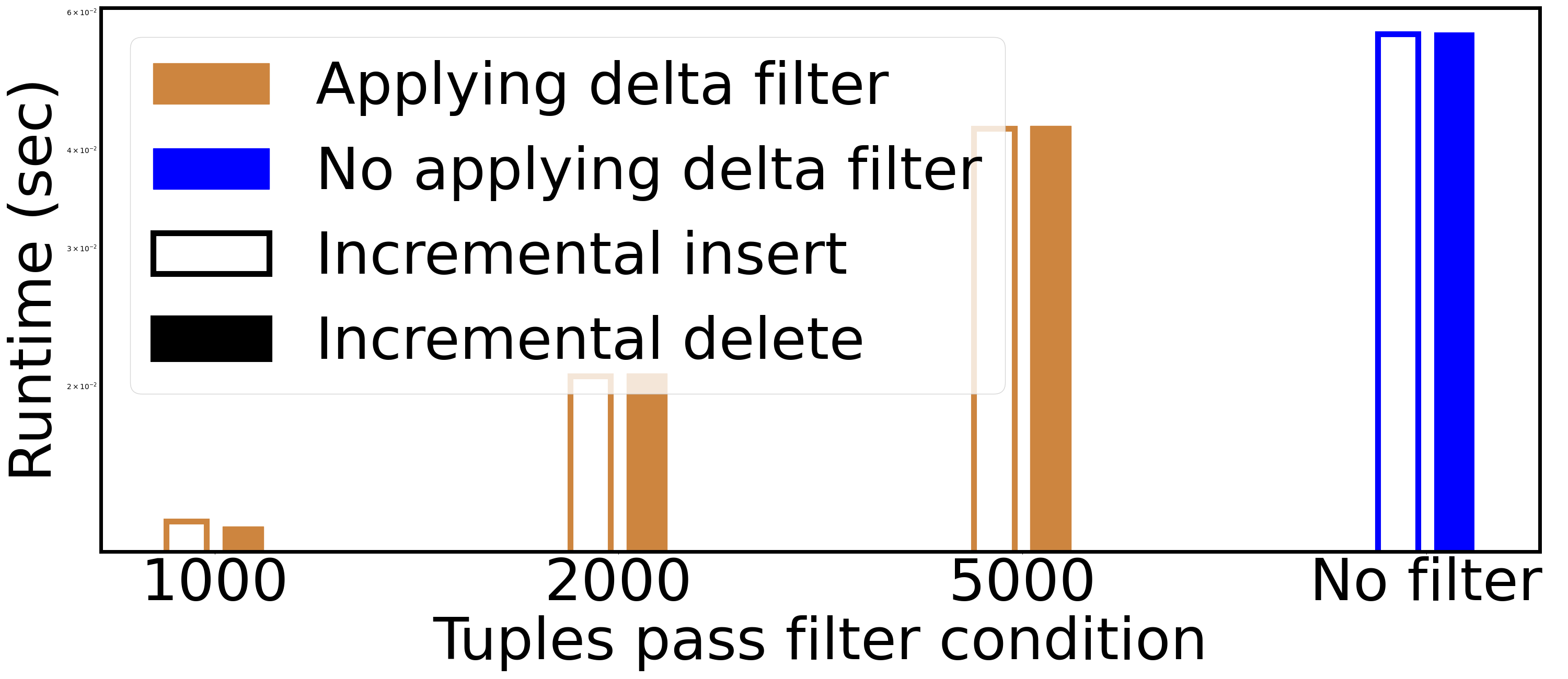}
       \caption{Optimization: filter delta}
       \label{fig:syn-small-selpd}
     \end{subfigure}
    \begin{subfigure}{0.33\textwidth}
      \includegraphics[width=1.0\textwidth]{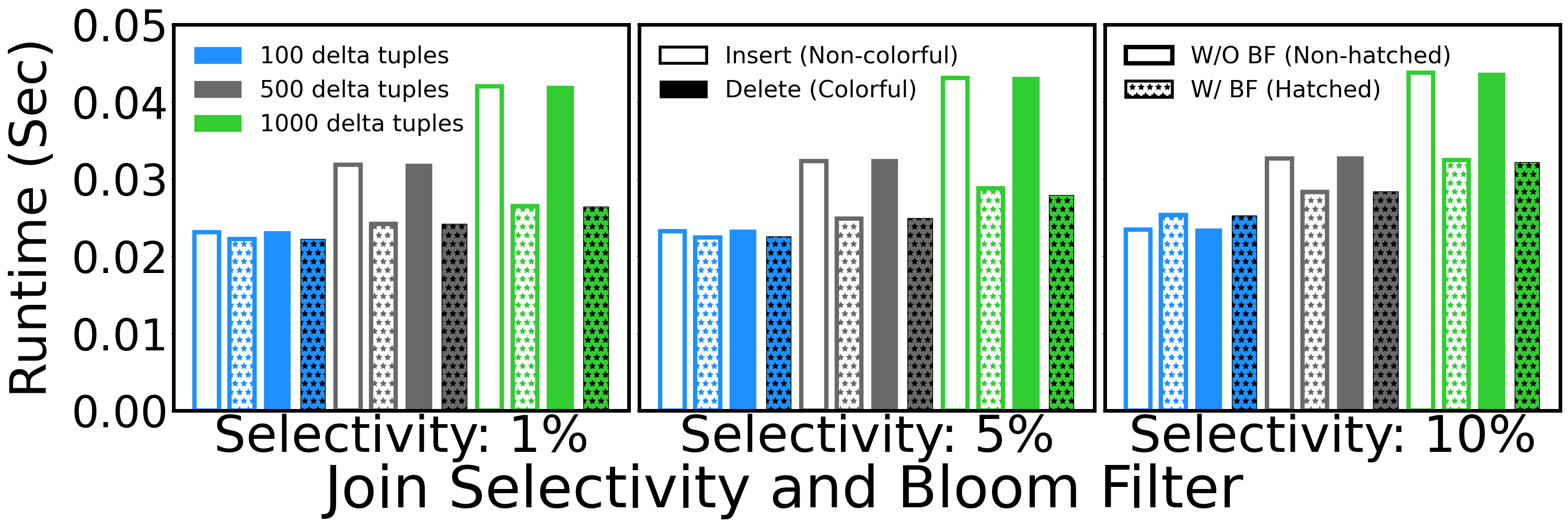}
      \vspace{-6mm}
       \caption{\qbloom overview}
      \label{fig:syn-small-bloom}
     \end{subfigure}
    \begin{subfigure}{0.33\textwidth}
      \includegraphics[width=1.0\textwidth]{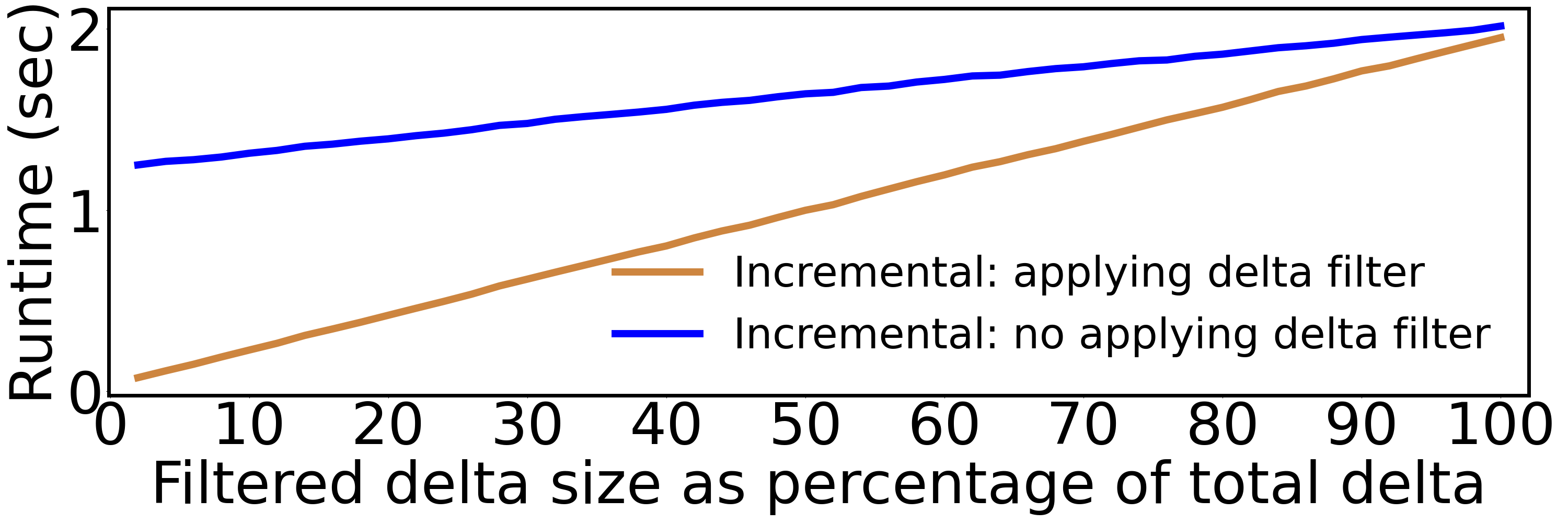}
       \vspace{-6mm}
       \caption{\qselpd}
      \label{fig:syn-large-selpd}
     \end{subfigure}
  \end{minipage}
  \begin{minipage}{1.0\linewidth}
    \begin{subfigure}{0.33\textwidth}
      \includegraphics[width=1.0\textwidth]{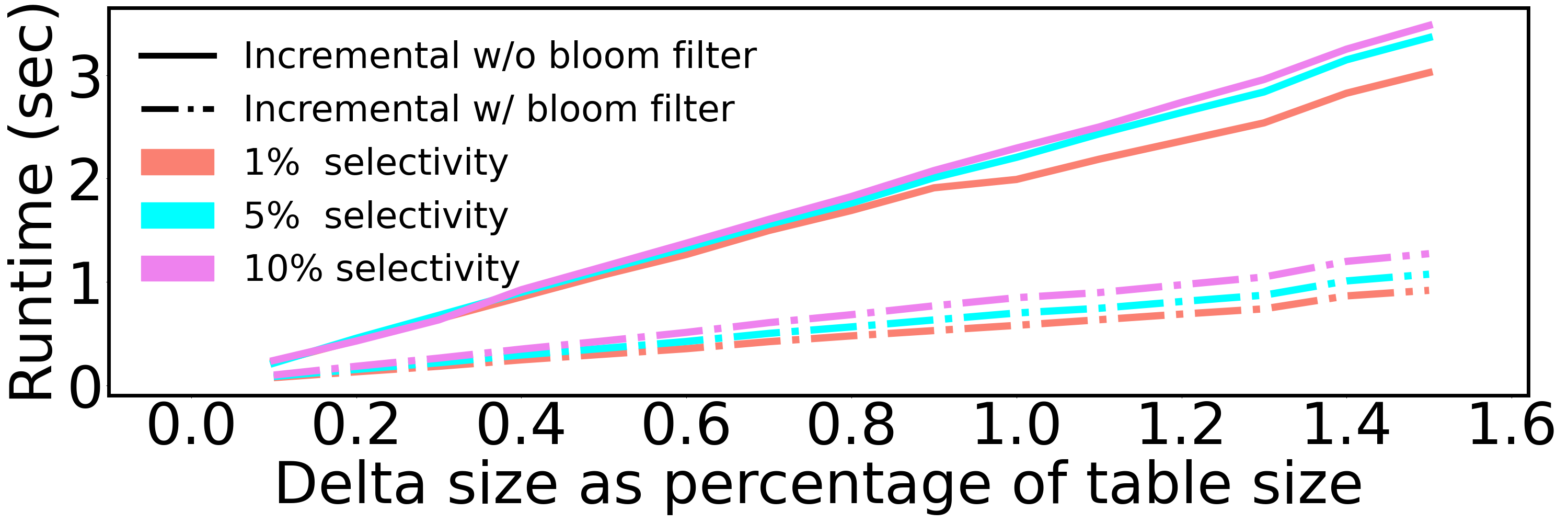}
       \vspace{-6mm}
       \caption{\qbloom}
      \label{fig:syn-large-bloom}
     \end{subfigure}
    \begin{subfigure}{0.33\textwidth}
      \includegraphics[width=1.0\textwidth]{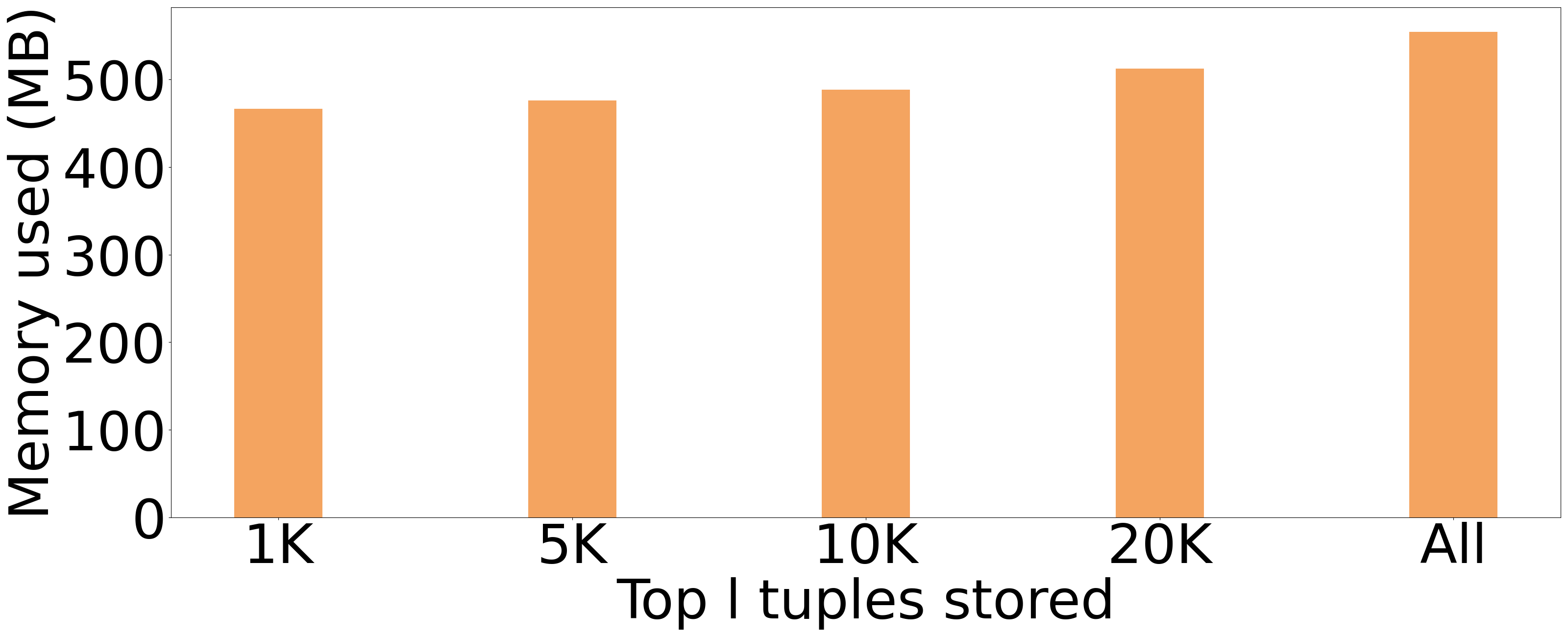}
      \vspace{-6mm}
      \caption{\qspace 1 GB}
      \label{fig:tpch1g-space-opt}
    \end{subfigure}
    \begin{subfigure}{0.33\textwidth}
      \includegraphics[width=1.0\textwidth]{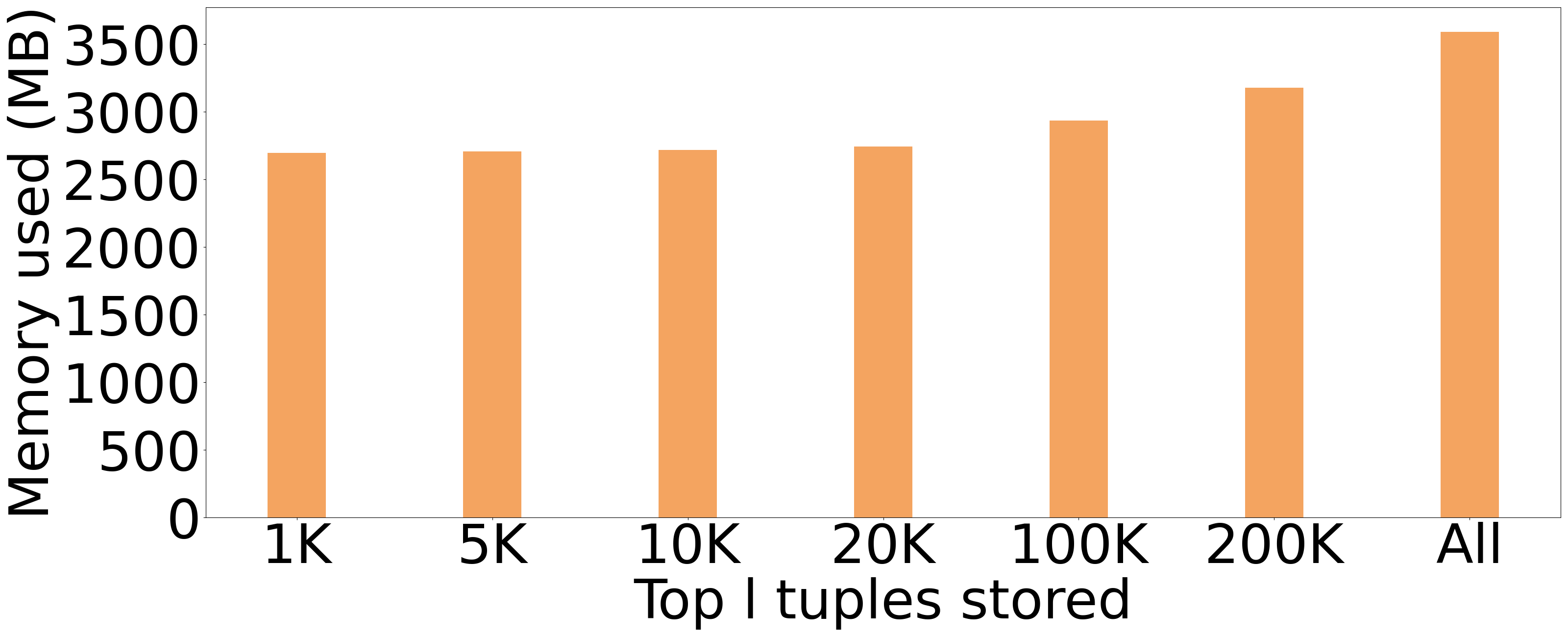}
      \vspace{-6mm}
      \caption{\qspace 10 GB}
      \label{fig:tpch10g-space-opt}
    \end{subfigure}
  \end{minipage}
   \vspace{-5mm}
  \caption{Optimizations in \impAbbr: filter delta, bloom filter and space optimization}
  \label{fig:optimization}
\end{figure*}
}
\ifnottechreport{
\begin{figure*}[h]
  \begin{minipage}{1.0\linewidth}
    \begin{subfigure}{0.33\textwidth}
      \includegraphics[width=1.0\textwidth]{figs/large_selpd.png}
      \vspace{-5mm}
       \caption{Optimization: filter delta}
      \label{fig:syn-large-selpd}
     \end{subfigure}
    \begin{subfigure}{0.33\textwidth}
      \includegraphics[width=1.0\textwidth]{figs/small_bloom_cmp_REVISION.png}
       \vspace{-5mm}
       \caption{Optimization: bloom filter realistic delta size}
       \label{fig:syn-small-bloom}
     \end{subfigure}
    \begin{subfigure}{0.33\textwidth}
      \includegraphics[width=1.0\textwidth]{figs/large_BFCMP.png}
       \vspace{-5mm}
       \caption{Optimization: bloom filter large delta size}
      \label{fig:syn-large-bloom}
     \end{subfigure}
  \end{minipage}
   \vspace{-3mm}
  \caption{Optimizations in \impAbbr: filtering deltas based on selection conditions  and using bloom filters for joins.}
  \label{fig:optimization}
\end{figure*}
}


\subsubsection{Varying Partition Granularity}
\label{sec:experiment_micro_numPar}
\ifnottechreport{
We now vary \numfrag, the number of fragments in the sketch's partition. We use template \qsketch (see
\cite{techreport}) which is a group-by aggregation query with
\lstinline!HAVING! over the results of a join. We vary the number of fragments
\numfrag from $10$ to $5000$. \Cref{fig:syn-small-sketch-size} and
\Cref{fig:syn-large-sketch-size} show the runtime for \impAbbr and \fullmaintenance. While the cost of \fullmaintenance is impacted by \numfrag, the dominating cost is evaluating the
full capture query, resulting in an insignificant runtime increase when \numfrag
is increased. In contrast, incremental maintenance cost increases linearly in
the delta size. 
}

\iftechreport{
In this experiment, we vary \numfrag, the number of fragments of the partition
$\dbranges$ from $10$ to $5,000$. We use query template \qsketch (SQL
code shown in \Cref{sec:appendix_qlist}) which is a group-by aggregation query with
\lstinline!HAVING! over the results of a join. 
\Cref{fig:syn-small-sketch-size} and
\Cref{fig:syn-large-sketch-size} show the runtime for \impAbbr and \fullmaintenance. While the cost of \fullmaintenance is impacted by \numfrag, the dominating cost is evaluating the
full capture query, resulting in an insignificant runtime increase when \numfrag
is increased. In contrast, \impAbbr maintenance cost increases linearly in
the  delta size and the cost paid per tuple is roughly linear
in \numfrag. 
}

%
\subsection{Optimizations}\label{sec:exp-optimizations}

\subsubsection{Selection push-down for deltas}
\label{sec:experiment_opt_selpd}
\ifnottechreport{
 We evaluate the effectiveness of our delta selection
push-down optimization that filters the delta based on selection conditions
in the query. We use query \qselpd which is a group-by aggregation query (see \cite{techreport}) and vary
the selectivity of the query's \lstinline!WHERE! clause. We fix delta size to $2.5\%$ of the table and vary the
fraction of delta tuples that fulfills the condition from $2\%$ to $100\%$.
The results (\Cref{fig:syn-large-selpd}) demonstrate 
that the cost of filtering delta tuples is amortized by reducing maintenance cost and communication with the DBMS.
}
\iftechreport{
In this experiment, we evaluate the effectiveness of our delta selection
push-down optimization that pre-filters the delta based on selection conditions
in the query. We use the query \qselpd which is a group-by aggregation query
without joins (SQL code is shown in \Cref{sec:appendix_qlist}). We evaluate the
performance of \impAbbr with and without delta filtering, varying
the selectivity of the query's selection condition (\lstinline!WHERE! clause).
We fix the delta size to $2.5\%$ of the table. Then we gradually increase the
fraction of the delta that fulfills the selection conditions from $2\%$ to $100\%$.
\Cref{fig:syn-large-selpd} shows results of this experiment. 
The results show that the runtime of filtering deltas increases linearly in the
selectivity of the query's selection condition. Even for larger selectivities,
the cost of filtering delta tuples is amortized by reducing the cost of
\impAbbr (and communication between \impAbbr and that database).
Thus, this optimization should be applied whenever possible.
}

\subsubsection{Join optimization using bloom filters}
\label{sec:experiment_opt_joinBF}
\ifnottechreport{
Another optimization we applied in \impAbbr engine is to use bloom filters
 to track which tuples potentially have join partners. 
As in the microbenchmarks for join, we use
query \qjoinsel (see \cite{techreport}). As shown in \Cref{fig:syn-small-bloom,fig:syn-large-bloom},
filtering the delta using bloom filters is effective for all delta sizes, even for larger selectivity, due to (i) the reduction in
data transfer between \impAbbr and the database, (ii) the reduction of the input
size for the query evaluating $\annoDeltaRel \join \makeAnno{S}$ by reducing the
size of \annoDeltaRel, and (iii) reducing the input size for incremental
operators. As show in \Cref{fig:syn-large-bloom}, the bloom filter optimization is effective for both low and high selectivity and across all delta sizes we have tested.
}
\iftechreport{
\impAbbr uses bloom filters for
each join to track which tuples potentially have join partners. This enables us
to avoid evaluating a join $\annoDeltaRel \join \makeAnno{S}$
($\annoRel \join \makeAnnoDelta{S}$) when we can determine
that no tuples in the delta have any join partners or
at least to reduce delta size. We again use
 \qjoinsel (group-by aggregation with
\lstinline!HAVING! over a join result). \Cref{fig:syn-small-bloom} shows
the runtime of \impAbbr applying bloom filter for joins varying
selectivity and delta size. \Cref{fig:syn-large-bloom} shows runtime
of \impAbbr for large delta sizes. The result shows that
filtering the delta using bloom filters is effective due to (i) the reduction in
data transfer between \impAbbr and the database, (ii) the reduction of the input
size for the query evaluating $\annoDeltaRel \join \makeAnno{S}$ by reducing the
size of \annoDeltaRel, and (iii) reducing the input size for incremental
operators. As show in \Cref{fig:syn-large-bloom}, bloom filter optimization is effective for both low and high selectivity and across all delta sizes we have tested.
}
%
\begin{takeaway}
  \impAbbr's performance is mainly impacted by  delta size. As join requires a round trip to the databases, queries with join are typically more expensive. Our bloom filter optimization reduces this cost. Nonetheless, \impAbbr significantly outperforms \fullmaintenance.  
\end{takeaway}

\iftechreport{
\begin{figure*}[h]
\begin{minipage}{1.0\linewidth}
\includegraphics[width=1.0\textwidth, trim=0  0 0 0,clip, scale=0.2]{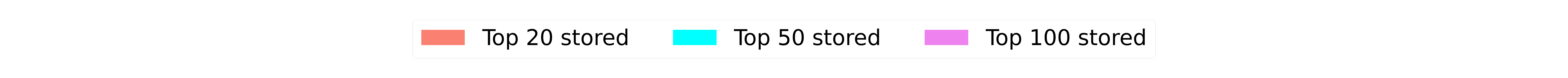}
\vspace{-5mm}
\end{minipage}
\begin{minipage}{1.0\linewidth}
  \begin{subfigure}{0.24\textwidth}
	\includegraphics[width=1.0\textwidth, trim=0  0 0 0,clip]{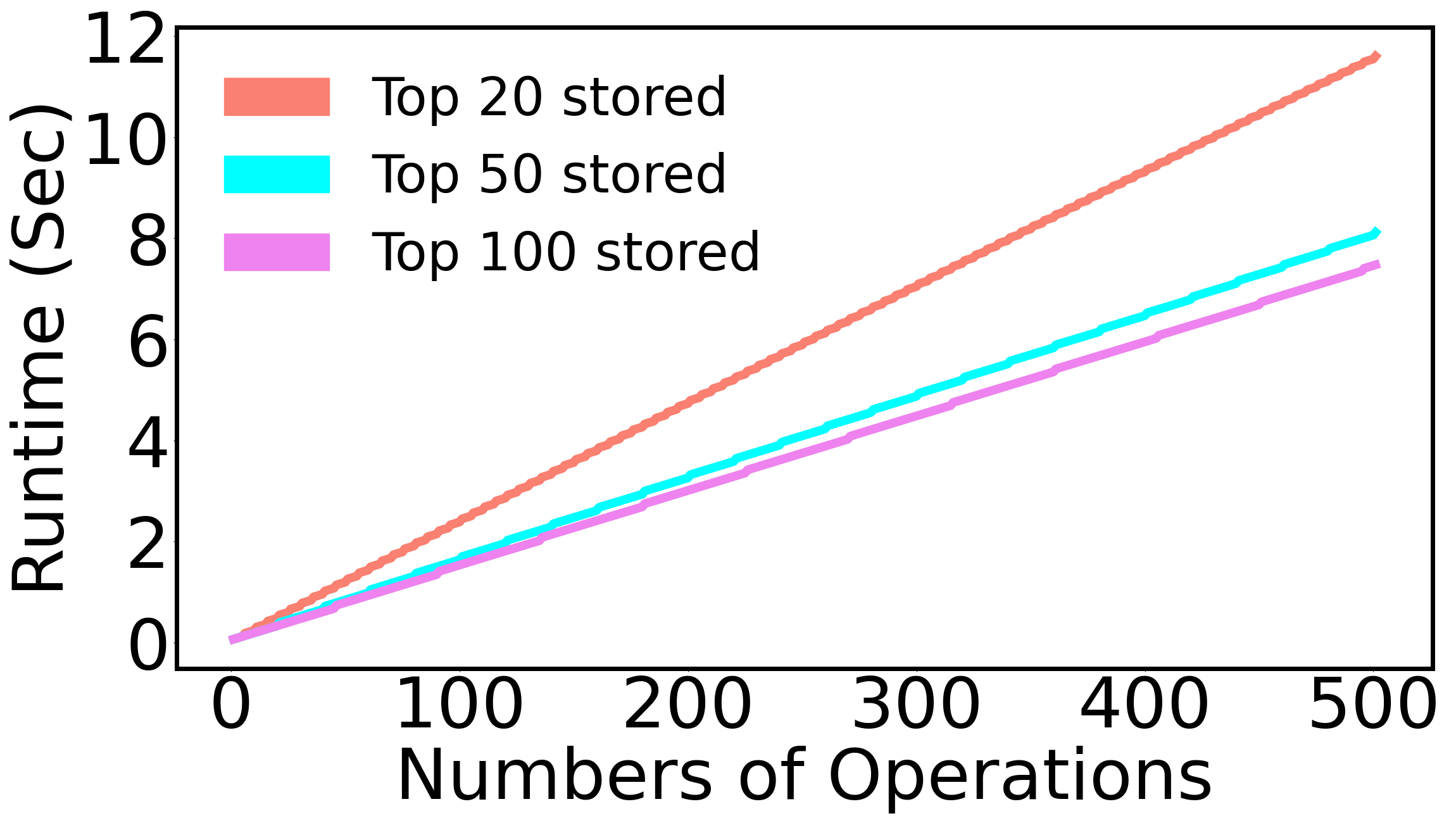}
    \vspace{-5mm}
    \caption{\qtopk delete minimal group}
    \label{fig:tk-del-min-runtime}
  \end{subfigure}
  \begin{subfigure}{0.24\textwidth}
	\includegraphics[width=1.0\textwidth, trim=0  0 0 0,clip]{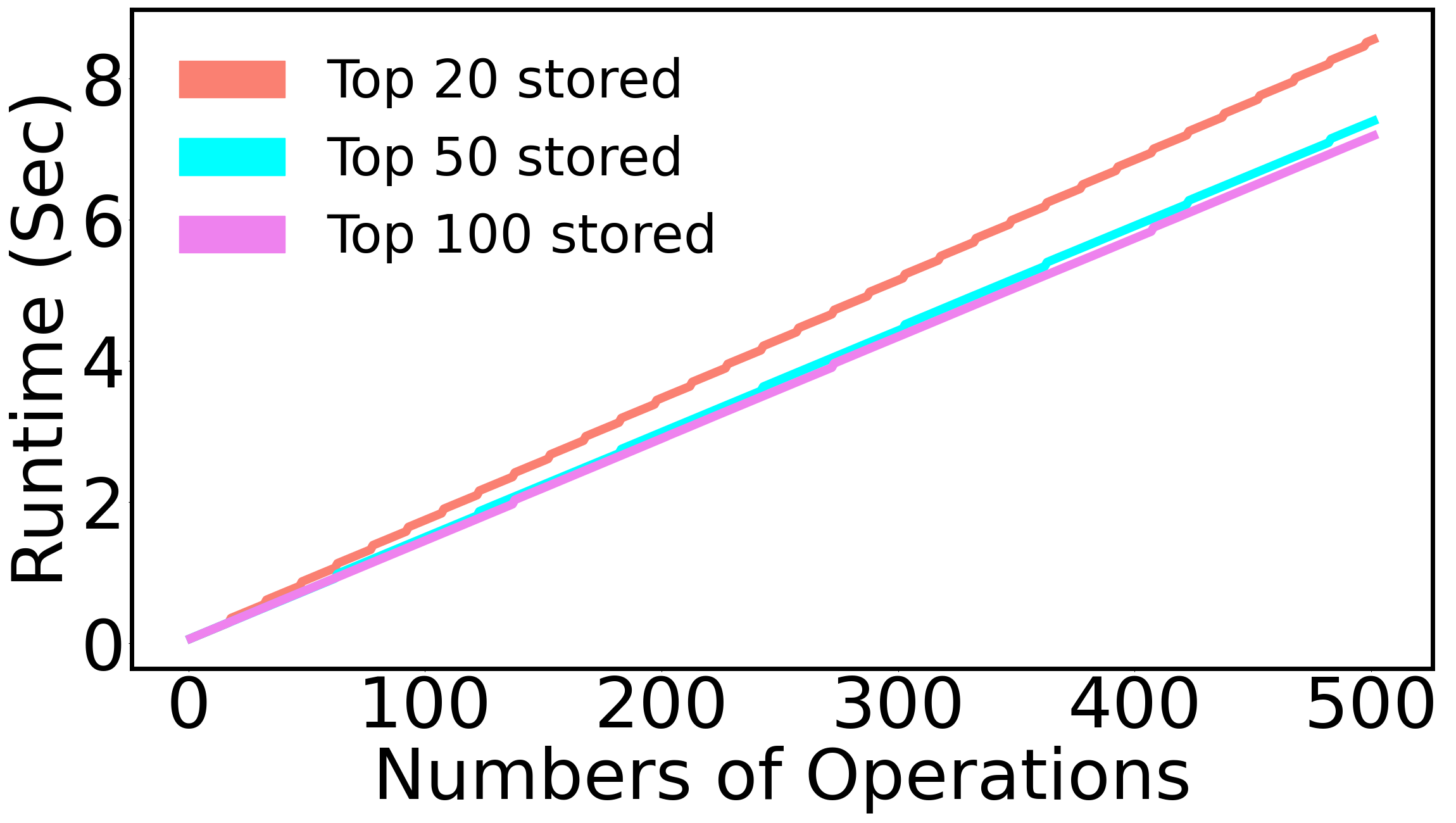}
    \vspace{-5mm}
    \caption{\qtopk ratio 2:1}
    \label{fig:tk-ratio-2-1-runtime}
  \end{subfigure}
  \begin{subfigure}{0.24\textwidth}
	\includegraphics[width=1.0\textwidth, trim=0  0 0 0,clip]{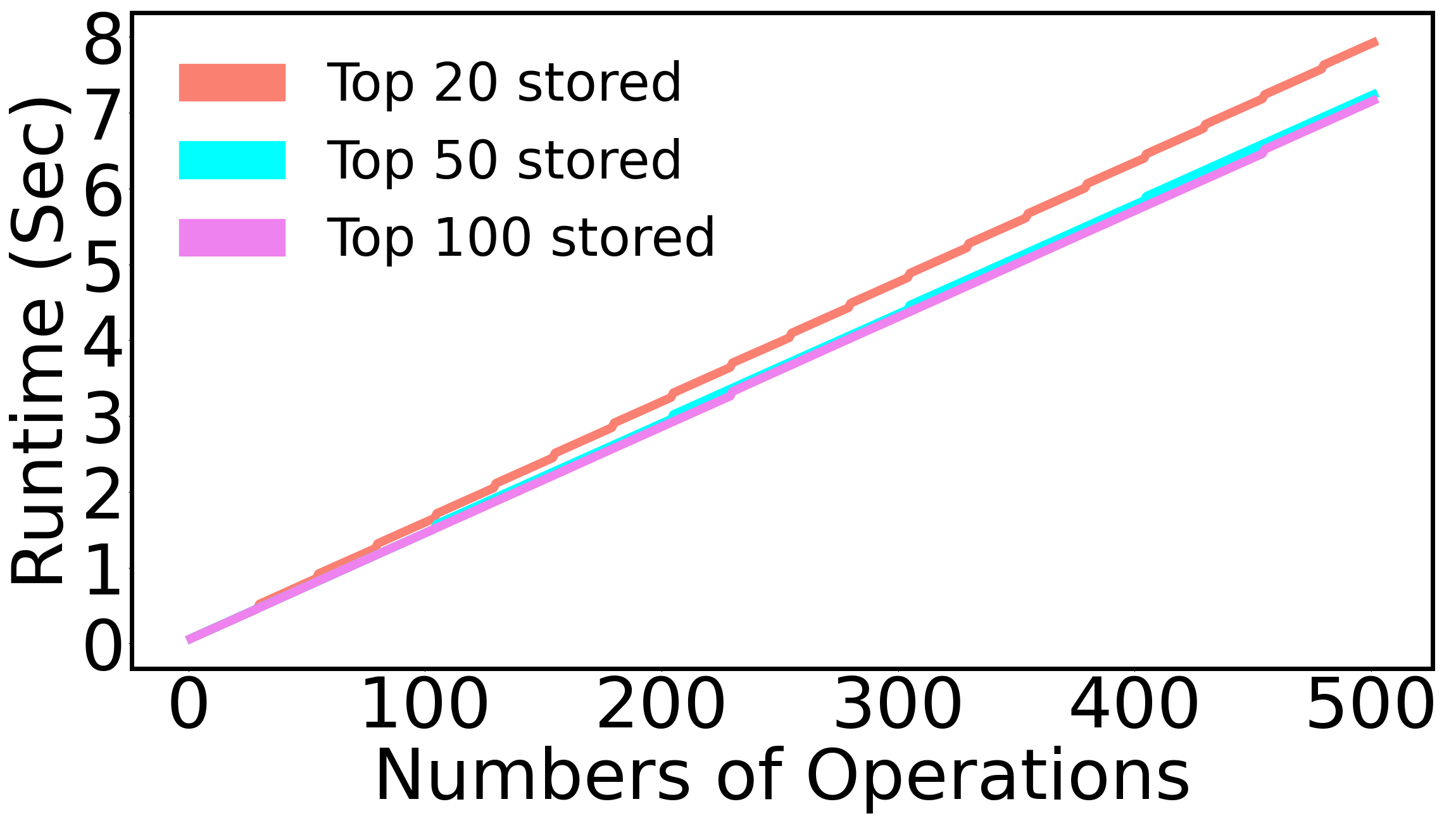}
    \vspace{-5mm}
    \caption{\qtopk ratio 4:1}
    \label{fig:tk-ratio-4-1-runtime}
  \end{subfigure}
  \begin{subfigure}{0.24\textwidth}
	\includegraphics[width=1.0\textwidth, trim=0  0 0 0,clip]{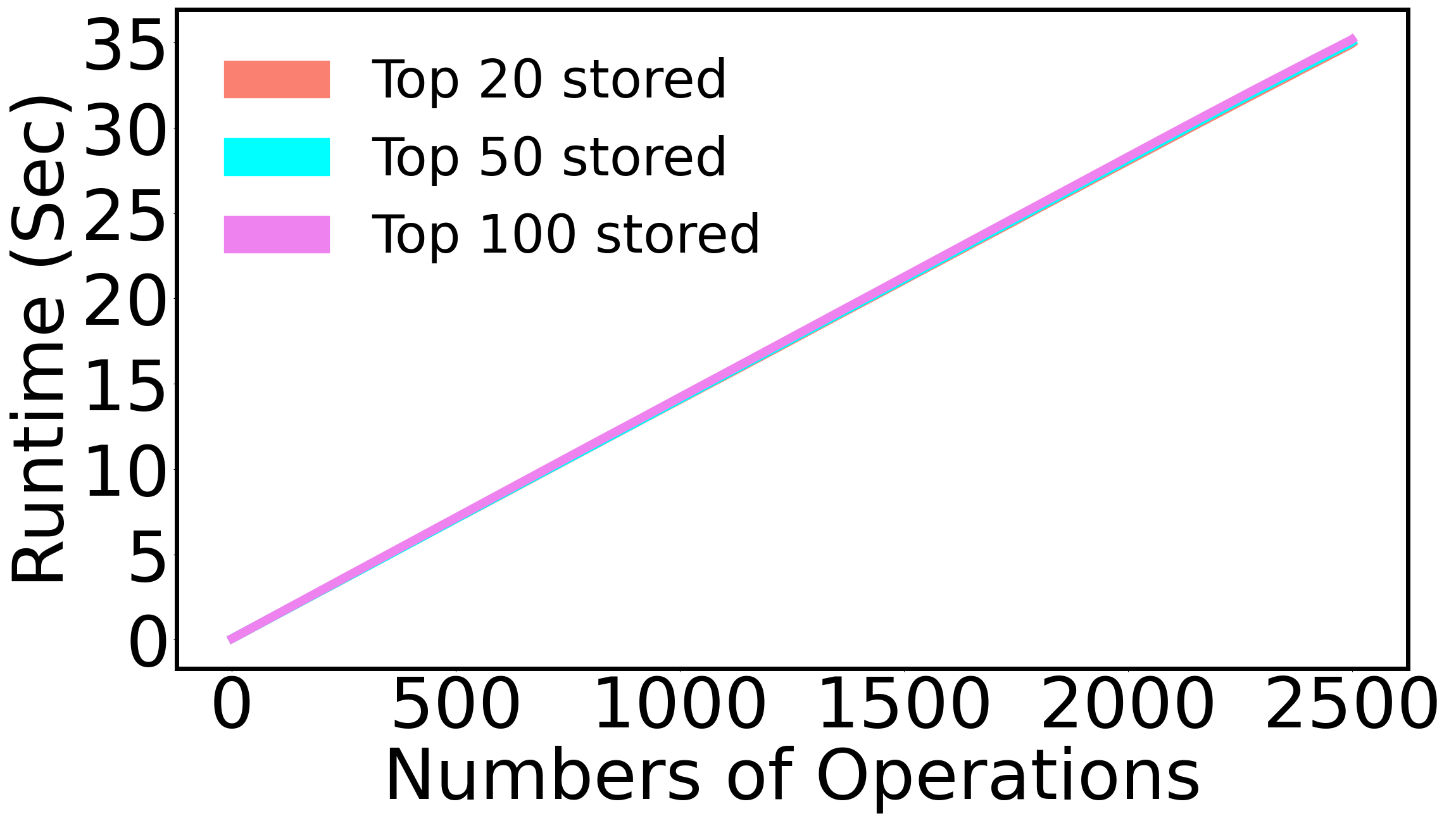}
    \vspace{-5mm}
    \caption{\qtopk delete randomly}
    \label{fig:tk-del-random-runtime}
  \end{subfigure}
  \caption{Top-K: runtime of varying maintenance strategies}\label{fig:topk-runtime}
\end{minipage}
\begin{minipage}{1.0\linewidth}
  \begin{subfigure}{0.24\textwidth}
	\includegraphics[width=1.0\textwidth, trim=0  0 0 0,clip]{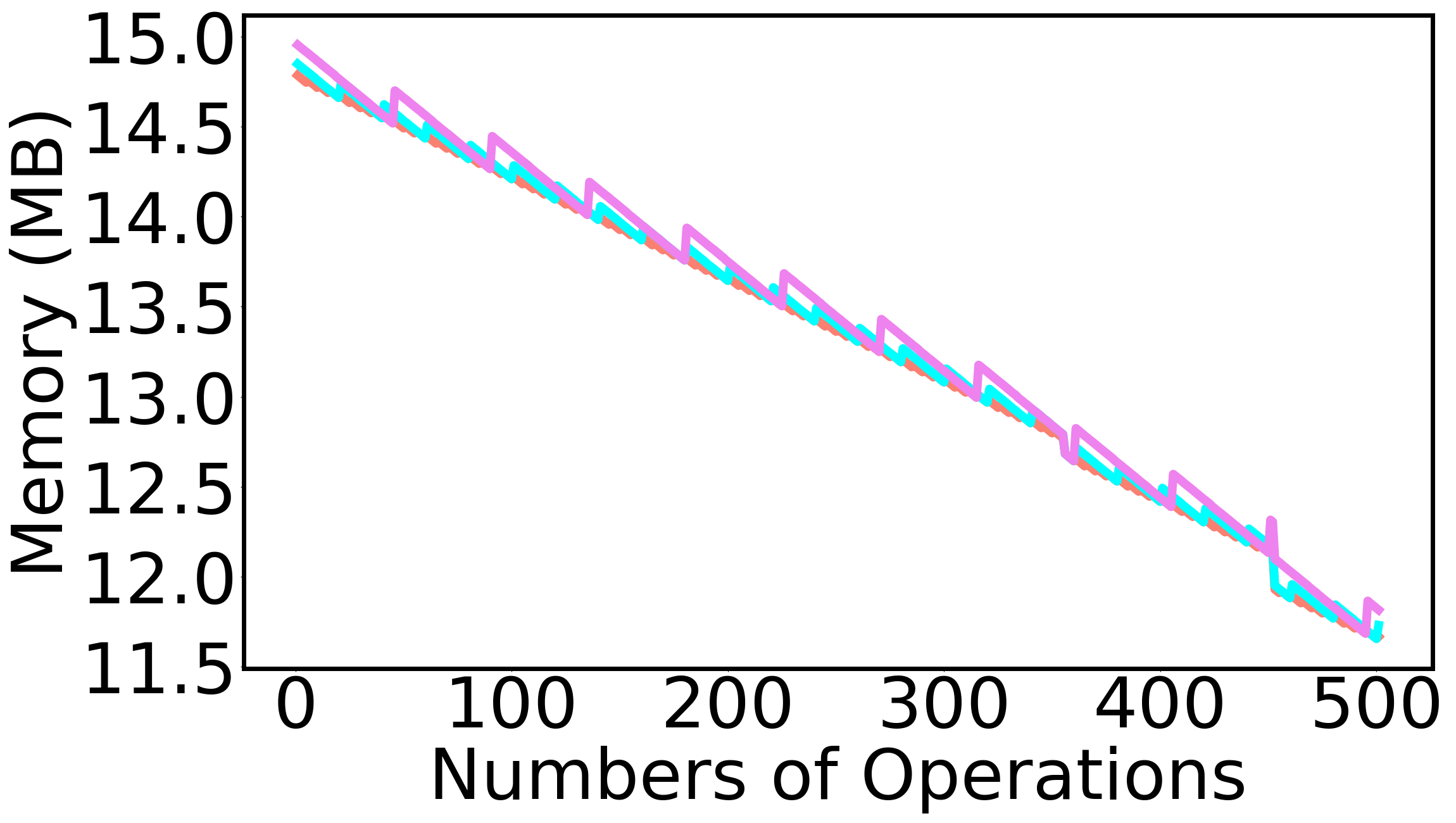}
    \vspace{-5mm}
    \caption{\qtopk delete minimal group}
    \label{fig:tk-del-min-memory}
  \end{subfigure}
  \begin{subfigure}{0.24\textwidth}
	\includegraphics[width=1.0\textwidth, trim=0  0 0 0,clip]{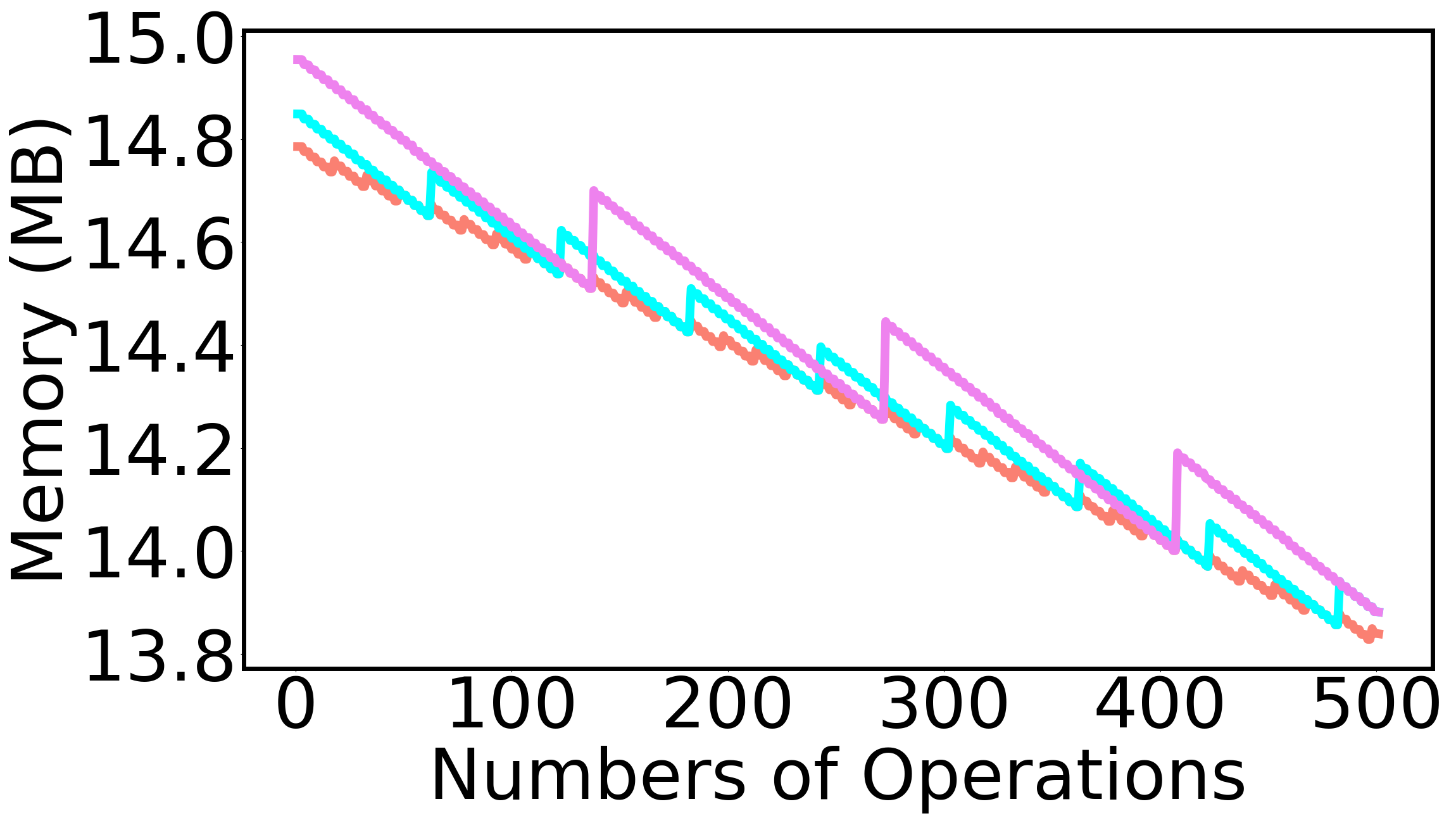}
    \vspace{-5mm}
    \caption{\qtopk ratio 2:1}
    \label{fig:tk-ratio-2-1-memory}
  \end{subfigure}
  \begin{subfigure}{0.24\textwidth}
	\includegraphics[width=1.0\textwidth, trim=0  0 0 0,clip]{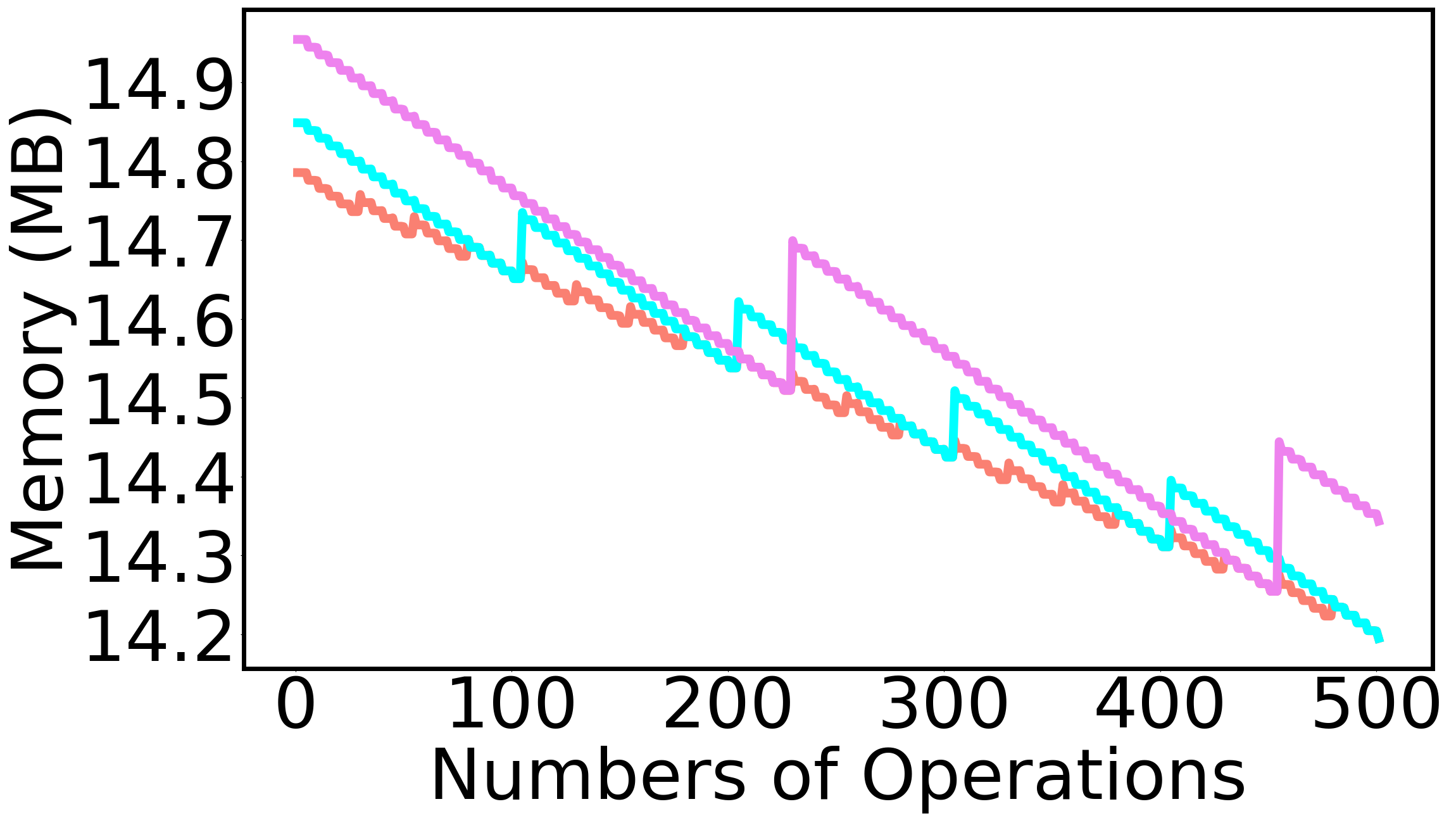}
    \vspace{-5mm}
    \caption{\qtopk ratio 4:1}
    \label{fig:tk-ratio-4-1-memory}
  \end{subfigure}
  \begin{subfigure}{0.24\textwidth}
	\includegraphics[width=1.0\textwidth, trim=0  0 0 0,clip]{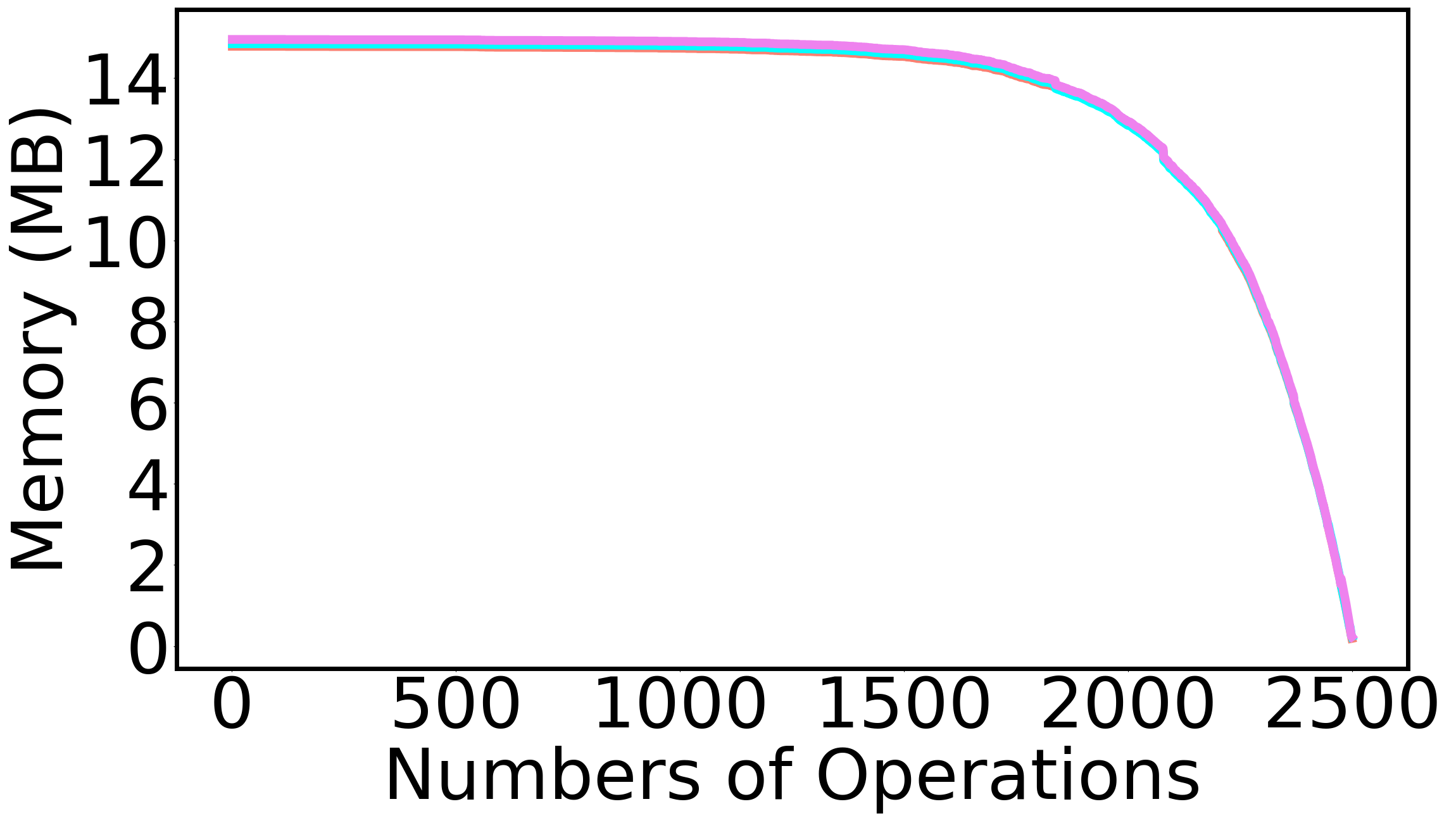}
    \vspace{-5mm}
    \caption{\qtopk delete randomly}
    \label{fig:tk-del-random-memory}
  \end{subfigure}
  \caption{Top-K: memory usage of varying maintenance strategies}\label{fig:topk-memory}
\end{minipage}
\end{figure*}
}

\ifnottechreport{
 \begin{figure}[t]
\begin{minipage}{1.0\linewidth}
  \begin{subfigure}{0.49\textwidth}
	\includegraphics[width=1.0\textwidth, trim=0  0 0 0,clip]{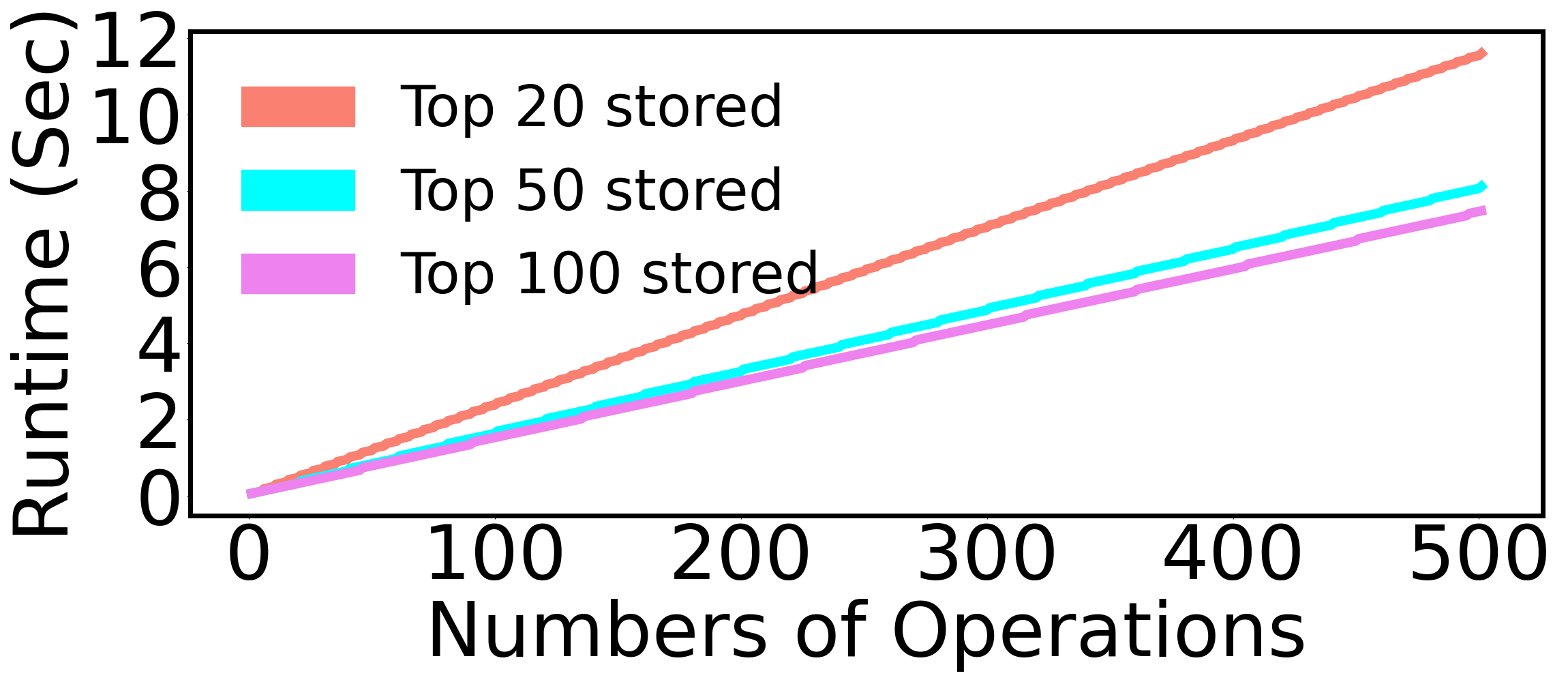}
    \vspace{-5mm}
    \caption{Deleting from top-k groups}
    \label{fig:tk-del-min-runtime}
  \end{subfigure}
  \begin{subfigure}{0.49\textwidth}
	\includegraphics[width=1.0\textwidth, trim=0  0 0 0,clip]{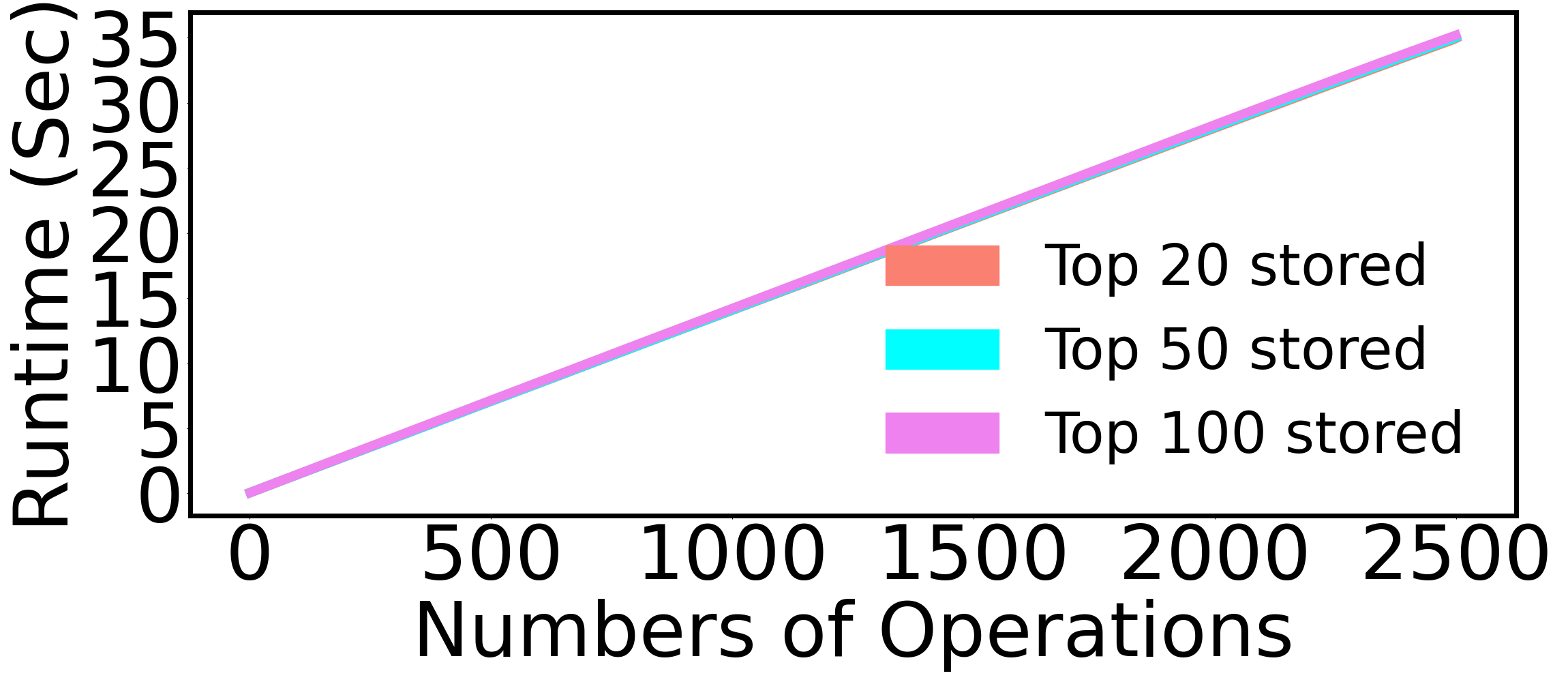}
    \vspace{-5mm}
    \caption{Deleting random tuples}
    \label{fig:tk-del-random-runtime}
  \end{subfigure}
  \vspace{-4mm}
  \caption{Top-k runtime, varying buffer size $l$}\label{fig:topk-runtime}
\end{minipage}
\end{figure}
}


\subsubsection{\revision{Top-K operator}}
\label{sec:experiment_opt_topk}
\revision{According to \Cref{sec:op-rule-topk}, for a top-k operator we store all of its inputs in an ordered map to be able to deal with deletions that remove a tuple from the current top-k. In practice this may be overkill as keeping a buffer of top-l for $l > k$ tuples is often sufficient. The potential drawback is that if all tuples from the buffer are deleted, we have to recapture the sketch.}
\ifnottechreport{
 \revision{To evaluate this trade-off we run an experiment varying $l$ (20, 50 and 100). The query we use is a top-10
   query $\qtopk$ \ifnottechreport{(see ~\cite{techreport}).} \iftechreport{(see~\Cref{sec:appendix_qlist}).}
Then evaluate workloads that delete data (20 tuples per
  update) from the table (the table contains 50k tuples and 5k distinct group-by values). We consider two extremes: (i) always delete tuples contributing to the top-k and (ii) randomly deleting tuples.  If there less than k groups remain in the state data structure,
   \impAbbr will have to fully maintain the sketch. \Cref{fig:topk-runtime} shows
   the runtime varying $l$. For the worst case workload (only deleting tuples from the top-k), the additional cost of maintaining a larger state for the top-k operator is amortized by reducing the frequency of recapture. For the other extreme (uniform deletion), recapture is rarely needed and all settings of $l$ exhibit similar performance. Overall, larger buffer sizes $l$ can be recommended.
 Please see~\cite{techreport} for additional workloads and memory usage measurements.
}
}
\iftechreport{
\revision{In this experiment, we evaluate the performance of \impAbbr engine for
top-k operator by storing certain amount of top-k. The query we use is a top-10
query $\qtopk$ \ifnottechreport{(see ~\cite{techreport})} \iftechreport{(see
      ~\Cref{sec:appendix_qlist})}. We control the number of how many top items
  stored in the state data: 20, 50 and 100. Then we delete data (~20 tuples per
  update) from the table (table contains 50000 tuples and 5000 group by values
  with each group has roughly 10 tuples). For deletion, we have several
  strategies: 1. always delete the first 2 minimal groups, 2. always delete
  randomly tuples, 3. control the ratio between random deletion and deleting
  minimal groups called R-M ratio where like query-update ratio: R updates
  containing randomly deleted tuples followed by M updates including tuples of
  deleting 2 minimal groups each update. In our setting, the ratio are 2:1 and 4:1. For
  strategy 1 and 3, we have 500 updates in total while for strategy 2, we delete
  data till the table is empty since table and updates are all uniformly
  distributed and only deleting all tuples can see the effect of how the
  \impAbbr engine performs. If there are less than k groups stored in the state,
  our \impAbbr will fully maintain the sketches. ~\Cref{fig:topk-runtime} shows
  the runtime of different top items stored in the state under different
  deleting strategies. We can observe the following: 1. The more data items
  stored in the top-k state data, the less frequent to fully maintain the
  sketches. 2. The more randomly the delta is generated, the less frequently the
  sketches are maintained. ~\Cref{fig:topk-memory} demonstrates the memory consumption
  of maintaining the \qtopk in \impAbbr
  (the y value is the memory consumption for each the x-th operation). These figures show that: 1. The more data items stored in the state, the more the
  memory consumes. 2. Memory consumption will gradually decrease before sketches
  fully maintained. 3. At the point the memory consumption sudden increasing, it
  means a full maintenance will increase size of the state data of top-k to the
  size as it should be of storing top 20, 50 or 100, while the total memory size
  still decreases due to the size decreasing of aggregation state (less number
  of groups than before).
}
}

\iftechreport{
For top-k optimization, we evaluate one \tpchds query \qspace (\tpchds Q10, see
~\Cref{sec:appendix_qlist}) as well to examine the state memory consumption
under storing different top number of data items coming into top-k operator. For
1 GB dataset, the total tuples number for top-k is 37293 and for 10 GB dataset,
the number is 371104. We vary the number for $m$ to examine  memory used (MB).
For \tpchds 1GB, we vary the number of $l$ (top l data items stored in top-k state data) in $1K$, $5K$, $10K$, $20K$ and all
tuples. For \tpchds 10GB, these number values are $1K$, $5K$, $10K$, $20K$,
$100K$, $200K$ and all. \Cref{fig:tpch1g-space-opt} and
\Cref{fig:tpch10g-space-opt} show the memory used varying the stored tuple
number $l$. A knowledge learned from the experiment is that memory saving can
be achieved by reducing the number of tuples kept in the state data.
}
\vspace{-8mm}
\ifnottechreport{
    \revision{ %
\paragraph{Maintenance Strategies}
      In \cite{techreport} we also evaluate the impact of batch size on the cost of eager maintenance, demonstrating that batch sizes below 50 tuples should be avoided. In general, lazy maintenance is superior as we delay maintenance as long as possible leading to batching of deltas and avoiding maintenance of sketches that are not used.}
    }
\iftechreport{
    \subsection{Maintenance Strategies}\label{sec:exp:dynamic-adaptation}
Recall the we support two strategies for maintaining sketches: (i) \emph{eager} which batches deltas and maintains sketches proactively once a number of delta tuples equal or exceeding the batch size have been updated and (ii) \emph{lazy} which only maintains stale sketches when a new version is needed to answer a query.
We  evaluating the impact of batch size on maintenance cost, by measuring the total maintenance cost for 1000 updates that are processed in  batches of varying sizes using the eager strategy.
For this setting, we have two queries: first is a aggregation with \lstinline!HAVING! query template \qendtoend (see ~\Cref{sec:appendix_qlist}) and the
other is join-aggregation with \lstinline!HAVING! $\qjoinsel$ query template with $5\%$ selectivity (see ~\Cref{sec:appendix_qlist}). 
 \Cref{fig:evenly_divided_1k_update}
 shows the total runtime of this setting.
The net result is that batch sizes below 50 should be avoided for eager maintenance as they significantly increase the cost of maintenance. Another take-away is that lazy is typically preferable as (i) it leads to larger delta sizes as we delay maintenance as long as possible and (ii) we avoid maintaining provenance sketches that are not used.
  }
\iftechreport{
\begin{figure}[t]
  \begin{minipage}{1.0\linewidth}

      \includegraphics[width = 1.0\textwidth]{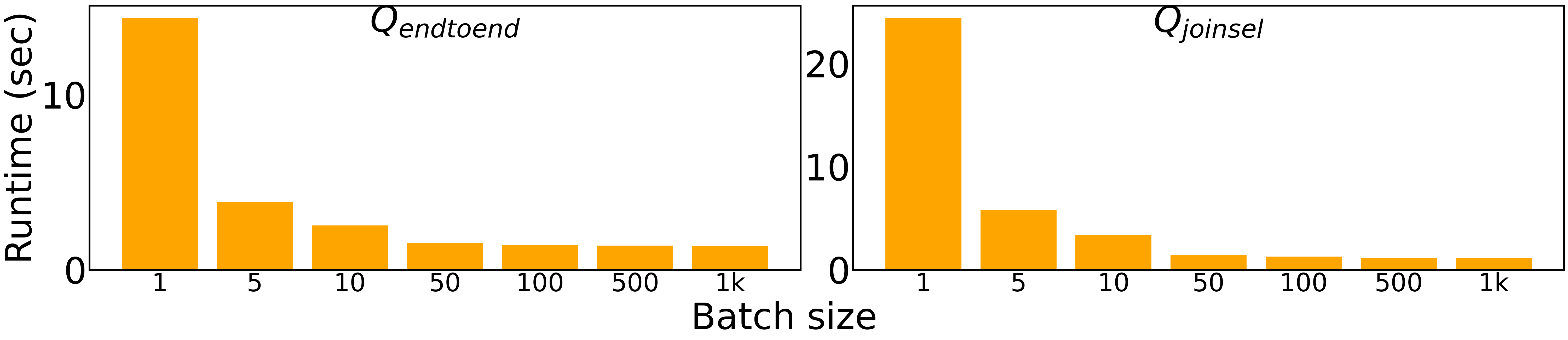}
      \vspace{-6mm}
  \end{minipage}
  \vspace{-2mm}
  \caption{Cost of maintaining 1000 updates using eager maintenance, varying batch size.}
      \label{fig:evenly_divided_1k_update}
\end{figure}
}

\iftechreport{
\iftechreport{
\begin{figure}[h]
\begin{minipage}{1.0\linewidth}
  \begin{subfigure}{0.49\textwidth}
	\includegraphics[width=1.0\textwidth, trim=0  0 0 0,clip]{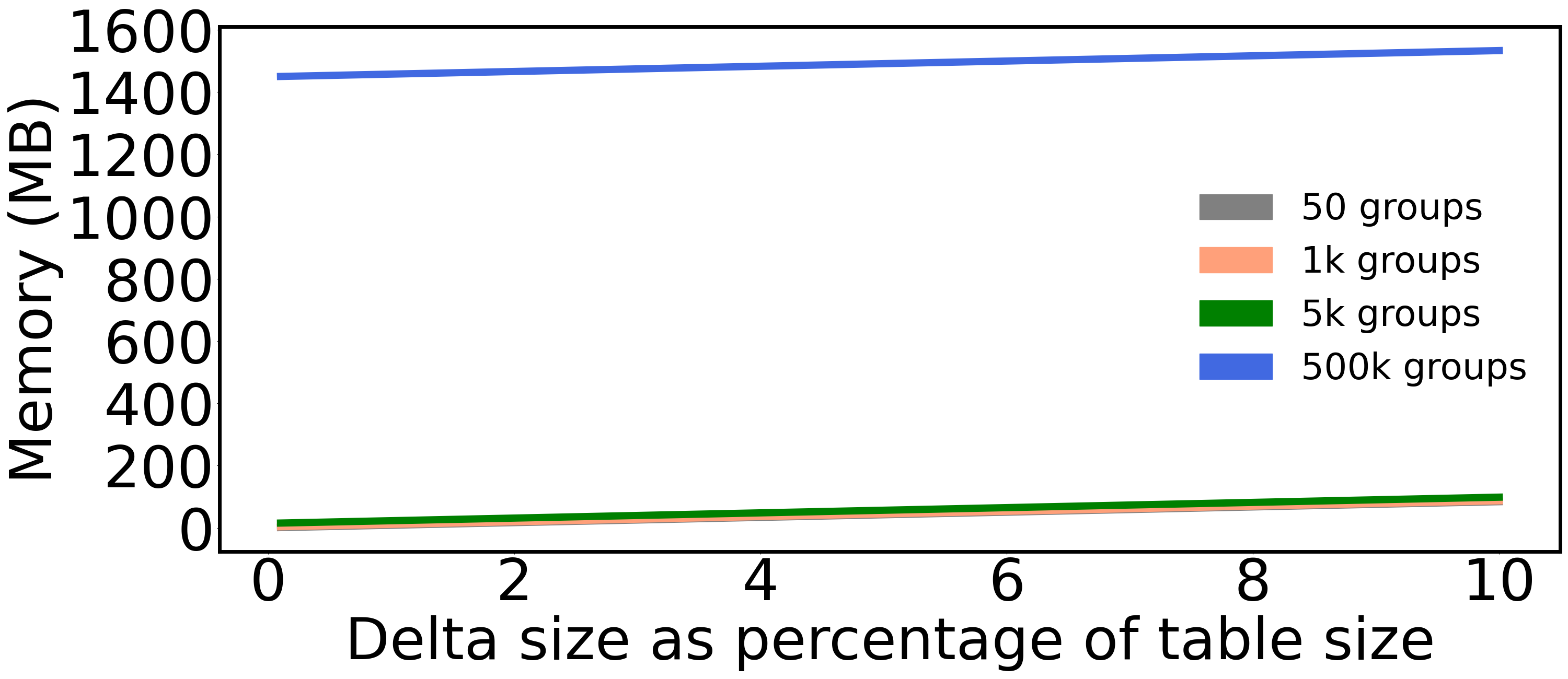}
    \vspace{-5mm}
    \caption{\qgroups Memory usage}
    \label{fig:agggroups_memory}
  \end{subfigure} 
  \begin{subfigure}{0.49\textwidth}
	\includegraphics[width=1.0\textwidth, trim=0  0 0 0,clip]{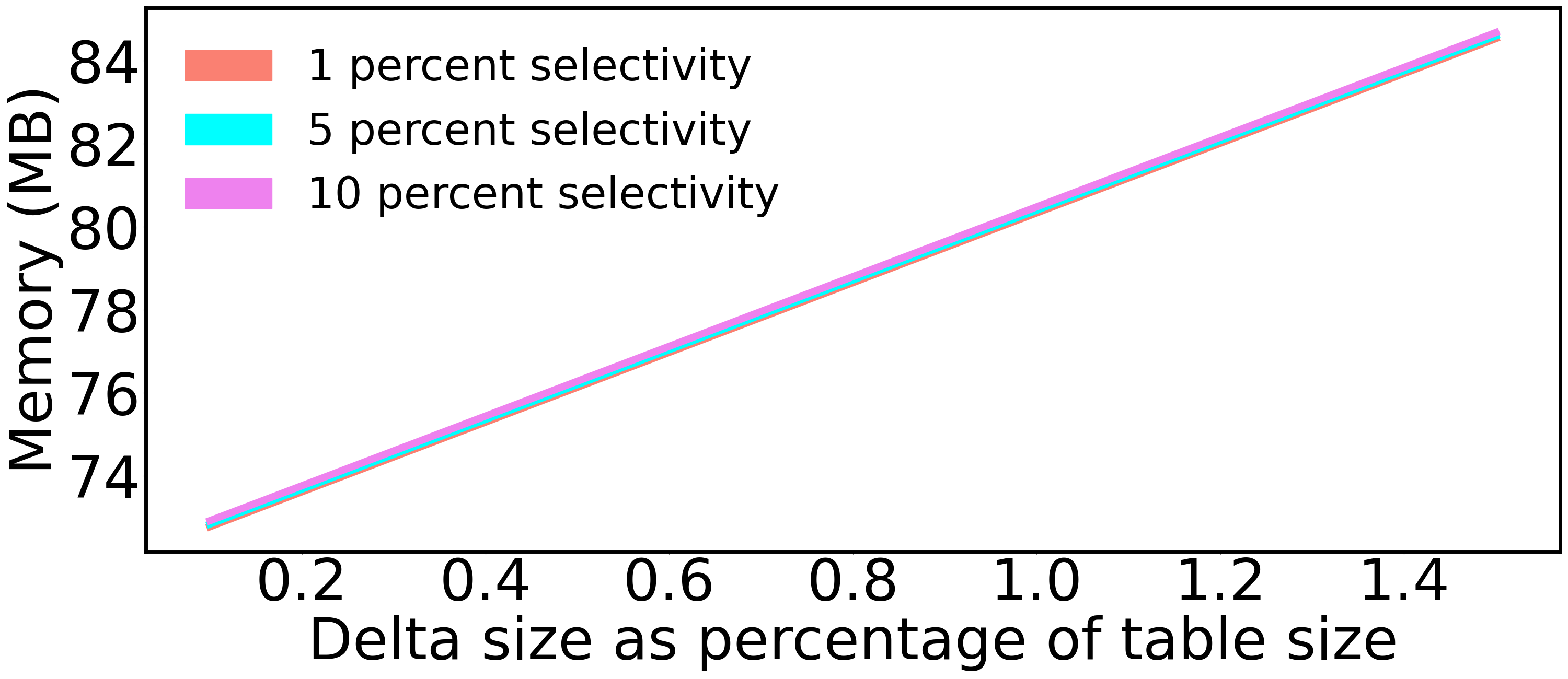}
    \vspace{-5mm}
    \caption{\qjoinsel Memory usage}
    \label{fig:join_memory}
  \end{subfigure}
  \caption{Memory usage of \qgroups}
  \label{fig:agg-join-memory}
\end{minipage}
\end{figure}
}


\subsection{\revision{Additional Memory Consumption}}
\label{sec:mem_consumption}
\subsubsection{\revision{Memory Usage of Aggregation Function and Join}}
\label{sec:mem_join_agg}
\revision{
   In this experiment, we present the memory usage of queries \qjoinsel (see
   ~\Cref{sec:appendix_qlist}) and \qgroups (see ~\Cref{sec:appendix_qlist}). We
   have exactly the same setting as evaluating for runtime performance.
   ~\Cref{fig:agg-join-memory} shows the memory consumption. For \qgroups, for fixed
   number of groups, the state data size is stable, and the memory consumption
   increases due to the increasing of delta data size. The is the same for \qjoinsel.
}

\subsubsection{\revision{Memory usage of Sketches and Ranges}}
\label{mem_sketch_range}
\begin{figure*}[t]
{\footnotesize
  \begingroup
  \centering
  \begin{tabular}{llllllllll}
    \hline
     Number of sketches or ranges &  $100$  &  $200$ &  $500$  & $1000$  & $2000$  & $5000$    & $10000$ & $20000$  & $100000$ \\ \hline
     Memory of Sketches (MB)      & 0.000040 & 0.000056 & 0.000088 & 0.000152 & 0.000280  &  0.000656  &  0.001280 &   0.002528 &   0.012528 \\ \hline
     Memory of Ranges (MB)        & 0.004508 & 0.008908 & 0.022108 & 0.044108 & 0.088108  &  0.220108  &  0.440108 &   0.880108 &   4.400108 \\ \hline
  \end{tabular}
  \vspace{-3mm}
  \caption{Sizes different sketches and ranges in memory} 
  \label{fig:sketch-range-size}
  \endgroup
}
\end{figure*}


\revision{
  In the ~\Cref{fig:sketch-range-size}, we show the actual sizes of sketches and
  ranges. We encode each sketch as a bitvector. Typically, the size is
  pretty small. For ranges, we store the boundaries of each range in a list. For $n$
  ranges, we record $n + 1$ values in the list i.e., assuming the ranges are
  $[1,4), [4, 9)$, the list will be: $(1, 4, 9)$.
}
}

\iftechreport{
}
\section{Conclusions And Future Work}%
\label{sec:conclusion}

We present the first approach for in-memory incremental maintenance of
provenance sketches. Our \impAbbr  system implement incremental maintenance rules for sketch-annotated
data. 
Using bloom filters and
selection-push-down, we further improve the performance of the incremental
maintenance process. Our experimental results demonstrate the effectiveness of
our approach and optimizations, outperforming full maintenance by several orders
of magnitude.
In future work, in addition to extending \impAbbr with support for more
operators, e.g., set difference and nested or recursive queries, we will
investigate how to integrate provenance-based data skipping and incremental
maintenance of sketches with cost-based query optimization and self-tuning. Another open research question is how  \impAbbr can be extended as a general \gls{ivm} engine for provenance information. While many of the incremental operator semantics could be extended to support that, optimizations that rely on approximation may no longer be acceptable and, in contrast to sketches which are of a fixed size, provenance information may be large. Furthermore, we conjecture that \impAbbr could be extended for maintaining summarizes of provenance~\cite{LL20, AB15a, alomeir-20-spagqrrd} which, like sketches, can tolerate approximation and are typically small. \revision{For example, consider a data integration scenario where we want to track the set of data sources a query result depends on.}




\subsection*{Artifacts}
%
Supplementary material, \impAbbr's source code (folder \texttt{IMP engine} and \texttt{postgreSQL}), and experiment scripts and data (folder \texttt{dataset})
are available at \cite{availability}.


\bibliographystyle{ACM-Reference-Format}
\bibliography{./sections/main}

\iftechreport{
\begin{appendix}
\section{Appendix I: Query List}%
\label{sec:appendix_qlist}
\subsection{Synthetic dataset query list}
\subsubsection{Different number of aggregation functions \qhaving}
~\\
\parttitle{One aggregation function}
\begin{lstlisting}
    SELECT a, avg(b) AS ab
    FROM r500
    GROUP BY a
\end{lstlisting}
\parttitle{Two aggregation functions}
\begin{lstlisting}
    SELECT a, avg(b) AS ab
    FROM r500
    GROUP BY a
    HAVING avg(c) < 1000
\end{lstlisting}

\parttitle{Three aggregation functions}
\begin{lstlisting}
    SELECT a, avg(b) AS ab
    FROM r500
    GROUP BY a
    HAVING avg(c) < 1000 and avg(d) < 1200
\end{lstlisting}

\parttitle{Ten aggregation functions}
\begin{lstlisting}
    SELECT a, avg(b) AS ab
    FROM r500
    GROUP BY a
    HAVING avg(c) < 1000 and avg(d) < 1200 and avg(e) > 0
        and avg(f) > 0 and avg(g) > 0 and avg(h) > 0
        and avg(i) > 0 and avg(j) > 0
\end{lstlisting}
\subsubsection{Number of groups  \qgroups}
~\\
\parttitle{50 groups}
\begin{lstlisting}
    SELECT a, avg(b) AS ab
    FROM t1gb50g
    GROUP BY a
    HAVING avg(c) < 3
\end{lstlisting}

\parttitle{1K groups}
\begin{lstlisting}
    SELECT a, avg(b) AS ab
    FROM t1gb1000g
    GROUP BY a
    HAVING avg(c) < 320
\end{lstlisting}

\parttitle{5K groups}
\begin{lstlisting}
    SELECT a, avg(b) AS ab
    FROM t1gb5000g
    GROUP BY a
    HAVING avg(c) < 1600
\end{lstlisting}

\parttitle{500K groups}
\begin{lstlisting}
    SELECT a, avg(b) AS ab
    FROM t1gb500000g
    GROUP BY a
    HAVING avg(c) < 1600
\end{lstlisting}

\subsubsection{Join}
~\\
\parttitle{1-200K joins}
\begin{lstlisting}
    SELECT a, avg(b) AS ab
    FROM (
        SELECT a AS a, b AS b, c AS c
        FROM t1gb50g
        WHERE b < 10
    ) tt JOIN tjoinhelp ON (a = ttid)
    GROUP BY a
    HAVING avg(c) < 10
\end{lstlisting}
\parttitle{1-2K joins}
\begin{lstlisting}
    SELECT a, avg(b) AS ab
    FROM(
        SELECT a AS a, b AS b, c AS c
        FROM t1gbjoin WHERE b < 1000
    ) tt JOIN tjoinhelp ON (a = ttid)
    GROUP BY a
    HAVING avg(c) < 1000
\end{lstlisting}

\parttitle{1-20 joins}
\begin{lstlisting}
    SELECT a, avg(b) AS ab
    FROM (
        SELECT a AS a, b AS b, c AS c
        FROM t1gb500000g
        WHERE b < 100000
    ) tt JOIN tjoinhelp ON (a = ttid)
    GROUP BY a
    HAVING avg(c) < 100000
\end{lstlisting}

\subsubsection{Join selectivity: \qjoinsel}
~\\
\parttitle{1\% selectivity}
\begin{lstlisting}
    SELECT a, avg(b) AS ab
    FROM t1gbjoin JOIN tjoinhelppercent1 ON (a = ttid)
    WHERE b < 1000
    GROUP BY a
    HAVING avg(c) < 1000
\end{lstlisting}

\parttitle{5\% selectivity}
\begin{lstlisting}
    SELECT a, avg(b) AS ab
    FROM t1gbjoin JOIN tjoinhelppercent5 ON (a = ttid)
    WHERE b < 1000
    GROUP BY a
    HAVING avg(c) < 1000
\end{lstlisting}

\parttitle{10\% selectivity}
\begin{lstlisting}
    SELECT a, avg(b) AS ab
    FROM t1gbjoin JOIN tjoinhelppercent10 ON (a = ttid)
    WHERE b < 1000
    GROUP BY a
    HAVING avg(c) < 1000
\end{lstlisting}

\subsubsection{Fragment number: \qsketch}
~\\
\begin{lstlisting}
    SELECT a, avg(b) as ab
    FROM (
          SELECT a as a, b as b, c as c
          FROM t1gbjoin
          WHERE b < 1000) tt
    JOIN tjoinhelp on (a = ttid)
    GROUP BY a
    HAVING avg(c) < 1000
\end{lstlisting}

\subsubsection{Delta filter by selection push down \qselpd}
~\\
\begin{lstlisting}
    SELECT a, avg(b) AS ab
    FROM t1gb1000g
    WHERE b < 1000
    GROUP BY a
    HAVING avg(c) < 300
\end{lstlisting}
\subsubsection{End-to-end \qendtoend}
~\\
\begin{lstlisting}
  SELECT a, avg(c) AS ac
  FROM edb1
  GROUP BY a
  HAVING avg(c)> 1684845 AND avg(c) < 1686014;
\end{lstlisting}
\subsection{Crimes dataset query list}
\parttitle{Q1}
\begin{lstlisting}
    SELECT beat, year, count(id) AS crime_count
    FROM crimes
    GROUP BY beat, year
\end{lstlisting}
\parttitle{Q2}
\begin{lstlisting}
    SELECT district, community_area, ward, beat,
        count(beat) AS crime_count
    FROM crimes
    GROUP BY district, community_area, ward, beat
    HAVING count(id) > 1000
\end{lstlisting}
\subsection{Top-K query \qtopk}
\begin{lstlisting}
    SELECT a, avg(b) AS ab
    FROM R
    GROUP BY a
    ORDER BY a
    LIMIT 10
\end{lstlisting}

\subsection{\tpchds Top-K query \qspace}
\begin{lstlisting}
  SELECT c_custkey, c_name,
     sum(l_extendedprice * (1 - l_discount)) AS revenue,
     c_acctbal, n_name, c_address, c_phone, c_comment
  FROM customer, orders, lineitem, nation
  WHERE  c_custkey = o_custkey and l_orderkey = o_orderkey
     AND o_orderdate >= to_date('1994-12-01', 'YYYY-MM-DD')
     AND o_orderdate < to_date('1995-03-01', 'YYYY-MM-DD')
     AND l_returnflag = 'R'
     AND c_nationkey = n_nationkey
  GROUP BY c_custkey, c_name, c_acctbal, c_phone,
     n_name, c_address, c_comment
  ORDER BY revenue
  LIMIT 20
\end{lstlisting}

\section{SQL-based Strategy}
\label{sec:sql_implementation_appendix}
A pure SQL-based strategy is an option to update the provenance sketches
incrementally. Accordingly, the state data required for operators is persisted
as database tables. Then the incremental maintenance is modeled as running a
series of queries over tables. For each operator of a query, queries are
executed to process its input and to handle state data if required for both
utilizing and updating the data.

This SQL-based approach has the advantages: first, there is no need to transfer
data between the client and DBMS. What is more, DBMS can offer good plans and
apply optimizations for query executions. Furthermore, data-heavy operations are
executed inside the database, which reduces potential bottlenecks arising from
data transfer across systems. All these advantages can enhance the efficiency of
the incremental procedure.

However, this approach introduces challenges: first, it lacks flexibility to use
specialized data structures to store operator state. All state data will be
maintained in tables. But different operators' state data can be kept in
different data structures in-memory: all groups' average can be stored in a map
and order and limit operators together can use binary search trees to fast
access Top-K elements. In addition, state data storing in tables is less
efficient compared to in in-memory data structures. For example, to update the
average value of a group, the state data table should be scanned twice:
accessing and updating data. While, these two operations can be done more faster
if the state data is keep in a map in-memory. Moreover, incremental operations
cannot always be efficiently expressed in SQL.


\end{appendix}
}
\end{document}
\endinput